\newtheorem{theorem}{Theorem}
\newtheorem{lemma}[theorem]{Lemma}
\newtheorem{observation}[theorem]{Observation}
\providecommand{\ch}{{\text{\sc Ch}}}
\providecommand{\rank}{\mathrm{rank}}
\providecommand{\col}{\mathrm{col}}
\providecommand{\val}{\mathrm{val}}
\providecommand{\res}{\mathrm{res}}
\providecommand{\deg}{\mathrm{deg}}
\providecommand{\mul}{\mathrm{mul}}
\providecommand{\code}{\mathrm{code}}
\providecommand{\rhoG}[1]{\rho_G(#1)}
\providecommand{\copyv}{\mathrm{copy}}
\providecommand{\Ch}{{\text{\sc Ch}}}
\providecommand{\freq}{\mathrm{frq}}
\providecommand{\mm}{\mathrm{M}}
\providecommand{\pathset}{\mathcal{P}}
\providecommand{\fvset}{\mathbf{f}}
\providecommand{\parent}{\mathrm{p}}
\algrenewcommand\algorithmicindent{2mm}%
\begin{document}

\title{Enumerating Chemical Graphs with
Mono-block 2-Augmented Tree Structure 
from Given Upper and Lower Bounds on Path Frequencies}

\author{
Yuui Tamura$^1$ \and
Yuhei Nishiyama$^1$ \and
Chenxi Wang$^1$ \and
Yanming Sun$^1$ \and
Aleksandar Shurbevski$^1$ \and
Hiroshi Nagamochi$^1$ \and
Tatsuya Akutsu$^2$ \\ \\
$^1$~Department of Applied Mathematics and Physics, \\
Kyoto University, Kyoto 606-8501, Japan. \\ 
{\normalsize \{y.tamura, y.nishiyama, chenxi, 
	    sun.ym,  shurbevski, nag\}@amp.i.kyoto-u.ac.jp}  \and
$^2$~Bioinformatics Center,  
Institute for Chemical Research, \\
Kyoto University, Uji 611-0011, Japan.\\
{\normalsize takutsu@kuicr.kyoto-u.ac.jp} 
} 

\maketitle              
\thispagestyle{plain}

\begin{verse}
~~~~{\bf Abstract. }
We consider a problem of enumerating chemical graphs 
 from given constraints concerning their structures, 
 which has an important application
to a novel method for the inverse QSAR/QSPR recently proposed.
In this paper, the structure of a chemical graph is specified 
by a feature vector each of whose entries represents the frequency
of a prescribed  path. 
We call a graph a 2-augmented tree if it is obtained from a tree (an acyclic graph)
by adding edges between two pairs of non-adjacent vertices.  
Given a set of feature vectors as the interval between upper and lower bounds of feature vectors, 
we design an efficient algorithm for enumerating chemical 2-augmented trees
 that satisfy the path frequency specified by some feature vector  in the set. 
 We implemented the proposed algorithm and conducted some computational
 experiments.  

\end{verse}

\section{Introduction}\label{sec:introduce}

Development of novel drugs is one of the major goals
 in chemoinformatics and bioinformatics.
To achieve this purpose, it is important not only to investigate common
 chemical properties over chemical compounds having certain structural 
 pattern~\cite{BK99_formula,BK99_isomer,MS07},
 but also to enumerate all the chemical compounds having a specified structural pattern.
The enumeration of chemical compounds has a long history, which can be traced back to 
Cayley~\cite{cayley1875}, who addressed the enumeration of structural 
isomers of alkanes in the 19th century.

A multigraph is a graph that can have multiple edges between the same pair of vertices,
where multiple edges  represent double bonds or triple bonds
 in a chemical compound.  
Let us call a multigraph
 a {\em $k$-augmented tree} if it is connected and it becomes a tree possibly
  with multiple edges
 after removing edges between $k$ pairs of adjacent vertices,
 where a $0$-augmented tree and a $1$-augmented tree 
  are also called an {\em acyclic} graph
 and a {\em monocyclic} graph, respectively. 
In the $97,092,888$  chemical compounds in
the  PubChem database, 
the ratio of the number of chemical compounds of a $k$-augmented tree structure 
to that of  all registered   chemical compounds
is around $2.9\%$, $13.3\%$, $28.2\%$, $24.2\%$  and $16.0\%$   
for $k=0, 1, 2, 3$ and $4$, respectively.  
 
Quantitative Structure Activity/Property Relationships (QSAR/QSPR) analysis
is a major approach for computer-aided drug design.
In particular, inverse QSAR/QSPR plays an important role
\cite{Miyao16,Skvortsova93}, 
which is to infer chemical structures from given chemical
activities/properties.
As in many other fields,
Artificial Neural Network (ANN) and deep learning technologies
have recently been applied to inverse QSAR/QSPR.
In these approaches, new chemical graphs are generated
by solving a kind of inverse problems on neural networks,
where neural networks are trained using
known chemical compound/activity pairs.
However, there was no mathematical guarantee for the existence of
solutions in these approaches.
In order to solve the inverse problem mathematically, 
a novel approach has been proposed by Akutsu and Nagamochi~\cite{AN19}
for ANNs,  
using mixed integer linear programming (MILP).

Recently 
Chiewvanichakorn et~al.~\cite{CWZSNA20}, 
Azam et~al.~\cite{ACZSNA20,IWSNA20} proposed
a novel framework for the inverse  QSAR/QSPR by 
combining the MILP-based formulation of the inverse problem on ANNs~\cite{AN19}
and  
efficient enumeration of  acyclic graphs 
and monocyclic graphs.   
This combined framework for  inverse QSAR/QSPR mainly consists of two phases. 
The first phase defines a function $f$ that converts each chemical graph $G$
into a feature vector $f(G)$ that consists of several descriptors on
the structure of $G$ and then 
solves (1) {Prediction Problem},  
where 
a prediction function $\psi_{\mathcal{N}}$ on a chemical property $\pi$
is constructed with an ANN $\mathcal{N}$ 
using a data set of  chemical compounds $G$ and their values $a(G)$ of $\pi$. 
The second phase solves (2) {Inverse Problem}, 
where (2-a) given a target value $y^*$ of the chemical property $\pi$,
a feature vector $x^*$ is inferred from the trained ANN  $\mathcal{N}$
so that  $\psi_{\mathcal{N}}(x^*)$ is close to  $y^*$ 
and (2-b) then a set of chemical structures $G^*$
such that $f(G^*)= x^*$  is enumerated.  
Methods applied to the case of inferring acyclic or monocyclic chemical graphs have been
implemented as computer programs, through which 
chemical graphs $G^*$ are  inferred from given target values $y^*$ of 
actual chemical properties such as  
heat of atomization,
heat of formation, 
boiling point and 
octanol/water partition coefficient~\cite{ACZSNA20,CWZSNA20,IWSNA20}.
In this framework, an efficient algorithm for enumerating
acyclic or monocyclic graphs that satisfy 
given descriptors is an important building block to solve (2-b).
A natural next target to apply the framework for the inverse QSAR/QSPR is to construct
 a system of inferring chemical 2-augmented trees.
To attain this, we design  an efficient algorithm for enumerating
chemical 2-augmented trees. 

Some useful tools such as MOLGEN~\cite{MOLGEN5}, 
OMG~\cite{OMG}, and similar, 
have been developed and are available for enumeration of chemical graphs.
However, they are not always very efficient in enumerating chemical graphs
that satisfy a given condition on structures such 
as frequency of certain types of subgraphs, 
because they treat general graph structures.
In particular, it is known that the number of
molecules (i.e., chemical graphs) with up to 30 atoms (vertices)
{\tt C}, {\tt N}, {\tt O}, and {\tt S},
may exceed~$10^{60}$~\cite{BMG96}.

Fujiwara~{\it et al.}~\cite{Fujiwara08} and  Ishida~{\it et al.}~\cite{IZNA08} 
studied the enumeration of acyclic chemical graphs 
that satisfy a given feature vector which specifies 
the frequency of all paths of up to 
a prescribed length in a chemical compound to be constructed.
Their results
have been recognized as establishing new methodologies 
in this track of research in chemoinformatics~\cite{VB12}.
Instead of giving  a single feature vector $f$ on the frequency of prescribed paths, 
Shimizu~{\it et al.}~\cite{SNA11} treated a set $F$ of 
feature vectors on path frequency given as  the set of all vectors
between a pair of upper and lower feature vectors, and designed 
a branch-and-bound algorithm  of enumerating 
acyclic chemical graphs  each of which satisfies some feature vector $f$ in the set $F$.
Afterward Suzuki~{\it et al.}~\cite{SNA12} proposed 
an improved and more efficient algorithm.
For monocyclic graphs,
Suzuki~{\it et al.}~\cite{Suzuki14} proposed an efficient algorithm that constructs
a monocyclic chemical graph by adding an edge to an acyclic chemical graph.
Such acyclic chemical graphs in turn, can be obtained by an existing algorithm~\cite{Fujiwara08,SNA12}.
The above-mentioned algorithms for enumerating acyclic and monocyclic chemical graph
with given path frequencies now play a crucial role in the novel methods for 
 inverse QSAR/QSPR~\cite{ACZSNA20,CWZSNA20,IWSNA20}.

For 2-augmented trees, we distinguish two types: (i) those with two edge-disjoint cycles
and (ii) those with a single bi-connected component, where every two cycles share an edge.
We call a 2-augmented tree in type (ii)  a {\em mono-block 2-augmented tree}. 
In this paper, we design an algorithm for 
enumerating mono-block 2-augmented trees that satisfy 
given upper and lower bounds of path frequencies of graphs.
We implemented the proposed algorithm and conducted some computational
 experiments.

\section{Preliminaries on Graphs}\label{sec:graph}

This section reviews some basic definitions on graphs 
and introduces the notion of chemical graphs as used in this paper.

\subsection{Multigraphs}
Let $\mathbb{Z}_{+}$ denote the set of positive integers.
For two integers $a$ and $b$,
let $[a,b]$ denote the set of all integers $i$ with $a \leq i \leq b$.

A {\em graph} is defined to be an ordered pair $(V,E)$ 
of a finite set $V$ of vertices and a finite set $E$ of edges.
In this paper, we do not consider self-loops, and 
an edge in $E$ joining two vertices  $u,v\in V$ is  denoted by $uv$.

Let $G$ be a graph.
We denote the vertex set and the edge set of $G$ by $V(G)$ and~$E(G)$, respectively.
An ordered pair $(V',E')$ of subsets $V' \subseteq V(G)$ and 
$E' \subseteq E(G)$ is called a {\em subgraph} of $G$ if $(V',E')$ forms a graph, 
i.e., $\{u,v\in V\mid uv\in E'\}\subseteq V'$.
We say that a subset $X \subseteq V(G)$ {\em induces} 
a subgraph $G'$ if $V(G') = X$ and $E(G')$ 
contains every edge in $E(G)$ between two vertices in $X$.

We call a graph where each vertex and edge 
has a unique name or an index a {\em labeled graph}.
Throughout the paper,
graphs are considered to be labeled, to distinguish or enumerate vertices, 
edges or some other structures in a graph.
Figure~\ref{fig:labeled-graph}(a) shows an example of a labeled graph.
A {\em rooted} graph is a graph in which either a vertex 
or an edge is designated as a root.

\begin{figure}
\centering
\includegraphics[width=0.95\linewidth]{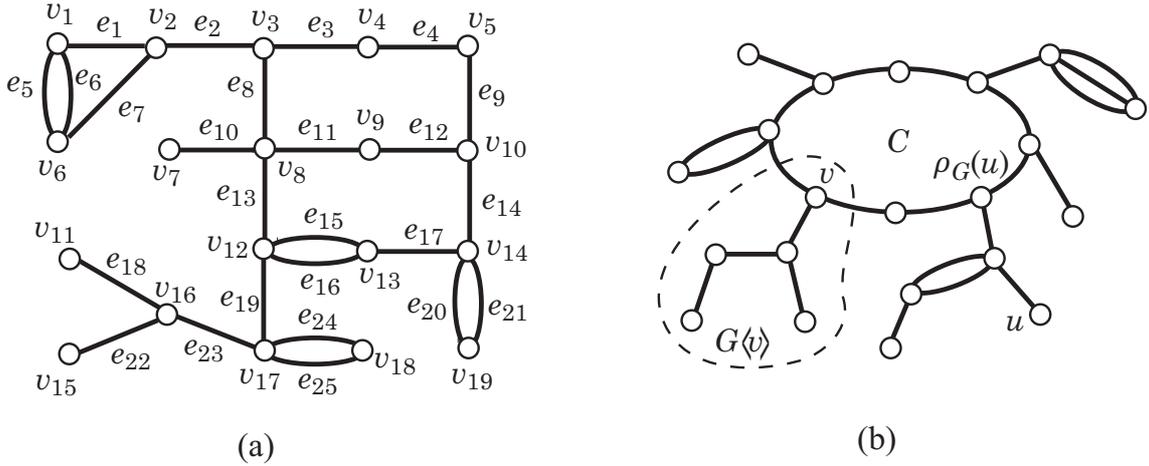}
\caption{
(a)~An example of a labeled graph, each vertex is labeled 
with a unique label $v_{i}$, $i = 1, 2, \dots, 19$, 
and each edge with a unique label $e_{j}$, $j=1, 2, \dots, 25$,
(b)~A monocyclic graph $G$ with a unique cycle $C$,
where  
the pendent tree $G\langle v \rangle$ for a vertex $v\in V(C)$
 is the subtree enclosed by a dashed line.} 
\label{fig:labeled-graph}
\end{figure}

A graph is called a {\em multigraph} when there can be more than 
one edge between the same pair of endvertices.
Let $G$ be a multigraph.
For a vertex $v \in V(G)$, we call the number of 
edges incident to $v$ the {\em degree} of $v$ and denote it by ${\rm deg}(v)$.
Let~$\{u,v\} \subseteq V(G)$.
The \emph{multiplicity}, i.e., the number of edges, 
between two vertices $u$ and $v$ is denoted by ${\rm mul}_{G}(u,v)$, 
where for two non-adjacent vertices $u$ and $v$,  it holds ${\rm mul}_{G}(u,v) = 0$.
We refer to the set of ${\rm mul}_{G}(u,v)~(\geq 0)$ simple edges 
with endvertices $u$ and $v$ as a {\em multiple edge}
with multiplicity~${\rm mul}_{G}(u,v)$.
Let $\overline{E}(G)$ denote the set of pairs $\{u,v\}\subseteq V(G)$ 
with ${\rm mul}_{G}(u,v) = 0$.
For a pair of adjacent vertices $u$ and $v$, let $G - uv$ denote the graph 
$G'$ obtained by removing~${\rm mul}_{G}(u,v)$ simple edges 
between $u$ and $v$ from $G$.
Conversely, let $G + q\cdot uv$ denote the graph~$G'$ obtained by 
adding $q \in \mathbb{Z}_{+}$ simple edges between $u$ and $v$, 
i.e., ${\rm mul}_{G'}(u,v) = {\rm mul}_{G}(u,v) + q$.
In particular, we denote~$G + 1\cdot uv$ by $G + uv$.
If $G$ is clear from the context, then we denote ${\rm mul}_{G}(u,v)$ 
by ${\rm mul}(u,v)$.

For a nonnegative integer $k$, a graph $P$ which consists of $k + 1$
distinct vertices~$v_{0},v_{1},$ $\dots,$ $v_{k}$ and $k$ 
multiple edges $v_{i}v_{i+1},~i \in [1,k-1]$, 
is called a {\em path} (or a path of length $k$), 
and is denoted by $P = (v_{0}, v_{1}, \ldots, v_{k})$.
The length, i.e., the number of edges in a path $P$ is also denoted by~$|P|$.
A graph $C$ which consists of a path~$(v_{0}, v_{1}, \ldots, v_{\ell})$ 
of length $\ell \geq 2$ 
and a multiple edge between $v_{\ell}$ and $v_{0}$ is called a {\em cycle} 
and is denoted by $C = (v_{0}, v_{1}, \ldots, v_{\ell}, v_{0})$.
In this paper, a graph which consists of two vertices and
two edges between them is not considered as a cycle.

A {\em block} in $G$ is defined to be a maximal vertex subset $X\subseteq V(G)$
such that for any two vertices $u,v\in X$,
there is a cycle of $G$ that passes through $u$ and $v$. 
We call $G$ a {\em mono-block graph} if $G$ has exactly 
one block $X$ with $|X|\geq 2$.

A connected multigraph with no cycles is called a {\em multitree}.
Every multitree $T$ has either a vertex $v$ or
an adjacent vertex pair $\{v,v'\}$ 
removal of which  leaves no connected component 
with more than $\lfloor|V(T)|/2 \rfloor$ vertices~\cite{J69}. 
Such a vertex or an adjacent vertex pair is called a {\em centroid}, 
where  a centroid $v$ is called a {\em unicentroid} and
a centroid $\{v,v'\}$  is called a {\em bicentroid}.

Let $T$ be a rooted multitree and $v \in V(T)$.
Let $u \in V(T)$ be a vertex such that $v$ is on the 
unique path between $u$ and the root.
We call $u$ a {\em descendant} of $v$ and call $v$ an {\em ancestor} of~$u$.
In particular, if $u$ and $v$ are adjacent, 
we call $u$ a {\em child} of $v$ and call $v$ the {\em parent} of $u$.
The parent of $v$ is denoted by ${\rm p}(v)$.
The set of children of a vertex $v$ is denoted by $\Ch(v)$.
The {\em depth} of $v$ represents the length of 
the unique path between~$v$ and the root, 
and is denoted by ${\rm d}(v)$.
If~$v$ is the root vertex or an endvertex of the root edge, 
then $v$ has no parent, and ${\rm d}(v) = 0$.
We denote by $T_{v}$ the subtree of $T$ induced 
by $v$ and the set of descendants of~$v$.
For an edge $vw \in E(T)$ such that $w = {\rm p}(v)$, 
we denote by $T_{wv}$ the subtree of $T$ induced by~$w$, $v$, 
and the set of descendants of~$v$.
That is, $T_{wv}$ consists of the subtree $T_{v}$ and the vertex $w = {\rm p}(v)$
joined by a multiple edge between $v$ and $w$ with multiplicity~${\rm mul}_{T}(v,w)$.
We regard $T_{wv}$ to be rooted at $w$.

For a connected multigraph $G$ with at least one cycle
and a vertex~$v \in V(G)$ such that $v$ is included in some of the cycles in $G$, 
 the {\em pendent tree} $G\langle v \rangle$ of vertex~$v$
is defined to be the subgraph $T$ of $G$ induced by $v$
and the set of vertices reachable from $v$
without passing through any edge in a cycle of $G$,
where $T$ becomes a tree, possibly only consisting of vertex~$v$. 
We treat~$G\langle v\rangle$ as a tree rooted at $v$.
For a vertex $u \in V(G\langle v\rangle)$ we define
$\rhoG{u} = v$.
For convenience, for a pendent tree $T = G\langle v\rangle$
and a vertex $u \in V(G\langle v\rangle)$,
we denote the subtree $T_u$ of  $G\langle v\rangle$ rooted at 
$u$ by $G\langle u \rangle$.
In addition, for the parent $w = \parent(u)$ of $u$ in $G\langle v\rangle$,
we denote by $G \langle w, u \rangle$ the rooted tree $T_{wu}$.

An example of a pendent tree   
is illustrated in
Fig.~\ref{fig:labeled-graph}(b).
For a multigraph $G$ with $n$ vertices,
we say that a pendent tree of $G$ is \emph{exceeding} if it has at least~$n/3$ vertices.

\subsection{$k$-Augmented Trees}
A $k$-augmented tree with $n$ vertices is a connected 
multigraph such that the number of pairs of adjacent vertices is 
$(n-1)+k$, i.e., it is constructed from a multiree with $n$ vertices 
by adding edges between $k$ pairs of non-adjacent vertices. 
 
\subsubsection{1-Augmented Trees}
Let $G$ be a monocyclic graph, which has a unique cycle $C$. 
Throughout this draft we will also call 1-augmented trees 
\emph{monocyclic graphs}.
For a vertex $u \in V(G)$, let $\rhoG{u}$ denote 
the vertex $v \in V(C)$ such that $u \in V(G\langle v\rangle)$.
Note that in the case when $u \in V(C)$, it holds that $u = \rhoG{u}$.
See Fig.~\ref{fig:labeled-graph}(b) for an example of
a monocyclic graph, where~$\rhoG{u}$ is denoted for a vertex~$u$.

For a monocyclic graph $G$ and two vertices
$u, v \in V(G)$ such that $\rhoG{u} = \rhoG{v}$ 
(i.e., $u$ and $v$ are contained in the same pendent tree of $G$),
let $P(u, v)$ denote the unique path in $G$ between $u$ and~$v$.

\subsubsection{Mono-block 2-Augmented Trees}
\label{subsubsec:mono-block}
Let $H$ be a mono-block $2$-augmented tree, 
and $C_{1}$, $C_{2}$, and $C_{3}$ be the three distinct cycles of $H$.
We observe that there are exactly two vertices contained in all of the cycles of $H$.
We call these vertices {\em junctions}.
We call a pair of a junction $u$ and a neighbor of $u$ 
on some of the cycles of $H$ a {\em junction pair}.
Note that the number of junction pairs in one 
mono-block 2-augmented tree is either six or five
(see Fig.~\ref{fig:2-block}\,(a) and~(b), respectively) .
Let $u$ and $u'$ be the junctions of $H$.
For~$i \in [1,3]$, we define $P(u,u'; C_{i})$ to be the 
unique path from $u$ to $u'$ that is not in~$C_{i}$.
Note that either $P(u,u'; C_{i})$ is itself a junction pair, 
or it contains exactly two junction pairs.
See Fig.~\ref{fig:2-block}  
for an example of mono-block $2$-augmented trees and junctions. 

\begin{figure}[hbt]
\centering
\includegraphics[width=0.80\linewidth]{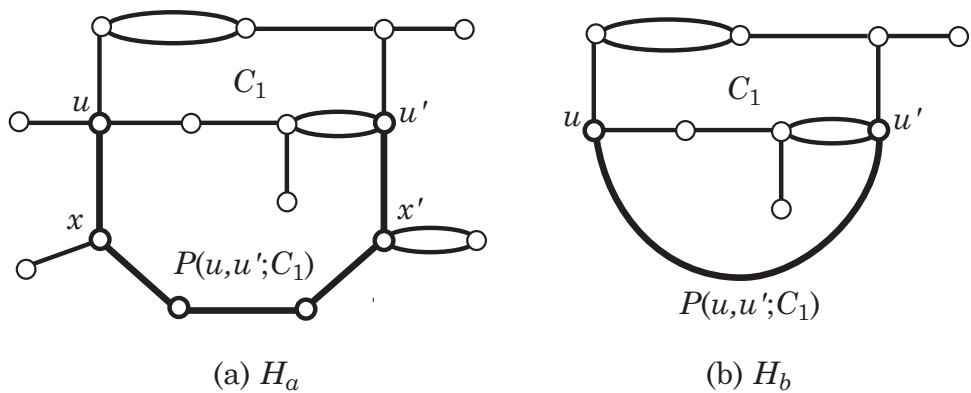}
\caption{
Mono-block $2$-augmented trees $H_{a}$ and $H_{b}$ with junctions  $u$ and $u'$.
(a)~The pair $\{u,x\}$ is one of the junction pairs of $H_{a}$.
The path $P(u,u';C_{1})$ is drawn in bold.
(b)~The path $P(u, u'; C_{1})$ drawn in bold consists of a single edge $uu'$.
}
\label{fig:2-block}
\end{figure}

\section{Problem Formulation}
\label{sec:formulate}

In this section, we formalize the problem to be addressed in this paper.

\subsection{Chemical Graphs}
To treat chemical compounds as multigraphs, 
we introduce a {\em color} for every vertex in a graph.
Colors represent chemical elements, such as {\tt O}, {\tt N}, or {\tt C}.
Let us denote the set of colors by $\Sigma$, 
and the color of a vertex $v$ by ${\rm col}(v)$.
The valence of a chemical element $c \in \Sigma$ is denoted 
by an integer function ${\rm val}(c) \in \mathbb{Z}_{+}$.
The size of a bond between two adjacent atoms is
indicated by the edge multiplicity between the two vertices 
that correspond to those atoms.
A multigraph $G$ is called a {\em $\Sigma$-colored} graph 
if each vertex $v \in V(G)$ is assigned a color ${\rm col}(v) \in \Sigma$.
A chemical compound can be viewed as a $\Sigma$-colored multigraph 
without self-loops, where vertices and colors represent 
atoms and elements, respectively.
Throughout this paper, we call $\Sigma$-colored multigraphs 
without self-loops {\em chemical graphs}.

Let $G$ be a chemical graph.
Chemical compounds, especially organic compounds, are rich with hydrogen atoms.
Since the valence of hydrogen is $1$, we can determine
the structure of chemical graphs without considering hydrogen atoms.
On this ground, we suppress hydrogen in order to 
enumerate chemical graphs more quickly.
As a result, for some vertex $v$ in a hydrogen-suppressed
chemical graph, the degree~${\rm deg}(v)$ 
may be smaller than the valence ${\rm val}({\rm col}(v))$.
For a vertex $v \in V(G)$, we define the {\em residual degree} 
of $v$ to be ${\rm val}({\rm col(v)}) -{\rm deg}(v)$ 
and denote it by $\mathrm{res}(v)$.
In a hydrogen-suppressed chemical graph, 
for a vertex $v$, $\mathrm{res}(v)$ represents the number 
of hydrogen atoms adjacent to $v$ in the corresponding chemical compound.

\subsection{Isomorphism on Chemical Multigraphs}
In enumerating chemical graphs, we must avoid duplication of equivalent graphs.
For example, two chemical graphs $G$ and $G'$ may have 
the same graph structure, and imply the same chemical 
compound even if they are different as labeled graphs.
This case is formalized by the notion of \emph{isomorphism} as follows.
Let $G = (V,E)$ and $G' = (V',bE')$ be two chemical multigraphs.
The following bijection $\psi$ from $V(G)$ to $V(G')$
is called an {\em isomorphism} from $G$ to $G'$:
\begin{itemize}
\item[(i)] for each vertex $x \in V(G)$, it holds that ${\rm col}(x) = {\rm col}(\psi(x))$; and
\item[(ii)] for each pair $\{x,y\} \subseteq V(G)$,
    it holds that ${\rm mul}_{G}(x, y) = {\rm mul}_{G'}(\psi(x),\psi(y))$. 
\end{itemize}

If there exists an isomorphism from $G$ to $G'$, 
then we say that $G$ and $G'$ are {\em isomorphic}.
We write $G \approx G'$ if $G$ and $G'$ are isomorphic, 
and write $G \not\approx G'$ otherwise.
For two sets $\mathcal{G}$ and  $\mathcal{G}'$ of chemical graphs, 
we say that  $\mathcal{G}'$  
{\em represents} $\mathcal{G}$ if \\
~-~for each  chemical graph  $G\in \mathcal{G}$, 
there is a chemical graph $G'\in \mathcal{G}'$ such that 
$G\approx G'$; and \\
~-~for any two   chemical graphs $G'_1, G'_2\in \mathcal{G}'$,
it holds that $G'_1\not\approx G'_2$. 

An {\em automorphism} of a chemical graph $G$ is defined to be 
an isomorphism $\psi$ from $V(G)$ to $V(G)$ itself. 

In addition, for two graphs $G$ and $G'$ and vertex subsets $X \subseteq V(G)$ and 
$Y \subseteq V(G')$ such that $|X| = |Y|$, 
if there exists an isomorphism $\psi$ from $V(G)$ to $V(G')$
such that for each vertex $x \in X$ it holds that $\psi(x) \in Y$,
we say that $G$ and $G'$ are \emph{$(X, Y)$-isomorphic}.
In the case when the sets $X$ and $Y$ are singletons,
i.e. $X = \{x\}$ and $Y = \{y\}$ and $G$ and $G'$ are
$(X, Y)$-isomorphic, we may write that they are
$(x, y)$-isomorphic.
Finally, if $G$ is a graph rooted at vertex $v_r$ and $G'$ is rooted at $v'_r$
and they are $(v_r, v'_r)$-isomorphic, then we say that they
are {\em rooted isomorphic} and denote this by $G \underset{r}{\approx} G'$.

\subsection{Feature Vectors}
\label{sec:feature_vectors}

In this paper, ``feature vectors'' represent occurrences of paths in a graph.
To specify a feature vector space, we fix three parameters: a set $\Sigma$ of colors;
the maximum multiplicity $d \geq 1$ among all pairs of vertices 
in multigraphs to be enumerated;
and the maximum length $K \geq 0$ of the path structures to be specified.

Let $c_{0},c_{1},\dots,c_{K} \in \Sigma$ be $K+1$ colors and 
$m_{1},m_{2},\dots,m_{K} \in [1,d]$ be $K$ integers, 
where it may hold that $c_{i} = c_{j}$ or $m_{i} = m_{j}$ for some $i$ and $j$.
We call an alternating sequence $t = (c_{0},m_{1},c_{1},\dots,m_{K},c_{K})$ 
a {\em colored sequence} of length $|t| = K$.
We denote the set of all colored sequences with length $K$ by $\Sigma^{K,d}$,
and the union of 
$\Sigma^{0, d},\Sigma^{1,d},\dots,\Sigma^{K, d}$ by $\Sigma^{\leq K, d}$.
For a colored sequence $t = (c_{0}, m_{1}, c_{1}, \dots, m_{K}, c_{K}) \in \Sigma^{K, d}$, 
let  ${\rm rev}(t)$ denote  its reverse sequence 
$(c_{K}, m_{K}, c_{K-1}, \dots, m_{1}, c_{0}) \in \Sigma^{K,d}$.

Let a chemical graph $P$ be 
a path $P = (v_{0}, v_{1}, \ldots, v_{K})$ of length $K$ with root $v_0$. 
We define the colored sequence $\gamma(P) \in \Sigma^{K, d}$ of $P$ to be
\begin{equation*}
\gamma(P) \triangleq ({\rm col}(v_{0}), {\rm mul}_{P}(v_{0},v_{1}), {\rm col}(v_{1}),
\ldots, {\rm mul}_{P}(v_{K-1},v_{K}), {\rm col}(v_{K})).
\end{equation*}

Let $G$ be a chemical graph.
For a colored sequence $t \in \Sigma^{\leq K,d}$, the {\em frequency} ${\rm frq}(t,G)$ 
of $t$ in $G$ is defined to be the number of vertex-rooted subgraphs $G'$ of $G$ 
such that 
$G' \underset{r}{\approx} P$ for a rooted path $P$ with   $\gamma(P) = t$.
We define the {\em feature vector} ${\bf f}(G)$ of {\em level} $K$ of $G$ 
to be the $|\Sigma^{\leq K,d}|$-dimensional vector
such that ${\bf f}(G)[t] = {\rm frq}(t,G)$
 for each colored sequence $t \in \Sigma^{\leq K,d}$.

Given a color set $\Sigma$ and integers $d$ and $K$, 
the set of $|\Sigma^{\leq K,d}|$-dimensional vectors whose entries are
nonnegative integers is  
called a  {\em feature vector space} and is
denoted by ${\bf f}(\Sigma, K, d)$.
Equivalently, each vector $\mathbf{g} \in \fvset(\Sigma, K, d)$ 
is a mapping $\mathbf{g} : \Sigma^{\leq K, d} \to \mathbb{Z}_{+}$.
For two vectors $\mathbf{g}, \mathbf{g}' \in \fvset(\Sigma, K, d)$,
we write $\mathbf{g} \leq \mathbf{g}'$ if for each entry $t \in \Sigma^{\leq K,d}$,
it holds that $\mathbf{g}[t] \leq \mathbf{g}'[t]$. 
For two given vectors ${\bf g}, {\bf g'} \in {\bf f}(\Sigma, K, d)$, 
a chemical graph $G$ is called {\em feasible}
if ${\bf g} \leq {\bf f}(G) \leq {\bf g'}$ and $\mathrm{res}(v) \geq 0$
holds for all vertices~$v \in V(G)$.
Let $\mathcal{G}({\bf g}, {\bf g'})$
denote the set of all chemical graphs feasible to  $({\bf g}, {\bf g'})$.

\section{Problem of Enumerating Mono-block 
$2$-Augmented Trees}\label{sec:EULF2aug}

Let $\mathcal{G}_1$ denote the set of $\Sigma$-colored labeled monocyclic graphs,
and $\mathcal{G}_2$ denote the set of 
$\Sigma$-colored labeled mono-block $2$-augmented trees.
Note that each of $\mathcal{G}_1$ and $\mathcal{G}_2$ conceptually contains 
infinitely many labeled graphs unless a way of expressing labeled graphs is restricted.
 
For two given vectors ${\bf g}_{\ell}$ and ${\bf g}_{u}$,
 the set of feasible graphs $H\in \mathcal{G}_{2}$ 
 is denoted by $\mathcal{G}_{2}({\bf g}_{\ell}, {\bf g}_{u})$.
Our goal in this paper is  to construct a set $\mathcal{G}'_{2}$ that represents 
the set $\mathcal{G}_{2}({\bf g}_{\ell},{\bf g}_{u})$. 
 Since we can use an existing algorithm to obtain a set 
 $\mathcal{G}'_1$ that represents  $\mathcal{G}_1$
 (for example, the algorithm due to Suzuki~{\it et al.}~\cite{Suzuki14}), 
we design an algorithm to our goal 
so that  our target set $\mathcal{G}'_{2}$ is constructed \\
(I)~by  adding  some number $p\in [1, \min\{\mathrm{res}(x),\mathrm{res}(y) \}]$ 
of edges 
between some non-adjacent vertices $x$ and $y$ 
in each $\Sigma$-colored labeled  monocyclic graph $H$ in $\mathcal{G}'_1$; and\\
(II)~by discarding from the graphs $H$ those that are infeasible,
i.e.,   ${\bf g}_{\ell} > {\bf f}(H)$ or ${\bf f}(H) > {\bf g}_{u}$.
\medskip

We design procedures for tasks (I) and (II) separately.
In the rest of this subsection, we examine the task (I). 
Testing if a graph $H\in \mathcal{G}_2$ 
satisfies ${\bf g}_{\ell} \leq  {\bf f}(H) \leq {\bf g}_{u}$
can be handled in a relatively easy way.
We remark that all non-isomorphic $\Sigma$-colored labeled monocyclic graphs
might not be needed for us to  generate our 
target set $\mathcal{G}'_{2}$ which is restricted 
by bounds ${\bf g}_{\ell}$ and ${\bf g}_{u}$.
In other words, we may be able to introduce
a pair $({\bf g}'_{\ell}, {\bf g}'_{u})$ of lower and upper feature vectors 
on $\Sigma$-colored labeled monocyclic graphs 
so that a smaller set of non-isomorphic  $\Sigma$-colored labeled monocyclic graphs
can produce all necessary graphs in our target set $\mathcal{G}'_{2}$.
 
For two vectors ${\bf g}_1$ and ${\bf g}_2$, the set of feasible
  graphs in $\mathcal{G}_{1}$ is denoted by $\mathcal{G}_{1}({\bf g}_1,{\bf g}_2)$.
Following the idea of Suzuki~{\it et al.}~\cite{Suzuki14}, 
we modify a given lower vector ${\bf g}_{\ell}$ into a vector 
${\bf g}^{\dagger}_{\ell}$ so that any graph in
$\mathcal{G}_{2}({\bf g}_{\ell},{\bf g}_{u})$ can be constructed 
from some graph in $\mathcal{G}_{1}({\bf g}^{\dagger}_{\ell}, {\bf g}_{u})$.
For a given lower vector ${\bf g}_{\ell} \in {\bf f}(\Sigma,K,d)$, 
we define~${\bf g}^{\dagger}_{\ell} \in {\bf f}(\Sigma, K, d)$ as follows:
\begin{itemize}
\item[(i)] for each $t \in \Sigma^{0,d}$, let ${\bf g}^{\dagger}_{\ell}[t] = {\bf g}_{\ell}[t]$;
\item[(ii)] for each $t = (c,m,c') \in \Sigma^{1,d}$, let
\begin{equation*}
\mathbf{g}_{\ell}^{\dagger}[t] = 
\left\{
\begin{array}{ll}
\max\{0 , \mathbf{g}_{\ell}[t] - 1\}, & \text{if}~c \neq c'\\
\max\{0 , \mathbf{g}_{\ell}[t] - 2\}, & \text{if}~c = c', \text{ and}
\end{array}
\right.
\end{equation*}
\item[(iii)] for each $t \in \Sigma^{\leq K,d} \setminus \Sigma^{\leq 1,d}$, 
let ${\bf g}^{\dagger}_{\ell}[t] = 0$.
\end{itemize}

\begin{lemma}~\label{lem:gpl}
For a set $\Sigma$ of colors and integers $K \geq 0$ and $d \geq 1$, 
let the vectors~${\bf g}_{u},{\bf g}_{\ell} \in {\bf f}(\Sigma,K,d)$ be such that 
${\bf g}_{\ell} \leq {\bf g}_{u}$, and $H \in \mathcal{G}_{2}({\bf g}_{\ell},{\bf g}_{u})$.
Then, for each junction pair $\{x,y\} \subseteq V(H)$, 
it holds that $H - xy \in \mathcal{G}_{1}({\bf g}^{\dagger}_{\ell},{\bf g}_{u})$.
\end{lemma}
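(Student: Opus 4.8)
The plan is to establish the two ingredients hidden in the claim ``$H-xy\in\mathcal{G}_{1}(\mathbf{g}^{\dagger}_{\ell},\mathbf{g}_{u})$'' separately: first that $H-xy$ is a $\Sigma$-colored labeled monocyclic graph, i.e. lies in $\mathcal{G}_{1}$, and then that it is feasible for the pair $(\mathbf{g}^{\dagger}_{\ell},\mathbf{g}_{u})$.

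For the structural part, I would first record that since $\{x,y\}$ is a junction pair the edge $xy$ lies on one of the cycles of $H$, and that $\mathrm{mul}_{H}(x,y)=1$: an edge of multiplicity $\geq 2$ on the block of $H$ would, together with the rest of the block, create cycles beyond the three cycles $C_{1},C_{2},C_{3}$. Hence $H-xy$ is obtained from $H$ by deleting a single edge, which makes $\{x,y\}$ non-adjacent and leaves every other pair unchanged, while $H-xy$ stays connected because the remainder of the cycle through $xy$ still joins $x$ and $y$. Thus, writing $n=|V(H)|$, the graph $H-xy$ is connected with $n$ vertices and $(n-1)+1$ pairs of adjacent vertices, i.e. a $1$-augmented tree; colors and labels are untouched, so $H-xy\in\mathcal{G}_{1}$.

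For feasibility I would argue the two bounds entrywise. The inequality $\mathbf{f}(H-xy)\leq\mathbf{g}_{u}$ is immediate: every vertex-rooted subgraph of $H-xy$ is a vertex-rooted subgraph of $H$ (same vertex set, $E(H-xy)\subseteq E(H)$) of the same rooted-isomorphism type, so $\mathrm{frq}(t,H-xy)\leq\mathrm{frq}(t,H)$ for every colored sequence $t$, whence $\mathbf{f}(H-xy)\leq\mathbf{f}(H)\leq\mathbf{g}_{u}$; and $\mathrm{res}(v)$ only increases when an edge is deleted, so it stays nonnegative. For $\mathbf{g}^{\dagger}_{\ell}\leq\mathbf{f}(H-xy)$ I would split on $|t|$. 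If $|t|\geq 2$ then $\mathbf{g}^{\dagger}_{\ell}[t]=0$; if $|t|=0$ then deleting an edge changes no length-$0$ frequency, so $\mathbf{g}^{\dagger}_{\ell}[t]=\mathbf{g}_{\ell}[t]\leq\mathrm{frq}(t,H)=\mathrm{frq}(t,H-xy)$. The only substantive case is $|t|=1$: since $H$ and $H-xy$ share their vertex set and differ only in that $\mathrm{mul}(x,y)$ drops from $1$ to $0$, the only length-$1$ frequencies that change are those of $(\mathrm{col}(x),1,\mathrm{col}(y))$ and $(\mathrm{col}(y),1,\mathrm{col}(x))$, each decreasing by exactly $1$ when $\mathrm{col}(x)\neq\mathrm{col}(y)$, while the single sequence $(\mathrm{col}(x),1,\mathrm{col}(x))$ decreases by exactly $2$ when $\mathrm{col}(x)=\mathrm{col}(y)$. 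For such a $t$, combining $\mathrm{frq}(t,H)\geq\mathbf{g}_{\ell}[t]$ with $\mathrm{frq}(t,H-xy)\geq 0$ gives $\mathrm{frq}(t,H-xy)\geq\max\{0,\mathbf{g}_{\ell}[t]-1\}$ (respectively $\max\{0,\mathbf{g}_{\ell}[t]-2\}$), which is exactly $\mathbf{g}^{\dagger}_{\ell}[t]$; for every other $t$ with $|t|=1$ we simply have $\mathbf{g}^{\dagger}_{\ell}[t]\leq\mathbf{g}_{\ell}[t]\leq\mathrm{frq}(t,H-xy)$. Putting the pieces together yields $H-xy\in\mathcal{G}_{1}(\mathbf{g}^{\dagger}_{\ell},\mathbf{g}_{u})$.

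The step needing care --- and essentially the whole content of the lemma --- is the length-$1$ bookkeeping: one must be certain that no length-$1$ entry other than the (at most) two ``using'' the pair $\{x,y\}$ is affected, and that the decrement is precisely $1$ in the heterochromatic case and precisely $2$ in the homochromatic case, which is exactly what the $-1$ and $-2$ in the definition of $\mathbf{g}^{\dagger}_{\ell}$ are designed to absorb. The auxiliary fact $\mathrm{mul}_{H}(x,y)=1$ should be stated explicitly, since it is what keeps the decrement that small (and also what guarantees $H-xy$ is monocyclic rather than having fewer adjacent pairs removed than intended).
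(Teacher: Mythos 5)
Your overall route is the same as the paper's: show $H-xy$ is monocyclic, get the upper bound and the residual degrees from monotonicity under edge deletion, and handle the lower bound by splitting on $|t|$ with the $-1$/$-2$ decrement absorbed by the definition of $\mathbf{g}^{\dagger}_{\ell}$. That part is fine. However, the auxiliary claim you single out as essential --- that $\mathrm{mul}_{H}(x,y)=1$ for a junction pair --- is false, and your justification for it does not work in this paper's setting. A cycle here is defined on \emph{multiple} edges (a pair of parallel edges between two vertices is explicitly \emph{not} a cycle), so a multiplicity-$2$ junction edge does not create any cycles beyond $C_{1},C_{2},C_{3}$. Indeed, the whole enumeration procedure builds $H=G+p\cdot xy$ with $p$ ranging over $[1,\min\{\mathrm{res}(x),\mathrm{res}(y)\}]$, so junction pairs of multiplicity $p\geq 2$ (chemically, double or triple bonds on the block) are exactly the cases the lemma must cover.

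Fortunately the claim is also unnecessary, which is why your proof is repairable rather than broken. The notation $H-xy$ removes \emph{all} $\mathrm{mul}_{H}(x,y)$ parallel edges, so exactly one adjacent pair disappears and your count $(n-1)+1$ still gives a monocyclic graph. For the length-$1$ bookkeeping, the affected colored sequence is $t=(\mathrm{col}(x),\mathrm{mul}_{H}(x,y),\mathrm{col}(y))$ (and its reverse), not $(\mathrm{col}(x),1,\mathrm{col}(y))$; since a length-$1$ rooted path carries the full multiplicity of its multiple edge, deleting the whole multiple edge still decreases that single entry by exactly $1$ per orientation (hence by $2$ when $t=\mathrm{rev}(t)$), and the definition of $\mathbf{g}^{\dagger}_{\ell}$ applies to every $t=(c,m,c')\in\Sigma^{1,d}$, so the bound $\mathbf{f}(H-xy)[t]\geq\max\{0,\mathbf{g}_{\ell}[t]-1\}$ (resp.\ $-2$) goes through verbatim. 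With the false multiplicity claim deleted and the affected sequence renamed, your argument coincides with the paper's proof.
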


\begin{proof}
Let $\{x,y\}$ be a junction pair in $H$. 
Let $G$ be the graph $H - xy$.
We show that $G \in \mathcal{G}_{1}({\bf g}^{\dagger}_{\ell},{\bf g}_{u})$
 i.e., $G$ is a graph in $\mathcal{G}_{1}$ with
 ${\bf g}^{\dagger}_{\ell} \leq {\bf f}(G) \leq {\bf g}_{u}$
and $\mathrm{res}(v) \geq 0$ for each vertex $v \in V(G)$.
Obviously, $G = H - xy$ is
 a graph in $\mathcal{G}_{1}$ because removing from $H$ all edges 
 between a junction pair leaves only one cycle in $G$.

Since $H \in \mathcal{G}_{2}({\bf g}_{\ell},{\bf g}_{u})$, 
we have ${\bf g}_{\ell} \leq {\bf f}(H) \leq {\bf g}_{u}$.
First we check the upper bound of ${\bf f}(G)$.
Since removing all edges between $x$ and $y$ does not increase
 the frequency of any colored sequence, 
 it holds that ${\rm frq}(t,G) \leq {\rm frq}(t,H)$ for any colored sequence~$t$.
Hence it holds that ${\bf f}(G) \leq {\bf f}(H)\leq {\bf g}_{u}$.
Next we check the lower bound of~${\bf f}(G)$.
For a colored sequence $t \in \Sigma^{0,d}$, we consider the entry~${\bf f}(G)[t]$.
We have~${\bf f}(G)[t] = {\bf f}(H)[t] = {\bf g}_{\ell}[t] = {\bf g}^{\dagger}_{\ell}[t]$
 for all colored sequences $t \in \Sigma^{0,d}$, since removing edges does not 
 change the frequency of colored sequences of length~$0$.
A colored sequence~$t \in \Sigma^{1,d}$ is given by $(c,m,c')$, and
 we have the following observations: If~$t = \gamma(xy)$ or~$t = \gamma(yx)$ 
 for some path~$xy$ in~$H$, it holds that
\begin{equation*}
{\bf f}(G)[t] = 
\left\{
\begin{array}{ll}
\max\{0 , {\bf f}(H)[t] - 1\} & \text{if}~t \neq {\rm rev}(t) \\
\max\{0 , {\bf f}(H)[t] - 2\} & \text{if}~t = {\rm rev}(t);
\end{array}
\right.
\end{equation*}
and otherwise, we have ${\bf f}(G)[t] = {\bf f}(H)[t]$.
For all colored sequences $t \in \Sigma^{1,d}$, 
since ${\bf g}_{\ell} \leq {\bf f}(H)[t]$, it holds that
\begin{equation*}
{\bf f}(G)[t] \geq
\left\{
\begin{array}{ll}
\max\{0 , {\bf g}_{\ell}[t] - 1\} & \text{if}~c \neq c' \\
\max\{0 , {\bf g}_{\ell}[t] - 2\} & \text{if}~c = c'. 
\end{array}
\right.
\end{equation*}
According to the definition of ${\bf g}^{\dagger}_{\ell}$, 
we have ${\bf f}(G)[t] \geq {\bf g}^{\dagger}_{\ell}[t]$
for each colored sequence~$t \in \Sigma^{1,d}$.
For a colored sequence $t \in \Sigma^{\leq K, d}  \setminus \Sigma^{\leq 1,d}$, 
it is clear from the definition of~${\bf g}^{\dagger}_{\ell}$ 
that ${\bf f}(G)[t] \geq {\bf g}^{\dagger}_{\ell}[t] = 0$.
From above, we have ${\bf g}^{\dagger}_{\ell} \leq {\bf f}(G) \leq {\bf g}_{u}$.

Finally, we prove that $\mathrm{res}(v) \geq 0$ for each vertex $v \in V(G)$.
Since $H$ is feasible, it holds that $\mathrm{res}(u) \geq 0$ for each vertex $u \in V(H)$.
Removing edges from $H$ does not decrease the residual degree of any vertex.
Hence it holds that $\mathrm{res}(v) \geq 0$ for each vertex $v \in V(G)$.
\end{proof}

Lemma~\ref{lem:gpl} states that for each graph 
$H \in \mathcal{G}_{2}({\bf g}_{\ell}, {\bf g}_{u})$,
 there is at least one monocyclic graph 
 $G \in \mathcal{G}_{1}({\bf g}^{\dagger}_{\ell},{\bf g}_{u})$, 
 a pair $\{x,y\} \in \overline{E}(G)$, and an integer 
 $p$ giving $H$ as $G + p \cdot xy$.
For this, 
we first generate all monocyclic graphs
in~$\mathcal{G}_{1}({\bf g}^{\dagger}_{\ell},{\bf g}_{u})$.
Then we enumerate all mono-block $2$-augmented trees
in $\mathcal{G}_{2}({\bf g}_{\ell}, {\bf g}_{u})$
by adding edges between pairs of non-adjacent vertices 
in each of the given monocyclic graphs, 
where the pair of endvertices of the newly added edges becomes 
a junction pair of the newly 
created mono-block $2$-augmented tree.
The problem to deal with in this paper is formalized as follows.\\

\begin{samepage}
 \noindent{\bf Enumerating mono-block $2$-augmented trees 
from given monocyclic graphs
with given Upper and Lower Path Frequency}
\begin{algorithmic}
\Require
 A set $\Sigma$ of colors, integers $K \geq 0$ and $d \geq 1$, two vectors 
${\bf g}_{u},{\bf g}_{\ell} \in {\bf f}(\Sigma, K, d)$
such that ${\bf g}_{\ell} \leq {\bf g}_{u}$ 
and for all colored sequences $t \in \Sigma^{0,d}$
it holds that ${\bf g}_{\ell}[t] = {\bf g}_{u}[t]$, 
and a set $\mathcal{G}'_{1}$ that represents the set 
$\mathcal{G}_{1}({\bf g}^{\dagger}_{\ell}, {\bf g}_{u})$.
\Ensure A   set $\mathcal{G}'_{2}$  that represents 
the set $\mathcal{G}_{2}({\bf g}_{\ell}, {\bf g}_{u})$. 
\end{algorithmic}
\bigskip
\end{samepage}

In the rest of the paper, we focus on designing an enumerating procedure for the task~(I).

\subsection{Sketch of the Enumerating Procedure}
\label{subsec:IdeasEnu}
Suppose that 
a set $\mathcal{G}'_{1}=\{G_1, G_2,\ldots, G_q\}$ that represents the set 
$\mathcal{G}_{1}({\bf g}^{\dagger}_{\ell}, {\bf g}_{u})$ is given. 
When $\Sigma$-colored labeled mono-block $2$-augmented trees are generated
by adding a multiple edge $xy$ 
between a pair $\{x, y\} \in \overline{E}(G)$
to each graph $G_i\in  \mathcal{G}'_{1}$, we have to check 
whether the same graph has already been generated in the process or not, i.e., 
whether a newly generated graph
$H = G_i + p\cdot xy \in \mathcal{G}_{2}$
is isomorphic to another graph
$H' = G_j + q\cdot x'y' \in \mathcal{G}_{2}$
that has already been obtained from some graph $G_j \in  \mathcal{G}'_{1}$.
We call the duplication arising when 
$H = G_i + p\cdot xy$ and $H' = G_j + q\cdot x'y'$ are isomorphic
with $i \neq j $ \emph{inter-duplication}, 
and that when $H = G_i + p\cdot xy \approx H' = G_i + q\cdot x'y'$
\emph{intra-duplication}.
This section provides some ideas on how to avoid such duplications 
without storing all generated graphs and 
explicitly comparing the new one with each of them.

In enumeration algorithms, 
the concept of a {\em family tree}
 is widely employed~\cite{NU03_rooted,NU05_colored,Suzuki14} 
 in order to efficiently cope with inter-duplication.
In order to define a family tree for graphs, 
we need to define a {\em parent-child relationship} 
between graph structures so that the parent structure of a given 
chemical graph $H$ is uniquely determined 
 from the topological structure of $H$.
A {\em parent-child relationship} over   
classes $\mathcal{H}$ and  $\mathcal{G}$ of chemical graphs 
is defined by an injective mapping  $\pi: \mathcal{H} \to \mathcal{G}$ as follows.
Given a chemical graph $H\in \mathcal{H}$,
we define a labeled graph $G=\pi(H)\in \mathcal{G}$ so that 
for any two chemical graphs $H_1$ and $H_2$ with $H_1 \approx H_2$,
it holds 
$G_1 \approx G_2$ 
with chemical graphs $G_i=\pi(H_i)$, $i=1,2$,
implying that $\pi(H)$ is determined only from the topological structure of $H$.
For two chemical graphs $G\in \mathcal{G}$ and $H\in \mathcal{H}$ 
with $G = \pi(H)$, the graph $G$ is called the {\em parent} of $H$
(the parent of $H$ as an unlabeled graph which is unique up to automorphism), 
and the graph $H$ is called a {\em child} of $G$.
Let  $\Ch_\pi(G)$ denote the set of chemical graphs $H\in \mathcal{H}$
 such that $G = \pi(H)$.
Therefore, we easily observe the next.

\begin{lemma}
Let  $\pi: \mathcal{H} \to \mathcal{G}$ be a parent-child relationship over
classes $\mathcal{H}$ and  $\mathcal{G}$ of chemical graphs.  
For two chemical  graphs $G,G'\in \mathcal{G}$, if $G \not\approx G'$, then
 $\Ch_\pi(G)\cap \Ch_\pi(G')=\emptyset$.
 \label{lem:no_duplication_from_different_parents}
\end{lemma}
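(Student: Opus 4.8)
The statement to prove is Lemma~\ref{lem:no_duplication_from_different_parents}: if $G \not\approx G'$ for chemical graphs $G, G' \in \mathcal{G}$, then $\Ch_\pi(G) \cap \Ch_\pi(G') = \emptyset$.

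\medskip

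\noindent\textbf{Proof proposal.}

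The plan is to argue by contradiction, exploiting the two defining properties of a parent-child relationship: that $\pi$ is injective, and that $\pi$ respects isomorphism (i.e., $H_1 \approx H_2$ implies $\pi(H_1) \approx \pi(H_2)$). First I would suppose, for contradiction, that there exists a chemical graph $H \in \Ch_\pi(G) \cap \Ch_\pi(G')$. By the definition of $\Ch_\pi$, membership $H \in \Ch_\pi(G)$ means $G = \pi(H)$, and likewise $H \in \Ch_\pi(G')$ means $G' = \pi(H)$. Since $\pi$ is a (well-defined) mapping, $\pi(H)$ is a single labeled graph, so $G = G'$ as labeled graphs, which in particular gives $G \approx G'$, contradicting the hypothesis $G \not\approx G'$.

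\medskip

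Strictly speaking there is a subtlety worth spelling out: the phrasing ``$G = \pi(H)$'' in the text is up to automorphism/isomorphism rather than literal equality of labeled graphs (the parent ``is unique up to automorphism''). Under that reading, $H \in \Ch_\pi(G)$ should be interpreted as $\pi(H) \approx G$ and $H \in \Ch_\pi(G')$ as $\pi(H) \approx G'$. Then from $G \approx \pi(H) \approx G'$ and transitivity of the isomorphism relation $\approx$, we again get $G \approx G'$, contradicting the hypothesis. Either way the conclusion follows, and I would present the short version while noting that the isomorphism-level reading requires only transitivity of $\approx$, which is immediate from composing the underlying bijections.

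\medskip

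I do not expect any real obstacle here; this is essentially the observation that distinct fibers of a function (or of a map well-defined up to an equivalence relation) are disjoint. The only thing to be careful about is making the bookkeeping match the paper's exact definition of $\Ch_\pi$ and of the parent relation, so that the one-line argument is watertight rather than appearing to sweep the ``up to automorphism'' clause under the rug. Thus the proof is just: assume a common child $H$, apply the definition twice, and invoke that $\pi$ sends $H$ to a unique parent (up to isomorphism), contradicting $G \not\approx G'$.
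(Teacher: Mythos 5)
Your proposal is correct and coincides with what the paper intends: the paper states this lemma without proof (``we easily observe the next''), and the intended argument is exactly your one-liner that a common child $H$ would force $G = \pi(H) = G'$ (or $G \approx \pi(H) \approx G'$ under the up-to-automorphism reading), contradicting $G \not\approx G'$. Your extra care about the ``unique up to automorphism'' clause is appropriate but does not change the argument.
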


Then, Lemma~\ref{lem:no_duplication_from_different_parents} ensures that 
we do not need to explicitly check 
for inter-duplication, that is, 
if a  chemical  graph $H\in \mathcal{G}_{2}$ that is a child of
a chemical  graph $G_i \in \mathcal{G}_{1}$ is isomorphic 
to some  chemical  graph $H'\in \mathcal{G}_{2}$ 
generated from
a chemical  graph $G_j\in \mathcal{G}_{1}$ with~$i\neq j$.

This section closes by presenting a high level description 
of a procedure that, given a chemical  graph
$G\in \mathcal{G}_1$ and a maximum multiplicity $d$, 
constructs 
a set $\mathcal{G}_2(G,d)$ that represents 
the set of chemical graphs in $\Ch_\pi(G)~(\subseteq \mathcal{G}_2)$ 
with multiplicity at most $d$.
For this, we will introduce \\
(I-a)~an appropriate choice of  a parent-child relationship
$\pi: \mathcal{G}_{2} \to \mathcal{G}_{1}$;    and \\
(I-b)~a way of avoiding intra-duplication from the same 
chemical graph $G\in \mathcal{G}_{1}$.\\
The technical details on (I-a) and (I-b) will be discussed in 
Sections~\ref{sec:Parent-Child}
and \ref{sec:avoid_duplicate}, respectively.
Now the task (I)~is divided into two subtasks,  (I-a) and~(I-b).

We here explain ideas on (I-a) and (I-b).
For (I-a), we define a parent-child relationship 
$\pi: \mathcal{G}_{2} \to \mathcal{G}_{1}$  
such that 
the parent $\pi(H)$ of $H$ is a $\Sigma$-colored labeled monocyclic graph 
obtained from $H$ by removing all 
$\mathrm{mul}_H\{x, y\}$ edges between a junction pair $\{x,y\}$
which meets a certain condition derived in Section~\ref{sec:Parent-Child}.
The reason why such a junction pair  $\{x, y\}$ is carefully chosen is that
 the resulting function $\pi$ must satisfy the definition of parent-child relationships
over classes $\mathcal{G}_{2}$ and $\mathcal{G}_{1}$.

To handle~(I-b) for a chemical graph $G\in \mathcal{G}_{1}$, 
 we call a subset $F\subseteq \overline{E}(G)$ {\em proper} if\\ 
~-~for any  $H \in \Ch_\pi(G)$, there is a pair $\{u,v\} \in F$
 and an integer $p \in [1, \min \{ \res(u), \res(v) \}]$,
 such that $H  \approx G + p \cdot uv$; and\\
~-~for two distinct pairs $\{x,y\}, \{u,v\} \in F$, 
and integers  $p \in [1, \min \{ \res(x), \res(y)\}]$
and $q \in [1, \min \{ \res(u), \res(v)\} ]$
it holds that $G + p \cdot xy \not\approx G + q \cdot uv$.\\
Section~\ref{sec:avoid_duplicate} provides a procedure for testing
if a given pair $\{x,y\}\in \overline{E}(G)$ belongs to a proper set $F$
after executing a preprocessing to construct~$F$. 

With the above ideas, a procedure for constructing 
a set $\mathcal{G}_2(G, d)$ is given as follows. 

\pagebreak

\medskip
\noindent{\bf Procedure~1}
\begin{algorithmic}[1]
\Require An integer~$d$ 
and a chemical graph $G\in \mathcal{G}_1$ with a unique cycle $C$
and multiplicity at most~$d$.
\Ensure A set $\mathcal{G}_2(G, d)$ that represents 
the set of chemical graphs in $\Ch_\pi(G)~(\subseteq \mathcal{G}_2)$
with multiplicity at most $d$. 
\State Execute a preprocessing to construct a proper set $F \subseteq \overline{E}(G)$;
  /* by Procedure~{5}~in Section~\ref{sec:avoid_duplicate} */
\For{{\bf each} pair $\{x, y\} \in F$ }
    \For{{\bf each} integer $p \in [1, \min \{ d, \res(x), \res(y) \} ]$}
      \If {$G + p \cdot xy$ is a child of $G$ 
	    /* tested by Procedure~{4} in Section~\ref{sec:Parent-Child} */}
	\State Generate the mono-block $2$-augmented tree 
	  $G + p \cdot xy$ as part of the output
      \EndIf
    \EndFor
\EndFor.
\end{algorithmic}
\vspace{1ex}

%

\section{Signature and Code}
\label{sec:signature}

In order to define our parent-child relationship 
$\pi: \mathcal{G}_2\to  \mathcal{G}_1$ 
for the task~(I-a), 
we introduce the notion of ``signature'' of graphs. 

For a class $\mathcal{G}$ of graphs, if we have a way 
of choosing a labeling of each graph 
$G \in \mathcal{G}$ which is unique up to the graph's automorphism, then we can test 
the isomorphism of the two graphs directly by comparing their labels.
Such a labeling for a graph~$G$ is called a {\em canonical form} of~$G$.
Once such a canonical form for a class $\mathcal{G}$ of graphs is obtained, 
we can easily 
encode each graph $G \in \mathcal{G}$ into a sequence $\sigma(G)$, 
called the {\em signature} of $G$, 
such that two graphs $G,G' \in \mathcal{G}$ 
are isomorphic if and only if~$\sigma(G) = \sigma(G')$.

\subsection{Lexicographical Order}
We fix a total order of the colors in $\Sigma$ arbitrarily, 
e.g., ${\tt O} < {\tt N} < {\tt C}$.
We introduce a {\em lexicographical order} among sequences with elements in
 $\Sigma \cup \mathbb{Z}_{+}$ as follows.
A sequence $A = (a_{1},a_{2},\dots,a_{p})$ is {\em lexicographically smaller} 
than a sequence $B=(b_{1},b_{2},\dots,b_{q})$ 
if there is an index $k \in [1, \min\{p,q\}]$ such that
\begin{itemize}
\item[(i)] $a_{i} = b_{i}$ for all $i \in [1,k]$; and
\item[(ii)] $k = p < q$ or $a_{k+1} < b_{k+1}$ with $k < \min\{p,q\}$.
\end{itemize}
If $A$ is lexicographically smaller than $B$, then we denote $A \prec B$.
If $p = q$ and $a_{i} = b_{i}$ for all $i \in [1,p]$, then we denote $A = B$.
Let $A \preceq B$ mean that $A \prec B$ or $A = B$.

We often rely on lexicographically sorting a collection $S = (s_1, s_2, \ldots, s_k)$ 
of $k$ sequences.
We will represent a lexicographically ascending (resp., descending) 
order on collection $S$ by a permutation
$\pi : [1, k] \to [1, k]$ such that for $1 \leq i < j \leq k$
it holds that $s_{\pi(i)} \preceq s_{\pi(j)}$ 
(resp., $s_{\pi(j)} \preceq s_{\pi(i)}$).

For a collection $S$ of sequences, let us denote by $||S|| = \sum_{s \in S} |s|$
the total length of the sequences
in the collection~$S$.
A known algorithm due to Aho {\em et al.}~\cite{AHU74}
can be used to lexicographically sort a collection $S$ of sequences
over an alphabet of size $n$
in $O(||S|| + n)$ computation time.

\subsection{Canonical Form and Signature of Trees}
\label{subsec:SignatureOfTrees}
In this subsection, we review the concept of a canonical form 
of rooted trees~\cite{NU03_rooted,NU05_colored}.

\subsubsection{Ordered Trees}
\label{sec:ordered_trees}

An {\em ordered tree} is a rooted tree with a fixed total 
order among the children of each vertex.
By convention, we assume that the order of children 
in an ordered tree is from left to right.

Let $T$ be a multitree with $n$ vertices, rooted at a vertex $r$.
There may be many different ordered trees on $T$.
A canonical form of $T$ is given by an adequately chosen ordered tree on~$T$.
When we conduct a depth-first-search, 
we assume that we visit children from left to right.
We denote the vertices of $T$ by $v_{1}, v_{2}, \dots, v_{n}$, 
indexed in the order visited
 by a depth-first-search starting from the root.
Let $\tau$ be an ordered tree on $T$ and let~$\delta(\tau)$ 
denote the alternating \emph{color-depth sequence} 
$(c_{1}, d_{1}, \dots, c_{n} ,d_{n})$ that consists of the color 
$c_{i}$ and the depth $d_{i}$ 
of the $i$-th vertex $v_{i}$ for $i \in [1, n]$.
Let ${\rm M}(\tau)$ denote the sequence 
$(m_{2}, m_{3}, \dots, m_{m})$ of the multiplicity 
$m_{i} = {\rm mul}(v_{i},{\rm p}(v_{i}))$ between the $i$-th vertex $v_i$ and 
its parent ${\rm p}(v_{i})$ for $i \in [2,n]$.

%

\subsubsection{Left-heavy Trees}
\label{sec:left-heavy}

A {\em left-heavy tree} of a rooted tree $T$ is an ordered tree $\tau$ that 
has the lexicographically maximum code $\delta(\tau)$
among all ordered trees of $T$.
Note that a left-heavy tree has the following recursive structure: 
for every vertex $v \in V(T)$, 
the subtree $T_{v}$ is also a left-heavy tree, 
and $\delta(T_{v})$ and ${\rm M}(T_{v})$ are 
continuous subsequences of $\delta(T)$ and ${\rm M}(T)$, respectively.
We define the canonical form of a rooted tree $T$ 
to be the left-heavy tree $\tau$ that has 
the lexicographically maximum sequence ${\rm M}(\tau)$ 
among all left-heavy trees of $T$, and 
define the \emph{signature} of $T$ 
to be $\sigma(T) \triangleq (\delta(\tau), {\rm M}(\tau))$.
We give a procedure to calculate the signature $\sigma$ of
a rooted tree as Procedure~{2} in Section~\ref{sec:calculating_sigma}.


\subsubsection{Calculating the Signature of Rooted Multi-Trees}
\label{sec:calculating_sigma}
For any two sequences $S_1$ and $S_2$, 
let $S_1 \oplus S_2$ denote the concatenation of~$S_1$ and~$S_2$.
Given an ordered multi-tree $T$ 
on $n$ vertices indexed $v_1, v_2, \ldots, v_n$ as visited in a depth-first
traversal,
let $\delta(T) = (c_1, d_1, c_2, d_2, \ldots, c_n, d_n)$
be its color-depth sequence as defined in Section~\ref{sec:left-heavy}.
For an integer $k \geq 1$ we define the \emph{$k$-shift} 
$\delta^k(T)$ of the sequence $\delta(T)$ to be the sequence 
$(c_1, d_1 + k, c_2, d_2 + k, \ldots, c_n, d_n + k)$
obtained by adding $k$ 
to each of the depth entries of~$\delta(T)$.

Let $T$ be an ordered multi-tree rooted at a vertex $r$,
and let $u_1, u_2, \ldots, u_{\deg(r)}$ denote
the children of $r$ indexed according to their left-to-right ordering.
Let $\mm(u_i)$ denote $\mul(u_i, \parent(u_i)) \oplus \mm(T(u_i))$ for each $i \in [1,\deg(r)]$.
Given the signatures $\sigma(T(u_i)) = (\delta(T(u_i)), \mm(T(u_i)))$ for all $i \in [1,\deg(r)]$, 
we devise a way to represent $\sigma(T)$ by $\sigma(T(u_i))$ via the following observation.

\begin{observation}
\label{obs:representation}
Let $T$ be an ordered multi-tree rooted at a vertex $r$, and let 
$\Ch(r) = \{u_1, u_2, \ldots, u_{\deg(r)} \}$ 
denote the set of children of $r$ indexed  according to their left-to-right ordering.
Given the sequences $\delta(T_{u})$ and $\mm(T_{u})$ 
for all $u \in \Ch(r)$,
for the sequences $\delta(T)$ and $\mm(T)$ it holds that:
\begin{align*}
\delta(T) &= (\col(r), 0) \oplus \delta^1(T(u_1)) \oplus  
\delta^1(T(u_2)) \oplus  \cdots \oplus  \delta^1(T(u_{\deg(r)})) ; \\
\mm(T) &= \mm(u_1) \oplus \mm(u_2) \oplus  \cdots \oplus \mm(u_{\deg(r)}). 
\end{align*}
\end{observation}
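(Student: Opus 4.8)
The plan is to unwind the definitions of $\delta(\cdot)$ and $\mm(\cdot)$ and to exploit the recursive structure of a depth-first traversal on an ordered tree. The key structural fact I would isolate first is this: when a depth-first traversal of $T$ (rooted at $r$, visiting children from left to right) is performed, it visits $r$ first and then, for $i = 1, 2, \ldots, \deg(r)$ in turn, recursively traverses $T(u_i)$; moreover the fixed left-to-right order of children in $T$ restricts to the corresponding order on each $T(u_i)$, so the order in which the vertices of $V(T(u_i))$ are visited inside the traversal of $T$ coincides exactly with the depth-first order used to define $\delta(T(u_i))$ and $\mm(T(u_i))$. This follows immediately from the recursive definition of the traversal, with at most a one-line induction on $|V(T)|$ if one wants it spelled out.

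Granting this, the identity for $\delta(T)$ is pure bookkeeping. The first visited vertex is $r$, contributing the pair $(\col(r), 0)$ since $\depth(r) = 0$ in $T$. For a vertex $w \in V(T(u_i))$, its color is the same in $T$ and in $T(u_i)$, while its depth in $T$ equals its depth in $T(u_i)$ plus one, because the unique path from $w$ to $r$ in $T$ is the concatenation of the (wholly internal to $T(u_i)$) path from $w$ to $u_i$ with the single edge $u_i r$. Hence the block of color-depth pairs contributed by $V(T(u_i))$, listed in visiting order, is precisely $\delta^1(T(u_i))$. Concatenating the leading pair with these blocks in the order $i = 1, \ldots, \deg(r)$ gives the stated formula for $\delta(T)$.

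The identity for $\mm(T)$ is handled the same way. By definition $\mm(T)$ lists $\mul(v, \parent(v))$ over the non-root vertices of $T$ in visiting order; $r$ contributes nothing. Among the remaining vertices, those in $V(T(u_i))$ are visited consecutively with $u_i$ first; the entry for $u_i$ is $\mul(u_i, \parent(u_i)) = \mul(u_i, r)$, which is exactly the leading term of $\mm(u_i)$. For a proper descendant $w$ of $u_i$, its parent in $T$ is the same as its parent in $T(u_i)$, and since $T(u_i)$ is an induced subgraph of $T$ the multiplicity on that edge is unchanged; so the entries for the proper descendants of $u_i$, in visiting order, form exactly $\mm(T(u_i))$. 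Thus the block of $\mm(T)$ contributed by $V(T(u_i))$ is $\mul(u_i, \parent(u_i)) \oplus \mm(T(u_i)) = \mm(u_i)$, and concatenating over $i = 1, \ldots, \deg(r)$ yields the claimed formula.

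The only step needing any care — the ``main obstacle'', modest as it is — is the structural fact that the depth-first order on $V(T)$, restricted to a subtree $V(T(u_i))$, agrees with the depth-first order on $T(u_i)$ itself; once this is pinned down from the recursive definition of the traversal together with the inheritance of the child ordering, everything else is a direct verification that colors are preserved, depths are shifted by exactly one, and multiplicities of inherited edges are preserved.
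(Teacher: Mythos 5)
Your proof is correct: the paper states this as an Observation with no proof at all, relying implicitly on exactly the definitional unwinding you carry out (DFS visits the root and then recursively traverses the subtrees $T(u_i)$ in their left-to-right order, colors are preserved, depths shift by one, and the multiplicity entry for $u_i$ itself is the leading term $\mul(u_i,\parent(u_i))$ of $\mm(u_i)$). Your write-up simply makes explicit the verification the paper takes for granted, and the one point you flag as needing care --- that the DFS order of $T$ restricted to $V(T(u_i))$ agrees with the DFS order of $T(u_i)$ --- is indeed the only nontrivial step.
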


\begin{lemma}
\label{lem:canonical}
Given a rooted multi-tree $T$, let $\tau$ denote 
the ordered multi-tree such that $\sigma(T) = (\delta(\tau), \mm(\tau))$,
and let $\prec_{\tau}$ denote the left-to right ordering among siblings in~$\tau$.
For any two siblings $u$ and $v$ in $\tau$,
if $u \prec_{\tau} v$, then it holds that 
$(\delta(\tau_v),  \mul(v, \parent(v)),  \mm(\tau_v)) \prec 
(\delta(\tau_u), \mul(u, \parent(u)), \mm(\tau_u))$.
\end{lemma}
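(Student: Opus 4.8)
The plan is to prove the contrapositive-flavored claim directly: if $u \prec_{\tau} v$, then the ``tag'' sequence $(\delta(\tau_v), \mul(v,\parent(v)), \mm(\tau_v))$ is strictly lexicographically smaller than the analogous sequence for $u$. First I would set up notation: let $r = \parent(u) = \parent(v)$, and let $u_1 \prec_{\tau} u_2 \prec_{\tau} \cdots \prec_{\tau} u_{\deg(r)}$ be the children of $r$ in the left-to-right order of $\tau$, so that $u = u_i$ and $v = u_j$ with $i < j$. The key tool is the recursive structure of left-heavy trees stated in Section~\ref{sec:left-heavy}: for each child $u_k$, the subtree $\tau_{u_k}$ is itself in canonical form, so $\sigma(\tau_{u_k}) = (\delta(\tau_{u_k}), \mm(\tau_{u_k}))$, and by Observation~\ref{obs:representation} the sequences $\delta(\tau)$ and $\mm(\tau)$ decompose as concatenations of the (shifted) per-child blocks in exactly the sibling order.

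The heart of the argument is a local exchange/swap lemma. Suppose for contradiction that the tag of $u_i$ is $\preceq$ the tag of $u_{i+1}$ for some consecutive pair; I would show that swapping the order of these two adjacent subtrees (i.e., reordering the children of $r$ so $u_{i+1}$ comes before $u_i$, keeping everything else fixed) produces another ordered tree $\tau'$ of $T$ with either $\delta(\tau') \succ \delta(\tau)$, or $\delta(\tau') = \delta(\tau)$ and $\mm(\tau') \succ \mm(\tau)$ — contradicting that $\tau$ is the canonical form (lexicographically maximum $\delta$, then lexicographically maximum $\mm$ among left-heavy trees). Here I must be careful about the interaction between $\delta$ and $\mm$: the canonical form first maximizes $\delta(\tau)$ and only then $\mm(\tau)$, so the comparison of tags must bundle $\delta(\tau_v)$, then the parent-edge multiplicity $\mul(v,\parent(v))$, then $\mm(\tau_v)$ in that priority order — which is precisely the order in which these pieces appear when we scan $\delta(\tau)$ left to right and then $\mm(\tau)$ left to right, given the block decomposition from Observation~\ref{obs:representation}. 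The swap changes $\delta(\tau)$ only within the concatenation $\delta^1(\tau_{u_i}) \oplus \delta^1(\tau_{u_{i+1}})$ versus $\delta^1(\tau_{u_{i+1}}) \oplus \delta^1(\tau_{u_i})$, and a standard fact about concatenation and lexicographic order shows that the latter is $\succeq$ the former exactly when $\delta(\tau_{u_{i+1}}) \succeq \delta(\tau_{u_i})$ in the appropriate sense; the tie case (equal $\delta$-blocks) pushes the decision to the $\mm$ blocks, where the same concatenation argument applies with $\mul(u_{i+1},r)\oplus\mm(\tau_{u_{i+1}})$ against $\mul(u_i,r)\oplus\mm(\tau_{u_i})$. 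From adjacent strictness I would then get the general case $i < j$ by transitivity of $\prec$ on tags (combined with the fact that equal tags would force a contradiction via a non-adjacent swap, or can be ruled out by the adjacent case chained along $i, i+1, \ldots, j$).

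I would structure the write-up as: (1) fix notation and invoke the recursive left-heavy structure; (2) prove the concatenation comparison fact, namely for sequences $A, B$ over $\Sigma \cup \mathbb{Z}_+$ one has $B \oplus A \succeq A \oplus B$ iff $B \succeq A$ when $|A|=|B|$ (and handle unequal lengths — which in our case arise since subtrees can have different sizes — by noting that in a left-heavy canonical ordering the relevant comparison is still governed by the prefix relation, using that color-depth blocks for sibling subtrees begin with the same depth value after the $1$-shift); (3) the adjacent-swap contradiction argument; (4) conclude by chaining. The main obstacle I anticipate is step (2) in the unequal-length case together with the bookkeeping of the $\delta$-then-$\mm$ priority: I need to verify that when $\delta(\tau_{u_i}) = \delta(\tau_{u_j})$ as sequences (so the subtrees have equal color-depth codes, hence in particular equal size), the comparison genuinely falls through to the multiplicity sequences in the order claimed, and that no spurious tie can occur at the very end — this requires knowing that the canonical form is a strict total tiebreak, i.e. two distinct sibling subtrees cannot have identical $(\delta, \mul(\cdot,\parent), \mm)$ tags, which itself follows because such subtrees would be rooted-isomorphic and the left-to-right order among them would be arbitrary, but then both orderings would be canonical, and we may simply pick the convention making the inequality non-strict-free — or, more cleanly, observe that the lemma as stated is about a fixed canonical $\tau$ and adjacent equal tags can be swapped without changing $(\delta(\tau),\mm(\tau))$, so WLOG no two consecutive siblings have equal tags, giving strictness.
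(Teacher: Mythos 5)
Your proposal is correct and follows essentially the same route as the paper: the paper's proof is exactly the exchange argument you describe (swap the two siblings to get an ordered tree $\tau'$ isomorphic to $T$ with $(\delta(\tau),\mm(\tau)) \prec (\delta(\tau'),\mm(\tau'))$, contradicting canonicity), only stated for an arbitrary rather than adjacent pair and without spelling out the concatenation-comparison and unequal-length details that you rightly identify as the delicate points. The tie case you raise at the end is a genuine issue with the statement itself rather than with your argument --- two rooted-isomorphic sibling subtrees with equal parent-edge multiplicity have equal tags, so only $\preceq$ can hold --- and the paper's own proof likewise derives a contradiction only from a strict reversal of the tags, i.e., it too really establishes $\preceq$.
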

\begin{proof}
To derive a contradiction, suppose that there exist siblings 
$u$ and $v$ in $\tau$ such that $u \prec_{\tau} v$ and 
$(\delta(\tau_u), \mul(u, \parent(u)), \mm(\tau_u)) \prec 
(\delta(\tau_v), \mul(v, \parent(v)), \mm(\tau_v))$ holds.
Let $\tau'$ denote the ordered multi-tree obtained by switching the places of $u$ and $v$ in $\tau$.
Clearly, $\tau'$ is isomorphic to $T$ and $(\delta(\tau), \mm(\tau)) \prec (\delta(\tau'), \mm(\tau'))$.
This contradicts the assumption that $ \sigma(T) = (\delta(\tau), \mm(\tau))$, 
i.e., that $\sigma(\tau)$ is lexicographically maximum among 
all of the ordered multi-trees isomorphic to~$T$.
\end{proof}

By Observation~\ref{obs:representation} and Lemma~\ref{lem:canonical}, 
we show an algorithm to calculate the signature of a given rooted 
multi-tree in Procedure~{2}.
As an added benefit, the procedure in fact calculates the signatures of all rooted subtrees
of a given tree.

\bigskip
{\bf Procedure~2} {\sc SubTreeSignature}
\label{alg:signature_rmt}
\begin{algorithmic}[1]
\Require A $\Sigma$-colored multi-tree $T$
		    with multiplicity at most $d$ rooted at a vertex $r \in V(T)$.
\Ensure The signatures $\sigma(T_v)$ of each rooted tree $T_v$, $v \in V(T)$.
\State {$s := \emptyset$};
\For {{\bf each} $v \in V(T)$ in DFS-post order} \label{line:for1-begin}
	\If{ $v$ is a leaf}
		\State {$\delta[v] := (\col(v), 0); \mm[v] := \emptyset$}
	\Else
		\Statex {~~~/* The signatures $s[u]$ of all children of $v$ are already obtained */}
		\For{{\bf each} $u \in \ch(v)$} \label{line:for2-begin}
			\State {$\delta' :=$ 1-shift of $\delta[u]$}; \label{line:get-delta}
			\State {$\mm':= \mul(u, \parent(u)) \oplus \mm[u]$}; \label{line:get-mul}
			\State {$s'[u] := (\delta', \mm')$}
		\EndFor; \label{line:for2-end}
		\State {$S := (s'[u] \mid u \in \ch(v) )$};
		\State {Let $k :=  |\ch(v)|$};
		\Statex {~~~/* Represent $S$ as $( s_i = (\delta_i, \mm_i) \mid i \in [1, k] )$ */}
		\State {Sort $S$ in lexicographically descending order $\pi$}; \label{line:lex-sort}
		\State {$\delta[v] := (\col(v), 0) \oplus \delta_{\pi(1)} \oplus \delta_{\pi(2)} \oplus 
			      \cdots \oplus \delta_{\pi(k)}$};
		\State {$\mm[v] := \mm_{\pi(1)} \oplus \mm_{\pi(2)} \oplus \cdots \oplus \mm_{\pi(k)}$}
	\EndIf
	\State {$s[v] := (\delta[v], \mm[v])$}
\EndFor; \label{line:for1-end}
\State{{\bf output} $s[v]$ as $\sigma(T_v)$ {\bf for each} $v \in V(T)$}.
\end{algorithmic}
\bigskip

\begin{lemma}
 \label{lem:complexity_of_signature}
 Given a $\Sigma$-colored rooted tree $T$ on $n$ vertices
 and multiplicity at most~$d$, 
 Procedure {2} computes the signatures $\sigma(T_v)$
 of all rooted subtrees $T_v$, $v \in V(T)$, of $T$
 in $O(n \cdot(n + |\Sigma| + d))$ time.
\end{lemma}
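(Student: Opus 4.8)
The plan is to bound the work done in the single loop of Procedure~2 over all vertices $v \in V(T)$ processed in DFS-post order, and to argue that the dominant cost is the lexicographic sorting on line~\ref{line:lex-sort}. First I would observe that for a fixed vertex $v$ with children $\ch(v)$, the sequences $\delta[u]$ and $\mm[u]$ for $u\in\ch(v)$ are already available (this is guaranteed by post-order processing), so building each $s'[u]=(\delta',\mm')$ requires only a $1$-shift of $\delta[u]$ and one prepend to $\mm[u]$. The key accounting observation is that $|\delta[u]| = 2|V(T_u)|$ and $|\mm[u]| = |V(T_u)|-1$, so $\sum_{u\in\ch(v)}\bigl(|\delta[u]|+|\mm[u]|\bigr) = O(|V(T_v)|)$; moreover, since the subtrees $T_u$ for distinct children $u$ of $v$ are vertex-disjoint and their union over all $v$ counts each vertex of $T$ as belonging to exactly one parent's child-set, we get $\sum_{v\in V(T)}\sum_{u\in\ch(v)}|V(T_u)| = O(n)$. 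Hence the total cost of the $1$-shifts, the concatenations forming $\delta[v]$ and $\mm[v]$, and the assembly of $s[v]$, summed over all $v$, is $O(n)$, except that each $1$-shift must add $1$ to each depth entry, which is still $O(|V(T_u)|)$ arithmetic operations.

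Next I would handle the sorting. On line~\ref{line:lex-sort} we sort the collection $S=(s_1,\dots,s_k)$ of $k=|\ch(v)|$ sequences lexicographically; by the algorithm of Aho et~al.~\cite{AHU74} quoted earlier, this takes $O(\|S\| + N)$ time where $\|S\|=\sum_i|s_i|$ is the total length and $N$ is the alphabet size. The entries of these sequences are colors in $\Sigma$, depths in $[0,n]$, and multiplicities in $[1,d]$, so the effective alphabet has size $O(n+|\Sigma|+d)$; and $\|S\| = \sum_{u\in\ch(v)}\bigl(|\delta'|+|\mm'|\bigr) = O(|V(T_v)| )$ as above. Summing over all $v\in V(T)$, the $\|S\|$ terms contribute $O(n)$ total, while each of the at most $n$ invocations of the sorting routine incurs the additive $O(n+|\Sigma|+d)$ alphabet term, giving $O(n(n+|\Sigma|+d))$ overall. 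Adding the $O(n)$ from the first paragraph yields the claimed $O(n\cdot(n+|\Sigma|+d))$ bound.

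I would then note one subtlety that needs a clean argument: the correctness-flavored claim implicit in the statement, namely that the sequences the procedure outputs really are $\sigma(T_v)$. This follows from Observation~\ref{obs:representation}, which expresses $\delta(T)$ and $\mm(T)$ as the concatenations of the children's shifted signatures in \emph{some} sibling order, together with Lemma~\ref{lem:canonical}, which pins down that the canonical order is the lexicographically descending one on the tuples $(\delta(\tau_u),\mul(u,\parent(u)),\mm(\tau_u))$ — exactly the key used on line~\ref{line:lex-sort}. A short induction on subtree height closes this: leaves are handled directly, and for an internal vertex $v$ the inductive hypothesis gives $s[u]=\sigma(T_u)$ for each child, after which Observation~\ref{obs:representation} and Lemma~\ref{lem:canonical} show the assembled $(\delta[v],\mm[v])$ is the lexicographically maximum code, i.e.\ $\sigma(T_v)$.

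The main obstacle I anticipate is the alphabet-size bookkeeping in the sorting step: one must be careful that the $O(n+|\Sigma|+d)$ additive term in \cite{AHU74} is incurred once per internal vertex (hence the $n(n+|\Sigma|+d)$ factor) and cannot be amortized away, while simultaneously making sure the $\|S\|$ terms telescope to $O(n)$ rather than being double-counted across levels. Everything else is routine once the per-vertex bound $\|S\| = O(|V(T_v)|)$ and the disjointness of children's subtrees are in hand.
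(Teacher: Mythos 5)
Your overall strategy --- charge each vertex $v$ for gathering and shifting its children's signatures plus one invocation of the Aho--Hopcroft--Ullman sort with its additive alphabet term, then sum over $v$ --- is the same as the paper's, and your final bound is the right one. However, your accounting contains a concrete error: the claim that
\[
\sum_{v\in V(T)}\ \sum_{u\in\Ch(v)}|V(T_u)| \;=\; O(n)
\]
is false. Each non-root vertex $u$ does belong to exactly one parent's child-set, but the summand is $|V(T_u)|$, not $1$, so the double sum equals $\sum_{u\neq r}|V(T_u)| = \sum_{w\in V(T)}\depth(w)$, which is $\Theta(n^2)$ when $T$ is a path. Disjointness of the children's subtrees of a \emph{fixed} $v$ gives $\sum_{u\in\Ch(v)}|V(T_u)| = |V(T_v)|-1$ per vertex, but these quantities do not telescope across levels: every vertex $w$ is re-copied once at each of its ancestors. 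The same error infects your claims that the shift/concatenation work and the $\|S\|$ terms in the sorting step ``contribute $O(n)$ total''; already writing down $\delta[v]$, which has length $2|V(T_v)|$, costs $\Omega(n^2)$ summed over a path.

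Fortunately the lemma's bound is $O(n(n+|\Sigma|+d)) = O(n^2 + n(|\Sigma|+d))$, which absorbs the $\Theta(n^2)$ term, so your conclusion survives once the bookkeeping is repaired: the per-vertex cost is $O(n_v + |\Sigma| + d + \mathrm{d}_v)$ with $n_v = |V(T_v)|$, and $\sum_{v} n_v = O(n^2)$ --- which is exactly how the paper sums it, rather than via any telescoping to $O(n)$. Your added induction showing that the output really equals $\sigma(T_v)$ (via Observation~\ref{obs:representation} and Lemma~\ref{lem:canonical}) is a welcome complement, since the paper's proof addresses only the running time.
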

\begin{proof}
Let $n_v$ denote the number of vertices in the subtree $T_v$ rooted at vertex~$v$,
and ${\rm d}_v$ the maximum depth of a leaf in the rooted tree $T_v$,
where~$v$ is taken to have depth~$0$, and it holds that ${\rm d}_v \leq n_v$.

The for-loop of lines~\ref{line:for1-begin} to~\ref{line:for1-end}
is executed for each vertex $v$ in $T$.
Since vertices are iterated in an DFS-post order,
in each iteration, the signatures $\sigma(T_u)$ are already computed for 
each child $u$ of $v$ in $T$.
Then, in the for-loop of lines~\ref{line:for2-begin} to~\ref{line:for2-end},
their signatures are gathered and the depth entries are offset by~1 in line~\ref{line:get-delta}.
This obviously takes at most $O(n_v)$ time.
Then, in line~\ref{line:lex-sort} the gathered sequences are sorted lexicographically.
The total length of the sequences is $O(n_v)$, and they are over the 
alphabets $\Sigma$ for the color of each vertex, $[1, {\rm d}_v]$ for the
depth, and $[1, d]$ for the multiplicity, thus the total alphabet size is
$|\Sigma| + d +  {\rm d}_v$.
By the algorithm for lexicographical sorting due to Aho {\em et al.}~\cite{AHU74},
the lexicographical sorting in line~\ref{line:lex-sort} takes~$O(n_v + |\Sigma| + d +  {\rm d}_v)$ time.
Finally, summing over all vertices $v$ in $T$,
for the computational complexity we get
\begin{align*}
 \sum_{v \in V(T)} O(n_v + |\Sigma| + d +  {\rm d}_v) 
 &= O(\sum_{v \in V(T)} (n_v + {\rm d}_v) + n \cdot (|\Sigma| + d) ) \\
 &= O(n^2 + n \cdot(|\Sigma| + d)),
\end{align*}
as required. 
\end{proof}


\subsubsection{Ranking of Rooted Trees}
\label{sec:ranking_of_trees}

Let $\mathcal{T}$ be a  finite set of rooted multi-trees,
and let $Z = \{\sigma(T) \mid T \in \mathcal{T} \}$
denote the set of signatures of the trees in $\mathcal{T}$.
We define a lexicographical order over $Z$
in the usual sense, i.e., 
for $\sigma_1 = (\delta_1, {\rm M}_1), \sigma_2=(\delta_2, {\rm M}_2) \in Z$
we write $\sigma_1 \prec \sigma_2$ if 
``$\delta_1 \prec \delta_2$'' or 
``$\delta_1 = \delta_2$ and ${\rm M}_1 \prec {\rm M}_2$.''
Then, we use the lexicographical order over the set $Z$ to
define a ranking $\rank_{\mathcal{T}} : \mathcal{T} \to [1, |Z|]$,
such that for two trees $T_1, T_2 \in \mathcal{T}$,
$\rank_{\mathcal{T}} (T_1) < \rank_{\mathcal{T}}(T_2)$ if $\sigma(T_1) \prec \sigma(T_2)$,
and $\rank_{\mathcal{T}}(T_1) = \rank_{\mathcal{T}}(T_2)$ means that $\sigma(T_1) = \sigma(T_2)$,
i.e. $T_1$ and $T_2$ are isomorphic.
It  follows that having a rank function over a set of multi-trees,
we can check whether two trees in the set 
are isomorphic to each other by comparing their ranks.

There exist algorithms in the literature
that can calculate the rank of each subtree of a given tree~\cite{DIR99}
and rooted subgraph of an outerplanar graph~\cite{IN12}
in time linear in the number of vertices in the graph.
In our implementation we use simpler algorithms for this purpose
at the cost of a higher time complexity.

For a set $\mathcal{T}$ of rooted trees, let $\mathcal{T}^*$
denote the set of all rooted subtrees of trees in $\mathcal{T}$.
We give a procedure to calculate a ranking of a given set $\mathcal{T}$  of rooted trees
in Procedure~{3}.
By  Procedure~{2},
we in fact obtain a ranking in the set $\mathcal{T}^*$ at no additional cost.

\bigskip
{\bf Procedure~3} {\sc TreeRanking}
\begin{algorithmic}[1]
\Require A set $\mathcal{T}$ 
		of $\Sigma$-colored rooted multi-trees with 
		multiplicity at most~$d$.
\Ensure A ranking function $\rank_{\mathcal{T}^*}$ of~$\mathcal{T}^*$.
  \State{$R:=\emptyset$; $h := |\mathcal{T}^*|$};
  \State{$S := (\sigma(T_i) \mid  i \in [1, h])$;  \label{line:get_sigma}
		/* Calculate $\sigma(T)$ by Procedure~{2} in Sec.~\ref{sec:calculating_sigma} */}
  \Statex{/* Treat $S = (s_1, s_2, \ldots, s_h)$ as an ordered set */}
  \State{Sort $S$ in lexicographically ascending order $\pi$;} \label{line:sort_all}
  \State{$R[T_{\pi(1)}] := 1$;} {$r := 1$};
  \For{ {\bf each} $i \in [2, h]$} \label{line:for3-begin}
	  \State{{\bf if} $s_{\pi(i-1)} \prec s_{\pi(i)}$ {\bf then} $r := r+1$ {\bf endif}}; \label{line:lexcompare}
	  \State{$R[T_{\pi(i)}] := r$;}	
  \EndFor; \label{line:for3-end}
  \State{{\bf output} $R$ as  $\rank_{\mathcal{T}^*}$}.
\end{algorithmic}
\bigskip


\begin{lemma}
 \label{lem:ranking_all_subtrees}
 Let $\mathcal{T}$ be a given set of $\Sigma$-colored
 rooted multi-trees with multiplicity at most~$d$,
and let $n$ denote the total number of vertices
over trees in $\mathcal{T}$.
Then, the rank of each rooted subtree of all trees in $\mathcal T$ can be computed in
$O(n (n + |\Sigma| + d ))$ time in total.
\end{lemma}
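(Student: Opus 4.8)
The plan is to analyze Procedure~3, whose output is precisely the ranking function $\rank_{\mathcal{T}^*}$ requested by the statement. Procedure~3 proceeds in three stages: (i) it computes, via Procedure~2, the signature $\sigma(T_v)$ of every rooted subtree $T_v$ of every tree $T\in\mathcal{T}$ (line~\ref{line:get_sigma}); (ii) it lexicographically sorts the resulting collection $S$ of signatures (line~\ref{line:sort_all}); and (iii) it sweeps the sorted collection once, incrementing the rank whenever consecutive signatures differ (the for-loop of lines~\ref{line:for3-begin}--\ref{line:for3-end}). I would bound the cost of each stage and add them up.

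For stage~(i), I first note that the recursive structure of left-heavy trees recalled in Section~\ref{sec:left-heavy} guarantees that $\sigma(T_v)$ does not depend on the ambient tree, so applying Procedure~2 to each $T\in\mathcal{T}$ indeed produces $\sigma(T_v)$ for every $v\in V(T)$. By Lemma~\ref{lem:complexity_of_signature}, this costs $O(n_T(n_T+|\Sigma|+d))$ for a tree $T$ with $n_T=|V(T)|$ vertices. Since $\sum_{T\in\mathcal{T}}n_T=n$ and $n_T\le n$ for every $T$, summing gives $\sum_{T}O(n_T(n_T+|\Sigma|+d))=O(n(n+|\Sigma|+d))$, which already matches the claimed bound.

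For stage~(ii), I would bound $\|S\|=\sum_{s\in S}|s|$, the total length of all signatures. Each signature $\sigma(T_v)=(\delta(\tau_v),\mm(\tau_v))$ has length $O(n_v)$ with $n_v=|V(T_v)|$, and within one tree $T$ we have $\sum_{v\in V(T)}n_v=O(n_T^2)$ (a path being the extreme case). Hence $\|S\|=O(\sum_{T}n_T^2)=O(n^2)$, using $\sum_{T}n_T^2\le n\cdot\max_T n_T\le n^2$. The entries of these sequences lie in the alphabet of colors ($|\Sigma|$ symbols), depths (at most $n+1$ values), and multiplicities (at most $d$ values), so the alphabet has size $O(|\Sigma|+n+d)$. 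By the lexicographical-sorting algorithm of Aho et~al.~\cite{AHU74}, stage~(ii) runs in $O(\|S\|+|\Sigma|+n+d)=O(n^2+|\Sigma|+d)$ time. Stage~(iii) performs $|\mathcal{T}^*|-1=O(n)$ lexicographical comparisons in line~\ref{line:lexcompare}, each between two signatures of length $O(n)$, for a total of $O(n^2)$ (alternatively, the radix sort of~\cite{AHU74} can be arranged to report the equality classes directly at no extra cost). Adding the bounds for the three stages gives $O(n(n+|\Sigma|+d))$, as required.

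The only genuine obstacle is the bound $\|S\|=O(n^2)$ on the aggregate signature length: one must observe that the sum of subtree sizes inside a single tree can be quadratic, and that this quadratic term is exactly what the $O(n^2)$ hidden in the target running time is there to absorb. Everything else is a routine summation of the per-tree estimate from Lemma~\ref{lem:complexity_of_signature} together with the linear-time sorting guarantee of~\cite{AHU74}.
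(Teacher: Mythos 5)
Your proposal is correct and follows essentially the same route as the paper's proof: sum the per-tree cost of Procedure~2 via Lemma~\ref{lem:complexity_of_signature}, bound the aggregate signature length by $O(n^2)$ using $\sum_{T} n_T^2 \le n^2$, invoke the lexicographical sorting bound of Aho et~al., and finish with the linear sweep. Your stage~(iii) accounting of the comparison costs in the final sweep is slightly more explicit than the paper's, but the argument is otherwise identical.
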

\begin{proof}
Let $\mathcal{T} = \{ T_1, T_2, \ldots, T_k \}$, and
let $n_i$, $i \in [1, k]$ denote the number of vertices in tree~$T_i$,
where $n = \sum_{i \in [1, k]} n_i$.
The signature $\sigma(T_i) = (\delta(T_i), \mm(T_i))$ of each
tree $T_i$ is a sequence with $O(n_i)$ entries
with $|\Sigma| + n_i + d$ possible values,
for the color of vertices and depth in a tree in $\delta(T_i)$,
and multiplicity with the parent in $\mm(T_i)$, respectively.

 By Lemma~\ref{lem:complexity_of_signature},
 computing the signatures of all rooted subtrees
 in line~\ref{line:get_sigma} takes 
 $O(n_i \cdot (n_i + |\Sigma| + d ))$ time for each tree $T_i$,
 and therefore $O(n \cdot ( n + |\Sigma| + d ))$ time in total.
 Now, each tree $T_i$ has $n_i$ rooted subtrees, 
 and the total number of vertices over these subtrees is $O(n_i^2)$.
 Therefore, the collection of signatures for the rooted subtrees
 of tree $T_i$ has in total $O(n_i^2)$ elements taking at
 most $|\Sigma| + n_i + d$ different values (alphabet size).
 Over all trees $T_i$, the elements of the subtree signatures
 take at most 
 $|\Sigma| + \max_{i \in [1, k]} \{n_i\} + d \leq |\Sigma| + n+ d$
 different values.
 Summing over all trees $T_i \in \mathcal{T}$,
 we get that the total length of the signatures over all subtrees is
 $\sum_{i \in [1, k]} O(n_i^2) = O(n^2)$.
Then, all these signatures can be lexicographically sorted in 
$O(n^2 + |\Sigma| + d)$ time~\cite{AHU74}, which is dominated by the time
to calculate the signatures.

Finally, having the lexicographically sorted signatures,
we assign rank to trees in a straightforward manner
by iterating over the sorted signatures as in lines~\ref{line:for3-begin} to~\ref{line:for3-end} in 
Procedure~{3},
and the claim follows. 
\end{proof}

\subsection{Code on Substructures of Mono-block $2$-Augmented Trees}

We can use the signature defined for rooted trees in
Section~\ref{subsec:SignatureOfTrees} to
devise codes for substructures of $k$-augmented trees, $k = 1, 2$.
Let $H$ be a mono-block $k$-augmented tree with a block~$B$,
and let $\mathcal{T}(H) = \{H \langle v \rangle \mid v \in B\}$
denote the set of all pendent trees in~$H$.
Recall that for a multi-tree $T \in \mathcal{T}(H)$,
$\rank_{\mathcal{T}(H)}(T)$ as defined in Section~\ref{sec:ranking_of_trees}
gives the rank of $T$ according to the lexicographical order
of the signature~$\sigma(T)$.
To simplify our notation, when the graph $H$ and hence 
the set $\mathcal{T}(H)$ is clear, we write $\rank()$
for the rank function $\rank_{\mathcal{T}(H)}()$.

Let $H$ be a mono-block 2-augmented tree.
Recall that there are exactly two vertices called junctions, 
which are contained in all of the cycles in~$H$,
as defined in Section~\ref{subsubsec:mono-block}.
Also recall that for a junction $u$ in $H$ and a neighbor $u'$ of $u$ on a cycle of $H$,
we call the pair $(u, u')$ a junction pair.
Recall that there are at most six junction pairs in a mono-block 2-augmented tree.
Let $u$ and $v$ denote the junctions of~$H$.
For a junction $u$, we define $\code(u)$ to be the sequence 
$(|V(H \langle u\rangle)|, \col(u), \deg(u), \rank(H\langle u\rangle))$.
For junctions $u$ and $v$ and a cycle $C_i$ of $H$, 
recall that $P(u, v; C_i)$
denotes the $uv$-path $(v_1, v_2, \ldots, v_p)$ such that
$v_j \not\in V(C_i)$ for $j \in [2, p-1]$, 
let $n(P(u, v; C_i))$ denote the number of vertices 
$\sum_{j=2}^{p-1} |V(H \langle v_j \rangle)|$, 
and we define the code $\code(P(u, v; C_i))$ 
of the path $P(u, v; C_i)$ to be the following sequence:
\begin{eqnarray}
\code(P(u, v; C_i)) &\triangleq & (n - n(P(u,v;C_i)), |P(u, v; C_i)|, \nonumber \\
&& \rank(H \langle v_1\rangle), \mul(v_1 v_2), \rank(H \langle v_2 \rangle), 
\ldots, \mul(v_{p-1 v_p}), \rank(H \langle v_p\rangle)). \nonumber
\end{eqnarray}
For a junction $u$, we define $\code^*(u)$ to be the sequence 
obtained by arranging the codes of $P(u, v; C_i)$, $i \in [1, 3]$, 
in lexicographically non-ascending order.

\section{Parent of a Mono-block 2-Augmented  Tree}
\label{sec:parent_of_2a1b}
Let $H$ be a mono-block 2-augmented tree.
Let $C_i$, $i \in [1, 3]$, denote the cycles of $H$, 
and let $u$ and $v$ denote the junctions of $H$.
Without loss of generality, we assume that 
$(\code(u), \code^*(u)) \preceq (\code(v), \code^*(v))$.
For the cycle $C^*$ such that 
$ \code(P(u, v; C^*)) \preceq \code(P(u, v; C_i))$, $i \in [1, 3]$, 
let $e^*$ denote the edge between $u$ and the neighbor of $u$ in $P(u, v; C^*)$.
We define the parent of $H$ to be the graph $H - e^*$.
%
%
\begin{lemma}
\label{lem:parent:path_len}
Let $H$ be a mono-block 2-augmented tree with $n$ vertices.
Let $C_i$, $i \in [1, 3]$ denote the cycles of $H$, 
and let $u$ and $v$ denote the junctions of $H$ 
such that $(\code(u), \code^*(u)) \prec (\code(v), \code^*(v))$.
For the cycle $C^*$ such that 
$\code(P(u, v; C^*)) \preceq \code(P(u,v;C_i))$, $i \in [1, 3]$, 
the number of edges in $P(u, v; C^*)$ is at least~2.
\end{lemma}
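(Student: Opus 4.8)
The statement to prove is that the chosen cycle $C^*$, whose $uv$-path $P(u,v;C^*)$ has lexicographically minimum code among the three cycles, must have $|P(u,v;C^*)| \geq 2$; equivalently, $P(u,v;C^*)$ is not a single edge $uv$. The plan is to argue by contradiction: assume $|P(u,v;C^*)| = 1$, so there is a direct edge between $u$ and $v$ and $\code(P(u,v;C^*))$ has second coordinate $1$. A mono-block $2$-augmented tree has exactly three cycles through the two junctions, so the three paths $P(u,v;C_i)$, $i\in[1,3]$, are internally disjoint and together form a ``theta-like'' subgraph on the cyclic part of $H$; each cycle $C_i$ is the union of the two paths $P(u,v;C_j)$, $j \neq i$. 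Since a cycle (in the sense fixed in the Preliminaries) must have length at least $3$, no two of the three $uv$-paths can \emph{both} be single edges — if two of them were single edges $uv$, their union would be a multigraph on two vertices with two edges, which is explicitly excluded from being a cycle. Hence at most one of the three paths is a single edge.

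Next I would use this to locate the contradiction. Suppose $P(u,v;C^*)$ is the unique single-edge path; then $C^*$, the cycle \emph{not} containing this path, is $C^* = P(u,v;C_a) \cup P(u,v;C_b)$ where $\{a,b\} = \{1,2,3\}\setminus\{*\}$, and both $P(u,v;C_a)$ and $P(u,v;C_b)$ have length at least $2$. Now recall that $e^*$ is defined as the edge of $H$ between $u$ and the neighbor of $u$ along $P(u,v;C^*)$ — but $P(u,v;C^*)$ is the edge $uv$ itself, so $e^*$ would be this very edge $uv$, and the parent is $H - e^*$. The key point: removing the single edge $uv$ (which is $P(u,v;C^*)$) from $H$ merges the two cycles $C_a$ and $C_b$ into a single cycle $C^* = P(u,v;C_a)\cup P(u,v;C_b)$, leaving a monocyclic graph — so far this is consistent with the parent being in $\mathcal{G}_1$. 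The contradiction must instead come from the \emph{canonical choice} of which junction is $u$ and which cycle is $C^*$: I expect the argument to show that the ordering convention $(\code(u),\code^*(u)) \prec (\code(v),\code^*(v))$, together with minimality of $\code(P(u,v;C^*))$, is incompatible with $P(u,v;C^*)$ being a single edge. Specifically, $\code(P(u,v;C^*)) = (n - 0, 1, \rank(H\langle u\rangle), \mul(uv), \rank(H\langle v\rangle))$ has first coordinate $n$ (since $n(P(u,v;C^*)) = 0$ when the path has no internal vertices), whereas the other two paths have $n(P(u,v;C_i)) \ge 0$ and hence first coordinate $\le n$ — so a single-edge path would have the \emph{largest} possible first coordinate among the three codes. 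Since codes are compared lexicographically and $C^*$ is chosen to \emph{minimize} the code, $C^*$ cannot be the single-edge cycle-path unless all three have first coordinate $n$, i.e., all three paths are single edges — which we already ruled out. This yields the contradiction.

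The step I expect to be the main obstacle is handling the corner case where two of the three $n(P(u,v;C_i))$ values equal $n$ while still having the paths be genuine (length $\ge 2$) paths — i.e., making sure the "single-edge path has strictly largest first code coordinate" claim is actually strict, or correctly identifying when ties force us down to comparing the length coordinate ($1$ versus $\ge 2$) and the $\rank$ coordinates. I would need to check carefully: if $P(u,v;C^*)$ is a single edge then $n(P(u,v;C^*)) = 0$ and its first code coordinate is exactly $n$; for another path $P(u,v;C_i)$ with $|P|\ge 2$, it has at least one internal vertex $v_j$ with $|V(H\langle v_j\rangle)| \ge 1$, so $n(P(u,v;C_i)) \ge 1$ and its first coordinate is $\le n-1 < n$. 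Thus the single-edge path's code is strictly lexicographically larger (worse) than any length-$\ge 2$ path's code, so it cannot be the minimizer $P(u,v;C^*)$ — unless it is the \emph{only} path, which is impossible since there are three. This gives a clean contradiction and completes the proof. I would also double-check the degenerate possibility that $H$ has very few vertices, but the three-cycle structure forces enough vertices that the argument goes through.
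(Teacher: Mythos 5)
Your proof is correct and follows essentially the same route as the paper: assume $P(u,v;C^*)=uv$, note that some other $uv$-path must have length at least $2$ (hence an internal vertex, hence $n(P(u,v;C_i))\geq 1$ and first code entry at most $n-1$), and contradict the lexicographic minimality of $\code(P(u,v;C^*))$ via its first entry. Your computation of the single-edge path's first entry as $n$ (rather than the paper's stated $n-1$) is in fact the value consistent with the paper's definition of $n(P(u,v;C_i))$, and the argument goes through either way.
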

\begin{proof}
From the definition, a cycle has at least three vertices.
Since $H$ has three distinct cycles, 
there exists an integer $i \in [1, 3]$ such that $|P(u, v; C_i)| \geq 2$.
Suppose that $P(u, v; C^*) = uv$.
We have $\code(P(u,v;C_i)) \prec \code(P(u,v;C^*))$ 
since the first entry of $\code(P(u,v;C_i))$ 
(resp., $\code(P(u, v; C^*))$) is $n - n(P(u, v; C_i)) < n-1$ 
(resp., $n - n(P(u, v; C^*)) = n-1$).
This, however, contradicts that $\code(P(u,v;C^*))$ 
is the lexicographically minimum among 
$\code(P(u, v; C_i))$, $i \in [1, 3]$.
As a result, we see that $P(u, v; C^*) \neq uv$, 
and the number of edges in $P(u, v; C^*)$ is at least 2.
\end{proof}

\section{Necessary and Sufficient Conditions for Generating Children}
\label{sec:Parent-Child}

Let $G$ be a monocyclic graph with a cycle~$C$.
For a non-adjacent vertex pair $\{x, y\}$ in $G$, 
and an integer $p \in [1, \min \{ \res(x), \res(y) \}]$,
if $\rhoG{x} = \rhoG{y}$ holds, 
then the graph $G + p \cdot xy$ will have two blocks, 
and will not be a mono-block 2-augmented tree.
Hence, in order to generate a mono-block 2-augmented tree, 
we must choose a vertex pair $\{x, y\}$ satisfying $\rhoG{x} \neq \rhoG{y}$.
Observe that then $\rhoG{x}$ and $\rhoG{y}$ will be the junctions
in the mono-block 2-augmented tree $G + p \cdot xy$.
Now, we derive necessary and sufficient conditions to determine 
whether $G + p \cdot xy$ is a child of $G$ or not.

\begin{lemma}
\label{lem:child_condition}
Let $G$ be a monocyclic graph, and let $C$ denote the cycle of $G$.
Let $x$ and $y$ be non-adjacent vertices in $V(G)$ with $\rhoG{x} \neq \rhoG{y}$,
and let $p \in [1, \min\{\res(x), \res(y)\}]$.
Let $H$ denote the graph $G + p \cdot xy$, 
where $\rhoG{x}$ and $\rhoG{y}$ become the junctions in $H$,
and let $C_i$, 
$i \in [1, 3]$, denote the three cycles in $H$.
Then $H$ is a child of $G$ if and only if the following conditions are satisfied:
\begin{description}
\item[{\rm (i)}] $x = \rhoG{x}$ and $y \neq \rhoG{y}$ (i.e., $x \in V(C)$ and $y \not\in V(C)$);
\item[{\rm (ii)}] $(\code(\rhoG{x}), \code^*(\rhoG{x})) \preceq
    (\code(\rhoG{y}), \code^*(\rhoG{y}))$; and
\item[{\rm (iii)}] $\code(P(\rhoG{x}, \rhoG{y}; C)) \preceq 
      \code(P(\rhoG{x}, \rhoG{y}; C_i))$, for all $i \in [1, 3]$.
\end{description}
\end{lemma}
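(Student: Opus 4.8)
The plan is to prove the biconditional by unwinding the definition of the parent map from Section~\ref{sec:parent_of_2a1b}. Recall that for a mono-block 2-augmented tree $H$ with junctions $a,b$ ordered so that $(\code(a),\code^*(a)) \preceq (\code(b),\code^*(b))$, the parent is obtained by selecting the cycle $C^*$ with $\code(P(a,b;C^*))$ lexicographically minimum and deleting the single edge $e^*$ incident to $a$ on $P(a,b;C^*)$; by Lemma~\ref{lem:parent:path_len} this path has length at least $2$, so $e^*$ is genuinely an edge whose removal splits off a vertex that is $b$'s side only if... — more precisely, removing $e^*$ from $H$ yields a monocyclic graph, and the vertex $a$ becomes a non-cycle vertex of the pendent tree hanging off the endpoint of $e^*$ other than $a$. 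So $H$ is a child of $G$ iff $G = H - e^*$, i.e.\ iff the edge $xy$ that we are adding back (with multiplicity $p$) is exactly $e^*$ and the junctions and cycle were chosen as the parent rule dictates.

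First I would establish the ``only if'' direction. Assume $H = G + p\cdot xy$ is a child of $G$, so $G = H - e^*$ where $e^*, C^*$, and the ordering of junctions are as prescribed by the parent definition applied to $H$. Since removing $e^*$ must recover the monocyclic graph $G$, and $e^*$ is the edge of $P(a,b;C^*)$ incident to the ``smaller'' junction $a$, the added edge $xy$ must coincide with $e^*$ (as an unordered pair with its multiplicity $p = \mul_H(a, \cdot)$). This forces: the pair $\{x,y\}$ is $\{a, a'\}$ where $a'$ is $a$'s neighbor on $P(a,b;C^*)$; since $a' \in V(C^*) \subseteq$ all cycles of $H$ but $a'$ lies on a path of length $\geq 2$ between the junctions that avoids $C^*$... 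I need to check carefully which of $x,y$ ends up being $\rhoG{\cdot}$ of itself. In $G = H - e^*$, the cycle $C$ of $G$ is the one among $C_1,C_2,C_3$ not containing $e^*$, namely $C^*$ minus... no — $C$ is $C^*$ itself re-routed; the junction $a$ now sits in the pendent tree of the other endpoint $a'$ of $e^*$. So $\rhoG{a} = a' = \rhoG{a'}$? That contradicts $\rhoG{x}\neq\rhoG{y}$. Let me reconsider: the correct reading is that $a'$ remains on the cycle $C$ of $G$ while $a$ becomes a pendent vertex whose root is $b$ (the other junction), because after deleting $e^*$ the path from $a$ to the rest of the cycle goes through $b$. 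Hence with $\{x,y\} = \{a,a'\}$ re-labelled so that $x = a' $... — actually the Lemma states $x = \rhoG{x}$ (on the cycle) and $y\neq\rhoG{y}$ with $\rhoG{y}$ a junction. So $y = a$ (the deleted-edge endpoint that becomes pendent, with $\rhoG{y} = b$?) — no, the junctions of $H$ are $\rhoG{x}$ and $\rhoG{y}$, and we need them to be $a$ and $b$. So $\rhoG{y} \in \{a,b\}$ and $\rhoG{x} \in \{a,b\}$, while $x\in V(C)$: the cycle $C$ of $G$ is $C^*$ with the edge $e^*=aa'$ removed and replaced by the $ab$-portion, so $C$ still contains both $a'$ and $b$; thus $x$, being on $C$ and a junction, must be $b$, and the pendent vertex $y$ with $\rhoG{y} = a$ must be $a'$... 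This case analysis of which endpoint of $e^*$ becomes the cycle junction is the delicate bookkeeping step, and getting the labelling consistent with conditions (i)–(iii) is where I expect the main friction.

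For the ``if'' direction, assume (i)–(iii) hold for $H = G + p\cdot xy$ with junctions $\rhoG{x},\rhoG{y}$. I would verify that applying the parent rule of Section~\ref{sec:parent_of_2a1b} to $H$ returns exactly $G$. Condition (ii) says the junctions are ordered correctly, so $a = \rhoG{x}$, $b=\rhoG{y}$ in the parent rule's notation; condition (iii) says the cycle $C^*$ selected by the rule is the cycle $C$ of $G$ (the one whose avoiding-path is lexicographically minimum), and then $e^*$ is the edge of $P(\rhoG{x},\rhoG{y};C)$ incident to $\rhoG{x}$; condition (i), $x = \rhoG{x}$ and $y$ pendent with root $\rhoG{y}$, pins down that this edge $e^*$, together with its multiplicity, is precisely the added multi-edge $p\cdot xy$ — because the $\rhoG{x}$-incident edge of the path through $x$ and $y$ that avoids $C$ is $xy$ itself (here I use that $P(\rhoG{x},\rhoG{y};C)$ runs $\rhoG{x} = x, y, \ldots, \rhoG{y}$, so its first edge is $xy$). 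Hence $H - e^* = H - p\cdot xy = G$, so $H$ is a child of $G$. Throughout, I would lean on Lemma~\ref{lem:parent:path_len} to know $|P(\rhoG{x},\rhoG{y};C)| \geq 2$ so that $e^*$ is well-defined and distinct from any $\rhoG{x}\rhoG{y}$ multi-edge, and on the fact (from Section~\ref{sec:Parent-Child}) that $\rhoG{x}\neq\rhoG{y}$ guarantees $H$ is genuinely mono-block with $\rhoG{x},\rhoG{y}$ as its two junctions. The one genuine subtlety — besides the endpoint bookkeeping above — is checking that conditions (i)–(iii) are not only necessary for the parent rule to return $G$ but jointly sufficient, i.e.\ that no \emph{other} admissible choice of added edge on $G$ could also yield a graph with the same topological parent data; but this is exactly what Lemma~\ref{lem:no_duplication_from_different_parents} and the injectivity of $\pi$ handle at the level of the family tree, so here it suffices to argue at the level of a single $H$.
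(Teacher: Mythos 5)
Your overall strategy --- unwinding the parent rule of Section~\ref{sec:parent_of_2a1b} and checking that it selects exactly the added multi-edge $p\cdot xy$ for deletion --- is the same as the paper's, and your ``if'' direction is essentially right. But the ``only if'' direction contains a genuine, unresolved error in the endpoint bookkeeping, and you flag it yourself without fixing it. The source of the confusion is your claim that ``the cycle $C$ of $G$ is $C^*$ with the edge $e^*=aa'$ removed and replaced by the $ab$-portion.'' This is false: by definition $P(a,b;C^*)$ is the $a$--$b$ path \emph{not} in $C^*$, so $e^*$ (its first edge) is not an edge of $C^*$ at all, and deleting $e^*$ leaves $C^*$ intact as the unique cycle $C$ of $G=H-e^*$. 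Consequently the junction $a$ (which, being a junction, lies on every cycle of $H$, hence on $C^*=C$) \emph{stays on the cycle} of $G$; it is the other endpoint $a'$ --- an internal vertex of $P(a,b;C^*)$, which exists because $|P(a,b;C^*)|\geq 2$ by Lemma~\ref{lem:parent:path_len} --- that becomes a pendent vertex, and it hangs in the pendent tree rooted at the \emph{other} junction $b$, since the surviving portion of $P(a,b;C^*)$ connects $a'$ to $b$. So the correct identification is $x=a=\rhoG{x}$ and $y=a'$ with $\rhoG{y}=b$, which makes conditions (i)--(iii) read off directly from the parent rule (in particular (ii) has the junctions in the right order, $(\code(a),\code^*(a))\preceq(\code(b),\code^*(b))$). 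Your conclusion that ``$x$ must be $b$'' and ``$\rhoG{y}=a$'' is backwards and would make (ii) come out reversed; note it also contradicts the (correct) labelling you use in your own ``if'' direction, where you take $P(\rhoG{x},\rhoG{y};C)$ to start $x,y,\ldots$.

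Two smaller points. First, the paper's necessity argument does not actually run the reconstruction you attempt; it is a cleaner case analysis showing that if any one of (i)--(iii) fails, the parent rule deletes an edge other than $xy$ (both endpoints on $C$ is excluded by Lemma~\ref{lem:parent:path_len}; neither on $C$ means $xy$ is not incident to a junction; failure of (ii) means the deleted edge is incident to $\rhoG{y}$; failure of (iii) means it lies on a different path). Your reconstruction route can work, but only with the corrected bookkeeping above. Second, your closing appeal to Lemma~\ref{lem:no_duplication_from_different_parents} is a red herring: ``$H$ is a child of $G$'' means only that the parent rule applied to $H$ returns $G$ (up to isomorphism), so nothing about other admissible edges of $G$ needs to be ruled out here --- that is the intra-duplication issue handled separately in Section~\ref{sec:avoid_duplicate}.
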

\begin{proof}
{\bf Necessity}.
Let $H$ denote the mono-block 2-augmented tree constructed as $G + p \cdot xy$.
Suppose that both $x$ and $y$ are in $V(C)$, then the junctions of $H$ are $x$ and $y$. 
From Lemma~\ref{lem:parent:path_len}, we see that $H$ is not a child of $G$.
Suppose that neither of $x$ and $y$ is in $V(C)$, 
then the edge $xy$ is not incident to a junction of $H$, 
and we see that $H$ is not a child of $G$,
and hence Condition~(i) must hold.

Next, we consider the case when vertices~$x$ and~$y$ 
satisfy Condition~(i) but do not satisfy Condition~(ii).
In this case, the parent of $H$ is obtained by deleting an edge incident to $\rhoG{y}$,
and we see that $H$ is not a child of $G$.

Finally, we assume that $x$ and $y$ satisfy
Conditions~(i) and~(ii) but do not satisfy Condition~(iii).
In this case, there exists a cycle $C^* \neq C$ 
such that $\code(P(\rhoG{x}, \rhoG{y}; C^*)) \prec \code(P(\rhoG{x}, \rhoG{y}; C))$.
The parent of $H$ is obtained by deleting the edge between $x$
and the neighbor of $x$ in $P(x,\rhoG{y};C^*)$ but not the edge $xy$.
Hence we see that $H$ is not a child of $G$.

{\bf Sufficiency}.
By choosing $x = \rhoG{x} \neq \rhoG{y} \neq y$ we see
that $H = G + p \cdot xy$ is a mono-block
2-augmented tree with junctions 
$x = \rhoG{x}$ and $\rhoG{y}$
and that $xy$ is a junction pair in~$H$.
The requirements of (ii) and (iii) in the lemma
follow the definition of a parent in Section~\ref{sec:parent_of_2a1b}.
\end{proof}
%

Let $G$ be a monocyclic graph with a cycle $C$.
Let $x$ and $y$ be non-adjacent vertices in $G$ such that 
$x = \rhoG{x} \neq \rhoG{y}$, and $y \not\in V(C)$,
and let $p \in [1, \min \{ \res(x), \res(y) \}]$.
Let $H$ be the mono-block 2-augmented tree $G + p \cdot xy$.
The pendent tree $H\langle x\rangle$ is equivalent to $G\langle x\rangle$.
Let $v_y$ be the child of $\rhoG{y}$ such that $G\langle v_y \rangle$ contains $y$.
Then the pendent tree $H\langle \rhoG{y} \rangle$ 
is equivalent to $G\langle \rhoG{y}\rangle - G\langle v_y\rangle$.
From Condition~(ii) of Lemma~\ref{lem:child_condition}, 
if $|V(G\langle \rhoG{y}\rangle)| - |V(G\langle v_y\rangle)| < |V(G\langle x\rangle)|$ holds, 
then we have 
\[
 (\code(\rhoG{y}), \code^*(\rhoG{y})) \prec (\code(\rhoG{x}), \code^*(x)), 
\]
and we see that $H$ is not a child of $G$.
Let $C_i$ denote the cycles of $H$ for $i \in [1, 3]$, where $C_1 = C$.
We have $n(P(u,v;C)) = |V(G\langle v_y\rangle)|$.
From Condition~(iii) of Lemma~\ref{lem:child_condition}, 
if $|V(G\langle v_y\rangle)| < \max\{n(P(u,v;C_2)), n(P(u,v;C_3))\}$ holds, 
then $\code(P(\rhoG{x},\rhoG{y};C))$ is not the lexicographically minimum 
among $\code(P(\rhoG{x},\rhoG{y};C_i))$, $i \in [1, 3]$, and $H$ is not a child of $G$.
From these observations, we have the following lemma.

\begin{lemma}
\label{lem:children:pendent_tree_size}
Let $G$ be a monocyclic graph with $n$ vertices, 
and let $C$ denote the cycle of~$G$.
Let $x$ and $y$ be two non-adjacent vertices in $V(G)$ 
such that $x = \rhoG{x} \neq \rhoG{y}$ and $y \not\in V(C)$,
and let $p \in [1, \min \{ \res(x), \res(y) \}]$. 
If $G + p \cdot xy$ is a child of $G$, 
then it holds that $|V(G\langle \rhoG{y}\rangle)| \geq n/3$.
\end{lemma}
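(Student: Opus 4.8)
The plan is to obtain the bound directly from the two necessary conditions for $G+p\cdot xy$ to be a child of $G$ that were already isolated in the discussion preceding the statement, combined with a counting identity for the pendent trees of $H=G+p\cdot xy$.

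First I would fix notation: write $u=\rhoG{x}$ and $v=\rhoG{y}$ for the two junctions of $H$, let $v_y$ be the child of $v$ in $G$ with $y\in V(G\langle v_y\rangle)$, and let $C_1=C,C_2,C_3$ be the three cycles of $H$. Recall that $H\langle u\rangle$ is rooted isomorphic to $G\langle x\rangle$, that $H\langle v\rangle$ is obtained from $G\langle v\rangle$ by deleting the subtree $G\langle v_y\rangle$, and that $n(P(u,v;C_1))=|V(G\langle v_y\rangle)|$. Since $G+p\cdot xy$ is a child of $G$, comparing the first entries of $\code(u)$ and $\code(v)$ in Condition~(ii) of Lemma~\ref{lem:child_condition} yields
\[
|V(G\langle x\rangle)|\ \le\ |V(G\langle v\rangle)|-|V(G\langle v_y\rangle)|,
\]
and comparing the first entries of the codes of the three $uv$-paths in Condition~(iii) yields
\[
n(P(u,v;C_2))\le |V(G\langle v_y\rangle)|\qquad\text{and}\qquad n(P(u,v;C_3))\le |V(G\langle v_y\rangle)|.
\]

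Next I would use the structure of the unique block of $H$, which is the union of the three internally vertex-disjoint $uv$-paths $P(u,v;C_i)$; hence the pendent trees rooted at $u$, at $v$, and at the internal vertices of these paths partition $V(H)$, giving
\[
n\ =\ |V(H\langle u\rangle)|+|V(H\langle v\rangle)|+\sum_{i=1}^{3} n(P(u,v;C_i))\ =\ |V(G\langle x\rangle)|+|V(G\langle v\rangle)|+n(P(u,v;C_2))+n(P(u,v;C_3)),
\]
where the second equality uses $|V(H\langle u\rangle)|=|V(G\langle x\rangle)|$ together with $|V(H\langle v\rangle)|+n(P(u,v;C_1))=\bigl(|V(G\langle v\rangle)|-|V(G\langle v_y\rangle)|\bigr)+|V(G\langle v_y\rangle)|=|V(G\langle v\rangle)|$.

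Finally I would substitute the displayed inequalities into this identity: using $|V(G\langle x\rangle)|\le|V(G\langle v\rangle)|-|V(G\langle v_y\rangle)|$ and $n(P(u,v;C_2))+n(P(u,v;C_3))\le 2|V(G\langle v_y\rangle)|$, the occurrences of $|V(G\langle v_y\rangle)|$ partially cancel, leaving
\[
n\ \le\ 2|V(G\langle v\rangle)|+|V(G\langle v_y\rangle)|\ \le\ 3|V(G\langle v\rangle)|,
\]
the last inequality because $G\langle v_y\rangle$ is a proper subtree of $G\langle v\rangle$. Thus $|V(G\langle\rhoG{y}\rangle)|=|V(G\langle v\rangle)|\ge n/3$, as claimed. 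The only step requiring genuine care is the block-partition identity for $n$ — verifying that the three $uv$-paths are internally vertex-disjoint and that their internal vertices, together with the two junctions, index every pendent tree of $H$ — but this is immediate from the mono-block structure; the remainder is bookkeeping with facts established before the statement, so I foresee no real obstacle.
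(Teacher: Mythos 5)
Your proof is correct and follows essentially the same route as the paper's: both extract the inequalities $|V(G\langle x\rangle)|\le |V(G\langle \rhoG{y}\rangle)|-|V(G\langle v_y\rangle)|$ and $n(P(\rhoG{x},\rhoG{y};C_i))\le |V(G\langle v_y\rangle)|$ from Conditions~(ii) and~(iii) of Lemma~\ref{lem:child_condition}, and combine them with the same vertex-count identity $n=|V(G\langle x\rangle)|+|V(G\langle \rhoG{y}\rangle)|+n(P(\rhoG{x},\rhoG{y};C_2))+n(P(\rhoG{x},\rhoG{y};C_3))$. Your closing algebra ($n\le 2|V(G\langle \rhoG{y}\rangle)|+|V(G\langle v_y\rangle)|\le 3|V(G\langle \rhoG{y}\rangle)|$) is a slightly cleaner rearrangement than the paper's two-step bound via $(n-n_y)/2$ and $n_y\le n/3$, but it is not a different argument.
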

\begin{proof}
Let $H$ be the mono-block 2-augmented tree $G + p \cdot xy$,
and let $C_i$, $i \in [1, 3]$,
denote the cycles of $H$, where $C_1 = C$.
The junctions in $H$ are $\rhoG{x}$ and $\rhoG{y}$.
Let $n_y$ and $n'_y$ denote the numbers $n(P(\rhoG{x}, \rhoG{y}; C_2))$
and $n(P(\rhoG{x}, \rhoG{y}; C_3))$, respectively.
Without loss of generality, we assume that $n_y \leq n'_y$.
Since the number of vertices in $G$ is $n$, 
we have $n = |V(G \langle x\rangle)| + |V(G\langle \rhoG{y} \rangle)| + n_y + n'_y$.

Let $v_y$ be the child of $\rhoG{y}$ such that $G\langle v_y\rangle$ contains $y$.
If $H$ is a child of $G$, then from 
Conditions~(ii) and~(iii) of Lemma~\ref{lem:child_condition}, 
we have 
$|V(G\langle \rhoG{y}\rangle)| - |V(G\langle v_y\rangle)| \geq |V(G\langle x\rangle)|$ 
and $|V(G\langle v_y\rangle)| \geq n'_y$.

Therefore, we have
\begin{align*}
 |V(G\langle \rhoG{y}\rangle)| 
& \geq |V(G\langle x\rangle)| + |V(G\langle v_y\rangle)| \\
& \geq |V(G\langle x\rangle)| + n'_y \\
& \geq n - |V(G\langle \rhoG{y} \rangle)| - n_y - n'_y + n'_y  \\
& \geq (n - n_y) / 2.
\end{align*}
Since we have $n_y \leq n'_y \leq |V(G\langle v_y\rangle)|$ 
and $n = |V(G \langle x \rangle)| + |V(G\langle \rhoG{y} \rangle)| + n_y + n'_y$, 
we obtain $n_y \leq n/3$ and $|V(G\langle \rhoG{y}\rangle)| \geq (n - n_y) / 2 \geq n/3$.
\end{proof}

As a consequence of Lemma~\ref{lem:children:pendent_tree_size},
for a monocyclic graph $G$ with $n$ vertices and cycle~$C$, 
a  pair $\{x, y\}$ of non-adjacent vertices in $G$
with $x = \rhoG{x} \neq \rhoG{y}$
and $y \not\in V(C)$, 
and an integer $p \in [1, \min \{ \res(x), \res(y) \}]$,
if $V(G\langle \rhoG{y} \rangle)| < n/3$ holds, 
then it holds that $G + p \cdot xy$ is not a child of $G$.

\begin{lemma}
\label{lem:two_pendent_tree}
Let $G$ be a monocyclic graph with a cycle $C$
and let $r \in V(C)$.
If $G$ has a pendent tree $G\langle r^* \rangle$ 
such that $r^* \neq r$ and $|V(G\langle r \rangle)| \leq |V(G\langle r^* \rangle)|$,
then 
for any pair $\{x, y\}$
of non-adjacent vertices such that
$x \in V(C)$ and $y \in V(G\langle r \rangle)$
and an integer $p \in [1, \min \{ \res(x), \res(y) \}]$
it holds that $G + p \cdot xy$ is not a child of~$G$.
\end{lemma}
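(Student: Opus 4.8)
The plan is to assume, for contradiction, that $H := G + p \cdot xy$ is a child of $G$, and to derive an absurdity from the characterization of children in Lemma~\ref{lem:child_condition}. Since $y \in V(G\langle r\rangle)$ we have $\rhoG{y} = r$. Two degenerate cases are immediate: if $x = r$ then $\rhoG{x} = \rhoG{y} = r$, so $G + p\cdot xy$ has two blocks and is not even a mono-block $2$-augmented tree; and if $y = r$ (that is, $y \in V(C)$) then the two junctions $x$ and $y$ of $H$ both lie on $C$, so Condition~(i) of Lemma~\ref{lem:child_condition} is violated. Hence I may assume $x \in V(C)\setminus\{r\}$ and $y \in V(G\langle r\rangle)\setminus\{r\}$; then the junctions of $H$ are $x = \rhoG{x} \ne r = \rhoG{y}$, and, since $H$ is assumed to be a child, Lemma~\ref{lem:child_condition} guarantees that its Conditions (i)--(iii) all hold, with $x$ and $r$ in the roles of $\rhoG{x}$ and $\rhoG{y}$.

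Next I would fix notation: let $C_1 = C$, $C_2$, $C_3$ be the cycles of $H$; let $v_y$ be the child of $r$ in the rooted tree $G\langle r\rangle$ with $y \in V(G\langle v_y\rangle)$; and put $m := |V(G\langle r\rangle)| \le |V(G\langle r^*\rangle)|$. Since $G\langle v_y\rangle$ omits $r$ but contains $v_y$, we have $1 \le |V(G\langle v_y\rangle)| \le m-1$. Using the facts recorded in the text preceding Lemma~\ref{lem:children:pendent_tree_size} --- namely $H\langle x\rangle = G\langle x\rangle$, $H\langle r\rangle = G\langle r\rangle - G\langle v_y\rangle$, and $n(P(x,r;C)) = |V(G\langle v_y\rangle)|$ --- I split into two subcases. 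If $x = r^*$, the leading entry of $\code(x)$ equals $|V(H\langle x\rangle)| = |V(G\langle r^*\rangle)| \ge m$, whereas the leading entry of $\code(r)$ equals $|V(H\langle r\rangle)| = m - |V(G\langle v_y\rangle)| \le m-1$, so $(\code(r),\code^*(r)) \prec (\code(x),\code^*(x))$, contradicting Condition~(ii), which requires $(\code(x),\code^*(x)) \preceq (\code(r),\code^*(r))$. If $x \ne r^*$, then $r^* \in V(C)\setminus\{x,r\}$ is an interior vertex of one of the two arcs of $C$ joining $x$ and $r$; these arcs are exactly $P(x,r;C_2)$ and $P(x,r;C_3)$, and $H\langle v\rangle = G\langle v\rangle$ for every interior vertex $v$ of such an arc, so $|V(G\langle r^*\rangle)| \le \max\{n(P(x,r;C_2)), n(P(x,r;C_3))\}$. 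On the other hand, comparing the leading entries $n - n(P(x,r;C_i))$ of the codes in Condition~(iii) for $i = 2, 3$ forces $|V(G\langle v_y\rangle)| = n(P(x,r;C)) \ge n(P(x,r;C_i))$ for $i = 2, 3$, hence $|V(G\langle v_y\rangle)| \ge \max\{n(P(x,r;C_2)), n(P(x,r;C_3))\} \ge |V(G\langle r^*\rangle)| \ge m$, contradicting $|V(G\langle v_y\rangle)| \le m-1$. In either subcase $H$ cannot be a child of $G$.

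This argument uses no machinery beyond Lemma~\ref{lem:child_condition}; the one point that must be seen --- which I would regard as the crux rather than a genuine obstacle --- is that an exceedingly large extra pendent tree $G\langle r^*\rangle$ cannot be ``absorbed'': sitting at the cycle vertex $x$ it makes $\code(x)$ strictly larger than $\code(r)$, violating Condition~(ii); and sitting elsewhere on $C$ it makes one of the cyclic arcs $P(x,r;C_2), P(x,r;C_3)$ at least as heavy as all of $G\langle r\rangle$, which through Condition~(iii) would force the detached branch $G\langle v_y\rangle$ of $G\langle r\rangle$ to be at least as large as the whole of $G\langle r\rangle$ that contains it. Everything else is routine bookkeeping with the definitions of $\code$, $\code^*$, and the pendent trees of $H$.
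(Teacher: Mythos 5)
Your proof is correct and follows essentially the same route as the paper's: a case split on whether $x = r^*$ or not, refuting Condition~(ii) of Lemma~\ref{lem:child_condition} in the first case via $|V(H\langle r^*\rangle)| > |V(H\langle r\rangle)|$, and Condition~(iii) in the second via the leading entries $n - n(P(\cdot))$ of the path codes. The only difference is that you explicitly dispose of the degenerate cases $x = r$ and $y = r$, which the paper leaves implicit.
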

\begin{proof}
Let $r^*$ denote the root of a pendent tree 
such that $|V(G\langle r \rangle)| \leq |V(G\langle r^* \rangle)|$.
Let $H$ be the graph $G + p \cdot xy$, 
where $\rhoG{x}$ and $\rhoG{y}$ are the junctions in $H$,
and let $C_i$, $i \in [1, 3]$,
denote the cycles of $H$, where $C_1 = C$.
Suppose that $x \neq r^*$.
Then, vertex $r^*$ is included in some path
$P(\rhoG{x}, \rhoG{y}; C_j)$, $j \in \{1, 2, 3\}$, 
and it holds that $n(P(\rhoG{x}, \rhoG{y}; C_j)) \geq |V(G\langle r^* \rangle)|$.
Let $v$ be the child of $r$ such that 
$V(G\langle v \rangle)$ contains~$y$.
For the path $P(\rhoG{x}, \rhoG{y}; C)$, we have 
$n(P(\rhoG{x}, \rhoG{y}; C)) = |V(G\langle v \rangle)| < |V(G\langle r \rangle)|$.
Hence we have $\code(P(\rhoG{x}, \rhoG{y}; C^*)) \prec \code(P(\rhoG{x}, \rhoG{y}; C))$, 
which contradicts Condition~(iii) of Lemma~\ref{lem:child_condition}.

Next, suppose that $x = r^*$.
The junctions of $H$ are $r = \rhoG{y}$ and $r^* = x$.
Let $v$ be the child of $r$ such that 
$V(G\langle v \rangle)$ contains $y$,
and we have $|V(H\langle r \rangle)| = |V(G\langle r \rangle)| - |V(G\langle v \rangle)|$
and $|V(H\langle r^* \rangle)| = |V(G\langle r^* \rangle)|$.
Therefore, since we have $|V(H\langle r^* \rangle)| > |V(H\langle r\rangle)|$
it holds that $(\code(\rhoG{x}), \code^*(x)) \succ (\code(\rhoG{y}), \code^*(\rhoG{y}))$, 
and therefore $G + p \cdot xy$ is not a child of $G$,
as required.
\end{proof}

Finally, from Lemmas~\ref{lem:child_condition}, \ref{lem:children:pendent_tree_size}, 
and \ref{lem:two_pendent_tree}, we have the following necessary 
and sufficient conditions that a pair of non-adjacent
vertices in a monocyclic graph $G$ must satisfy in order
to obtain a child mono-block 2-augmented tree by adding multiple edges
between them.

\begin{lemma}\label{lem:improve:parent_child}
Let $G$ be a monocyclic graph with $n$ vertices and a cycle $C$.
Let $x$ and $y$ be non-adjacent vertices in $V(G)$ with $\rhoG{x} \neq \rhoG{y}$,
and $p \in [1, \min \{ \res(x), \res(y) \}]$. 
Let $H$ denote the graph $G + p \cdot xy$, 
where $\rhoG{x}$ and  $\rhoG{y}$ are the junctions in~$H$, and
let $C_i$, $i \in [1, 3]$ denote the three cycles in $H$.
Then $H$ is a child of $G$ if and only if the following conditions are satisfied:
\begin{description}
\item[{\rm (i)}] $|V(G\langle \rhoG{y} \rangle)| \geq n/3$;
\item[{\rm(ii)}] $|V(G\langle \rhoG{y} \rangle)| > |V(G\langle r \rangle)|$ 
      for each $r \in V(C) \setminus \{\rhoG{y}\}$;
\item[{\rm (iii)}] $x = \rhoG{x}$ and $y \neq \rhoG{y}$, (i.e., $x \in V(C)$ and $y \not\in V(C)$);
\item[{\rm (iv)}] $(\code(\rhoG{x}), \code^*(\rhoG{x})) 
	\preceq (\code(\rhoG{y}), \code^*(\rhoG{y}))$; and
\item[{\rm (v)}] $\code(P(\rhoG{x}, \rhoG{y}; C)) 
	\preceq \code(P(\rhoG{x}, \rhoG{y}; C_i))$, $i \in [1, 3]$.
\end{description}
\end{lemma}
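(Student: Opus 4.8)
The plan is to obtain Lemma~\ref{lem:improve:parent_child} by showing that the five-condition system here is equivalent to the three-condition system of Lemma~\ref{lem:child_condition}, using Lemmas~\ref{lem:children:pendent_tree_size} and~\ref{lem:two_pendent_tree} to restate Conditions (ii) and (iii) of the earlier lemma in terms of pendent-tree sizes. I would split the argument into the two directions as in Lemma~\ref{lem:child_condition}, since that lemma already carries the bulk of the work and the remaining job is purely to replace the code-comparison Conditions (ii)--(iii) there by the more concrete size conditions (i)--(ii) stated here — while keeping (iii)--(v) here, which are verbatim (i)--(iii) of Lemma~\ref{lem:child_condition}.

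For the \textbf{necessity} direction, I would assume $H=G+p\cdot xy$ is a child of $G$. Then Lemma~\ref{lem:child_condition} immediately gives Conditions (iii), (iv), (v) here. Condition (i) here, $|V(G\langle\rhoG{y}\rangle)|\ge n/3$, is exactly the conclusion of Lemma~\ref{lem:children:pendent_tree_size}, whose hypotheses (namely $x=\rhoG{x}\ne\rhoG{y}$, $y\notin V(C)$, and $H$ a child of $G$) are met. For Condition (ii) here, I would argue by contrapositive using Lemma~\ref{lem:two_pendent_tree}: if there were some $r\in V(C)\setminus\{\rhoG{y}\}$ with $|V(G\langle\rhoG{y}\rangle)|\le|V(G\langle r\rangle)|$, then applying Lemma~\ref{lem:two_pendent_tree} with the roles $r:=\rhoG{y}$ (so that $y\in V(G\langle\rhoG{y}\rangle)$) and $r^*:=r$ would force $G+p\cdot xy$ not to be a child of $G$, a contradiction. (A small point to check is the strict-versus-nonstrict inequality: Lemma~\ref{lem:two_pendent_tree} needs $|V(G\langle r \rangle)|\le|V(G\langle r^*\rangle)|$, which matches the negation of the strict inequality in Condition~(ii), so this goes through cleanly.)

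For the \textbf{sufficiency} direction, I would assume Conditions (i)--(v) here and derive that $H$ is a child of $G$ via Lemma~\ref{lem:child_condition}. Conditions (iii), (iv), (v) here are literally Conditions (i), (ii), (iii) of Lemma~\ref{lem:child_condition}, so the three hypotheses of that lemma hold directly and we conclude $H$ is a child of $G$; Conditions (i) and (ii) here are simply not needed for this direction (they are consequences, not extra requirements). So the sufficiency proof is a one-line invocation of Lemma~\ref{lem:child_condition}, and the only real content of the lemma is the necessity half — specifically, extracting the size conditions (i) and (ii).

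The main obstacle, such as it is, will be making the bookkeeping in the necessity direction airtight: getting the indexing of cycles consistent with the earlier lemmas ($C_1=C$ throughout), confirming that the hypotheses of Lemmas~\ref{lem:children:pendent_tree_size} and~\ref{lem:two_pendent_tree} are exactly those guaranteed by assuming $H$ is a child, and handling the inequality directions in Condition~(ii) carefully. There is no deep new idea here — the lemma is a repackaging of the preceding three lemmas into a usable testable form — so I would keep the proof short and cite Lemmas~\ref{lem:child_condition}, \ref{lem:children:pendent_tree_size}, and~\ref{lem:two_pendent_tree} explicitly at each step rather than reproving anything.
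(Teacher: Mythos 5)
Your proposal is correct and takes essentially the same route as the paper: the paper presents this lemma as a direct consequence of Lemmas~\ref{lem:child_condition}, \ref{lem:children:pendent_tree_size}, and~\ref{lem:two_pendent_tree}, with conditions (iii)--(v) alone carrying sufficiency via Lemma~\ref{lem:child_condition} and conditions (i)--(ii) arising as necessary consequences of the other two lemmas, exactly as you argue. Your handling of the strict versus non-strict inequality in condition~(ii) via the contrapositive of Lemma~\ref{lem:two_pendent_tree} is the intended reading, so nothing is missing.
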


\subsection{Preprocessing for Efficient Computation}
\label{sec:preprocess}

Notice that in Lemma~\ref{lem:improve:parent_child}\,(iv) and~(v),
in order to check whether a mono-block 2-augmented tree $H$
obtained by adding an edge to a pair of nonadjacent vertices in
a monocyclic graph~$G$ is indeed a child of~$G$ or not,
requires our knowledge of the rank of pendent trees
of $H$ in the set $\mathcal{T}(H)$ of all pendent trees in~$H$.
This computation might seem wasteful,
as a single monocyclic graph $G$ may have many
candidates for children  mono-block 2-augmented trees.
We here give an observation that
there exists a set $\mathbb{T}$ of selected pendent trees of $G$
and their subtrees, such that this set will contain as a subset
the set of pendent trees of any graph $H$ obtained by adding an edge
between a pair of non-adjacent vertices in~$G$.
Then, to save on computation effort,
we calculate the rank of rooted trees in this set $\mathbb{T}$
only once per monocyclic graph~$G$.

Let $G$ be a monocyclic graph with a cycle $C = (v_0, v_1, \ldots, v_{n-1}, v_0)$.
Then, in addition to the set 
$\mathcal{T}(G) = \{G \langle v_i \rangle \mid i \in [0, n-1]\}$ 
of pendent trees, we define the following sets of rooted trees\\
~-~ $\widehat{\mathcal {T}}(G) =
	  \{ G \langle u \rangle \mid u \in V(G \langle v_0 \rangle) \setminus \{v_0\} \}$ \\
~-~ $\widetilde{\mathcal {T}}(G) =
	\{G \langle \parent(u) \rangle - G \langle u \rangle \mid
	      u \in V(G \langle v_0 \rangle) \setminus \{v_0\} \}$. \\
Finally, we define the union 
$\mathbb{T}(G) = \mathcal{T}(G) \cup \widehat{\mathcal {T}}(G) \cup \widetilde{\mathcal {T}}(G)$.

\begin{lemma}
 \label{lem:subset_pendent_trees}
 Given a monocyclic graph $G$ with a cycle $C = (v_0, v_1, \ldots, v_{n-1}, v_0)$,
 let $\{x, y\} \in \overline{E}(G)$ be a pair of non-adjacent vertices
 such that $x \in V(C) \setminus \{v_0\}$ and 
 $y \in V(G \langle v_0 \rangle ) \setminus \{v_0\}$,
 and let $H = G + xy$ denote the  graph obtained from $G$
 by adding an edge $xy$.
 Then it holds that
 \[
  \mathcal{T}(H) \subseteq \mathcal{T}(G) \cup \widetilde{\mathcal {T}}(G).
 \]
\end{lemma}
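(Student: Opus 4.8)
The plan is to analyze how the pendent trees of $H = G + xy$ relate to those of $G$, distinguishing vertices of the new cycle created in $H$ from the rest. Recall that $G$ is monocyclic with cycle $C$, and adding the edge $xy$ with $x \in V(C) \setminus \{v_0\}$ and $y \in V(G\langle v_0\rangle) \setminus \{v_0\}$ produces a mono-block $2$-augmented tree $H$ whose junctions are $\rhoG{x} = x$ and $\rhoG{y} = v_0$. First I would identify the new block $B$ of $H$: it consists of the vertices on the old cycle $C$ together with the vertices on the path in $G\langle v_0\rangle$ from $v_0$ to $y$. A pendent tree $H\langle w\rangle$ for $w \in B$ is then the subgraph of $H$ induced by $w$ and everything reachable from $w$ without traversing an edge of any cycle of $H$; so I must compare this, case by case, with the pendent trees of $G$.

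The key steps, in order: (1) For $w \in V(C) \setminus \{v_0, x\}$, no edge incident to $w$ changed in passing from $G$ to $H$, and the cycles through $w$ are the same set of edges off of which $H\langle w\rangle$ hangs; hence $H\langle w\rangle = G\langle w\rangle \in \mathcal{T}(G)$. (2) For $w = x$, similarly the only change is that $x$ now has the extra multiple edge $xy$, but that edge lies on two of the three cycles of $H$, so it is excluded when forming $H\langle x\rangle$; thus $H\langle x\rangle = G\langle x\rangle \in \mathcal{T}(G)$. (3) For $w = v_0$: in $G$ the pendent tree $G\langle v_0\rangle$ hangs off the single cycle $C$; in $H$ the subtree of $G\langle v_0\rangle$ containing $y$ (rooted at the child $v_y$ of $v_0$ on the $v_0$–$y$ path) gets absorbed into the block, so $H\langle v_0\rangle = G\langle v_0\rangle - G\langle v_y\rangle = G\langle \parent(v_y)\rangle - G\langle v_y\rangle$ with $\parent(v_y) = v_0$, which is precisely an element of $\widetilde{\mathcal{T}}(G)$ (taking $u = v_y$). (4) For $w$ an interior vertex of the $v_0$–$y$ path in $G\langle v_0\rangle$, i.e. $w \notin \{v_0\}$ but $w \in B$: here $w$ has two children directions along the path plus possibly side subtrees; the two path-directions become cycle edges in $H$, so $H\langle w\rangle = G\langle w\rangle$ minus the one child-subtree of $G\langle w\rangle$ (in the rooted-at-$v_0$ orientation of $G\langle v_0\rangle$) that continues toward $y$. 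Writing that child as $u$, we get $H\langle w\rangle = G\langle \parent(u)\rangle - G\langle u\rangle \in \widetilde{\mathcal{T}}(G)$. Collecting cases (1)–(4) gives $\mathcal{T}(H) \subseteq \mathcal{T}(G) \cup \widetilde{\mathcal{T}}(G)$.

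The main obstacle I anticipate is case (4) — carefully setting up the rooting conventions so that "the subtree of $G\langle v_0\rangle$ hanging off a path vertex $w$ toward $y$" is literally of the form $G\langle \parent(u)\rangle - G\langle u\rangle$ for some $u \in V(G\langle v_0\rangle)\setminus\{v_0\}$, as required by the definition of $\widetilde{\mathcal{T}}(G)$. One has to check that when $v_0$ is itself on this path (as a path endpoint) the bookkeeping still matches case (3), and that a vertex $w$ on the path that is not $v_0$ has exactly one child in the rooted tree $G\langle v_0\rangle$ pointing toward $y$ (the path is a simple path, so this is immediate, but it must be stated). The set $\widehat{\mathcal{T}}(G)$ does not appear on the right-hand side because no pendent tree of $H$ is a proper rooted subtree $G\langle u\rangle$ with $u$ strictly below $v_0$ — every such subtree either survives whole as a pendent tree (only when $u$ is off the $v_0$–$y$ path, in which case it is a subtree of some $H\langle w\rangle$, not equal to it) or is truncated; so $\widehat{\mathcal{T}}(G)$ is genuinely unnecessary here, which is worth a remark but needs no argument beyond the case analysis above.
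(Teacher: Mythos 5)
Your case decomposition is essentially the paper's: pendent trees rooted at $v_1,\ldots,v_{n-1}$ (including $x$) are untouched, $H\langle v_0\rangle = G\langle v_0\rangle - G\langle v_y\rangle$, and for a vertex $w$ on the $v_0$--$y$ path that has a path-child $u$, the pendent tree of $H$ at $w$ is $G\langle \parent(u)\rangle - G\langle u\rangle \in \widetilde{\mathcal{T}}(G)$. Your cases (1)--(3), and case (4) restricted to path vertices that do have a child continuing toward $y$, are correct and match the paper's argument.

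The gap is at $w=y$ itself, and it is exactly the point your closing paragraph dismisses. Since $y$ lies on two of the three cycles of $H$, it belongs to the block, so $H\langle y\rangle\in\mathcal{T}(H)$; but $y$ is the endpoint of the $v_0$--$y$ path, so there is no path-child to subtract, and $H\langle y\rangle = G\langle y\rangle$ in its entirety. This tree is an element of $\widehat{\mathcal{T}}(G)$ and is in general \emph{not} of the form $G\langle \parent(u)\rangle-G\langle u\rangle$ (such a tree is rooted at $\parent(u)$, and removing a nonempty subtree from $G\langle y\rangle$ cannot yield $G\langle y\rangle$ back), nor is it a pendent tree of $G$ rooted on $C$. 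Hence your assertion that ``no pendent tree of $H$ is a proper rooted subtree $G\langle u\rangle$ with $u$ strictly below $v_0$'' is false, and case (4) as you state it does not apply to $w=y$. In fairness, the paper's own proof has the same lacuna: it establishes $H\langle u_i\rangle = G\langle u_i\rangle - G\langle u_{i+1}\rangle$ only for $i\in[0,k-2]$ and never accounts for $u_{k-1}=y$ (its ``equivalently'' re-indexing is also off by one). The inclusion that is actually needed and used downstream is $\mathcal{T}(H)\subseteq \mathbb{T}(G)=\mathcal{T}(G)\cup\widehat{\mathcal{T}}(G)\cup\widetilde{\mathcal{T}}(G)$, which your argument does prove once you add the observation $H\langle y\rangle=G\langle y\rangle\in\widehat{\mathcal{T}}(G)$; the right-hand side of the lemma as stated should include $\widehat{\mathcal{T}}(G)$.
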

\begin{proof}
By the choice of $x$ and $y$ the graph $H$ is a mono-block
 2-augmented tree and
 the junctions of $H$ are the vertices $x$ and $v_0$.
 Let $P_i$, $i = 1, 2, 3$, denote the three $x,v_0$-paths in $H$,
 such that $xy \in E(P_3)$, i.e., $P_3 = P(x, y; C)$.
 The pendent trees $G \langle v_i \rangle$, $i \in [1, n-1]$
 are preserved between $G$ and $H$.
 Therefore, we focus on the pendent tree $G \langle v_0 \rangle$.
 Denote by $Q = (u_0 = v_0, u_2, \ldots, u_{k-1} = y)$ the $v_0, y$-path
 in the tree $G \langle v_0 \rangle$.
 Then,  adding the edge $xy$ to path $Q$ we obtain $P_3$ in $H$,
 and for each rooted tree $G \langle u_i \rangle$, $i \in [0, k-2]$,
 the tree $G  \langle u_i \rangle - G \langle u_{i+1} \rangle$
 becomes the rooted tree $H \langle u_i \rangle$ in the
 mono-block 2-augmented tree~$H$ (see Fig.~\ref{fig:deleted_trees}).
 Equivalently, for $i \in [1, k-1]$, the tree
 $G  \langle u_{i-1} \rangle - G \langle u_{i} \rangle$
 becomes the pendent tree 
 $H \langle u_i \rangle$ in $H$.
 Since for $i \in [1, k-1]$ it holds that $\parent(u_i) = u_{i-1}$
 in $G \langle v_0 \rangle$,
 the claim follows. %
\begin{figure}[!ht]
\centering
 \includegraphics[width = 0.88 \textwidth]{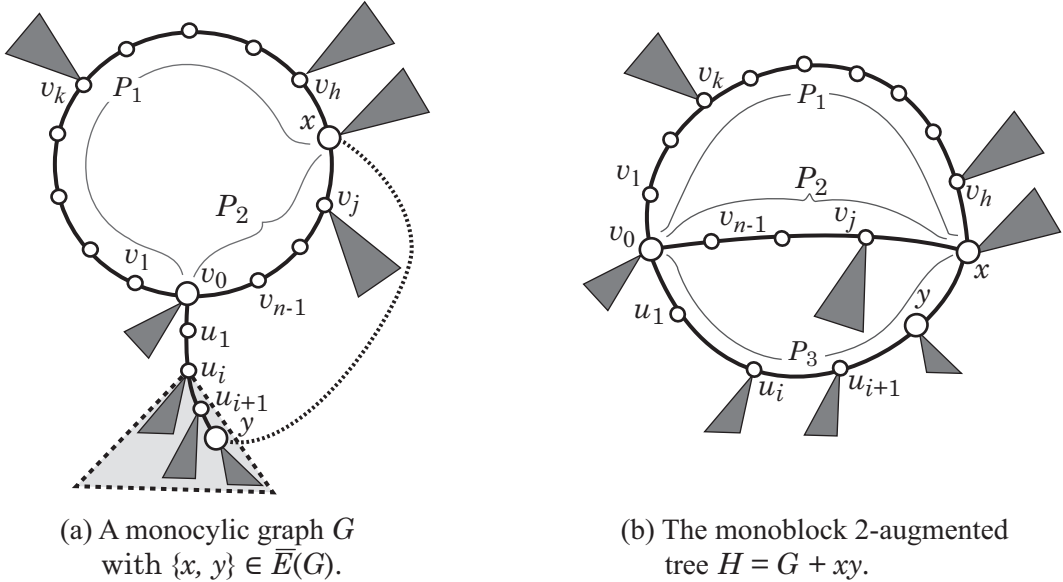}
 \vspace{-1cm}
 \caption {
 (a)~A monocyclic graph~$G$ where the unique cycle is denoted by
	$C = (v_0, v_1, \ldots, v_{n-1}, v_0)$,  
	with a pair $\{x, y\}$ of non-adjacent vertices such that
	$x \in V(C)$ and $y \in V(G \langle v_0 \rangle) \setminus \{v_0\}$.
(b)~The mono-block 2-augmented tree $H = G + xy$ obtained
    by adding an edge $xy$ to the monocyclic graph~$G$ in~(a).
    The subtrees denoted by dark gray are preserved from~$G$ in~$H$.	
   }
 \label{fig:deleted_trees}
\end{figure}
\end{proof}

\begin{lemma}
 Given an $n$-vertex $\Sigma$ -colored monocyclic graph $G$ with
 multiplicity at most $d$ and a unique cycle $C = (v_0, v_1, \ldots, v_{m-1}, v_0)$,
 the rank of all trees in $\mathbb{T}(G)$ can be computed
 in $O(n \cdot (n + |\Sigma| + d))$ time in total.
\end{lemma}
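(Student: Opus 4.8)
The plan is to bound the total number of vertices over all trees in $\mathbb{T}(G)$ and then invoke Lemma~\ref{lem:ranking_all_subtrees}. Recall that $\mathbb{T}(G) = \mathcal{T}(G) \cup \widehat{\mathcal{T}}(G) \cup \widetilde{\mathcal{T}}(G)$, where $\mathcal{T}(G)$ is the set of $m$ pendent trees rooted at the cycle vertices, $\widehat{\mathcal{T}}(G)$ is the set of rooted subtrees $G\langle u\rangle$ for $u$ inside the pendent tree $G\langle v_0\rangle$, and $\widetilde{\mathcal{T}}(G)$ is the set of ``complement'' trees $G\langle \parent(u)\rangle - G\langle u\rangle$ for the same vertices $u$. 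First I would observe that the pendent trees in $\mathcal{T}(G)$ are vertex-disjoint except for sharing their roots on the cycle $C$, so $\sum_{i\in[0,m-1]} |V(G\langle v_i\rangle)| \le n + m \le 2n$; hence $\mathcal{T}(G)$ contributes $O(n)$ total vertices. This handles the first component directly.

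Next I would treat $\widehat{\mathcal{T}}(G)$ and $\widetilde{\mathcal{T}}(G)$ together. Let $N = |V(G\langle v_0\rangle)|$ denote the number of vertices in the single pendent tree rooted at $v_0$. The set $\widehat{\mathcal{T}}(G)$ consists of the rooted subtrees $G\langle u\rangle = (G\langle v_0\rangle)_u$ of the rooted tree $G\langle v_0\rangle$ for all $u \neq v_0$; since $G\langle v_0\rangle$ is a tree on $N$ vertices, the total number of vertices over all of its rooted subtrees is $O(N^2)$, which is $O(n^2)$. Similarly, each complement tree $G\langle \parent(u)\rangle - G\langle u\rangle$ in $\widetilde{\mathcal{T}}(G)$ has at most $N$ vertices, and there are at most $N$ such trees, so this component also contributes $O(N^2) = O(n^2)$ vertices in total. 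Adding up, the total number of vertices over all trees in $\mathbb{T}(G)$ is $O(n^2)$.

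Now I would appeal to Lemma~\ref{lem:ranking_all_subtrees}: given a set of $\Sigma$-colored rooted multi-trees with multiplicity at most $d$ whose total number of vertices is $n'$, the rank of every rooted subtree of every tree in the set can be computed in $O(n'(n' + |\Sigma| + d))$ time. Applying this with $n' = O(n^2)$ would naively give $O(n^4)$, which is \emph{not} the claimed bound, so a more careful accounting is needed here — this is the main obstacle. The resolution is to not feed all of $\mathbb{T}(G)$ into the ranking procedure as one flat collection of unrelated trees, but to exploit the fact that the elements of $\widehat{\mathcal{T}}(G)$ are exactly the rooted subtrees of the single tree $G\langle v_0\rangle$ (on $N \le n$ vertices), and the elements of $\widetilde{\mathcal{T}}(G)$ and $\mathcal{T}(G)$ together are the pendent trees plus $O(N)$ complement trees, each on $O(n)$ vertices, with total vertex count $O(n)$ for $\mathcal{T}(G)$ and $O(n^2)$ for $\widetilde{\mathcal{T}}(G)$.

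To make the time bound come out to $O(n(n+|\Sigma|+d))$, I would proceed as follows. Run Procedure~2 (\textsc{SubTreeSignature}) once on the rooted tree $G\langle v_0\rangle$; by Lemma~\ref{lem:complexity_of_signature} this computes $\sigma(G\langle u\rangle)$ for every $u \in V(G\langle v_0\rangle)$ in $O(N(N+|\Sigma|+d)) = O(n(n+|\Sigma|+d))$ time, yielding the signatures of all trees in $\widehat{\mathcal{T}}(G)$. For the complement trees in $\widetilde{\mathcal{T}}(G)$, note that $G\langle \parent(u)\rangle - G\langle u\rangle$ is obtained from $G\langle\parent(u)\rangle$ by deleting one child subtree; its signature can be assembled from the already-computed child signatures of $\parent(u)$ via Observation~\ref{obs:representation}, by omitting the block corresponding to $u$ and re-concatenating the remaining (already lexicographically sorted) child signatures — each such assembly costs time linear in $|V(G\langle\parent(u)\rangle)|$. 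Charging each complement-tree construction to the edge $u\,\parent(u)$ and summing along root-to-leaf paths, the total cost of computing all signatures in $\widetilde{\mathcal{T}}(G)$ is $O(N^2) = O(n^2)$, hence $O(n(n+|\Sigma|+d))$. The signatures of the $m$ pendent trees in $\mathcal{T}(G)$ are obtained by a single run of Procedure~2 on each of them, or equivalently by running Procedure~2 on $G\langle v_0\rangle$ for $v_0$ and noting the others are separate trees of total size $O(n)$; this costs $O(n(n+|\Sigma|+d))$ by Lemma~\ref{lem:complexity_of_signature} applied to the disjoint union. Finally, all signatures collected have total length $O(n^2)$ over an alphabet of size $O(n + |\Sigma| + d)$, so one lexicographic sort via the algorithm of Aho et~al.~\cite{AHU74} ranks them all in $O(n^2 + |\Sigma| + d)$ time, after which ranks are assigned in a single linear pass as in Procedure~3. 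Summing all contributions gives the claimed $O(n(n+|\Sigma|+d))$ total running time.
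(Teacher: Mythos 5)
Your proof is correct and follows essentially the same route as the paper: decompose $\mathbb{T}(G)$ into $\mathcal{T}(G)\cup\widehat{\mathcal{T}}(G)\cup\widetilde{\mathcal{T}}(G)$, obtain the signatures of $\mathcal{T}(G)$ and $\widehat{\mathcal{T}}(G)$ via Procedure~2 (i.e., Lemma~\ref{lem:ranking_all_subtrees} applied to the pendent trees, whose total vertex count is $O(n)$), extract the signatures of the complement trees in $\widetilde{\mathcal{T}}(G)$ from the contiguous-subsequence structure of left-heavy signatures at a cost no larger than that for $\widehat{\mathcal{T}}(G)$, and finish with a single lexicographic sort. The paper's proof is just a terser sketch of the same argument; your extra care around the apparent $O(n^4)$ pitfall simply amounts to applying the ranking lemma to the base set of pendent trees rather than to the flattened collection.
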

\begin{proof}
The set 
$\mathbb{T}(G)$ is composed of the
set $\mathcal{T}(G)$ of pendent trees of the graph $G$, 
the set $\widehat{\mathcal{T}}(G)$ of rooted subtrees
of pendent trees of~$G$ and trees in the set 
$\widetilde{\mathcal{T}}(G)$ that are obtained 
as a difference between rooted subtrees of $G$.
Then, by the observation 
made in Section~\ref{sec:left-heavy}
that for a left-heavy tree $T$
and any rooted subtree $T'$ of $T$
the sequences $\delta(T')$ and $\mm(T')$
are continuous subsequences of 
$\delta(T)$ and $\mm(T)$,
it is not difficult to obtain the 
signatures of all trees in the set $\widetilde{\mathcal{T}}(G)$.
Finally, since the total number of vertices of trees in the 
set $\widetilde{\mathcal{T}}(G)$ is not more than
that of the set $\widehat{\mathcal{T}}(G)$,
and Lemma~\ref{lem:ranking_all_subtrees}, the claim follows.
\end{proof}

\subsection{A Procedure to Verify Child Conditions}
\label{sec:children-check}

We show an algorithm 
that for a given monocyclic graph $G$, 
a pair $\{x, y\}$ of non-adjacent vertices in $G$,
and an integer $p \in [1, \min \{ \res(x), \res(y) \}]$,
based on Lemma~\ref{lem:improve:parent_child}
determines whether
$G + p \cdot xy$ is a child of $G$ or not in Procedure~{4}~{\sc ChildCheck}.

\bigskip
\noindent
{\bf Procedure~{4}} {\sc ChildCheck}$(G, \{x, y\}, p)$
\begin{algorithmic}[1]
\Require 
    A monocyclic graph $G$ with $n$ vertices and a cycle $C$, 
    a pair $\{x, y\} \in \overline{E}(G)$ of non-adjacent vertices,
    and an integer~$p \in [1, \min \{ \res(x), \res(y) \} ]$.
\Ensure {\tt True} if $G + p \cdot xy$ is a child of $G$, and {\tt False} otherwise.
\State{Answer := {\tt False};}
\If {$|V(G \langle \rhoG{y} \rangle)| \geq n/3$}
  \If {$|V(G \langle \rhoG{y} \rangle)| > |V(G \langle r \rangle )|$ 
	for each $r \in V(C) \setminus \{\rhoG{y}\}$}
    \If {$x = \rhoG{x} \neq \rhoG{y} \neq y$}
      \State{Construct $H := G + p \cdot xy$;} /* $\rhoG{x}$ and $\rhoG{y}$ are the junctions in $H$ */
      \State{Let $C_i$, $i \in [1, 3]$ denote the cycles of $H$;}
      \If{$\code(\rhoG{x}) \prec \code(\rhoG{y})$}
	\If{$\code(P(\rhoG{x}, \rhoG{y}; C)) \preceq \code(P(\rhoG{x}, \rhoG{y}; C_i)), i \in [1, 3]$}
	  \State{Answer :=  {\tt True}}
	\EndIf
      \ElsIf{$\code(\rhoG{x}) = \code(\rhoG{y})$}
	\If{$\code^*(\rhoG{x}) \preceq \code^*(\rhoG{y})$}
	  \If{$\code(P(\rhoG{x}, \rhoG{y}; C)) \preceq \code(P(\rhoG{x}, \rhoG{y}; C_i)), i \in [1, 3]$}
	    \State{Answer :=  {\tt True}}
	  \EndIf
	\EndIf
      \EndIf
    \EndIf
    \EndIf
  \EndIf;
\State {\bf output } Answer.
\end{algorithmic}

\section{Intra-Duplication of Mono-block 2-Augmented Trees}
\label{sec:avoid_duplicate}

The parent-child relationship helps us avoid inter-duplication,
that is, generating isomorphic structures by adding edges to
topologically different monocyclic graphs.
However, the parent-child relationship is not sufficient to 
eliminate intra-duplications, 
since isomorphic children might occur from 
a single monocyclic graph $G$ by adding an edge between a pair of
non-adjacent vertices.
Henceforth, we treat a graph as a labeled one and use the information on the labeling, 
since there is no other way to distinguish isomorphic mono-block 2-augmented 
trees generated from a single monocyclic graph.

For a monocyclic graph $G$, two distinct non-adjacent vertex pairs $\{x, y\}$ and $\{x', y'\}$, 
and integers $p \in [1, \min \{ \res(x), \res(y) \}]$ 
and $q \in [1, \min \{ \res(x'), \res(y')\} ]$,
we examine under which conditions 
$G + p \cdot xy$ and $G + q \cdot x'y'$ are isomorphic.

\begin{theorem}\label{thm:isomorphism} 
Let $G$ be a connected graph that contains 
exactly one cycle $C=(v_0 ,v_1,$ $\ldots,$ $v_{n-1},$ $v_0)$, and 
 $\{x_i,y_i\}$, $i=1, 2$, be  two  pairs of non-adjacent vertices  in $G$
such that
$x_1, x_2\in V(C)\setminus \{v_0\}$,
$y_1, y_2\not\in V(C)$ and $\rho_G(y_1)=\rho_G(y_2)=v_0$. 
Let  $H_i$, $i=1,2$, denote the graph $G + x_iy_i$,  and
$c$ be a coloring  of the graph $H_1+x_2y_2$. 
Let $w_2$ denote the child of $v_0$ in the rooted tree $G\langle v_0\rangle$
such that  $G\langle w_2\rangle$ contains $y_2$. 
Assume that $H_1$ and $H_2$ are isomorphic. 
Then one of the following holds.
 \begin{enumerate}
\item[{\rm (i)}]
    $y_1\neq y_2$, $c(x_1y_1)=c(x_2y_2)$,  
    and $G\langle v_0\rangle$ has an automorphism $\xi$
 such that $\xi(v_0)=v_0$ and $\xi(y_1)=y_2$;
\item[{\rm (ii)}]
    $y_1=y_2$, $c(x_1y_1)=c(x_2y_2)$,   and $G$ has an automorphism $\xi$ such that 
    $\xi(x_1)=x_2$ and   
    $\xi(v_i)=v_{n-i \mod n}$ for each vertex $v_i\in V(C)$; and  
 \item[{\rm (iii)}]
    $y_1=y_2$, $c(x_1y_1)=c(x_2y_2)=c(v_0w_2)$, 
    $G\langle w_2\rangle$ has an automorphism $\phi$ such that
    $\phi(w_2)=y_1$,  and 
    $G- G\langle w_2\rangle$ has an automorphism $\xi$ with an integer $k\geq 1$
    such that $\xi(x_1)=x_2$ and 
$\xi(v_i)=v_{i+k \mod n}$, $v_i\in V(C)$. 
\end{enumerate}
\end{theorem}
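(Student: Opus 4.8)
The plan is to start from an arbitrary isomorphism $\psi\colon H_1\to H_2$ and trace what it must do to the distinguished structural features of $H_1$ and $H_2$, namely their unique block (the bi-connected ``theta'' part), their junctions, and the pendent trees hanging off the junctions. Both $H_1$ and $H_2$ are mono-block $2$-augmented trees obtained from the same monocyclic $G$ by adding, respectively, the edge $x_1y_1$ and the edge $x_2y_2$; in each, the two junctions are $v_0$ and $x_i$ (for $H_i$). First I would observe that $\psi$ maps the block of $H_1$ to the block of $H_2$, hence permutes the junction set $\{v_0,x_1\}$ onto $\{v_0,x_2\}$; since $x_i\in V(C)\setminus\{v_0\}$ while $v_0$ is on the original cycle $C$, one checks (comparing the pendent trees, which are preserved from $G$ away from the modified path, as in Lemma~\ref{lem:subset_pendent_trees}) that in fact $\psi(v_0)=v_0$ and $\psi(x_1)=x_2$, or that we are forced into a symmetric situation that will produce case~(ii) or~(iii). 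A useful book-keeping device is to work inside $H_1+x_2y_2$ with the coloring $c$, so that ``$c(x_1y_1)=c(x_2y_2)$'' is literally a statement about the endpoints' colors being compatible under $\psi$.

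Next I would split on the relationship between $y_1$ and $y_2$ and on where the added edges sit relative to the path $Q$ from $v_0$ to $y_i$ in $G\langle v_0\rangle$. If $y_1\ne y_2$ and $\psi$ fixes $v_0$ and sends $x_1\mapsto x_2$, then $\psi$ must carry the pendent tree $H_1\langle v_0\rangle$ onto $H_2\langle v_0\rangle$; but away from the single modified $v_0$--$y_i$ path both of these are built from $G\langle v_0\rangle$, so restricting and ``undoing'' the edge modification yields an automorphism $\xi$ of $G\langle v_0\rangle$ fixing $v_0$ with $\xi(y_1)=y_2$, and matching the new edges forces $c(x_1y_1)=c(x_2y_2)$: this is case~(i). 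If $y_1=y_2$ and $\psi$ fixes $v_0$ and $x_1\mapsto x_2$, then $\psi$ is (essentially) an automorphism of $G$ itself; the only freedom is how it acts on the cycle $C$, and since it must send the $x_1$--$v_0$ arcs of $C$ to the $x_2$--$v_0$ arcs consistently, $\xi$ either reverses $C$ (giving $v_i\mapsto v_{n-i\bmod n}$ with $\xi(x_1)=x_2$, case~(ii)) or rotates $C$. The rotation subcase is where case~(iii) comes from: a nontrivial rotation $v_i\mapsto v_{i+k\bmod n}$ cannot be an automorphism of all of $G$ unless $v_0$ is carried off itself, which is impossible while fixing the junction $v_0$ — unless the pendent tree $G\langle w_2\rangle$ that was ``peeled off'' to make $H$ can be folded back by an automorphism $\phi$ of $G\langle w_2\rangle$ with $\phi(w_2)=y_1$, so that $G-G\langle w_2\rangle$ (which no longer ``sees'' $w_2$ as special) admits the rotation $\xi$ with $\xi(x_1)=x_2$; the edge-color condition then reads $c(x_1y_1)=c(x_2y_2)=c(v_0w_2)$.

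The main obstacle I anticipate is the careful case analysis in the last paragraph — precisely delimiting when a purported symmetry of $H_1$/$H_2$ forces a genuine automorphism of $G$, of $G\langle v_0\rangle$, or of the residual graph $G-G\langle w_2\rangle$, and ruling out ``mixed'' possibilities (e.g.\ $\psi$ swapping the two junctions, or combining a cycle rotation with a nontrivial tree automorphism in a way not covered by (i)–(iii)). Handling this cleanly will rely on the code/signature machinery of Sections~\ref{sec:signature}–\ref{sec:parent_of_2a1b}: since $\psi$ is an isomorphism it must preserve $\code(\cdot)$, $\code^*(\cdot)$ and the ranks of pendent trees, which pins down its action on the junctions and on the three $v_0$--$x_i$ paths, leaving only the enumerated degrees of freedom. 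I would also need the small-case caveat that ``a graph on two vertices with a double edge is not a cycle'' so that the three-cycle structure of a mono-block $2$-augmented tree is genuinely a theta graph, which is what makes the junctions well-defined and unique. Once the action on the block and junctions is nailed down, extracting the automorphism $\xi$ (and, in case~(iii), the auxiliary $\phi$) from the restriction of $\psi$ is routine, as is reading off the color equalities from the requirement that $\psi$ map the added edge of $H_1$ to the added edge of $H_2$.
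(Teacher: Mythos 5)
Your overall skeleton matches the paper's: take an isomorphism $\psi\colon H_1\to H_2$, use the uniqueness of the block to see that $\psi$ carries the junction set $\{x_1,v_0\}$ onto $\{x_2,v_0\}$ and permutes the three $x_i$--$v_0$ paths, then split on $y_1\neq y_2$ versus $y_1=y_2$ and extract the automorphisms. The reading of the three cases (in particular that case~(iii) corresponds to the junctions being swapped and the pendent tree $G\langle w_2\rangle$ being ``folded back'' by $\phi$) is also essentially right.

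However, there is a genuine gap at the heart of the argument. You claim that one ``checks'' that $\psi(v_0)=v_0$ and $\psi(x_1)=x_2$, or else one is forced into case~(ii)/(iii), and you then treat case~(i) by restricting $\psi$ to $H_1\langle v_0\rangle$ and ``undoing'' the edge modification. This dichotomy is false: even when $\psi$ fixes $v_0$ as a vertex, it may map the pendent-tree path $P_1$ (the $x_1$--$v_0$ path through $y_1$) onto one of the cycle arcs $Q_2$ or $R_2$ of $H_2$, and a cycle arc of $H_1$ onto the tree path $P_2$, since inside the block these paths are indistinguishable. In that situation $\psi$ does \emph{not} restrict to a map of $G\langle v_0\rangle$ to itself, and deriving the automorphism $\xi$ of $G\langle v_0\rangle$ with $\xi(y_1)=y_2$ is the hard part: the paper does this via an orbit-chasing argument (the function $f$ on a subset $S_1$ of $V(C)$ and properties (a-1)--(a-3)), iterating until the orbit closes to obtain a $(z,z)$-isomorphism of $H_1\langle z\rangle$ and $H_2\langle z\rangle$ at the deepest common vertex $z$ of $P_1$ and $P_2$. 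Similarly, in case~(ii) the subcase $\psi(P_1^{(1)})=V(Q_2^{(2)})$ forces a periodicity argument showing that the long cycle arc is a repetition of the short one (with the quotient-and-remainder decomposition $a=\lfloor(j_2-j_1)/j_1\rfloor$, $b=(j_2-j_1)-a\cdot j_1$); ``the only freedom is how $\psi$ acts on the cycle'' does not capture this. Finally, appealing to preservation of $\code(\cdot)$ and ranks does not pin down which of the three paths maps to which, so it cannot substitute for these arguments; the paper's proof does not use the code machinery here at all. Without the orbit and repetition arguments, the construction of $\xi$ in cases~(i) and~(ii) is unsupported.
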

\begin{proof}
 For a subgraph $A$ of $H_i$, $i=1,2$, let $A^{(i)}$ denote the subgraph
 of $H_i$ induced by the vertices in $V(A)$ and 
 $V(H_i\langle v\rangle)$, $v\in A$.
Without loss of generality assume that  $\{x_1, y_1\}\neq \{x_2, y_2\}$,  
 $x_1=v_{j_1}$, $x_2=v_{j_2}$ and $1\leq j_1\leq j_2\leq n-1$.
For each $i=1,2$, let $P_i$, $Q_i$ and $R_i$ denote the paths
between  vertices $x_i$ and $\rho_G(y_i)$ in $H_i$, 
where we assume that  $y_i\in V(P_i)$, $v_1\in V(Q_1)$ and $v_{n-1}\in V(Q_2)$. 
Let 
$\psi:V(H_1)\to V(H_2)$ be an isomorphism between $H_1$ and $H_2$. 
Since 
$H_i$ for each $i=1,2$ has exactly one block $B_i$ with exactly two junction
vertices $x_i$ and $\rho_G(y_i)=v_0$, 
each path between $x_1$ and $v_0$ in $H_1$ is mapped by $\psi$ to
a path between $x_2$ and $v_0$ in $H_2$.
This means that  $c(x_1y_1)=c(x_2y_2)$,  $\psi(\{x_1,v_0\})=\{x_2,v_0\}$, 
$\psi(B_1)=V(B_2)$,   $|V(B_1)|=|V(B_2)|$, 
and   $\{\psi(P_1^{(1)}), \psi(Q_1^{(1)}), \psi(R_1^{(1)})\}
=  \{ V(P_2^{(2)}) ,  V(Q_2^{(2)}),  V(R_2^{(2)})\}$. 
Note that  $|V(B_i)|=|V(C)|+|V(P_i)|-2$ for each $i=1,2$.
Since 
$|V(B_1)|=|V(B_2)|$, we see that $|V(P_1)|=|V(P_2)|$ and
 $\{|V(Q_1)|,|V(R_1)|\}=\{|V(Q_2)|,|V(R_2)|\}$.

(i) Assume that $y_1\neq y_2$.
Let $z$ denote the deepest vertex  in $V(P_1)\cap V(P_2)$ 
in the rooted  tree $G\langle v_0\rangle$,   where  possibly $z=v_0$.
When $z\neq v_0$, let   $Z$ 
denote the path from $w_2$ to $z$ in the subtree $G\langle v_0\rangle$,
where we regard $V(Z)$ as an empty set when $z=v_0$.
For each $i=1,2$, let $z_i$ denote the child of $z$
such that  $y_i\in V(G\langle z_i\rangle)$ and 
$\overline{P}_i$  denote the subpath of $P_i$ between $z_i$ and $y_i$.
Note that  
\begin{equation}\label{eq:z(-1)}
\psi(z)\not\in  \psi^{-1}(\overline{P}_2),
\end{equation} 
since otherwise
$H_2\langle z\rangle$ would be isomorphic to a proper subgraph 
$H_2\langle t\rangle$   for the vertex $t=\psi(v_h)\in V(\overline{P}_2)$.
 
To show that $G\langle v_0\rangle$ has an automorphism $\xi$
such that $\xi(v_0)=v_0$ and $\xi(y_1)=y_2$,
it suffices to prove that 
\begin{equation}
\label{eq:Hz}
\mbox{ $H_2\langle z\rangle$ and $H_1\langle z\rangle$ are $(z,z)$-isomorphic.}
\end{equation}

Case~1. $\psi(z)\not\in V(G\langle v_0\rangle)$: 
In this case, $\psi(P_1)=V(P_2)$  and ``$\psi(v_0)=v_0$ or  $z\neq v_0$.'' 
Now 
 $H_1 \langle z\rangle$ and $H_2 \langle \psi(z)\rangle$ 
are $(z,\psi(z))$-isomorphic.
Hence if  $z=v_0$ and $\psi(v_0)=v_0$, where $\psi(z)=z$, then 
(\ref{eq:Hz}) holds.
Assume that  $z\neq v_0$.
We know that $H_1 \langle \psi(z)\rangle$ and $H_2 \langle \psi^2(z)\rangle$ 
are $(\psi(z),\psi^2(z))$-isomorphic.
By (\ref{eq:z(-1)}),    $\psi(z)\in V(Z)$ 
holds, implying  $H_2 \langle \psi(z)\rangle=H_1 \langle \psi(z)\rangle$.
Since $\psi$ maps path $P_1-\{v_0,x_1\}$ to path $P_2-\{v_0,x_2\}$, 
we see that  $\psi^2(z)=z$.  
Therefore $H_1 \langle z\rangle$ and  $H_2 \langle z\rangle$ 
are $(z,z)$-isomorphic. 
   
Case~2.  $\psi(z)\in V(G\langle v_0\rangle)$: 
In this case, ``$\psi(P_1)=V(P_2)$, $\psi(x_1)=v_0$ and  $z= v_0$''
or ``$\psi(P_1)\in \{V(Q_2),V(R_2)\}$.''  
Let $h,k\in [1,n-1]$ denote the indices such that 
 $v_h=\psi(z)\in \psi(P_1)$ and  
 $v_k=\psi^{-1}(z)\in \psi^{-1}(P_2)$.
 Define subtrees $T(v)$, $v\in V(C)$ to be $G\langle v\rangle$ if $v\neq v_0$
 and  $T(v_0) =H_2\langle v_0\rangle$.
 %
For the subset $S_1=V(C)\setminus \psi^{-1}(\overline{P}_2)$  of $V(C)$,
define a function $f: S_1\to V(G)$ such that
 \[ f(v)=\left \{ \begin{array}{ll}
   \psi(v) & \mbox{ if } v\in S_1\setminus  \psi^{-1}(Z) \\
    \psi^2(v) & \mbox{ if } v\in \psi^{-1}(Z) .
 \end{array} 
 \right.
 \] 
We here prove the following properties: \\
(a-1)~For the subset  $S_2=V(C)\setminus \psi(\overline{P}_1)$ of $V(C)$,
  $f$ is a bijection from $S_1$ to $S_2$; \\
(a-2)~Let $v^{\dagger}=v_k$ if $z\neq v_0$ and
$v^{\dagger}=v_0$ if $z=v_0$.
Then $f(v^{\dagger})=v_h$.
For each vertex $v\in S_1\setminus\{v^{\dagger}\}$, 
  $T(v)$ and $T(f(v))$ are $(v,f(v))$-isomorphic; and \\
(a-3)~$f^p(v_h)\in S_2\setminus \psi(\overline{P}_2)\subseteq S_1$
 for any integer  $p\geq 1$.
 
 (a-1)~Since $S_1\setminus \psi^{-1}(Z)
 =V(C) \setminus \psi^{-1}(\overline{P}_2) \setminus \psi^{-1}(Z) 
 =V(C)\setminus \psi^{-1}(V(P_2)\setminus \{v_0\})$,
 the set $S_1\setminus \psi^{-1}(Z)$ is mapped by 
 $\psi$ to $V(Q_2)\cup V(R_2)\subseteq V(C)$.
 Observe that  $f(\psi^{-1}(Z))=\psi(Z)
 \subseteq V(C)$
 and $V(Z)\subseteq V(P_1)\setminus \{v_0\}\subseteq V(G)\setminus V(C)$.
 Since  $S_1\setminus \psi^{-1}(Z)$ and $V(Z)$  are disjoint
 and $\psi$ is a bijection from $V(G)$ to $V(G)$,
 we see that  $f(S_1\setminus \psi^{-1}(Z))=\psi(S_1\setminus \psi^{-1}(Z))$ 
 and $f(\psi^{-1}(Z))=\psi(Z)$  are   disjoint.  
 Since $V(\overline{P}_1)$ is disjoint with  $(S_1\setminus \psi^{-1}(Z)) \cup V(Z)$,
 this also means that $f(S_1)\subseteq V(C)\setminus \psi(\overline{P}_1)=S_2$.
Therefore  $f(S_1)=S_2$, since $|S_1|=|S_2|$ and $f$ is a bijection from $S_1$ to $f(S_1)$.

 (a-2)~We distinguish two cases.
 
 Case of $z\neq v_0$, where $v^{\dagger}=v_k \in \psi^{-1}(z)\in \psi^{-1}(Z)$: 
 Then $f(v^{\dagger})=f(v_k)=\psi^2(v_k)=\psi(z)=v_h$.
 Let  $v\in S_1\setminus\{v_k\}$.
 If  $\psi(v)\in V(C)$, then
    $T(v)=H_1\langle v\rangle$ and 
    $T(f(v))=H_2\langle \psi(v)\rangle$ are $(v,\psi(v))$-isomorphic. 
  Note that for any vertex $u\in V(Z)\setminus\{z\}$,  
    $H_1\langle u\rangle=H_2\langle u\rangle$.
    Also $v\neq v_k$ means that   $\psi(v)\neq z$. 
If   $\psi(v)\in V(Z)\setminus\{z\}$,  then
   $H_1\langle v\rangle$ and $H_2\langle \psi(v))$ are $(v,\psi(v))$-isomorphic
  and
     $H_1\langle \psi(v)\rangle$ and $H_2\langle \psi^2(v)\rangle$ 
     are $(\psi(v),\psi^2(v))$-isomorphic,
  implying that
    $T(v)=H_1\langle v\rangle$ and 
    $T(f(v))=H_2\langle \psi^2(v)\rangle$ are $(v,\psi^2(v))$-isomorphic.
  
  Case of $z=v_0$, where  $v^{\dagger}=v_0$:
 In this case,  $V(Z)=\emptyset$ and $f=\psi$.
 Then $f(v^{\dagger})=f(v_0)=\psi(z)=v_h$.
Therefore for any vertex $v\in V(C)\setminus\{v_0\}$,  
   $T(v)=H_1\langle v\rangle$ and $T(f(v))=H_2\langle \psi(v)\rangle$ 
   are $(v,\psi(v))$-isomorphic.

(a-3)~By definition, 
 $v_h=\psi(z)\in \psi(P_1)\subseteq V(Q_2)\cup V(R_2)\subseteq V(C)$.
By (\ref{eq:z(-1)}),   $v_h\in  V(C)\setminus \psi^{-1}(\overline{P}_2)=S_1$.   
By  (a-2),  $f$ maps a vertex $v\in S$ to a vertex $f(v)$ so that 
 $T(v)$ and  $T(f(v))$ are $(v,f(v))$-isomorphic.
 Hence  for any integer $i\geq 1$,
 $f^i(v_h)$ is not a vertex in  $\psi^{-1}(\overline{P}_2)$,
 since otherwise $T(v)$ would be isomorphic to a tree $T(f(v))$ that
 is a proper subgraph of $T(v)$. 
Therefore $f^i(v_h)$, $i\geq 1$ is  a vertex in 
$V(C)\setminus \psi^{-1}(\overline{P}_2)=S_1$.

We are ready to prove  (\ref{eq:Hz}). 
By (a-3), it holds $f^0(v_h),f(v_h),f^2(v_2),\ldots,f^{|S_1|+1}\in S_1$, and
there is an integer $i\in [1,|S_1|]$ such that $f^j(v_h)=f^i(v_h)$
for some $j\in [0,i-1]$.
Let $p$ denote the smallest such integer $i$,
where $j=0$ and $v_h=f^p(v_h)$ since $f$ is a bijection by (a-1). 
By (a-2),   if $v_0\neq z$ (resp., $v_0=z$),  
then $f^{p-1}(v_h)=f^{-1}(v_h)=v^{\dagger}=v_k$ 
(resp., $f^{p-1}(v_h)=f^{-1}(v_h)=v^{\dagger}=v_0$) 
 and  
 $T(v_h)$  and $T(v_k)$ are $(v_h,v_k)$-isomorphic
 (resp.,  
 $T(v_h)$ and $T(v_0)=H_2\langle z\rangle$  are  $(v_h,v_0)$-isomorphic).
 This proves~(\ref{eq:Hz}). 
  
In the following, we assume that $y_1= y_2$, from which  
it follows that
$x_1\neq x_2$, $j_1=n-j_2$ and $|V(Q_1)|=|V(Q_2)|<|V(R_1)|=|V(R_2)|$.
Hence $\psi(R_1^{(1)})\in \{V(P_2^{(2)}), V(R_2^{(2)})\}$.
We first prove that
 \begin{equation}\label{eq:P1P2}
 \mbox{
$P_1^{(1)}$ and $P_2^{(2)}$  have
 an isomorphism $\eta$ such that $\eta(x_1)=x_2$ and $\eta(v_0)=v_0$.
} \end{equation}
Recall that  $c(x_1y_1)=c(x_2y_2)$.
To prove (\ref{eq:P1P2}),  it suffices to show that
 $G\langle x_1\rangle$ and $G\langle x_2\rangle$ are $(x_1, x_2)$-isomorphic,
 which immediately holds when  $\psi(x_1)=x_2$. 
 When $\psi(x_1)=v_0$ and $\psi(v_0)=x_2$, 
 we see that 
 $G\langle x_1\rangle$ and $G\langle x_2\rangle$ are $(x_1, x_2)$-isomorphic,
 because 
 $H_1\langle x_1\rangle=G\langle x_1\rangle$ and 
 $H_2\langle v_0\rangle=H_1\langle v_0\rangle$ are $(x_1, v_0)$-isomorphic
 and  
 $H_1\langle v_0\rangle$ and $H_2\langle x_1\rangle=G\langle x_2\rangle$ 
 are $(v_0, x_1)$-isomorphic.
 This proves~(\ref{eq:P1P2}).

We distinguish two cases. 
  
 (ii) Assume that $y_1= y_2$ and ``$\psi(x_1)=x_2$ or $\psi(P_1)\in \{V(Q_2),V(R_2)\}$.''  
 For two indices $i,j\in[0,n-1]$ with $i\leq  j$ (resp., $i> j$),
  let $G[i,j]$ denote the subpath of $G$ induced by the 
  vertices $v_{\ell}$ with $\ell\in [i, j]$ (resp., $\ell\in [i, n-1]\cup [0, j]$).
  Note that $R_1=G[j_1,0]$ and $R_2=G[0,j_2]$.
 In this case of (ii), we   prove that
 \begin{equation}\label{eq:eta}
 \begin{array}{l}
   \mbox{
$G[j_1,0]^{(1)}$ and $G[0,j_2]^{(2)}$ admit an isomorphism $\eta$
} \\
\mbox{such that $\eta(v_{j_1})=v_{j_2}$ and $\eta(v_0)=v_0$. 
}
 \end{array}
\end{equation}
When such an isomorphism $\eta$ exists, 
 then $G$ has an automorphism $\xi$ such that 
$\xi(x_1)=x_2$ and   
$\xi(v_i)=v_{n-i \mod n}$ for each vertex $v_i\in V(C)$. 
 
 In what follows, we prove (\ref{eq:eta}). 
  When 
  $\psi(x_1)=x_2$ and  $\psi(R_1^{(1)})=V(R_2^{(2)})$,
  we see that (\ref{eq:eta}) holds.

  We first consider the case of $\psi(P_1^{(1)})=V(R_2^{(2)})$, where
    $\psi(R_1)=V(P_2)$ and $\psi(P_1)=V(R_2)$.
    We see that $R_1^{(1)}$ and $P_2^{(2)}$ are $(v_0,v_0)$-isomorphic
  (resp.,  $(x_1,v_0)$-isomorphic) 
  and  $P_1^{(1)}$ and $R_2^{(2)}$ are $(v_0,v_0)$-isomorphic
  (resp.,  $(x_1,v_0)$-isomorphic) 
  if $\psi(x_1)=x_2$ (resp., $\psi(x_1)=v_0$).
 By (\ref{eq:P1P2}),  this means that
  $R_1^{(1)}$ and $R_2^{(2)}$ are $(v_0,v_0)$-isomorphic,
  implying (\ref{eq:eta}).

  We next consider that  $\psi(x_1)=v_0$ and $\psi(P_1^{(1)})=V(Q_2^{(2)})$, where
    $\psi(Q_1)=V(P_2)$
    and  
    $\psi^{-1}(P_2)= V(G[0,j_1])$ and $\psi(P_1)=V(G[j_2, 0])$. 
In this case, 
$G[0, j_1]^{(1)}$ and $G[j_2, 0]^{(2)}$ are $(v_0, v_0)$-isomorphic. 
We observe that $G[j_1,0]^{(1)}$ (resp., $G[0,j_2]^{(1)}$) is a repetition of $G[0,j_1]^{(1)}$
 (resp., $G[j_2,0]^{(1)}$) in the following sense. 
Let $a=\lfloor\frac{j_2 - j_1}{j_1}\rfloor$ and
$b=(j_2 - j_1)-a\cdot j_1$.
 Since
$G[0,j_1]^{(1)}$ and $G[j_1,2 j_1]^{(1)}$ are $(v_0,v_{j_1})$-isomorphic
under the isomorphism $\psi$, we see that for each integer $j\in [0,a-1]$,
$G[0,j_1]^{(1)}$ and $G[j\cdot j_1, (j+1)\cdot j_1]^{(1)}$ 
are $(v_0,v_{(j+1)\cdot j_1})$-isomorphic.
When $b\geq 1$, 
$G[0,b]^{(1)}$ and $G[a\cdot j_1, a\cdot j_1+b]^{(1)}$ 
are $(v_0, v_{a\cdot j_1})$-isomorphic.
Symmetrically 
$G[j_2,0]^{(2)}$ and $G[n-2 j_2, n-j_2]^{(2)}$ are $(v_0,v_{ n-j_2})$-isomorphic,
and we see that $G[j_1,0]^{(2)}$ and $G[n-(j+1)\cdot j_1, n-j\cdot j_1]^{(2)}$ 
are $(v_0,v_{n-j\cdot j_1})$-isomorphic for each integer $j\in [0,a-1]$,
where for  $b\geq 1$, 
$G[n-b,0]^{(2)}$ and $G[n-a\cdot j_1-b, n-a\cdot j_1]^{(2)}$ 
are $(v_0,v_{n-a\cdot j_1})$-isomorphic.
Recall that
$G[0,j_1]^{(1)}$ and $G[j_2,0]^{(2)}$ are $(v_0,v_0)$-isomorphic. 
Hence  $G[j_1,0]^{(1)}$ and $G[0,j_2]^{(2)}$, which are repetitions of $G[0,j_1]^{(1)}$
and $G[j_2,0]^{(2)}$, respectively,  are  $(v_0,v_0)$-isomorphic. 
This proves~(\ref{eq:eta}).

(iii) Finally assume that $y_1= y_2$, $\psi(x_1)=v_0$ and $\psi(P_1)=V(P_2)$.
 Since $P_1^{(1)}$ and $P_2^{(2)}$ are $(x_1,v_0)$-isomorphic,
 we see that $c(x_1y_1)=c(x_2y_2)=c(v_0w_2)$, 
$G\langle w_2\rangle$ has  an automorphism $\phi$ such that
$\phi(w_2)=y_1$.
Let $k=|E(Q_1)|$. 
Now $R_1^{(1)}$ and $R_2^{(2)}$ are $(x_1,v_0)$-isomorphic
and $Q_1^{(1)}$ and $Q_2^{(2)}$ are $(v_0,x_2)$-isomorphic.
This means that  
$G- G\langle w_2\rangle$ has an automorphism $\xi$  
such that $\xi(x_1)=x_2$ and 
$\xi(v_i)=v_{i+k \mod n}$, $v_i\in V(C)$. 

\begin{figure}[htbp]
\begin{center}
\includegraphics[scale=0.40]{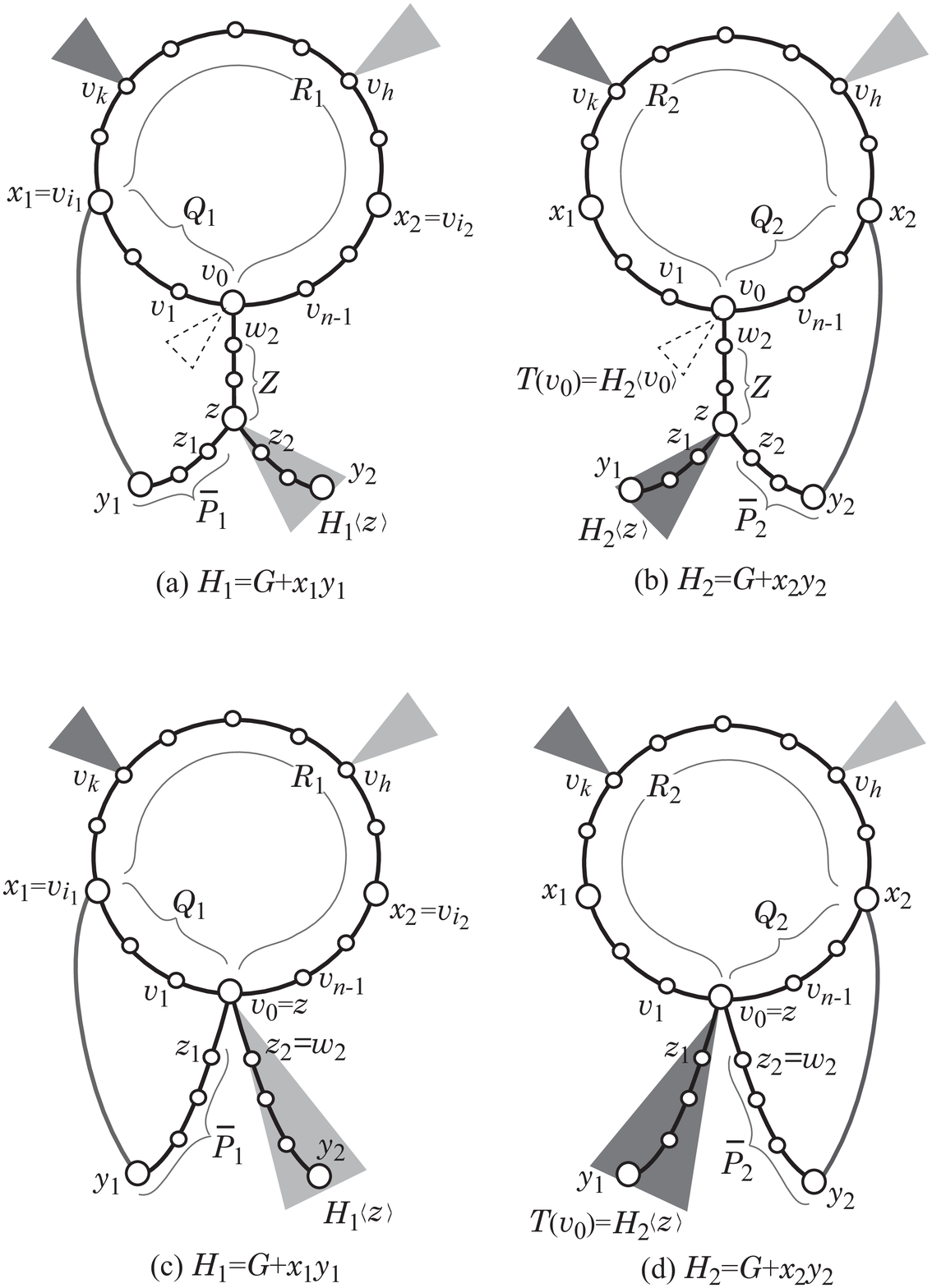}
\end{center}
\caption{Graphs augmented from $G$ by adding an edge $x_iy_i$, $i=1,2$:
(a)~$H_1=G+x_1y_1$ with $z\neq v_0$,
(b)~$H_2=G+x_2y_2$ with $z\neq v_0$, 
(c)~$H_1=G+x_1y_1$ with $z= v_0$,  
(d)~$H_2=G+x_2y_2$ with $z= v_0$.}
\label{fig:augmented_isomorphism}
\end{figure}
\end{proof}

In addition, we have the following lemma.

\begin{lemma}\label{lem:isomorphism-len}
Let $G$ be a connected graph that contains 
exactly one cycle $C=(v_0, v_1, \ldots,$ $v_{n-1},$ $v_0)$, 
let $x_1$ and $x_2$ be distinct vertices in $V(C)\setminus \{v_0\}$, 
where $x_1 = v_k$, and
let $y$ be a vertex in  $V(G \langle v_0 \rangle) \setminus \{v_0\}$.
For $i=1,2$, let $p_i \in [1, \min \{ \res(x_i), \res(y) \} ]$,
and let  $H_i$ denote $G + p_i \cdot x_iy$.
If $H_1$ and $H_2$ are isomorphic, then it holds that $x_2 = v_{n - k}$.
\end{lemma}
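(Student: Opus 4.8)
\noindent\emph{Proof strategy.} The plan is to read the conclusion off the way an isomorphism must act on the junction/block structure of the two graphs, reusing the opening of the proof of Theorem~\ref{thm:isomorphism}. First recall that, since $x_i\in V(C)\setminus\{v_0\}$ and $\rhoG{y}=v_0\neq x_i$, each graph $H_i=G+p_i\cdot x_iy$ is a mono-block $2$-augmented tree whose two junctions are $x_i$ and $v_0$, and whose block is the union of three internally vertex-disjoint $x_i$--$v_0$ paths: the path $P_i$ through $y$ (equivalently, the unique one of the three not contained in the cycle $C$) together with the two arcs $Q_i,R_i$ of $C$. An isomorphism $\psi\colon H_1\to H_2$ maps the block of $H_1$ onto that of $H_2$ and junctions to junctions, so $\psi(\{x_1,v_0\})=\{x_2,v_0\}$, and it carries the triple $\{P_1,Q_1,R_1\}$ onto $\{P_2,Q_2,R_2\}$; I would cite this structural fact from the first lines of the proof of Theorem~\ref{thm:isomorphism} rather than reprove it.

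The core of the argument is then a vertex count. Since $\psi$ permutes the three junction-to-junction paths, the multisets of their vertex counts agree: $\{|V(P_1)|,|V(Q_1)|,|V(R_1)|\}=\{|V(P_2)|,|V(Q_2)|,|V(R_2)|\}$. Because the \emph{same} vertex $y$ is used in both constructions, $P_1$ and $P_2$ have identical internal vertices — namely $y$ together with the internal vertices of the $v_0$--$y$ path inside $G\langle v_0\rangle$ — so $|V(P_1)|=|V(P_2)|$; note this holds irrespective of the multiplicities $p_1,p_2$, since adding parallel copies of the edge $x_iy$ does not change which vertices lie on $P_i$. Cancelling these equal entries from the multiset identity leaves $\{|V(Q_1)|,|V(R_1)|\}=\{|V(Q_2)|,|V(R_2)|\}$. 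Writing $x_i=v_{j_i}$, the two arcs of $C$ between $v_0$ and $v_{j_i}$ contain $j_i+1$ and $n-j_i+1$ vertices, so $\{j_1,n-j_1\}=\{j_2,n-j_2\}$. As $x_1\neq x_2$ forces $j_1\neq j_2$, the only possibility is $j_1=n-j_2$. With $x_1=v_k$, i.e.\ $j_1=k\in[1,n-1]$, this gives $j_2=n-k\in[1,n-1]$ and hence $x_2=v_{n-k}$, as required.

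I do not anticipate a serious obstacle. The one point that needs care is the justification that $\psi$ sends the triple of junction-to-junction paths of $H_1$ onto that of $H_2$, so that the multiset of path sizes is preserved; this is exactly the content of the opening of the proof of Theorem~\ref{thm:isomorphism}, and can be quoted. Everything else — the multiset cancellation, the arc-length bookkeeping on $C$, and the deduction $j_1=n-j_2$ from $j_1\neq j_2$ — is elementary. As an alternative, one may skip the explicit counting and simply invoke the line in the proof of Theorem~\ref{thm:isomorphism} asserting that $y_1=y_2$ forces $j_1=n-j_2$, which yields $x_2=v_{n-k}$ immediately.
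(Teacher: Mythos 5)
Your proposal is correct and follows essentially the same route as the paper's proof: both arguments use the fact that an isomorphism preserves the block and the junction pair, observe that the $x_i$--$v_0$ path through $y$ has the same vertex count in $H_1$ and $H_2$ (independently of $p_1,p_2$), and then conclude $x_2=v_{n-k}$ from the equality of the multisets of the two remaining arc lengths of $C$. The only differences are cosmetic (the paper extracts $|P_1|=|Q_1|$ from $|V(B_1)|=|V(B_2)|$ and runs an explicit case analysis on whether $|P_2|=|P_3|$, while you phrase the same arithmetic as multiset cancellation on $\{j_1,n-j_1\}=\{j_2,n-j_2\}$), so no further changes are needed.
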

\begin{proof}
Let $B_i$, $i = 1, 2$, denote the block in $H_i$, and
let $P_1$, $P_2$, and  $P_3$ be the paths in $H_1$ from $x_1$ to $v_0$, 
where $y \in V(P_1)$, $v_1 \in V(P_2)$ and $v_{n-1} \in V(P_3)$.
Let $Q_1$, $Q_2$, $Q_3$ be the paths in $H_2$ from $x_2$ to $v_0$, 
where $y \in V(Q_1)$, $v_1 \in V(Q_2)$ and $v_{n-1} \in V(Q_3)$.
We have $ |V(C)| + V(P_1) - 2 = |V(B_1)| = |V(B_2)| = |V(C)| + V(Q_1) - 2$. 
Hence, we get $|P_1| = |Q_1|$
and  $\{|P_2|, |P_3|\} = \{|Q_2|, |Q_3|\}$.

Suppose that $|P_2| = |P_3|$ and $|Q_2| = |Q_3|$ hold.
Then we have $|P_2| = |P_3| = |Q_2| = |Q_3| = n/2$.
Since the length of a path is an integer, in order to satisfy the condition, 
$n$ must be even, and then $x_1 = x_2 = v_{n/2}$ holds.
However, this contradicts that $x_1$ and $x_2$ are distinct vertices.
Therefore, we have $|P_2| \neq |P_3|$ and $|Q_2| \neq |Q_3|$.
Next, we see that $|P_2| = |Q_2|$ would again imply that
$x_1 = x_2$, and therefore it holds that $|P_2| = |Q_3|$ and $|P_3| = |Q_2|$.
Let $k \in [1, n-1]$ be an integer such that $x_1 = v_k$.
In order to satisfy the condition $|P_2| = |Q_3|$, $x_2$ must be $v_{n-k}$, as required.
\end{proof}

Let $G$ be a monocyclic graph
and $C = (v_0, v_1, \ldots, v_{n-1}, v_0)$ denote the cycle in $G$.
If $G$ admits an automorphism $\xi$
such that $\xi(v_i)=v_{n-i \mod n}$ for each vertex $v_i\in V(C)$,
as in Theorem~\ref{thm:isomorphism}(ii),
then we say that $G$ admits an \emph{axial symmetry}~$\xi$.
Further, for a vertex $y \in V(\langle v_0 \rangle) \setminus \{ v_0 \}$,
let $q$ be the child of $v_0$ such that $G \langle q \rangle$
contains~$y$.
If there exists an automorphism
$\phi$ on $G\langle q \rangle$ such that $\phi(q) = y$, and
an automorphism
$\xi$ on $G- G\langle q \rangle$ such that
$\xi(v_i)=v_{i+k \mod n}$, $v_i\in V(C)$ and $k \geq 1$,
as in Theorem~\ref{thm:isomorphism}(iii),
then we say that 
the pair $(G, q)$
admits a \emph{rotational symmetry}~$(\xi, \phi)$ for $k \geq 1$.

Let $G$ be a monocyclic graph
and $C = (v_0, v_1, \ldots, v_{n-1}, v_0)$ denote the cycle in $G$
such that the pendent tree $G \langle v_0 \rangle$ 
has the maximum number of vertices over all pendent trees in~$G$
and is represented as a left-heavy tree.
Let $\copyv: V(G \langle v_0 \rangle) \to \{0, 1\}$
be a function such that for $v \in V(G \langle v_0 \rangle)$ it holds that
$\copyv(v) = 1$ (resp., $\copyv(v) = 0$) 
if $v$ has a left sibling $u$ 
and for the parent $q = \parent(v) = \parent(u)$
it holds that
$G\langle q, v \rangle \approx G\langle q, u \rangle$
(resp., $v$ does not have a sibling on its left,
or for the sibling $u$ on its left it holds 
$G\langle q, v \rangle \not\approx G\langle q, u \rangle$)~\cite{Suzuki14},
and let $Y = \{ y \in V(G\langle v_0 \rangle) \setminus \{v_0\} \mid
		  \copyv(v) = 0 \text{ for } v \in V(P(v_0, y)) \}$.
We define the  \emph{potential edge set} 
$S(G)$ of non-adjacent vertex pairs in $G$ as follows:
\begin{description}
 \item[Case (i):]
 The pendent tree $G \langle v_0 \rangle$ is not exceeding, 
or $G$ has more than one pendent tree
with at least $|V(G \langle v_0 \rangle)|$ vertices.
Then, we define 
$
 S(G) \triangleq \emptyset.
$
\item[Case (ii):]
The pendent tree $G \langle v_0 \rangle$ is exceeding
and there is no other pendent tree of $G$ 
with at least $|V(G \langle v_0 \rangle)|$ vertices.
\begin{description}
 \item[ Case (ii)(a):] 
 $G$ admits an axial symmetry. 
Then we define 
\[
 S(G) \triangleq \{\{x, y\} \mid x \in \{v_i \mid i \in [\lfloor n/2\rfloor] \}, 
y \in Y\}.
\]
\item[Case (ii)(b):]
$G$ does not admit an axial symmetry.
Then we define 
\begin{align*}
 S(G) \triangleq \{\{x, y\} \mid  y \in Y, &
 \text{ for the child $q$ of $v_0$ such that $y \in V(G \langle q \rangle)$ } \\
 & \hspace{-10mm} ``(G, q) \text{ admits a rotational symmetry $(\xi, \phi)$ with $k \geq 1$'' or} \\
 & \hspace{-10mm} ``(G, q)
      \text{ does not admit a rotational symmetry and}\\
    &   \hspace{4mm}    \text{$\xi : V(C) \to V(C)$ is }
\text{an identity mapping,''}  \\
 &\hspace{-10mm} x \in \{v_i \mid i \in [1, \lfloor n/2\rfloor]\} \\
&  \hspace{-1mm} \cup \{v_{n-i \mod n} \mid i \in [1, \lfloor n/2\rfloor], 
\xi(v_i) \neq v_{n-i \mod n} \}.
\end{align*}
\end{description}
\end{description}
Then, we have the following lemma.

\begin{lemma}
\label{lem:potential_edge_set_proper}
For a monocyclic graph $G$, the potential edge set $S(G)$ is proper.
\end{lemma}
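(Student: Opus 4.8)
To show that $S(G)$ is \emph{proper} I must establish two things: \emph{completeness}~--~every $H\in\Ch_\pi(G)$ is isomorphic to $G+p\cdot xy$ for some $\{x,y\}\in S(G)$ and some multiplicity $p$~--~and \emph{non-redundancy}~--~for distinct pairs $\{x_1,y_1\},\{x_2,y_2\}\in S(G)$ and any $p_1,p_2$ one has $G+p_1\cdot x_1y_1\not\approx G+p_2\cdot x_2y_2$ (which, since isomorphic graphs have equally many edges, reduces to $p_1=p_2=:p$). I first dispose of Case~(i) of the definition of $S(G)$. There either $G\langle v_0\rangle$ is not exceeding, or $G$ has at least two pendent trees of size $|V(G\langle v_0\rangle)|$; but by Lemma~\ref{lem:improve:parent_child}(i)--(ii) any child $G+p\cdot xy$ forces $G\langle\rhoG{y}\rangle$ to be exceeding and strictly larger than every other pendent tree, hence $\rhoG{y}=v_0$, contradicting both alternatives. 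So $\Ch_\pi(G)=\emptyset$ and, with $S(G)=\emptyset$, properness is vacuous.

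From now on assume Case~(ii). I will use the elementary \emph{converse} of Theorem~\ref{thm:isomorphism}: if one of its three alternatives holds for a pair of augmented graphs, they are isomorphic (extend the exhibited automorphism of $G\langle v_0\rangle$, of $G$, or of $G-G\langle q\rangle$ by the identity outside its domain). For completeness, take $H=G+p\cdot xy\in\Ch_\pi(G)$; Lemma~\ref{lem:improve:parent_child} forces $x\in V(C)\setminus\{v_0\}$, $\rhoG{y}=v_0$, $y\ne v_0$. If $y\notin Y$, some vertex on $P(v_0,y)$ has $\copyv=1$; iterating leftward swaps of equal sibling subtrees in the left-heavy tree $G\langle v_0\rangle$ yields an automorphism $\eta$ with $\eta(v_0)=v_0$ and $\eta(y)\in Y$ (the standard fact that $Y$ is a transversal for the orbits of $\mathrm{Aut}(G\langle v_0\rangle)$ fixing $v_0$, cf.~\cite{Suzuki14}), so we may assume $y\in Y$. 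Next normalize $x=v_j$: in Case~(ii)(a) the axial symmetry of $G$ sends $v_j$ to $v_{n-j\bmod n}$, allowing us to take $j\in[1,\lfloor n/2\rfloor]$ (re-normalizing $y$ into $Y$ afterwards if needed), giving $\{x,y\}\in S(G)$; in Case~(ii)(b), if $j\le\lfloor n/2\rfloor$ we are done, and if $j>\lfloor n/2\rfloor$ write $j=n-i$ with $i\in[1,\lfloor n/2\rfloor]$, so that either $\xi(v_i)\ne v_{n-i\bmod n}$ and $\{v_j,y\}\in S(G)$, or $\xi(v_i)=v_{n-i\bmod n}$, in which case the rotational symmetry $(\xi,\phi)$ of $(G,q)$ together with the converse of Theorem~\ref{thm:isomorphism}(iii) gives $H\approx G+p\cdot v_iy$ with $\{v_i,y\}\in S(G)$.

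For non-redundancy, suppose $\{x_1,y_1\},\{x_2,y_2\}\in S(G)$ are distinct and $G+p\cdot x_1y_1\approx G+p\cdot x_2y_2$. These pairs satisfy the hypotheses of Theorem~\ref{thm:isomorphism}, so one of (i)--(iii) holds. Alternative~(i) exhibits an automorphism of $G\langle v_0\rangle$ fixing $v_0$ and sending $y_1$ to $y_2$; since $y_1,y_2\in Y$ lie in distinct orbits, $y_1=y_2$, contradicting (i). For (ii) and (iii) we have $y_1=y_2$, hence $x_1\ne x_2$, and Lemma~\ref{lem:isomorphism-len} gives $x_2=v_{n-k\bmod n}$ when $x_1=v_k$. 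If $G$ admits an axial symmetry (so $S(G)$ is built by Case~(ii)(a)), then $v_k,v_{n-k\bmod n}\in\{v_i\mid i\in[1,\lfloor n/2\rfloor]\}$ forces $n$ even and $v_k=v_{n-k}$, a contradiction. Otherwise $S(G)$ uses Case~(ii)(b), alternative~(iii) must hold, and $(G,q)$ admits a rotational symmetry carrying $v_k$ to $v_{n-k}$; the remaining step is to show that the exclusion clause ``$\xi(v_i)\ne v_{n-i\bmod n}$'' in the definition of $S(G)$, read against the cyclic structure of the rotation group of $G-G\langle q\rangle$ and the constraint $p=\mul(v_0,q)$ imposed by Theorem~\ref{thm:isomorphism}(iii), prevents $v_k$ and $v_{n-k}$ from both lying in the prescribed index set unless $v_k=v_{n-k}$, again a contradiction.

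I expect Case~(ii)(b) to be the main obstacle, in both directions: one has to identify, via conditions~(iv)--(v) of Lemma~\ref{lem:improve:parent_child}, exactly which graphs $G+p\cdot v_jy$ are children of $G$, and then check that pairing the index set $\{v_i\mid i\in[1,\lfloor n/2\rfloor]\}\cup\{v_{n-i\bmod n}\mid i\in[1,\lfloor n/2\rfloor],\ \xi(v_i)\ne v_{n-i\bmod n}\}$ with the transversal $Y$ meets each such child exactly once. The bookkeeping over the rotation group of $G-G\langle q\rangle$, over the boundary cases ($n$ even, $j=n/2$), and over the multiplicity condition $p=\mul(v_0,q)$ is where the real care is needed; the remaining cases follow quickly once Lemma~\ref{lem:improve:parent_child}, Theorem~\ref{thm:isomorphism}, Lemma~\ref{lem:isomorphism-len}, and the properties of $\copyv$ and $Y$ from~\cite{Suzuki14} are in hand.
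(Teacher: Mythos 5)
Your overall strategy mirrors the paper's: Case~(i) is disposed of via the pendent-tree size conditions (Lemmas~\ref{lem:children:pendent_tree_size} and~\ref{lem:two_pendent_tree}, which you access through Lemma~\ref{lem:improve:parent_child}), and Case~(ii) is handled by combining Theorem~\ref{thm:isomorphism}, Lemma~\ref{lem:isomorphism-len}, and the $\copyv$/$Y$ machinery of~\cite{Suzuki14}. You even go beyond the paper by making the completeness direction explicit (every child of $G$ is reached from some pair in $S(G)$), correctly observing that this requires the converse of Theorem~\ref{thm:isomorphism}, which the paper states only in the forward direction. Your treatment of Case~(i), of the $Y$-transversal argument against Theorem~\ref{thm:isomorphism}(i), and of Case~(ii)(a) via the index-range argument $k,\,n-k\in[1,\lfloor n/2\rfloor]\Rightarrow k=n/2$ are all sound and match the paper.

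There is, however, a genuine gap, and you flag it yourself: the non-redundancy argument in Case~(ii)(b) is never completed. After reducing to $y_1=y_2$, $x_1=v_k$, $x_2=v_{n-k}$ via Lemma~\ref{lem:isomorphism-len}, you write that ``the remaining step is to show'' that the exclusion clause $\xi(v_i)\neq v_{n-i\bmod n}$ keeps $v_k$ and $v_{n-k}$ from both entering $S(G)$ --- but that step is precisely the content of the lemma in this case, and your final paragraph is a plan for carrying it out rather than the argument itself. The paper resolves it with a dichotomy driven by the automorphism $\xi$ of $G-G\langle q\rangle$ (taken to be the identity on $V(C)$ when no rotational symmetry exists): either $x_1=\xi(x_2)$, in which case $\res(x_1)=\res(x_2)$, exactly one of $\{x_1,y\}$, $\{x_2,y\}$ is placed in $S(G)$, and the two augmented graphs are isomorphic, so no child is lost; or $x_1\neq\xi(x_2)$, in which case both pairs lie in $S(G)$ but Theorem~\ref{thm:isomorphism}(iii) cannot produce an isomorphism between $G+p_1\cdot x_1y$ and $G+p_2\cdot x_2y$. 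Without this dichotomy your proposal establishes properness only in Cases~(i) and~(ii)(a); Case~(ii)(b), which carries most of the difficulty, remains open.
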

\begin{proof}
Let $C = (v_0, v_1, \ldots, v_{n-1}, v_0)$ denote the 
unique cycle in $G$, such that
the pendent tree $G \langle v_0 \rangle$ has the maximum number
of vertices among all pendent trees in~$G$,
$G \langle v_0 \rangle$ is represented as a left-heavy tree,
and let $\copyv: V(G \langle v_0 \rangle) \to \{0, 1\}$
be a function such that for $v \in V(G \langle v_0 \rangle)$ it holds that
$\copyv(v) = 1$ (resp., $\copyv(v) = 0$) 
if $v$ has a left sibling $u$ 
and $G\langle \parent(v), v \rangle \approx G\langle \parent(u),  u \rangle$
(resp., $v$ does not have a sibling on its left,
or for the sibling $u$ on its left it holds 
$G\langle  \parent(v), v \rangle \not\approx G\langle  \parent(u), u \rangle$)~\cite{Suzuki14}.

\begin{description}
 \item[Case (i).]
 If $G \langle v_0 \rangle$ is not exceeding, 
or $G$ has more than one pendent tree with maximum number of vertices, 
then $S(G) = \emptyset$ is  proper for $G$ since 
by Lemmas~\ref{lem:children:pendent_tree_size}
and~\ref{lem:two_pendent_tree} $G$ has no children.
\item[Case (ii).]
By Theorem~\ref{thm:isomorphism}(i),
for two non-adjacent vertex pairs $\{x_i, y_i\}$, $i = 1, 2$,
in $G$ with $x_1, x_2 \in V(C)$ and
$y_1, y_2 \in V(G \langle v_0 \rangle) \setminus \{v_0\}$
intra-duplication occurs
if $G \langle v_0 \rangle$ admits an automorphism $\xi$
such that $\xi(y_1) = \xi(y_2)$.
By choosing vertices $y \in  V(G \langle v_0 \rangle) \setminus \{v_0\}$
such that $\copyv(v) = 0$ holds for each $v \in V(P(v_0, y))$,
we know that no two vertices $y_1$ and $y_2$ will be chosen
such that $G \langle v_0 \rangle$ admits an
automorphism $\xi$ with $\xi(y_1)=y_2$~\cite{Suzuki14}.
Next we consider the case when 
for vertices $x_1, x_2 \in V(C)$ and $y \in V(G \langle v_0 \rangle) \setminus \{v_0\}$,
and integers $p_i \in [1, \min \{ \res(x_i), \res(y) \} ]$, $i = 1, 2$,
it holds that $G + p_1 \cdot x_1 y$ is isomorphic to $G + p_2 \cdot x_2 y$.
\begin{description}
 \item[Case (ii)(a).]
By Theorem~\ref{thm:isomorphism}(ii),
$G$ admits an automorphism $\xi$
such that for $i \in [ 1, \lfloor n/2 \rfloor]$
it holds $\xi(v_i) = v_{n-i \mod n}$,
and therefore it suffices to consider
vertices 
$v_{i}, i \in [1, \lfloor n/2 \rfloor]$,
for the choice of~$x$,
thereby for each 
$x' \in \{v_i \mid i \in [\lfloor n/2 \rfloor + 1, n-1]\}$
there exists an $x$ such that $\xi(x) = x'$ and therefore
$\res(x) = \res(x')$, and for $p \in [1, \min \{ \res(x), \res(y) \} ]$
$G + p \cdot xy$ and $G + p\cdot x'y$ are isomorphic.
On the other hand, by Lemma~\ref{lem:isomorphism-len}
for $x_1, x_2 \in  \{v_i \mid i \in [1, \lfloor n/2 \rfloor] \}$
and $p_i \in [1, \min \{ \res(x_i), \res(y) \} ]$, $i = 1, 2$,
$G + p_1 \cdot x_1 y$ and $G + p_2 \cdot x_2 y$ are not isomorphic, satisfying
the conditions for a proper set.
\item[Case (ii)(b).]
In case $G$ does not admit an axial symmetry,
for each choice of $y \in V(G \langle v_0 \rangle) \setminus \{v_0\}$
such that $\copyv(v) = 0$ holds for all $v \in V(P(v_0, y))$,
for the child $q$ of $v_0$ such that $G \langle q \rangle$
contains $y$, we check whether 
the pair $(G, q)$
admits a rotational symmetry $(\xi, \phi)$
for $k \geq 1$, as in Theorem~\ref{thm:isomorphism}(iii).
In case there does not exist an automorphism $\xi$ on 
$G - G \langle q \rangle$ such that
$\xi(v_i)=v_{i+k \mod n}$,
$v_i\in V(C)$ and $k \geq 1$,
and an automorphism $\phi$ on 
$G \langle q \rangle$ with $\phi(q) = y$,
we take trivial automorphism $\xi(x) = x, x \in V(C)$.
Again, by the automorphism $\xi$,
for two vertices $x_1, x_2 \in V(C)$
either $x_1 = \xi(x_2)$ and then
$\res(x_1) = \res(x_2)$,
$\{x_1, y\} \in S(G)$ and
$\{x_2, y\} \notin S(G)$
but for any $p \in [1, \min \{ \res(x_1), \res(y) \}]$
it holds that $G + p \cdot x_1 y \approx G + p \cdot x_2 y$,
or $x_1 \neq \xi(x_2)$, in which case 
$\{x_1, y\} \in S$ and
$\{x_2, y\} \in S$, 
but for any 
$p_i \in [1, \min \{ \res(x_i), \res(y) \} ]$, $i = 1, 2$,
it holds that 
$G + p_1 \cdot x_1 y \not\approx G + p_2 \cdot x_2 y$,
as required.
\end{description}
\end{description}
\end{proof}

We give a description of an algorithm to compute the potential edge set
of a given monocyclic graph~$G$ as  Procedure~{5}  {\sc GeneratePotentialEdgeSet}.

\bigskip
\noindent {\bf Procedure~{5}}
{\sc GeneratePotentialEdgeSet}$(G)$
\begin{algorithmic}[1]
\Require A monocyclic graph $G$ with a cycle $C = (v_0, v_1, \ldots, v_{n-1}, v_0)$,
 such that $G \langle v_0 \rangle$ has the maximum number of vertices 
 over all pendent trees 
 in $G$,
 and a function $\copyv : V(G \langle v_0 \rangle) \to \{0, 1\}$
 such that  
$\copyv(v) = 1$ 
if $v$ has a left sibling $u$ 
and $G\langle \parent(v), v \rangle \approx G\langle \parent(u), u \rangle$
and $\copyv(v) = 0$ otherwise.
\Ensure The potential edge set $S(G)$ of $G$.
\State{$S := \emptyset$};
\If {$|V(G \langle v_0 \rangle)| \geq |V(G)|/3$ 
    and for $i \in [1, n-1]$, $|V(G \langle v_0 \rangle)|  > |V(G \langle v_i \rangle)|$  }
  \For{{\bf each} child $c$ of $v_0$ such that $\copyv(c) = 0$}
    \For{{\bf each} $y \in V(G\langle c \rangle)$ such that $\copyv(v) = 0$ for $v \in V(P(v_0, y))$}
	  \State{$S := S \cup \left \{\{v_i, y\} \mid i \in [1, \lfloor n/2 \rfloor] \right \}$}
      \If{$G$ does not admit an axial symmetry}
      \State {Let $q \in V(G\langle v_0 \rangle)$ be the child of 
			$v_0$ such that $y \in V(G\langle q \rangle))$};      
	  \If{$(G, q)$ admits a rotational symmetry $(\xi, \phi)$ for $k \geq 1$}
		 \State{$S := S \cup \{ \{v_{n-i \mod n}, y\} \mid i \in [1, \lfloor n/2 \rfloor],
											      \xi(v_i) \neq v_{n-i \mod n}\}$}
	    \Else ~/* $(G, q)$ does not admit a rotational symmetry for $k \geq 1$ */
		\State{$S := S \cup \left \{\{v_i, y\} \mid i \in [\lfloor n/2 \rfloor + 1, n-1] \right \}$}
	    \EndIf
	\EndIf
   \EndFor
  \EndFor
\EndIf;
\State{{\bf output} $S$ as $S(G)$.}
\end{algorithmic}

 \section{Experimental results}
 \label{sec:experiments}
 
 To test the effectiveness of our algorithm for enumerating mono-block 2-augmented trees,
 we have implemented it 
 and performed computational comparison with MOLGEN~\cite{MOLGEN5},
 a generator for chemical graphs.

 In particular, we did experiments for two different types of instances, 
 named EULF-$L$-A and  EULF-$L$-P, by considering
 a set $\pathset$ of colored sequences with length at most a given integer~$N$,
 given lower and upper bounds, 
 $\mathbf{g}_{a}: \pathset \to \mathbb{Z}_+$ and
 $\mathbf{g}_{b}: \pathset \to \mathbb{Z}_+$, respectively,
 on the path frequencies of the paths in $\pathset$, and integers $L$ and~$d$.
 For a given set $\pathset$ of colored sequences and a graph $G$,
 let $\fvset_{\pathset}(G): \pathset \to \mathbb{Z}_+$ 
 denote the number $\freq(t, G)$ of rooted paths $P \subseteq G$
 such that $\gamma(P) = t \in \pathset$.
 Assuming that 
 $\mathbf{g}_{a} \leq \mathbf{g}_b$,
 and in particular, that
 $\mathbf{g}_{a}[t] = \mathbf{g}_b[t]$ 
is satisfied for each colored sequence $t \in \pathset \cap \Sigma^{0, d}$,
each of the instance types EULF-$L$-A and  EULF-$L$-P asks 
to enumerate chemical graphs $G$ such that 
$\mathbf{g}_{a} \leq \freq_{\pathset}(G) \leq \mathbf{g}_{b}$,
and for any $P \subseteq G$ such that $\gamma(P) \notin \pathset$,
it holds that $|P| > L$ and $|P| \leq L$, for
 instance types EULF-$L$-A and  EULF-$L$-P, respectively.

We have chosen six compounds 
 from the PubChem database 
 which when represented
 as hydrogen-suppressed chemical graphs have
 mono-block 2-augmented tree structure,  and constructed
 feature vectors based on the path frequencies
 of the paths  in the chemical graphs.
 All compounds have~13 non-hydrogen atoms,
 maximum path length~11,
 and maximum bond multiplicity $d \in \{2, 3\}$.
 All compounds include the three chemical
 elements {\tt C} (carbon), {\tt O} (oxygen), and {\tt N} (nitrogen).
 The information on the chosen compounds, 
 identified by their Compound ID (CID) number
 in the PubChem database is given in Table~\ref{table:6mols}.
\begin{table}[!hb]
\centering
 \caption{Information on the six compounds chosen from 
 the PubChem database for our experiments}
 \label{table:6mols}
   \begin{tabular}{@{} l l l @{} }
  \toprule
    \parbox{20mm}{Molecular formula }
	    & $d$ & CID \vspace{2mm} \\
   \toprule
   \multirow{2}{20mm}{\tt C$_9$N$_1$O$_3$} 
						    & 2 &  301729 	 	 \\ 
						    & 3 &  57320502 		 \\  \midrule
  \multirow{2}{20mm} {\tt C$_9$N$_2$O$_2$} 
 						     & 2 &  6163405 		 \\
						    & 3 &  131335510 	\\ \midrule
  \multirow{2}{20mm} {\tt C$_9$N$_3$O$_1$} 
  						    & 2 &  9942278 		 \\
						    & 3 &  10103630  		\\  \bottomrule
  \end{tabular}
\end{table}

We construct 
instances of types EULF-$L$-A and EULF-$L$-P 
for different values of parameter $L$ in the following way.
We take a set $\Sigma$ of colors to be $\Sigma = \{ \mathtt{C}, \mathtt{O}, \mathtt{N} \}$,
such that $\val(\mathtt{C}) = 4$, $\val(\mathtt{O}) = 2$, and $\val(\mathtt{N}) = 3$.
For each hydrogen suppressed chemical graph $G$
that corresponds to a chemical compound in Table~\ref{table:6mols},
we take $d \in \{2, 3\}$ to be the maximum bond multiplicity in the chemical graph,
and for some choice of values for $N \geq 0$ we construct
a set of colored sequences $\pathset \subseteq \Sigma^{\leq N, d}$
that consists of all colored sequences $t$ with length $|t| \leq N$
such that $G$ contains a rooted path $P$ with $\gamma(P) = t$. 
Finally, for an integer $s \in [0, 2]$ we set lower and upper bounds,
${\bf g}_a$ and ${\bf g}_b$
on feature vectors
as follows: 
for $t \in \pathset$, if $|t| \in \Sigma^{0, d}$ then ${\bf g}_a[t] = {\bf g}_b[t] = \freq(t, G)$,
otherwise ${\bf g}_{a}[t] = \max \{0, \freq(t, G) - s \}$
and  ${\bf g}_{b}[t] = \freq(t, G) + s$.
The parameter $s$ effectively serves to ``relax''
the path frequency specification.

On the other hand, we used MOLGEN~\cite{MOLGEN5}
without aromaticity detection
by specifying the hydrogen suppressed formula, 
number of cycles to be two in enumerated structures
- thereby enumerating chemical graphs with 2-augmented tree structures
with a maximum allowed bond multiplicity.
Note that there is no option in MOLGEN to specify
whether the enumerated structures have a mono-block 
structure or not.

 We implemented our algorithm in the 
 C++ programming language, and compiled and executed on
the Linux~14.04.6 operating system by the gcc compiler  version~4.8.4 and optimization level~O3.
All experiments were done on a PC with Intel Xeon CPU E5-1660 v3 
running at 3.00~GHz, with 32~GB memory.

 \subsection{Experimental Results for EULF-$L$-A}
 \label{sec:experiments_EULF1}

To test the behavior of our algorithm for 
instance types EULF-$L$-A, especially the effect
the choice of problem parameters have on the running time
and the number of enumerated chemical graphs,
we choose values for parameter $N \in [2, 6]$,
and 
we took values for the parameter $L \in \{2, \lceil {N/2} \rceil, N\}$.
 
The results from our experiments 
for  EULF-$L$-A are summarized in 
Figs.~\ref{fig:result_graphs_1} to~\ref{fig:result_graphs_6}.
We observe that our algorithm has a clear advantage
when we are given a path frequency specification
for instances of type EULF-$L$-A
over using MOLGEN to generate molecules with a specified formula.
 We also observe some trends over the values of 
 the parameters $N$, $L$, and~$s$.
 Namely, the number of generated molecules, as well as the time it takes our algorithm,
 reduces as the length $N$ of the longest path given in the 
 set of paths, as well as the parameter~$L$ increase, 
 but grows with an increasing value~$s$
 that we choose to relax the path frequency specification.

 \begin{figure}[!ht]
  \begin{minipage}{0.45\textwidth}
   \centering
   \includegraphics[width=1.1\textwidth]{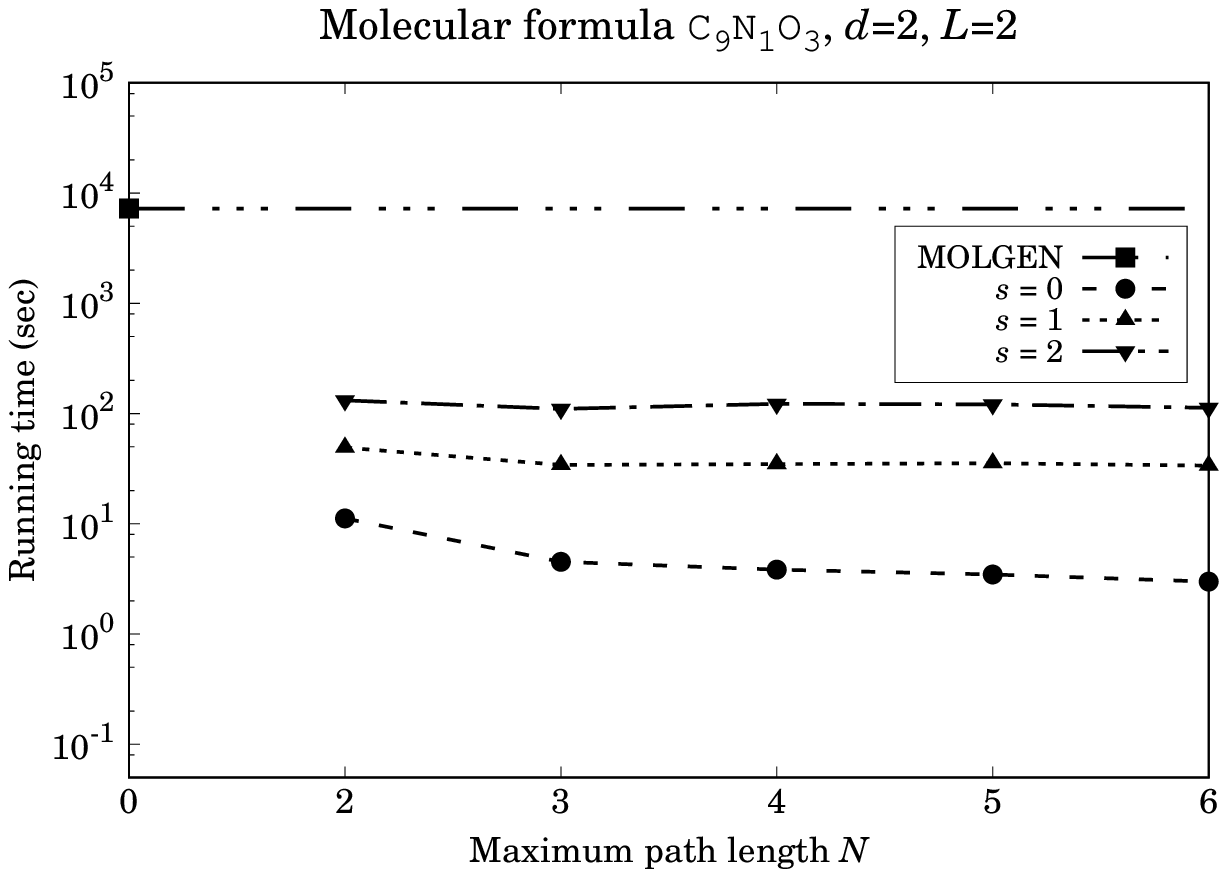}\\
   {\footnotesize (a)}\\
  \end{minipage}
\hfill
  \begin{minipage}{0.45\textwidth}
   \centering
   \includegraphics[width=1.1\textwidth]{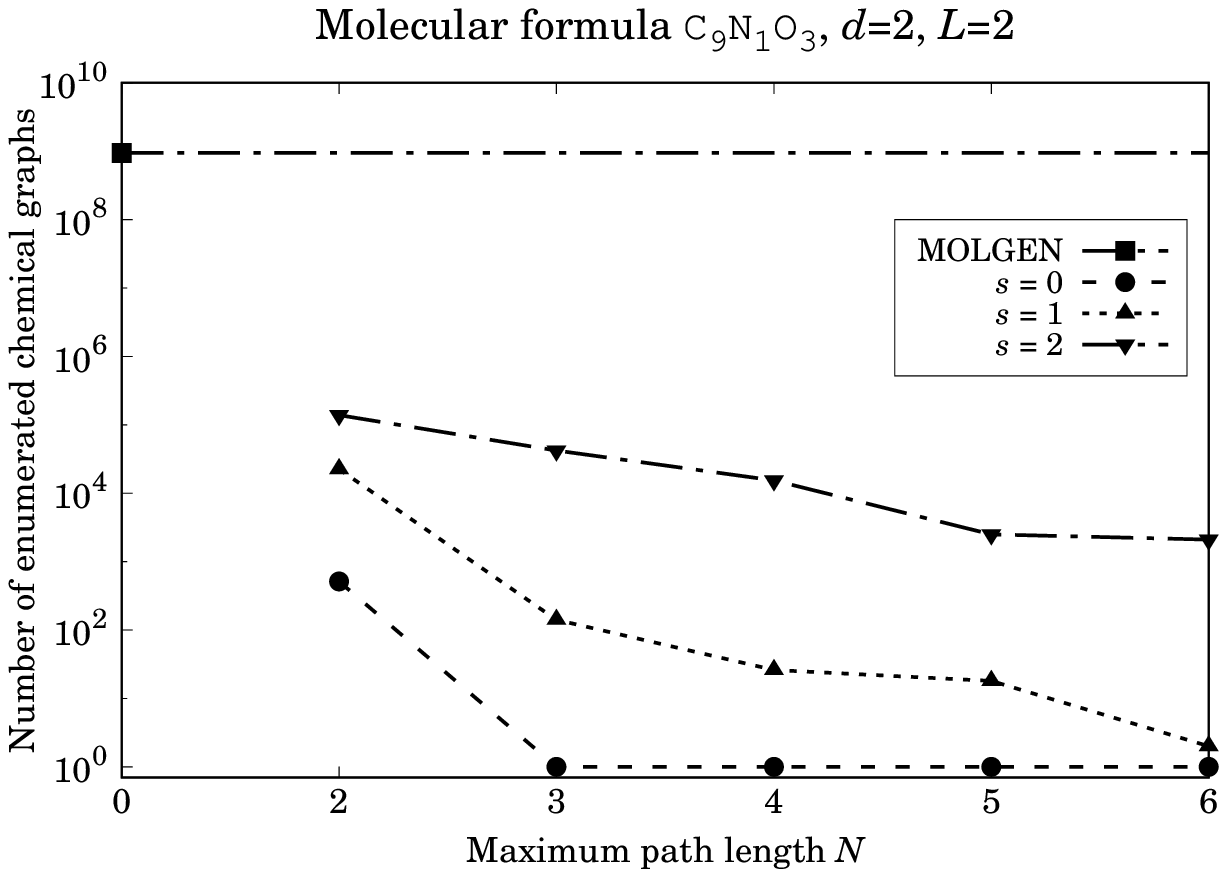}\\
   {\footnotesize (d)}\\
  \end{minipage} 
  \medskip

  \begin{minipage}{0.45\textwidth}
   \centering
      \includegraphics[width=1.1\textwidth]{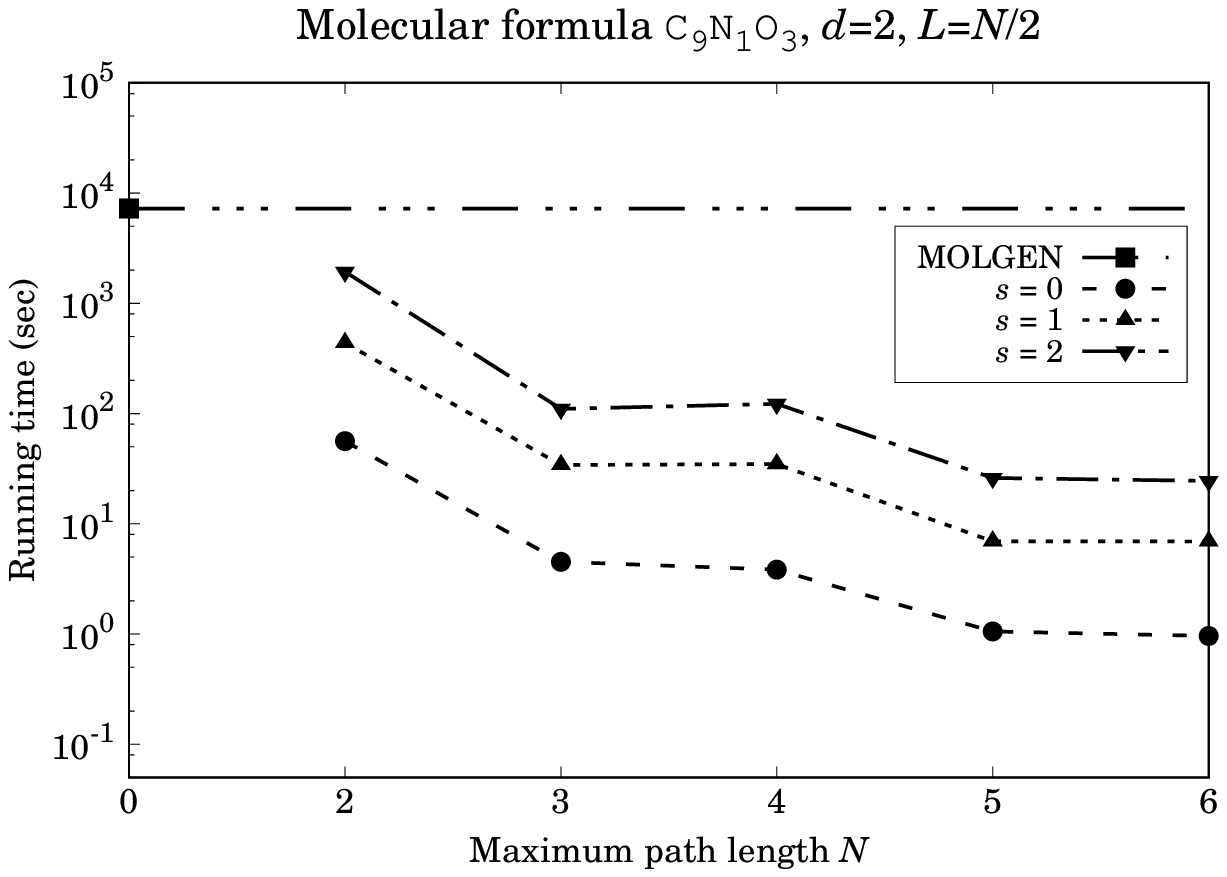}\\
      {\footnotesize (b)}\\
  \end{minipage} 
\hfill
  \begin{minipage}{0.45\textwidth}
   \centering
    \includegraphics[width=1.1\textwidth]{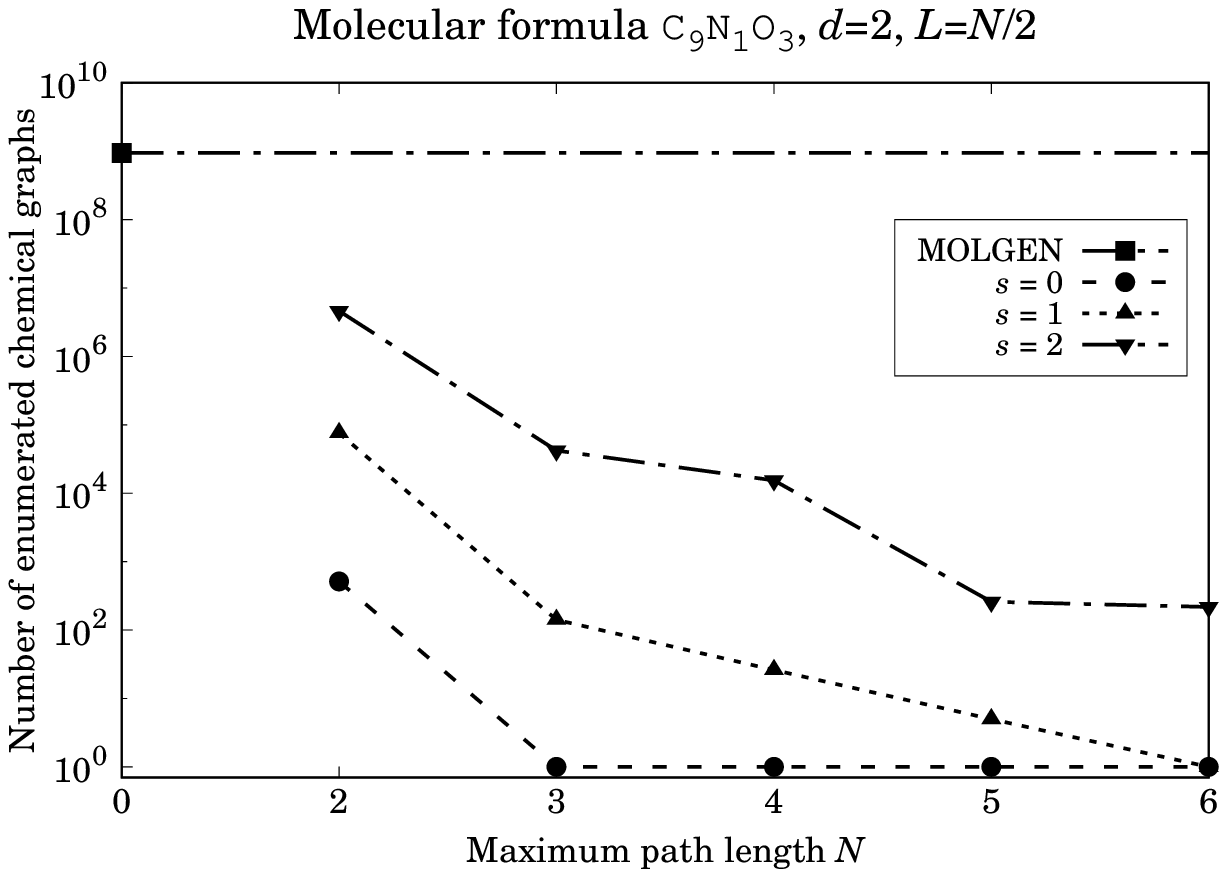}\\
    {\footnotesize (e)}\\
  \end{minipage} 
  \medskip

  \begin{minipage}{0.45\textwidth}
   \centering
      \includegraphics[width=1.1\textwidth]{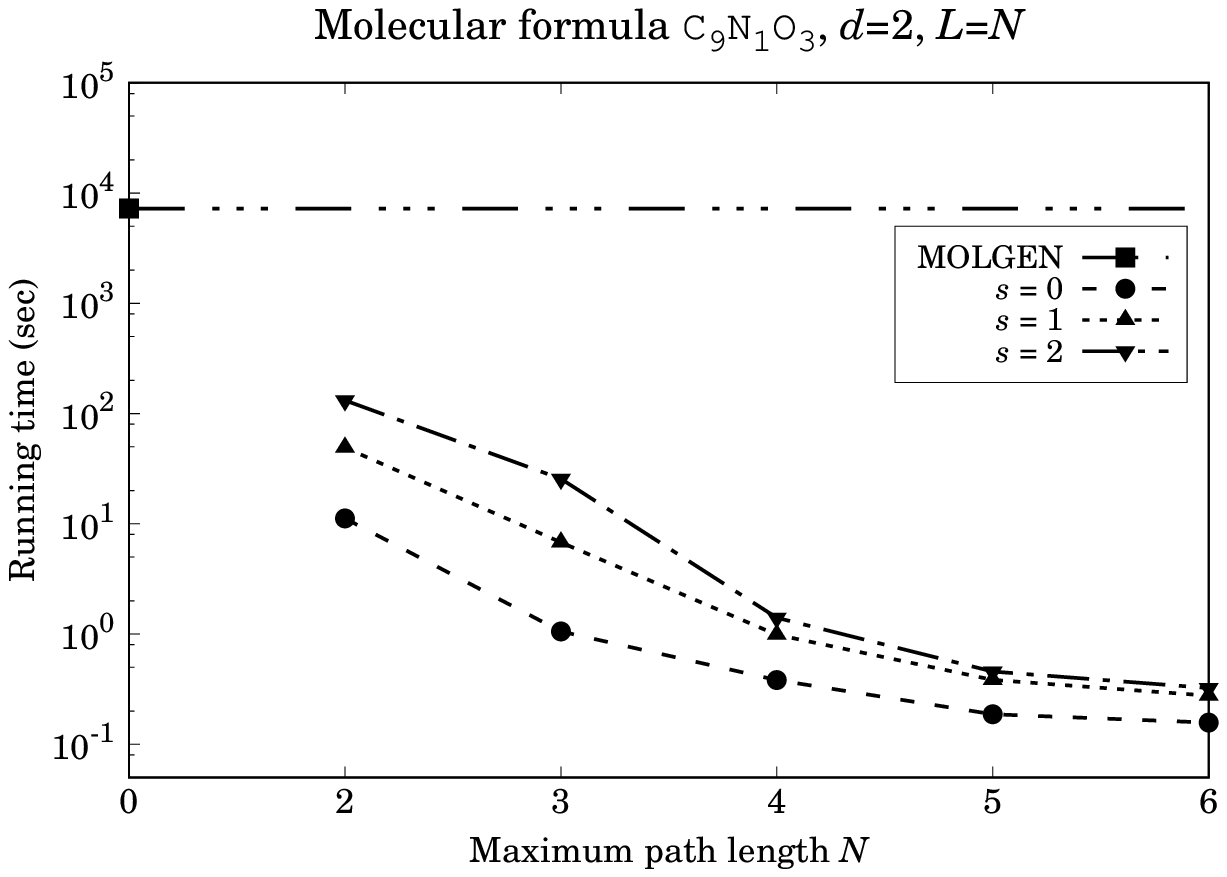}\\
      {\footnotesize (c)}\\
  \end{minipage} 
\hfill
  \begin{minipage}{0.45\textwidth}
   \centering
    \includegraphics[width=1.1\textwidth]{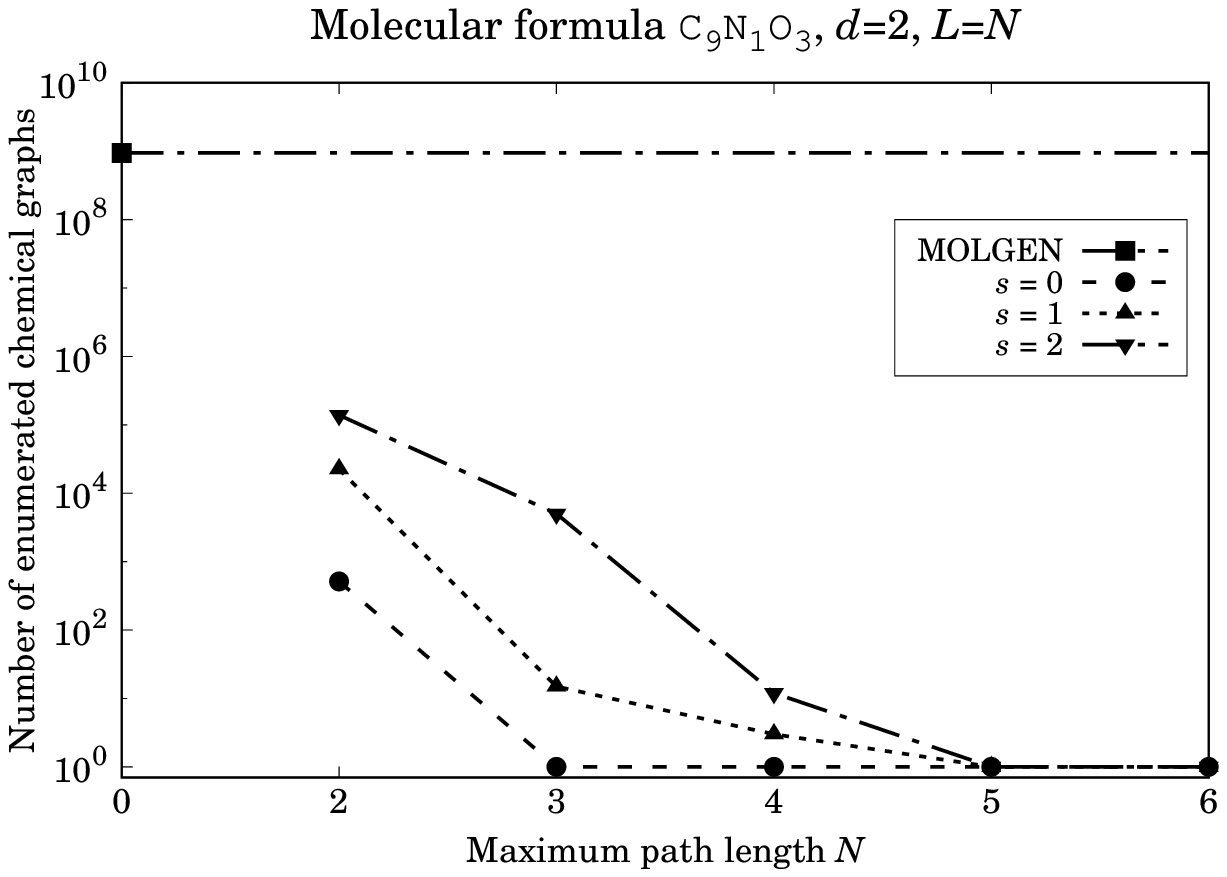}\\
    {\footnotesize (f)}\\
  \end{minipage} 
  \vspace{1cm}
  
  \caption{
    Plots showing the computation time 
    and number of chemical graphs enumerated by our algorithm
    for instance type EULF-$L$-A, as compared to MOLGEN.
    The sample structure from PubChem is with CID 301729,
    molecular formula {\tt C$_9$N$_1$O$_3$},
    and maximum bond multiplicity~$d=2$.
    (a)-(c)~Running time;
    (d)-(f)~Number of enumerated chemical graphs.
  }
 \label{fig:result_graphs_1}
 \end{figure}

  \begin{figure}[!ht]
  \begin{minipage}{0.45\textwidth}
   \centering
   \includegraphics[width=1.1\textwidth]{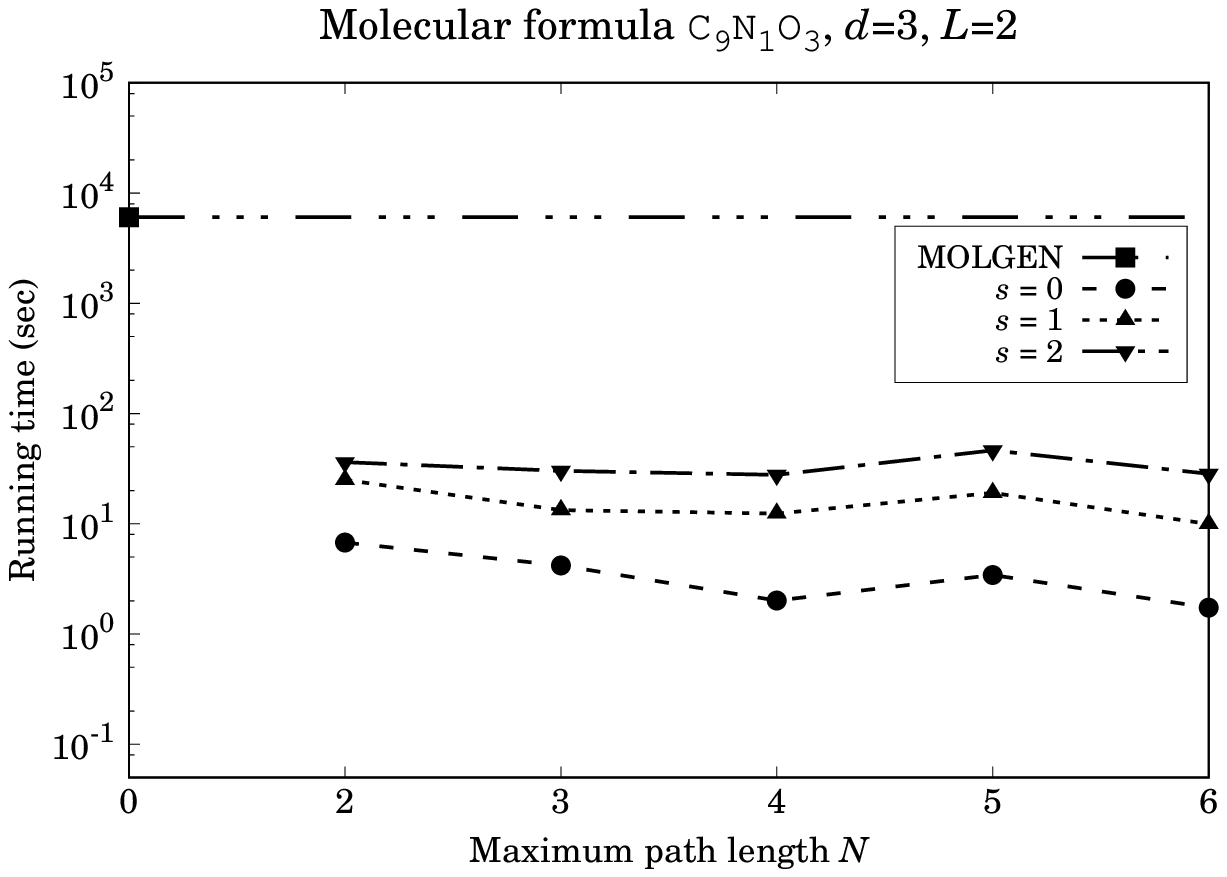}\\
   {\footnotesize (a)}\\
  \end{minipage}
\hfill
  \begin{minipage}{0.45\textwidth}
   \centering
   \includegraphics[width=1.1\textwidth]{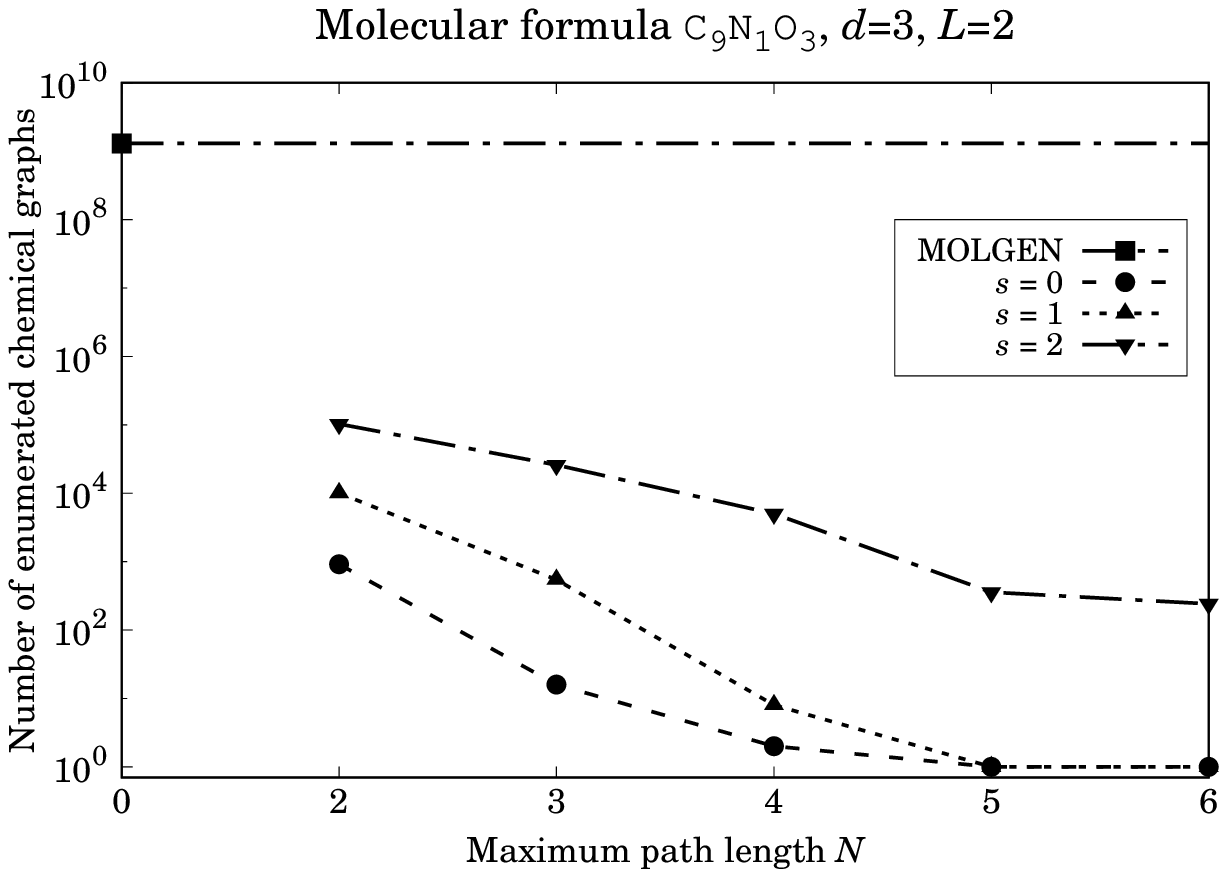}\\
   {\footnotesize (d)}\\
  \end{minipage} 
  \medskip

  \begin{minipage}{0.45\textwidth}
   \centering
      \includegraphics[width=1.1\textwidth]{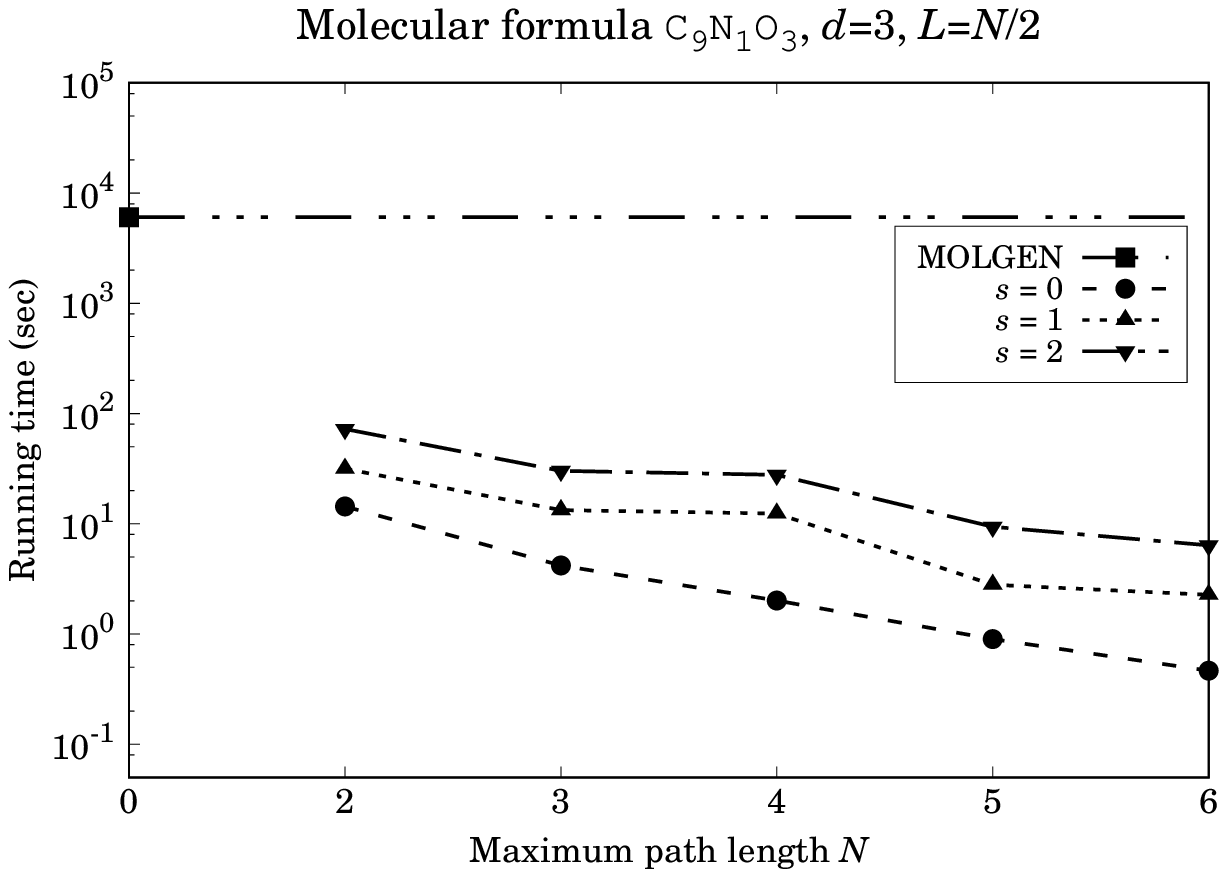}\\
      {\footnotesize (b)}\\
  \end{minipage} 
\hfill
  \begin{minipage}{0.45\textwidth}
   \centering
    \includegraphics[width=1.1\textwidth]{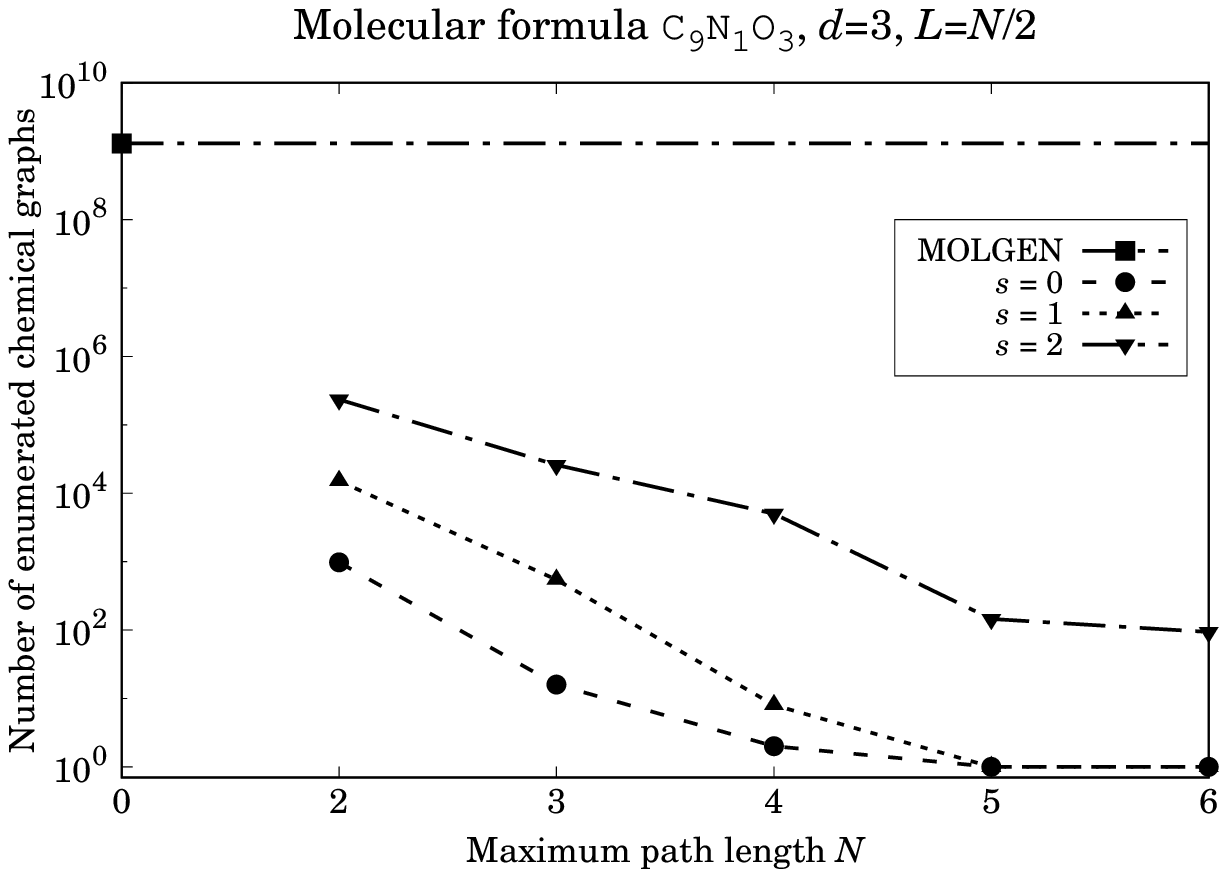}\\
    {\footnotesize (e)}\\
  \end{minipage} 
  \medskip

  \begin{minipage}{0.45\textwidth}
   \centering
      \includegraphics[width=1.1\textwidth]{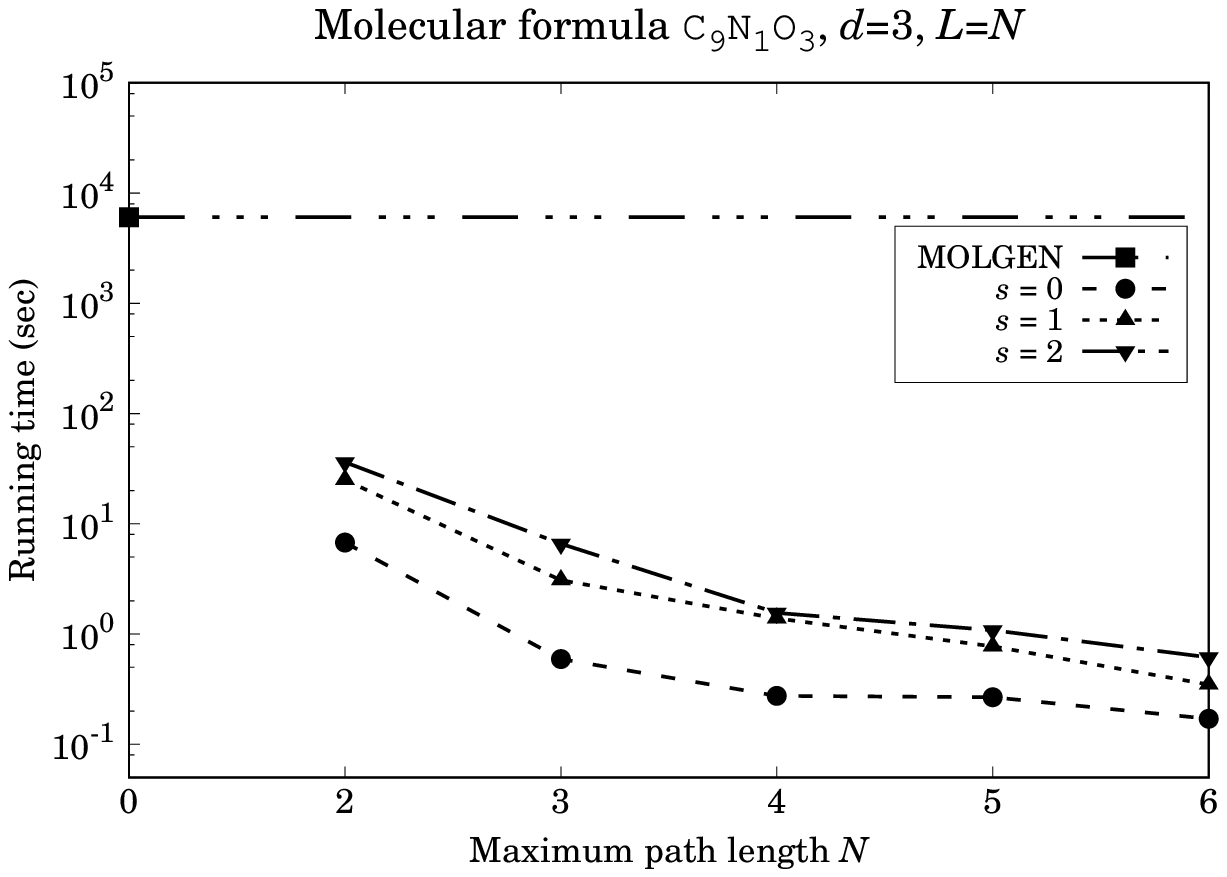}\\
      {\footnotesize (c)}\\
  \end{minipage} 
\hfill
  \begin{minipage}{0.45\textwidth}
   \centering
    \includegraphics[width=1.1\textwidth]{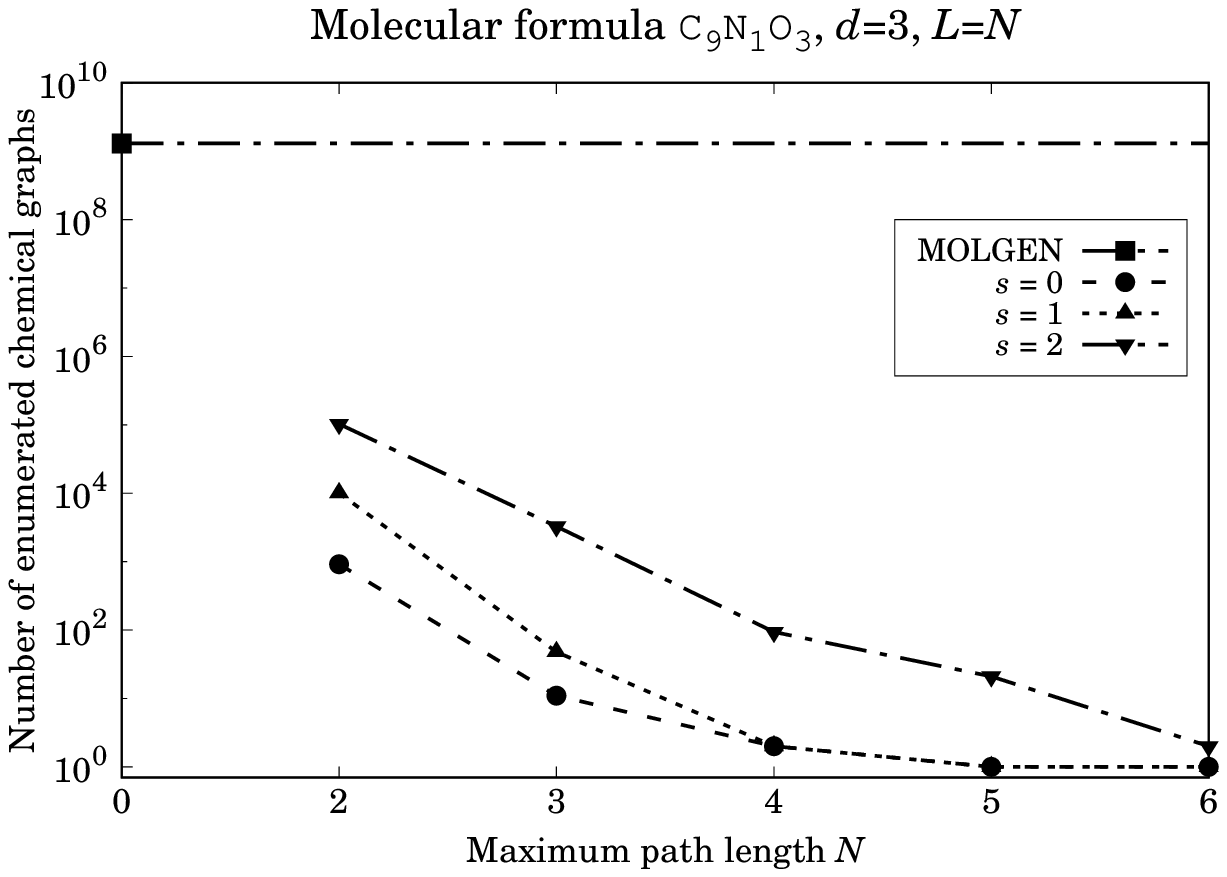}\\
    {\footnotesize (f)}\\
  \end{minipage} 
  \vspace{1cm}
  
  \caption{
    Plots showing the computation time 
    and number of chemical graphs enumerated by our algorithm
    for instance type EULF-$L$-A, as compared to MOLGEN.
    The sample structure from PubChem is with CID~57320502,
    molecular formula {\tt C$_9$N$_1$O$_3$},
    and maximum bond multiplicity~$d=3$.
    (a)-(c)~Running time;
    (d)-(f)~Number of enumerated chemical graphs.
  }
 \label{fig:result_graphs_2}
 \end{figure}

  \begin{figure}[!ht]
  \begin{minipage}{0.45\textwidth}
   \centering
   \includegraphics[width=1.1\textwidth]{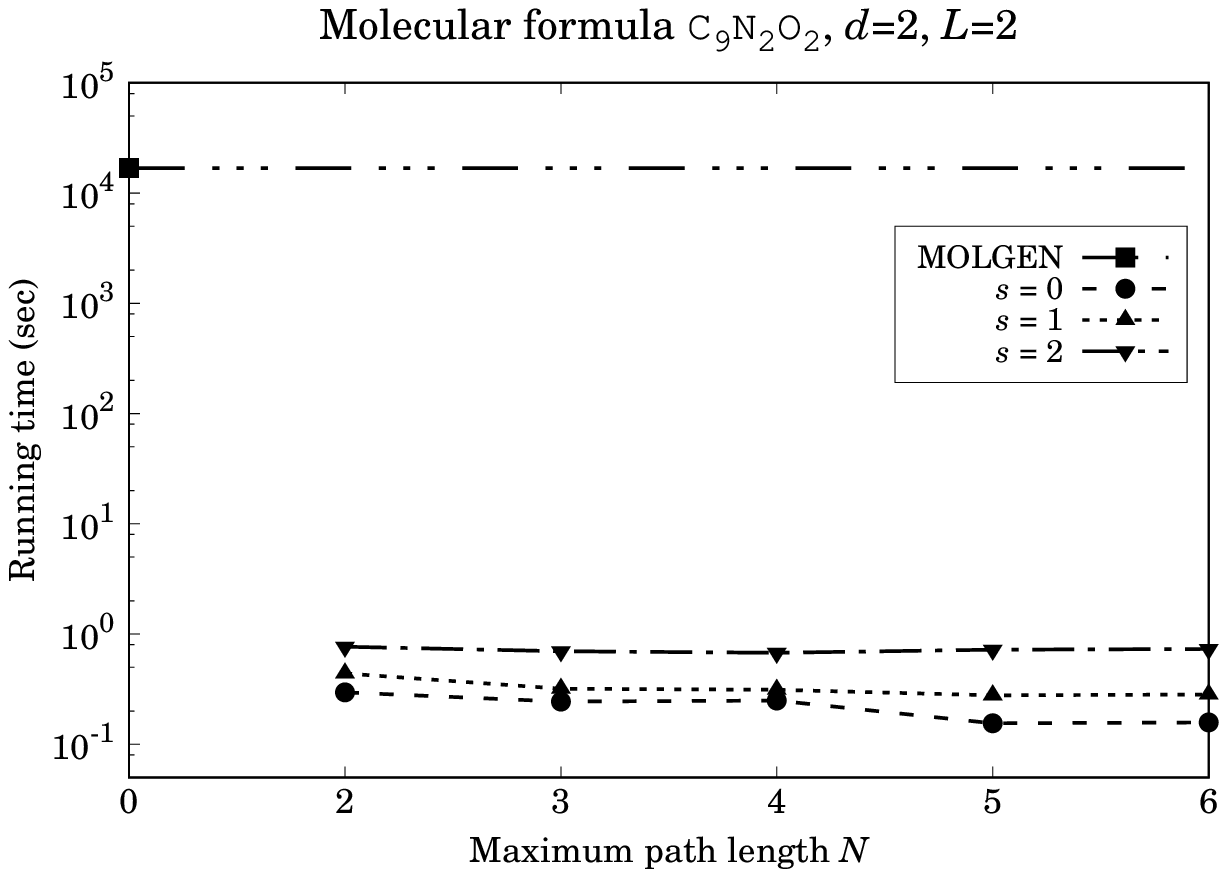}\\
   {\footnotesize (a)}\\
  \end{minipage}
\hfill
  \begin{minipage}{0.45\textwidth}
   \centering
   \includegraphics[width=1.1\textwidth]{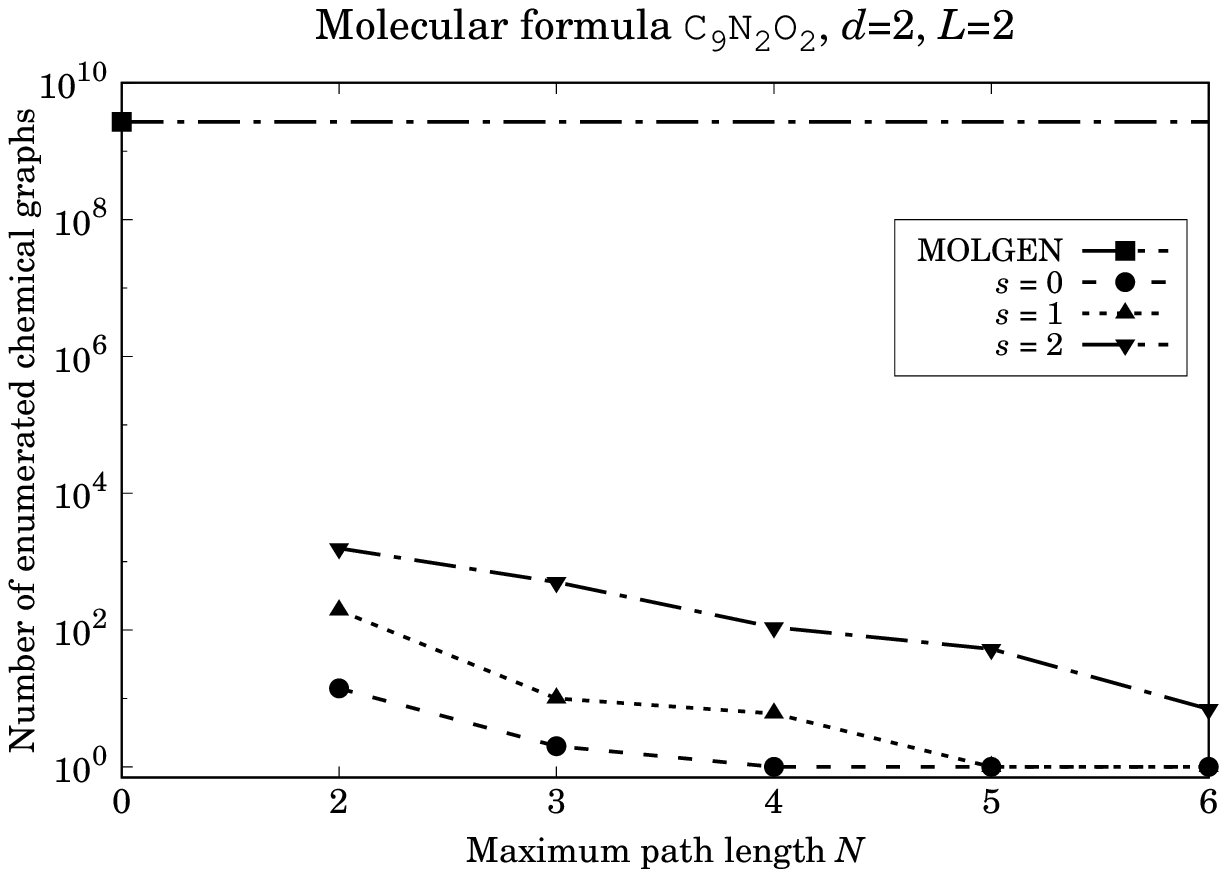}\\
   {\footnotesize (d)}\\
  \end{minipage} 
  \medskip

  \begin{minipage}{0.45\textwidth}
   \centering
      \includegraphics[width=1.1\textwidth]{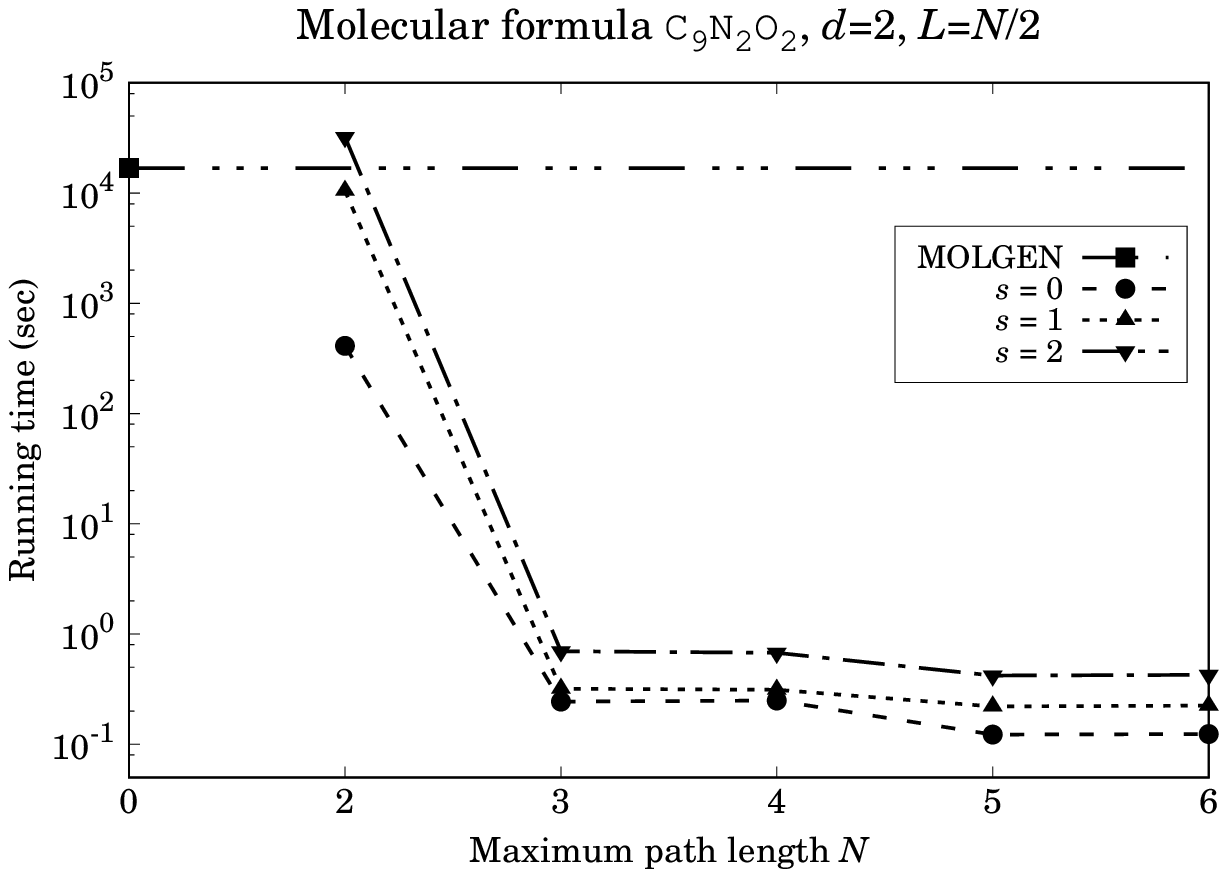}\\
      {\footnotesize (b)}\\
  \end{minipage} 
\hfill
  \begin{minipage}{0.45\textwidth}
   \centering
    \includegraphics[width=1.1\textwidth]{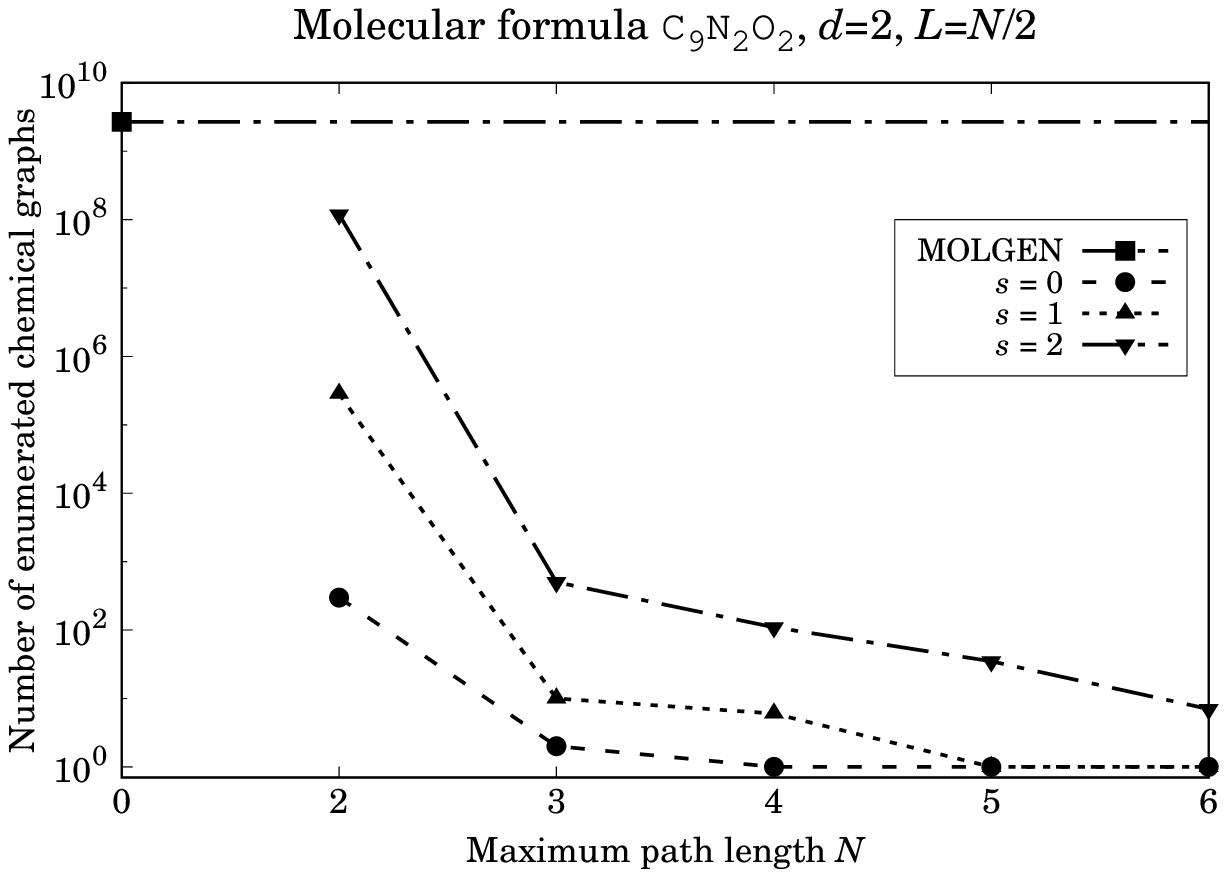}\\
    {\footnotesize (e)}\\
  \end{minipage} 
  \medskip

  \begin{minipage}{0.45\textwidth}
   \centering
      \includegraphics[width=1.1\textwidth]{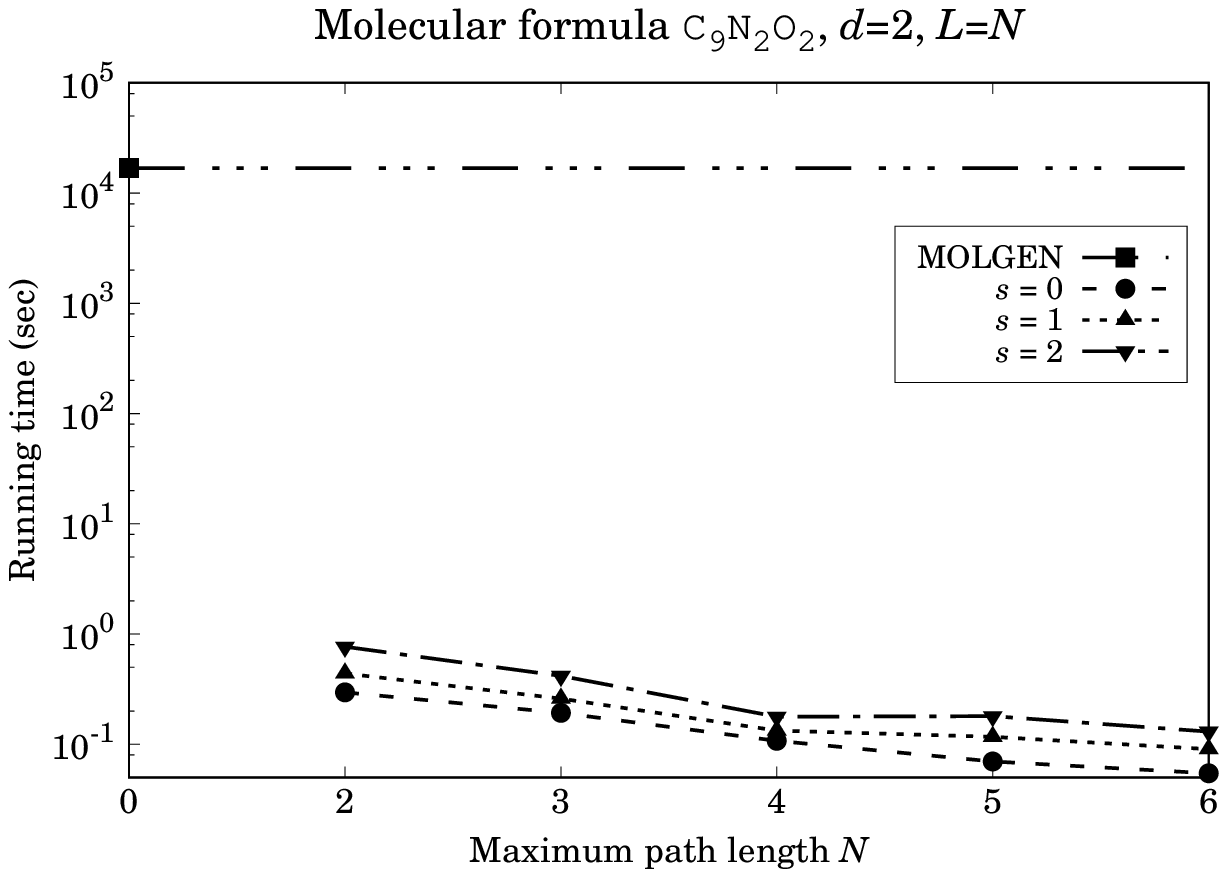}\\
      {\footnotesize (c)}\\
  \end{minipage} 
\hfill
  \begin{minipage}{0.45\textwidth}
   \centering
    \includegraphics[width=1.1\textwidth]{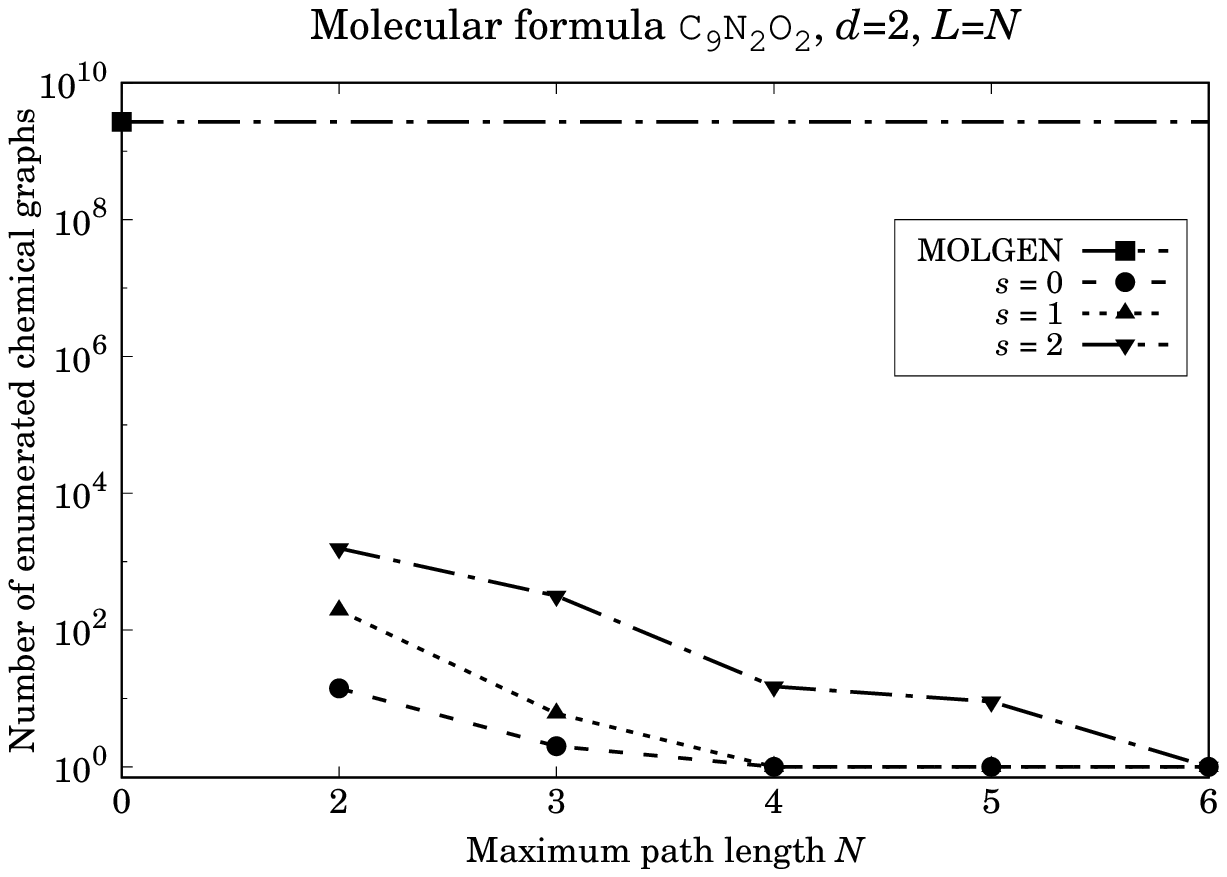}\\
    {\footnotesize (f)}\\
  \end{minipage} 
  \vspace{1cm}
  
  \caption{
    Plots showing the computation time 
    and number of chemical graphs enumerated by our algorithm
    for instance type EULF-$L$-A, as compared to MOLGEN.
    The sample structure from PubChem is with CID~6163405,
    molecular formula {\tt C$_9$N$_2$O$_2$},
    and maximum bond multiplicity~$d=2$.
    (a)-(c)~Running time;
    (d)-(f)~Number of enumerated chemical graphs.
  }
 \label{fig:result_graphs_3}
 \end{figure}

 \begin{figure}[!ht]
  \begin{minipage}{0.45\textwidth}
   \centering
   \includegraphics[width=1.1\textwidth]{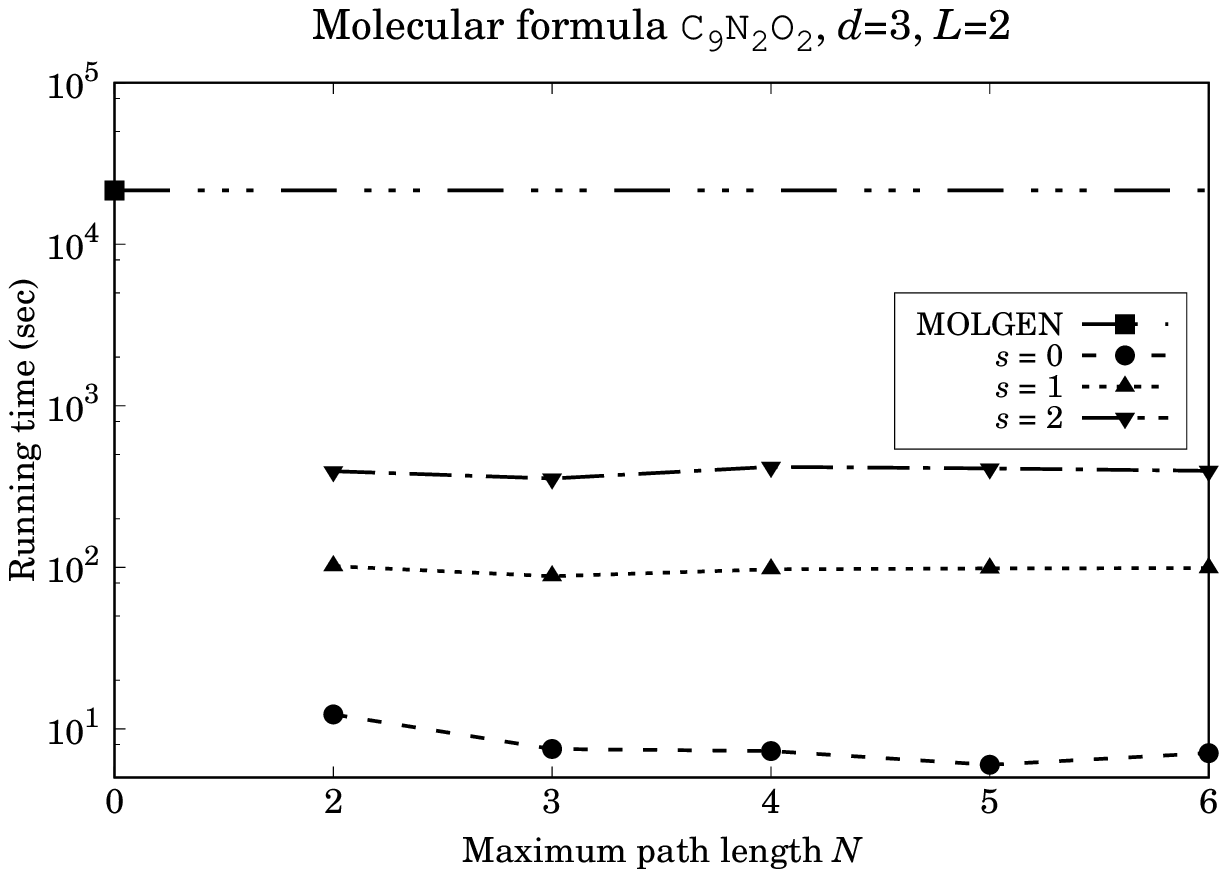}\\
   {\footnotesize (a)}\\
  \end{minipage}
\hfill
  \begin{minipage}{0.45\textwidth}
   \centering
   \includegraphics[width=1.1\textwidth]{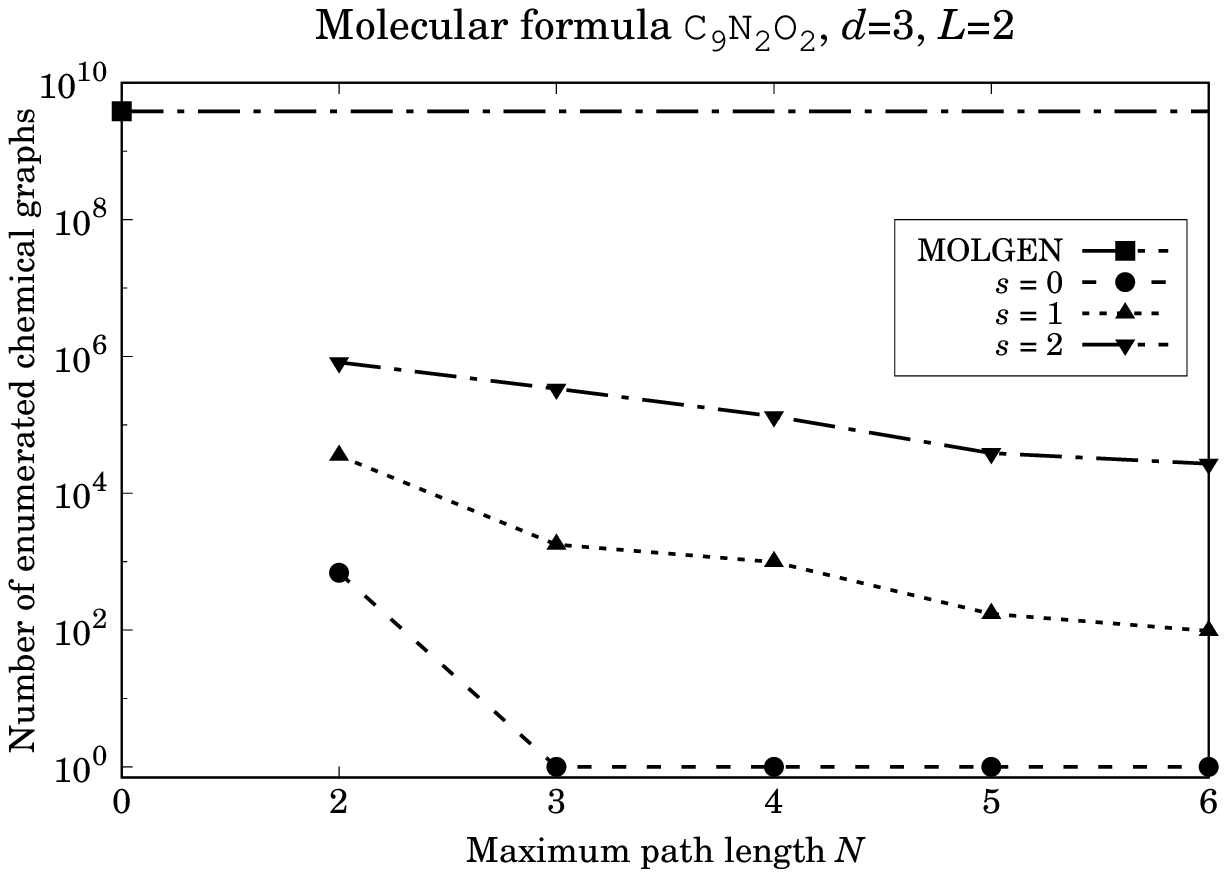}\\
   {\footnotesize (d)}\\
  \end{minipage} 
  \medskip

  \begin{minipage}{0.45\textwidth}
   \centering
      \includegraphics[width=1.1\textwidth]{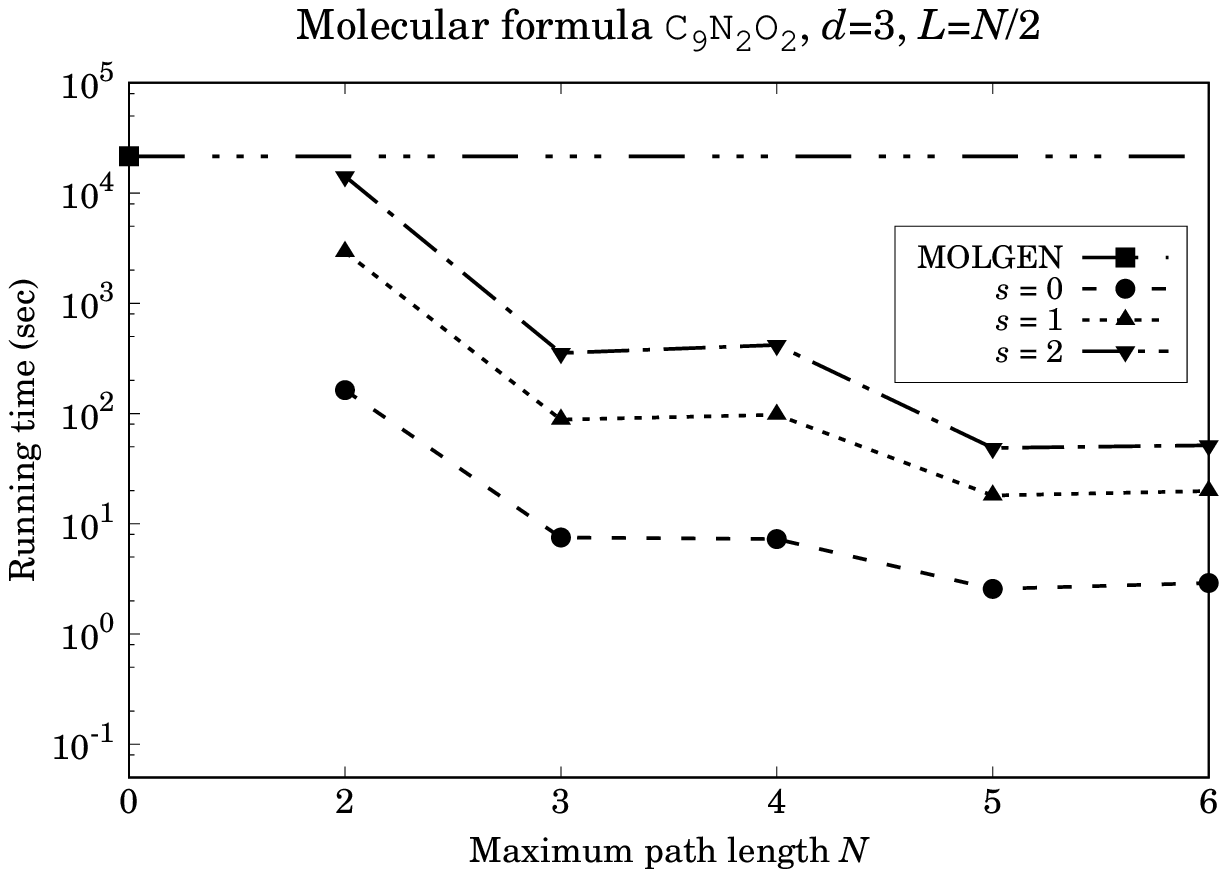}\\
      {\footnotesize (b)}\\
  \end{minipage} 
\hfill
  \begin{minipage}{0.45\textwidth}
   \centering
    \includegraphics[width=1.1\textwidth]{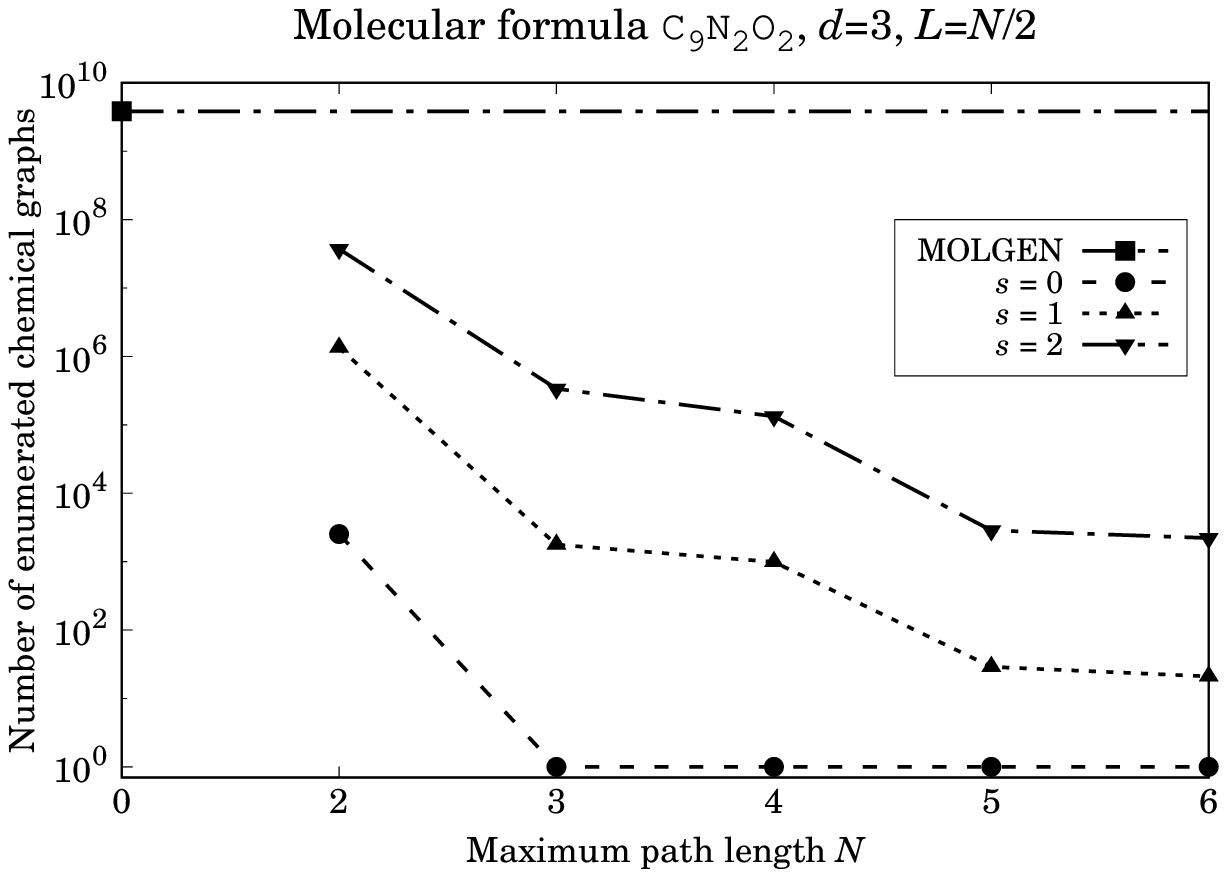}\\
    {\footnotesize (e)}\\
  \end{minipage} 
  \medskip

  \begin{minipage}{0.45\textwidth}
   \centering
      \includegraphics[width=1.1\textwidth]{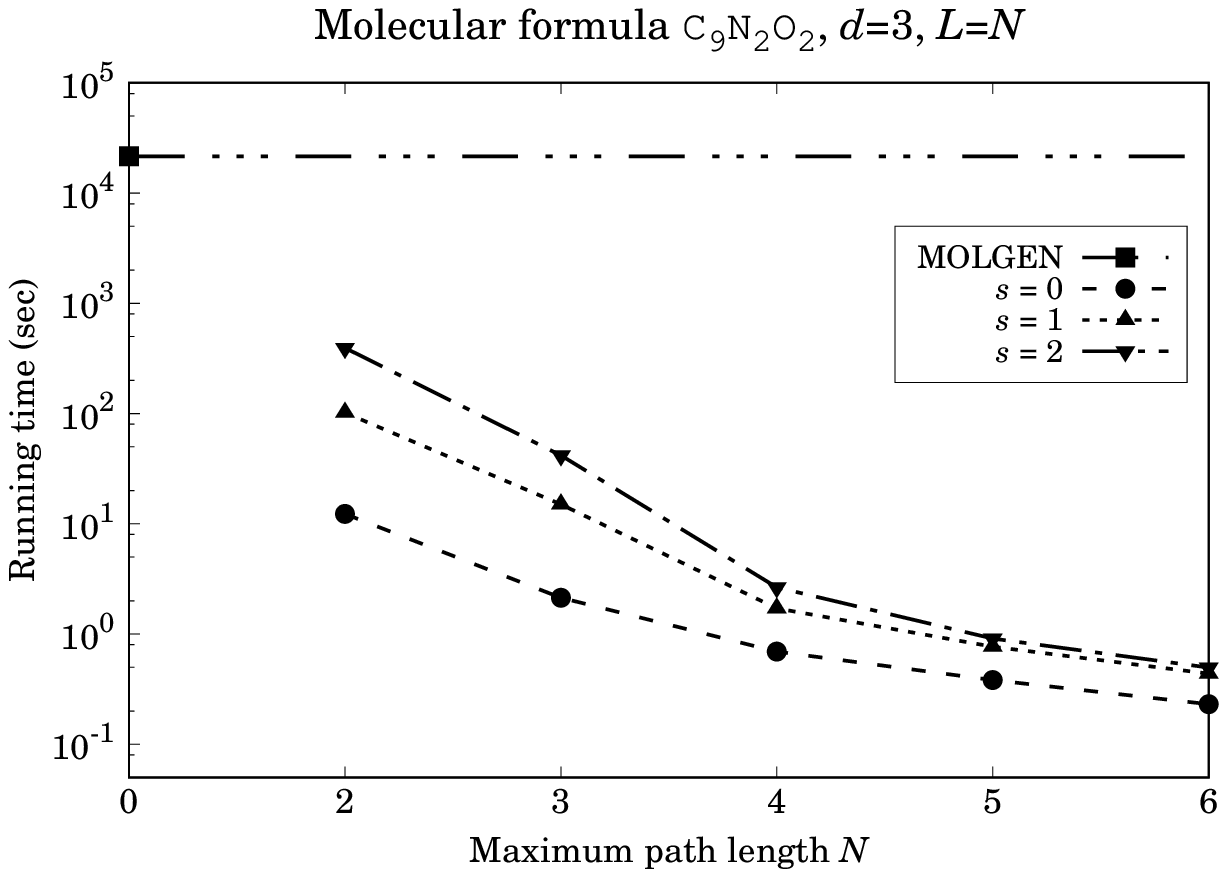}\\
      {\footnotesize (c)}\\
  \end{minipage} 
\hfill
  \begin{minipage}{0.45\textwidth}
   \centering
    \includegraphics[width=1.1\textwidth]{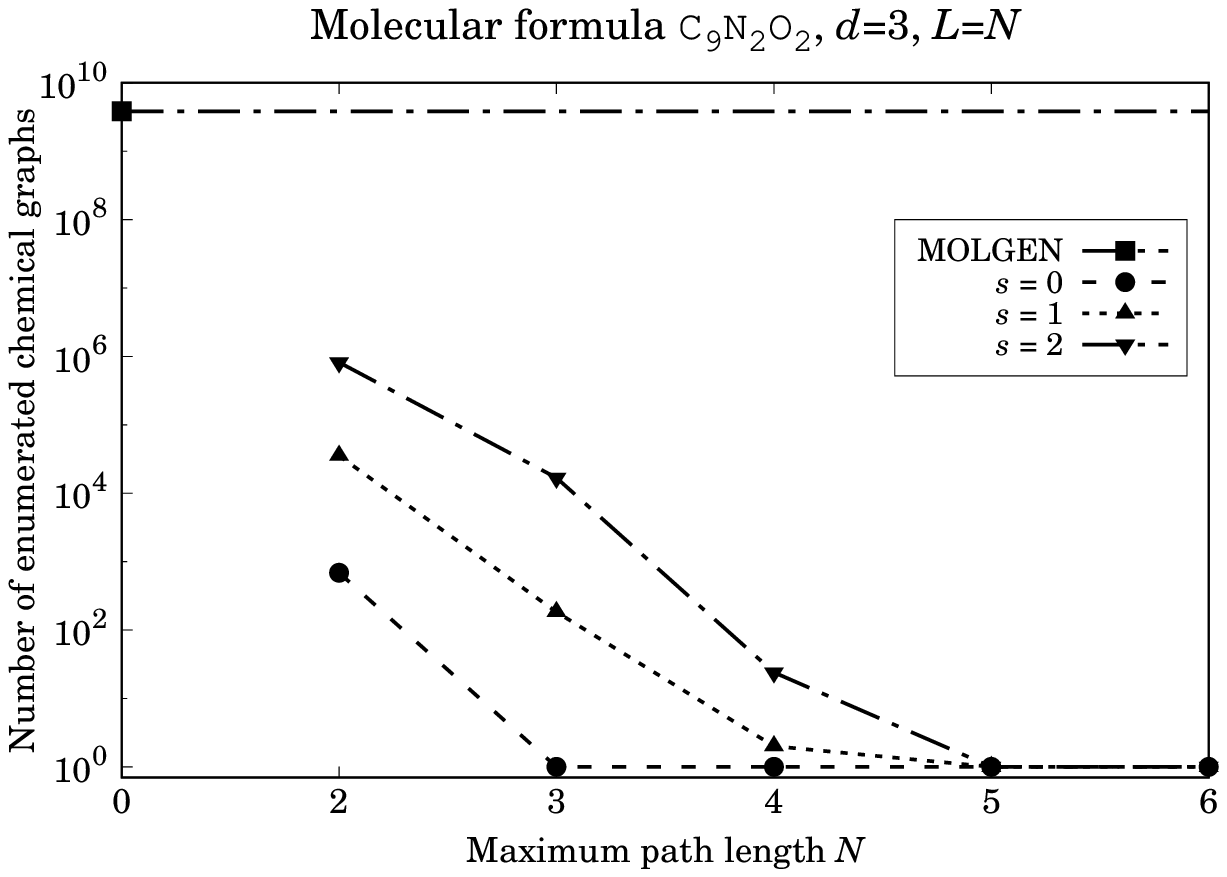}\\
    {\footnotesize (f)}\\
  \end{minipage} 
  \vspace{1cm}
  
  \caption{
    Plots showing the computation time 
    and number of chemical graphs enumerated by our algorithm
    for instance type EULF-$L$-A, as compared to MOLGEN.
    The sample structure from PubChem is with CID~131335510,
    molecular formula {\tt C$_9$N$_2$O$_2$},
    and maximum bond multiplicity~$d=3$.
    (a)-(c)~Running time;
    (d)-(f)~Number of enumerated chemical graphs.
  }
 \label{fig:result_graphs_4}
 \end{figure}

 \begin{figure}[!ht]
  \begin{minipage}{0.45\textwidth}
   \centering
   \includegraphics[width=1.1\textwidth]{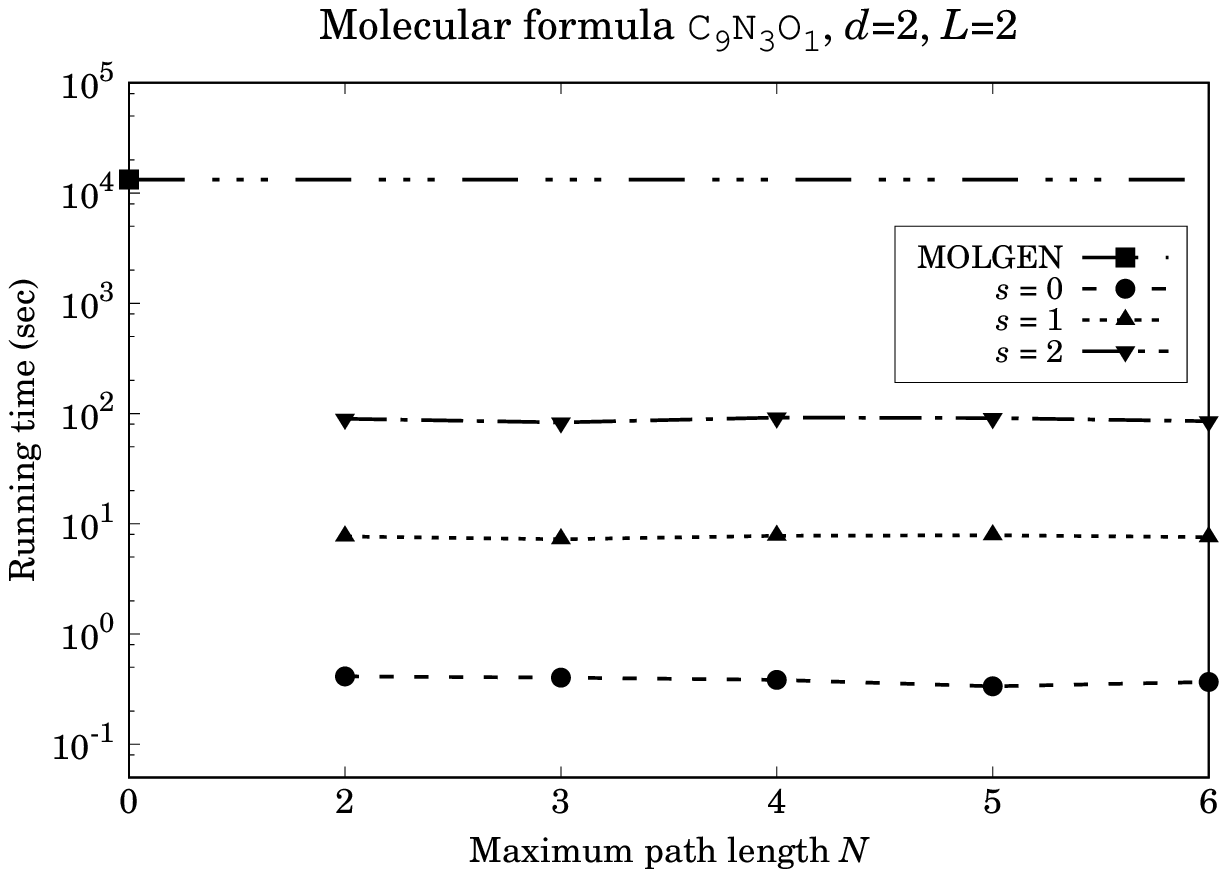}\\
   {\footnotesize (a)}\\
  \end{minipage}
\hfill
  \begin{minipage}{0.45\textwidth}
   \centering
   \includegraphics[width=1.1\textwidth]{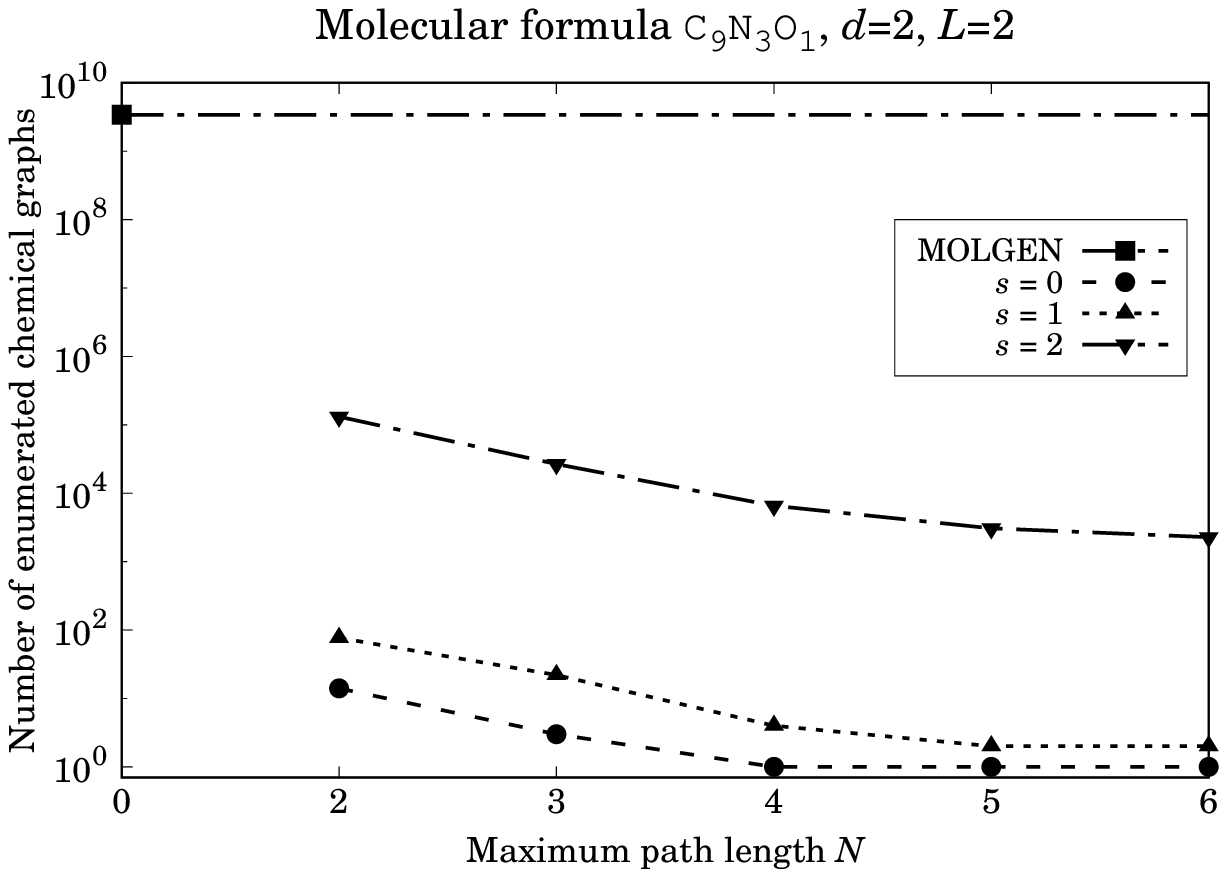}\\
   {\footnotesize (d)}\\
  \end{minipage} 
  \medskip

  \begin{minipage}{0.45\textwidth}
   \centering
      \includegraphics[width=1.1\textwidth]{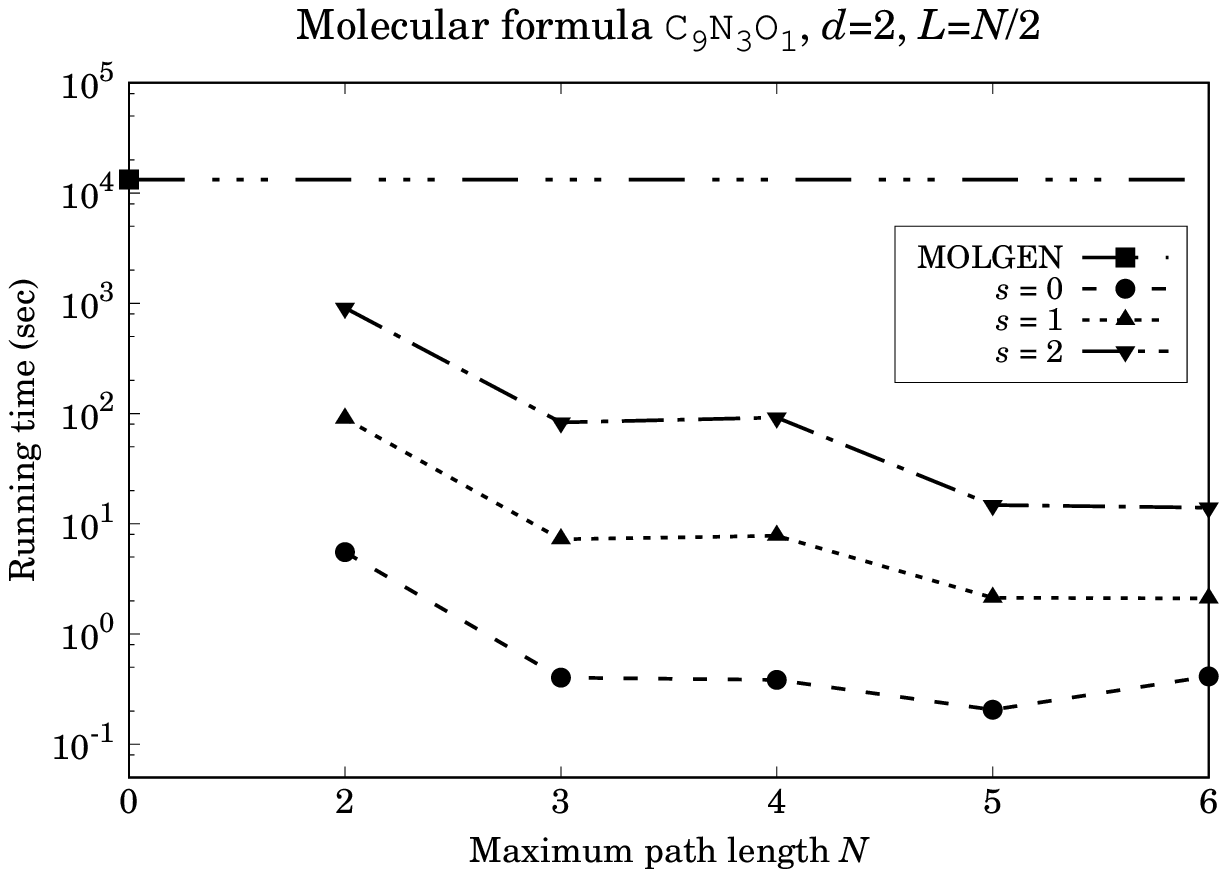}\\
      {\footnotesize (b)}\\
  \end{minipage} 
\hfill
  \begin{minipage}{0.45\textwidth}
   \centering
    \includegraphics[width=1.1\textwidth]{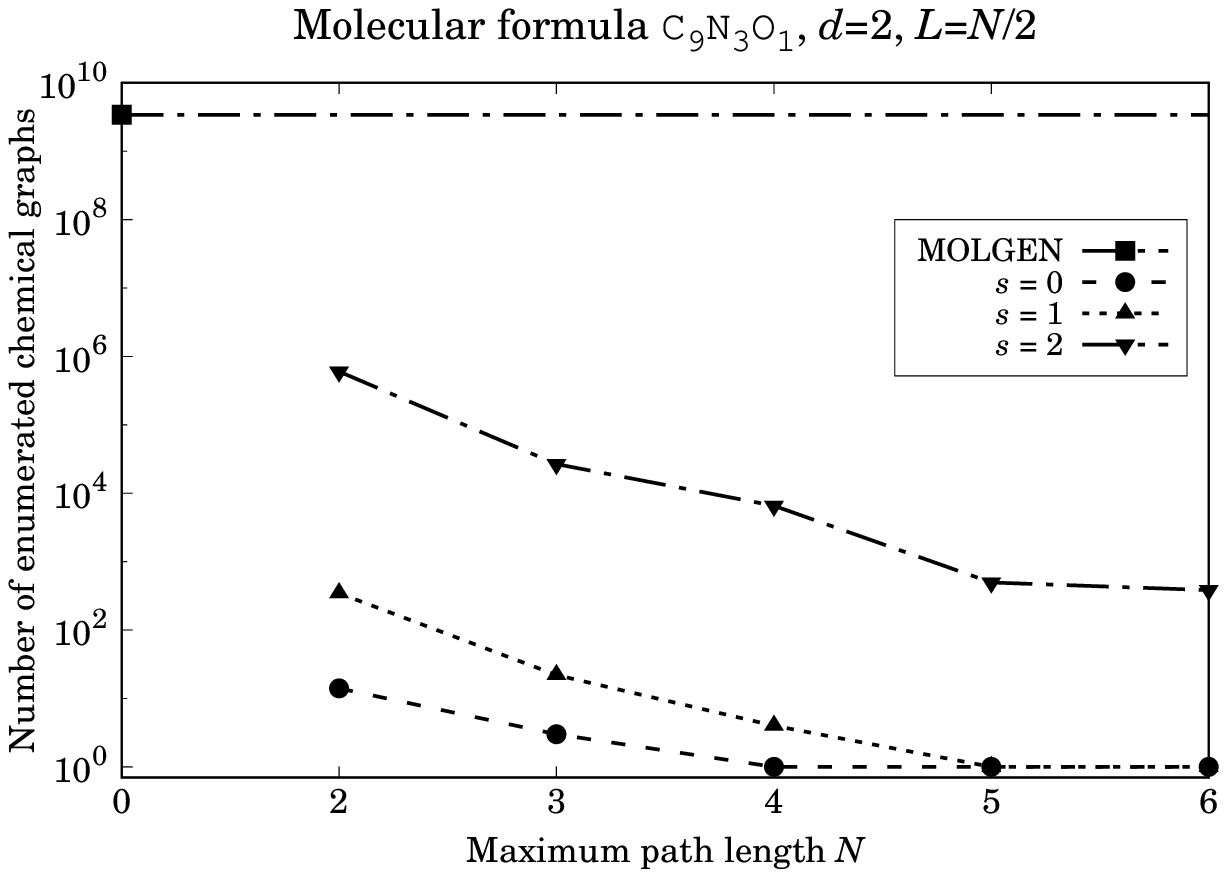}\\
    {\footnotesize (e)}\\
  \end{minipage} 
  \medskip

  \begin{minipage}{0.45\textwidth}
   \centering
      \includegraphics[width=1.1\textwidth]{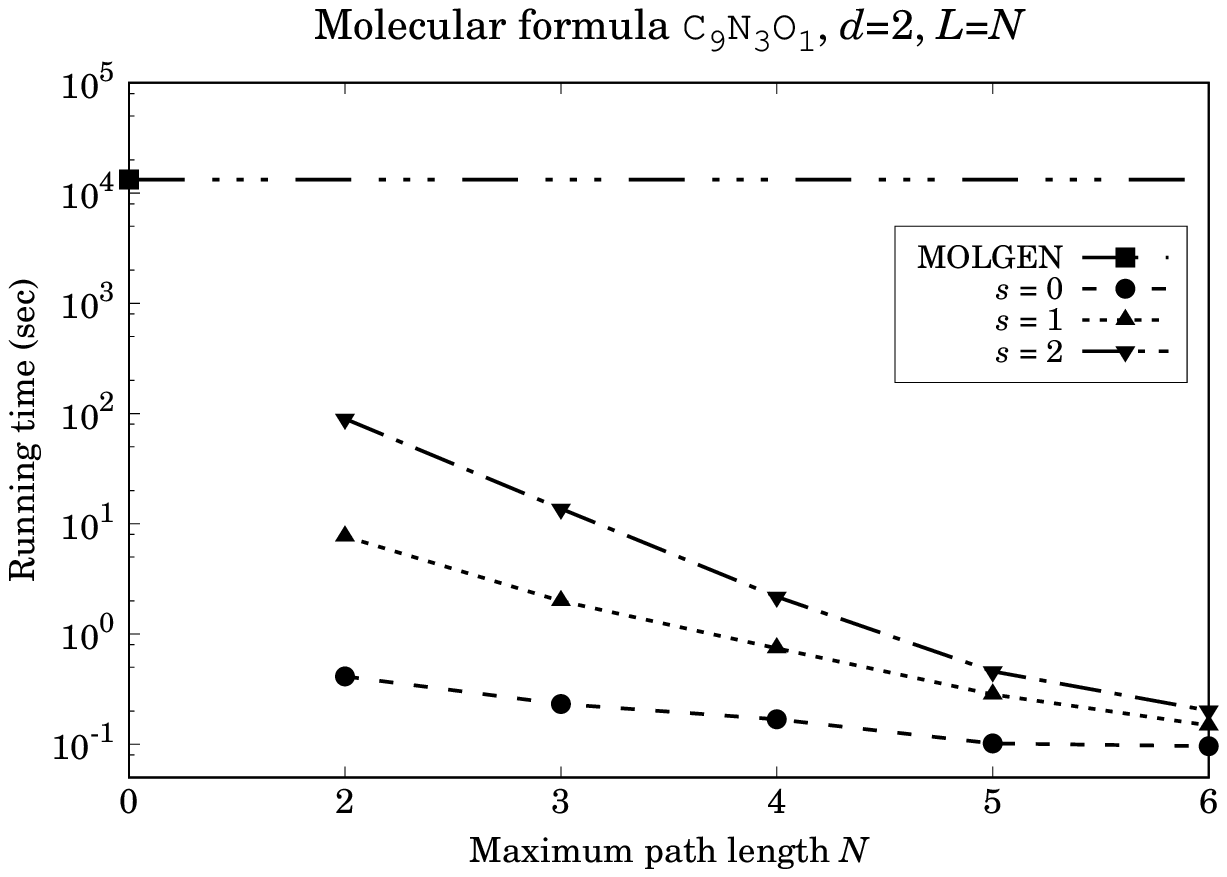}\\
      {\footnotesize (c)}\\
  \end{minipage} 
\hfill
  \begin{minipage}{0.45\textwidth}
   \centering
    \includegraphics[width=1.1\textwidth]{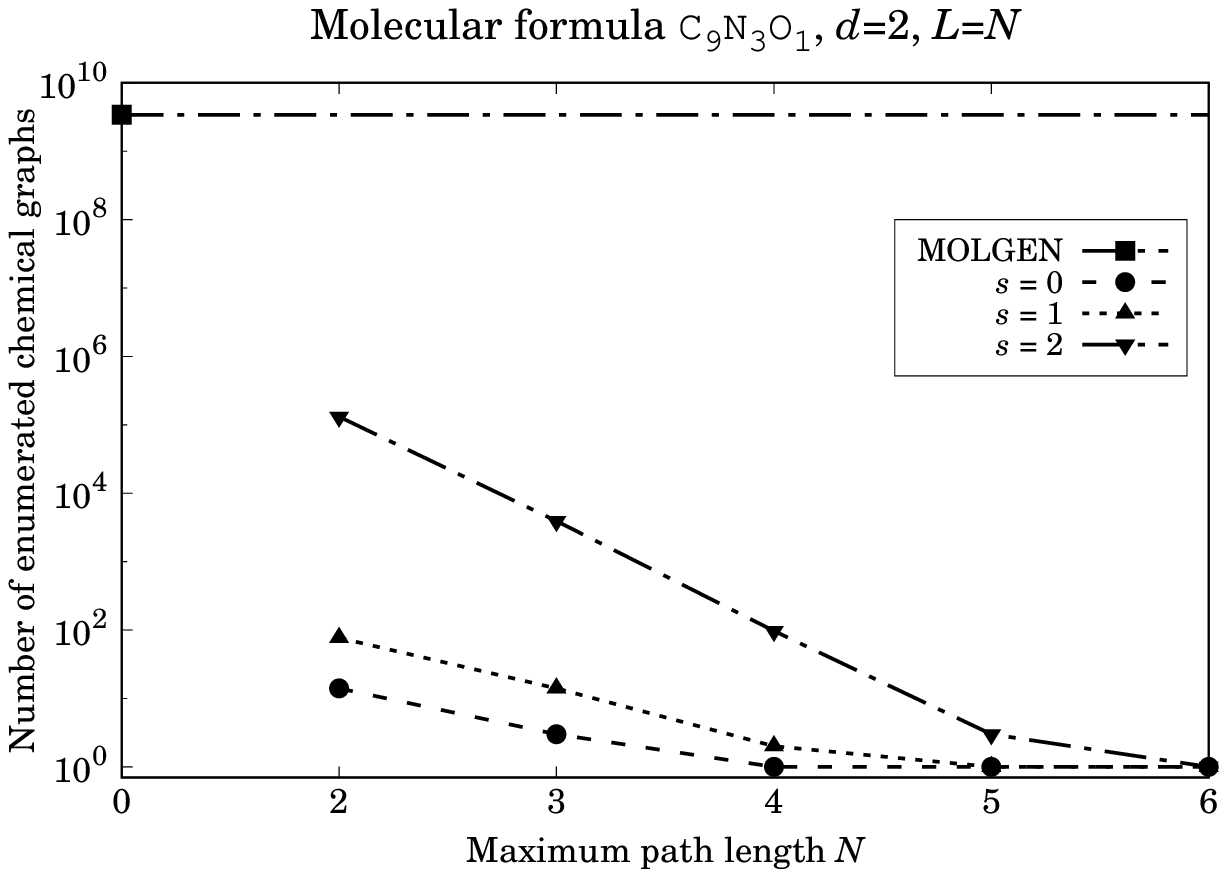}\\
    {\footnotesize (f)}\\
  \end{minipage} 
  \vspace{1cm}
  
  \caption{
    Plots showing the computation time 
    and number of chemical graphs enumerated by our algorithm
    for instance type EULF-$L$-A, as compared to MOLGEN.
    The sample structure from PubChem is with CID~9942278,
    molecular formula {\tt C$_9$N$_3$O$_1$},
    and maximum bond multiplicity~$d=2$.
    (a)-(c)~Running time;
    (d)-(f)~Number of enumerated chemical graphs.
  }
 \label{fig:result_graphs_5}
 \end{figure}

 \begin{figure}[!ht]
  \begin{minipage}{0.45\textwidth}
   \centering
   \includegraphics[width=1.1\textwidth]{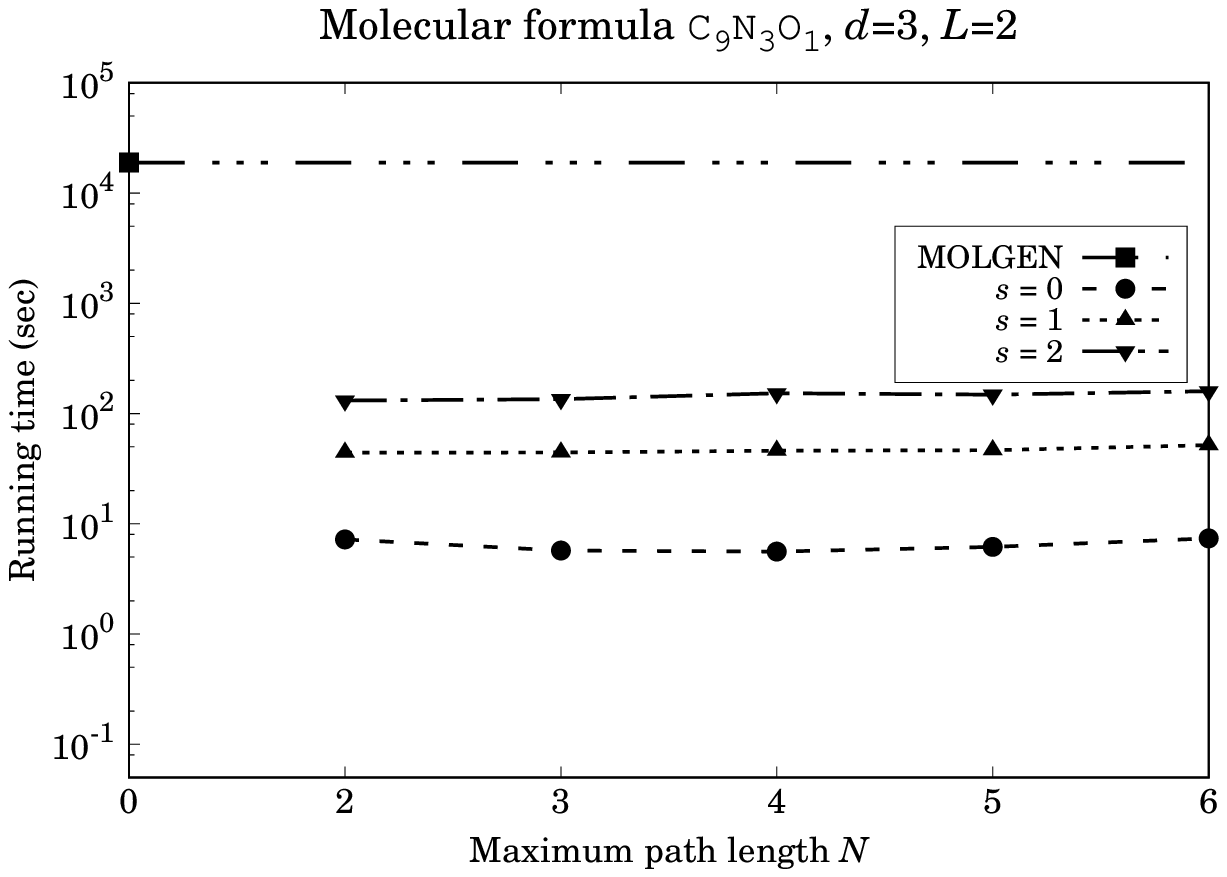}\\
   {\footnotesize (a)}\\
  \end{minipage}
\hfill
  \begin{minipage}{0.45\textwidth}
   \centering
   \includegraphics[width=1.1\textwidth]{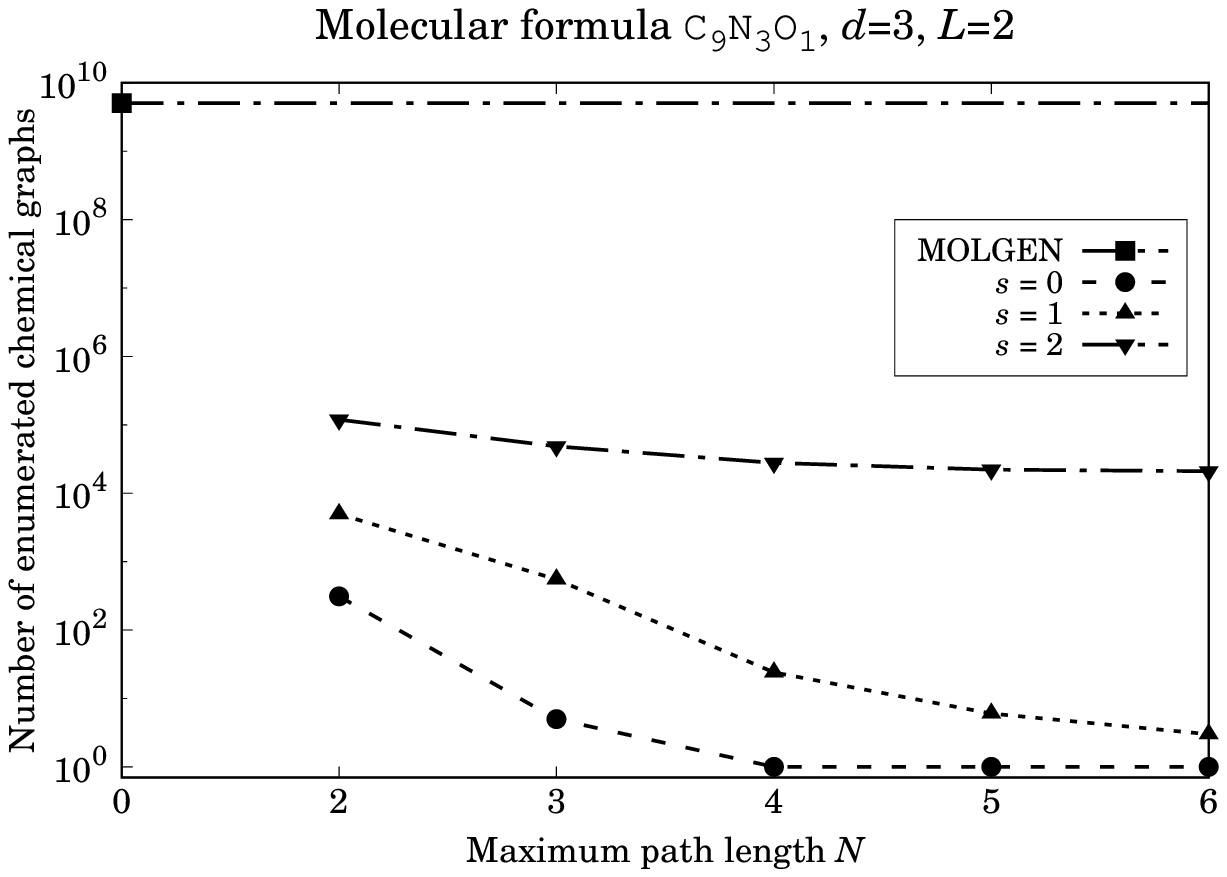}\\
   {\footnotesize (d)}\\
  \end{minipage} 
  \medskip

  \begin{minipage}{0.45\textwidth}
   \centering
      \includegraphics[width=1.1\textwidth]{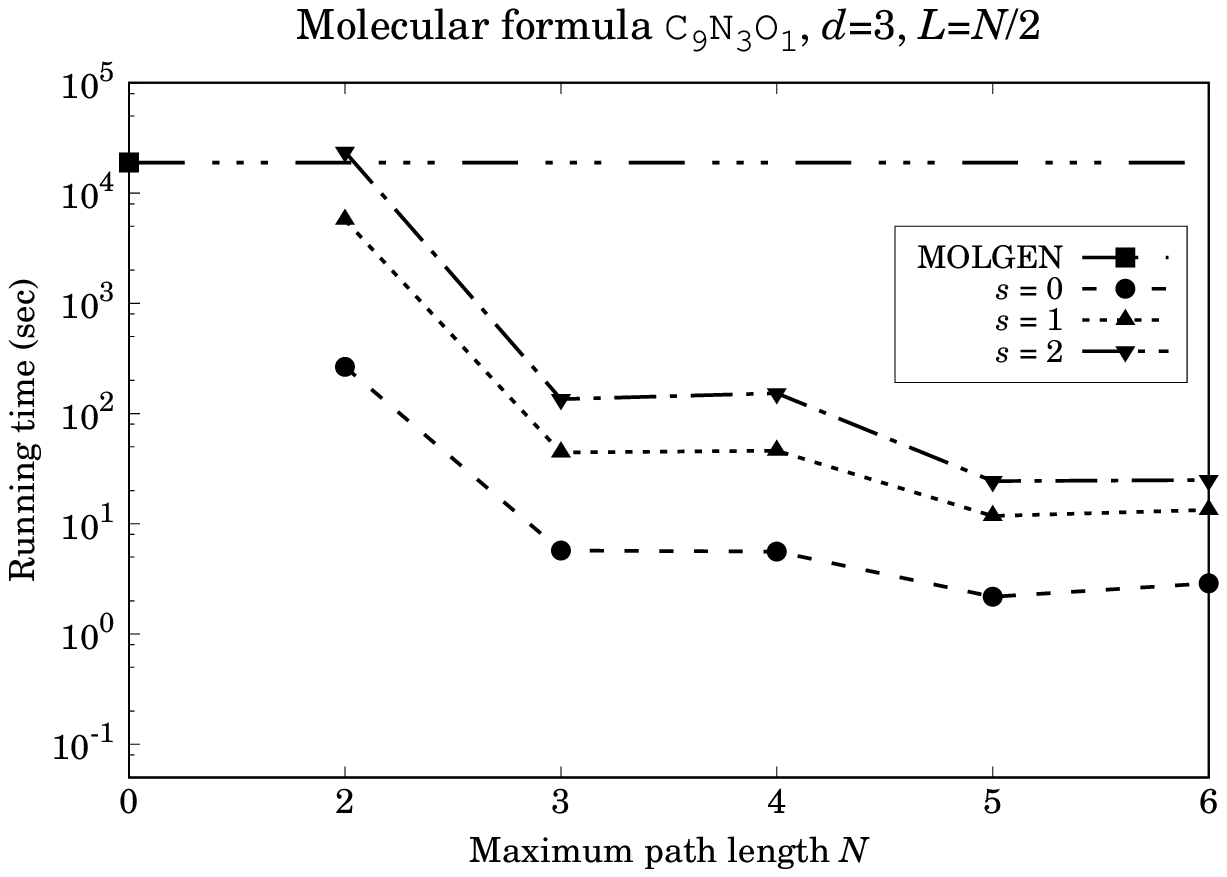}\\
      {\footnotesize (b)}\\
  \end{minipage} 
\hfill
  \begin{minipage}{0.45\textwidth}
   \centering
    \includegraphics[width=1.1\textwidth]{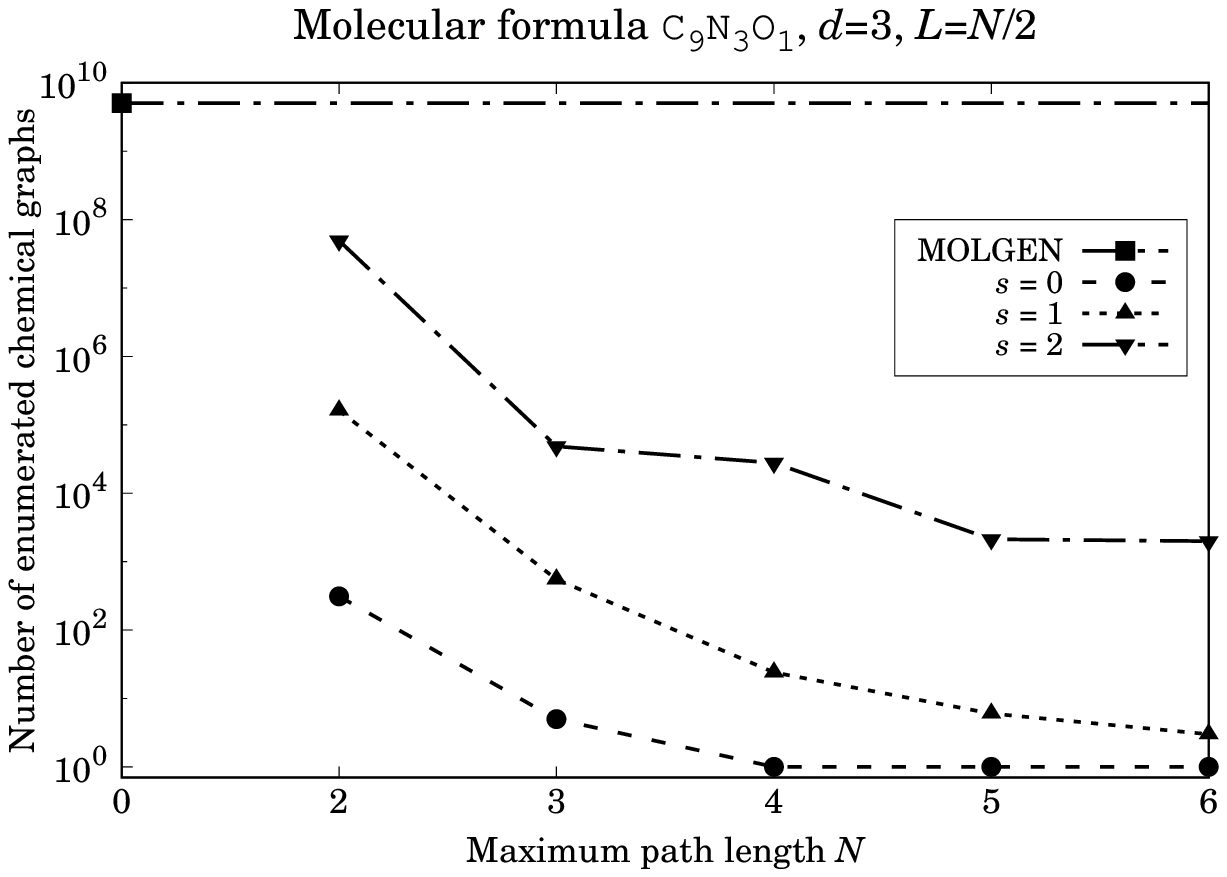}\\
    {\footnotesize (e)}\\
  \end{minipage} 
  \medskip

  \begin{minipage}{0.45\textwidth}
   \centering
      \includegraphics[width=1.1\textwidth]{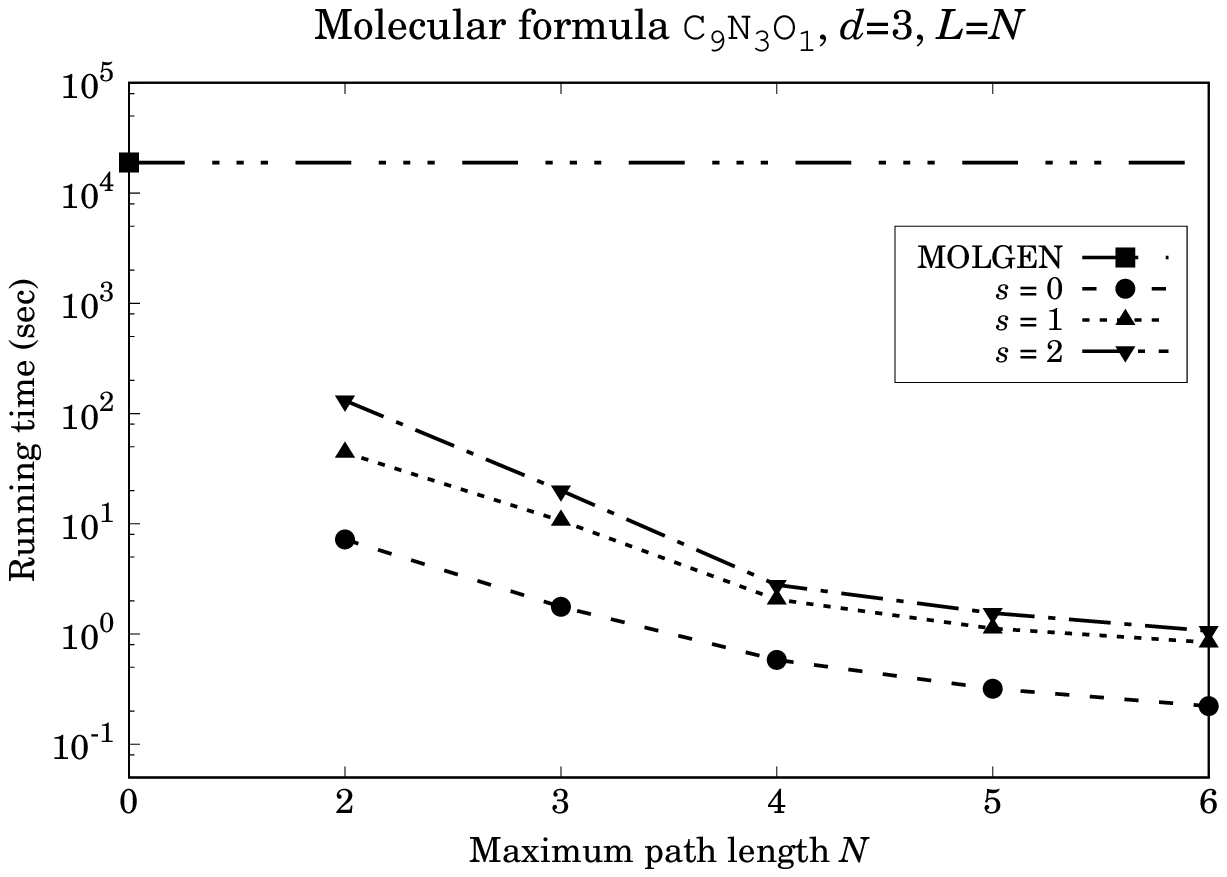}\\
      {\footnotesize (c)}\\
  \end{minipage} 
\hfill
  \begin{minipage}{0.45\textwidth}
   \centering
    \includegraphics[width=1.1\textwidth]{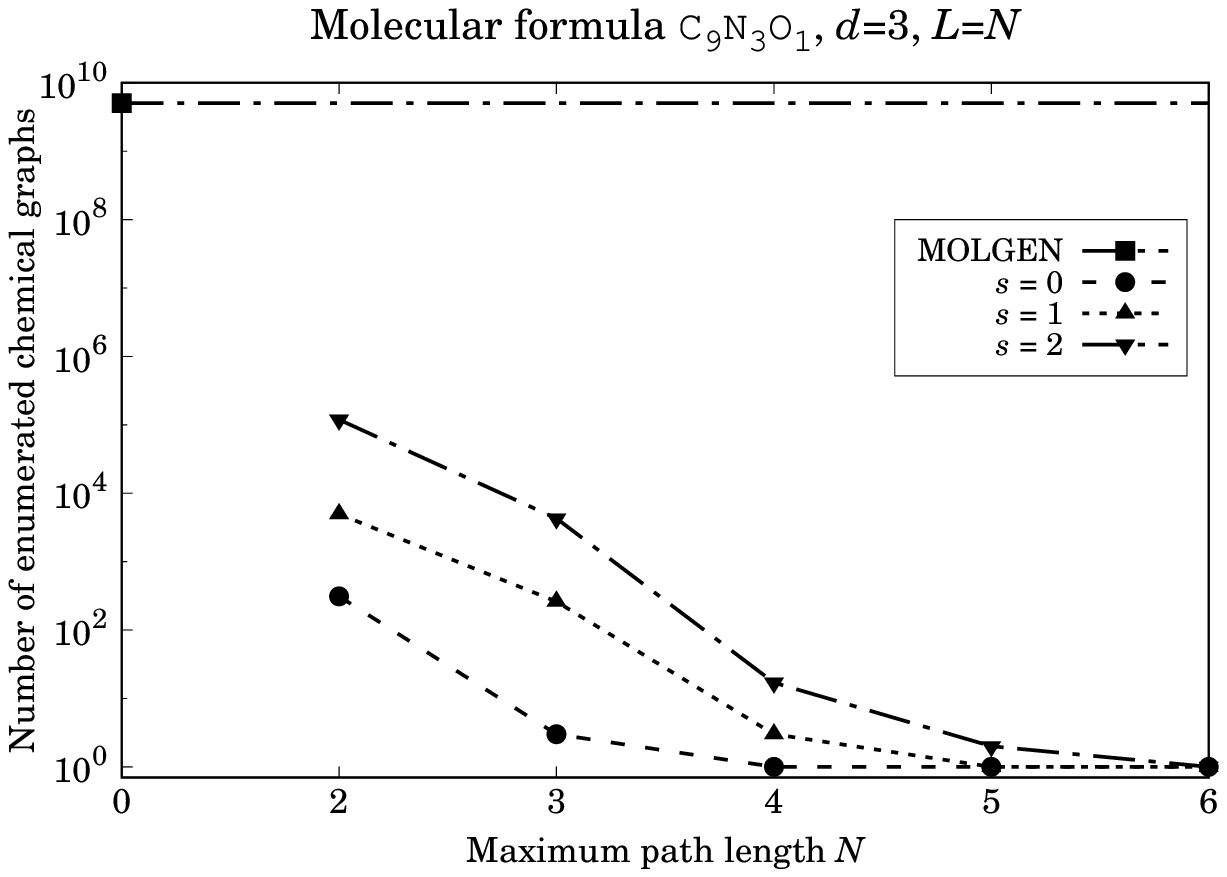}\\
    {\footnotesize (f)}\\
  \end{minipage} 
  \vspace{1cm}
  
  \caption{
    Plots showing the computation time 
    and number of chemical graphs enumerated by our algorithm
    for instance type EULF-$L$-A, as compared to MOLGEN.
    The sample structure from PubChem is with CID~10103630,
    molecular formula {\tt C$_9$N$_3$O$_1$},
    and maximum bond multiplicity~$d=3$.
    (a)-(c)~Running time;
    (d)-(f)~Number of enumerated chemical graphs.
  }
 \label{fig:result_graphs_6}
 \end{figure}

 In addition, to check the limits as to the maximum number of vertices
 in graphs that can be enumerated in a reasonable time, 
 we conducted experiments over a range $n \in [9, 40]$ for the number 
 of vertices in a target chemical graph.
 For a fixed number $n$ of vertices,
 we tested two types of instances, one with molecular formula $\mathtt{C}_n$,
 and the other with molecular formula  $\mathtt{C}_{n-4} \mathtt{N}_2 \mathtt{O}_2$,
 and set an execution time limit of $3,600$ seconds.
 The results are summarized in Fig.~\ref{fig:result_graphs_num_limit}.
 From Fig.~\ref{fig:result_graphs_num_limit}\,(a) and~(b), we see that the
 time limit is quickly reached even when the number $n$ of 
 number of vertices is less than 15, but that the program still enumerates 
 structures within the time limit up to $n=30$, after which there are cases when
 not even a single graph is enumerated during the time limit.

   \begin{figure}[!ht]
  \begin{minipage}{0.45\textwidth}
   \centering
   \includegraphics[width=1.1\textwidth]{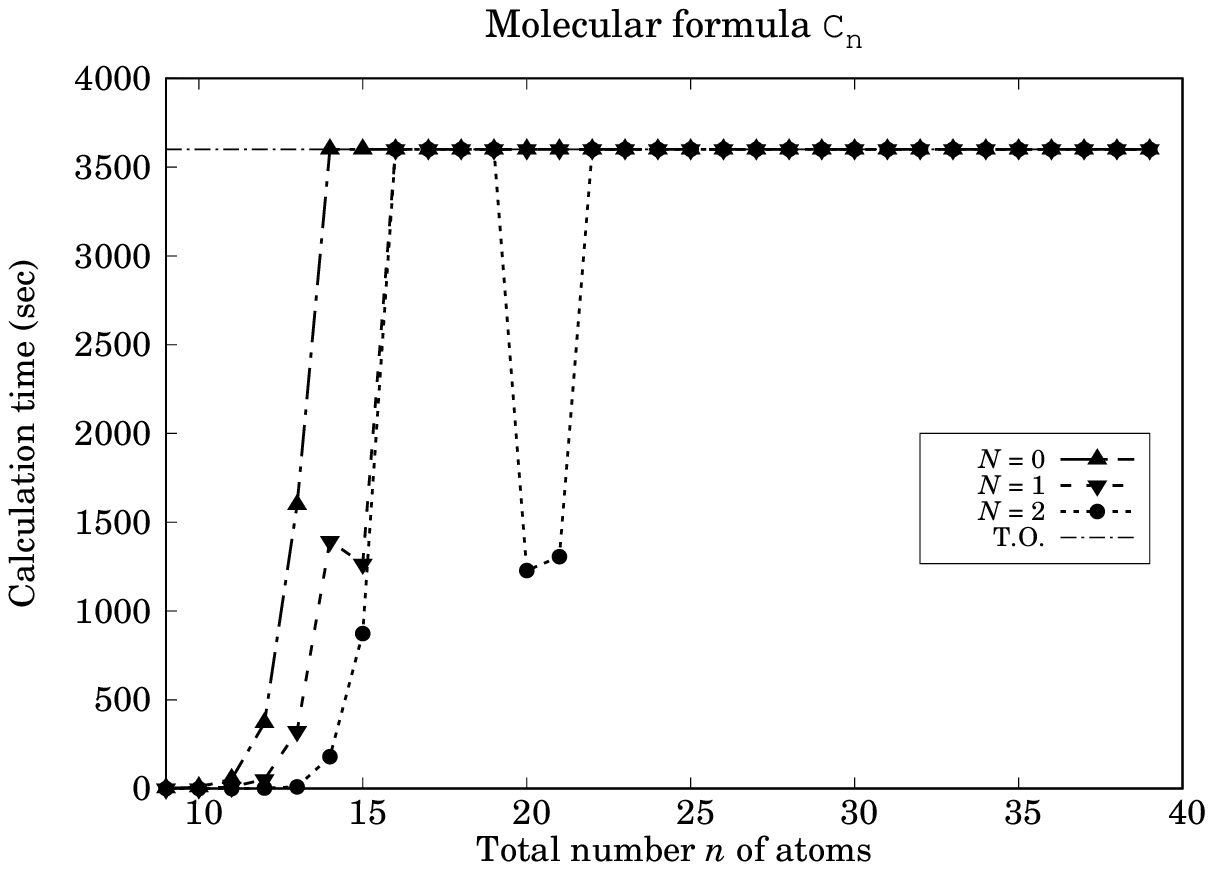}\\
   {\footnotesize (a)}\\
  \end{minipage}
\hfill
  \begin{minipage}{0.45\textwidth}
   \centering
   \includegraphics[width=1.1\textwidth]{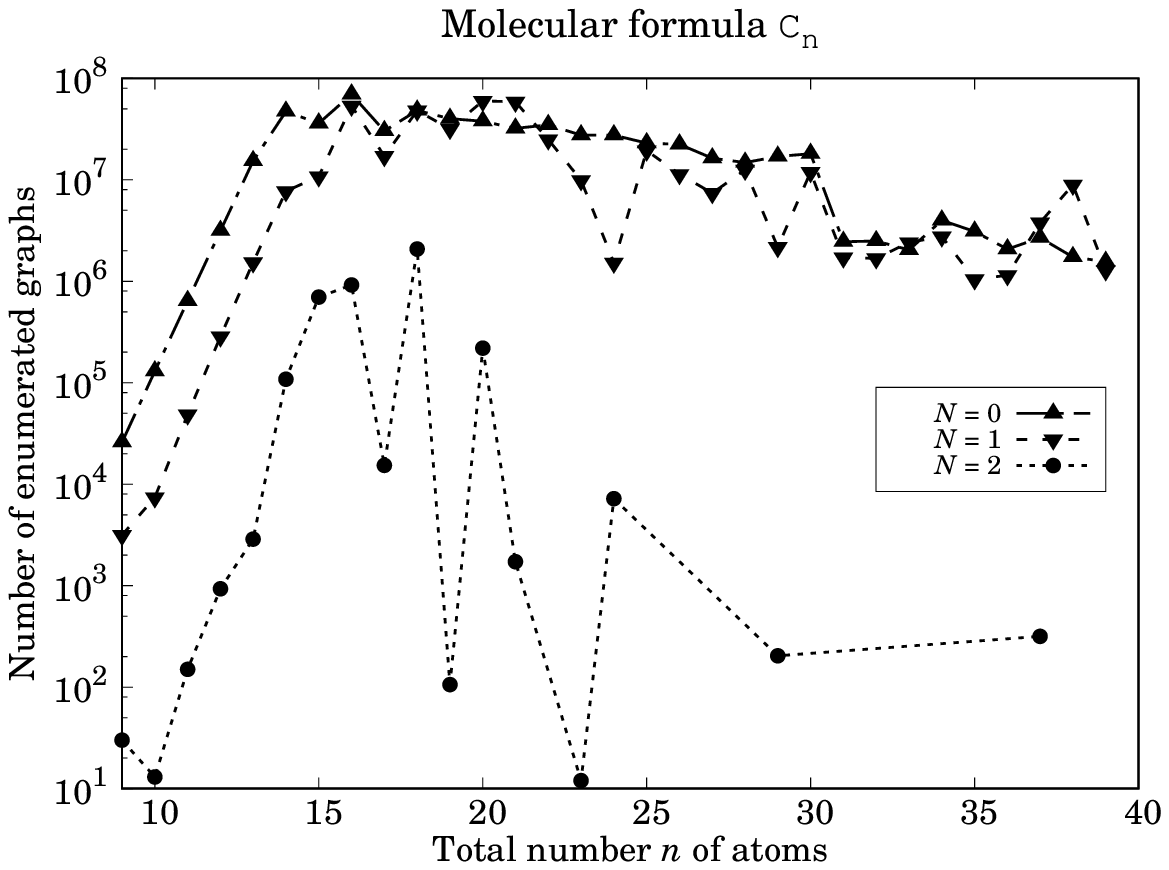}\\
   {\footnotesize (c)}\\
  \end{minipage} 
  \medskip

  \begin{minipage}{0.45\textwidth}
   \centering
      \includegraphics[width=1.1\textwidth]{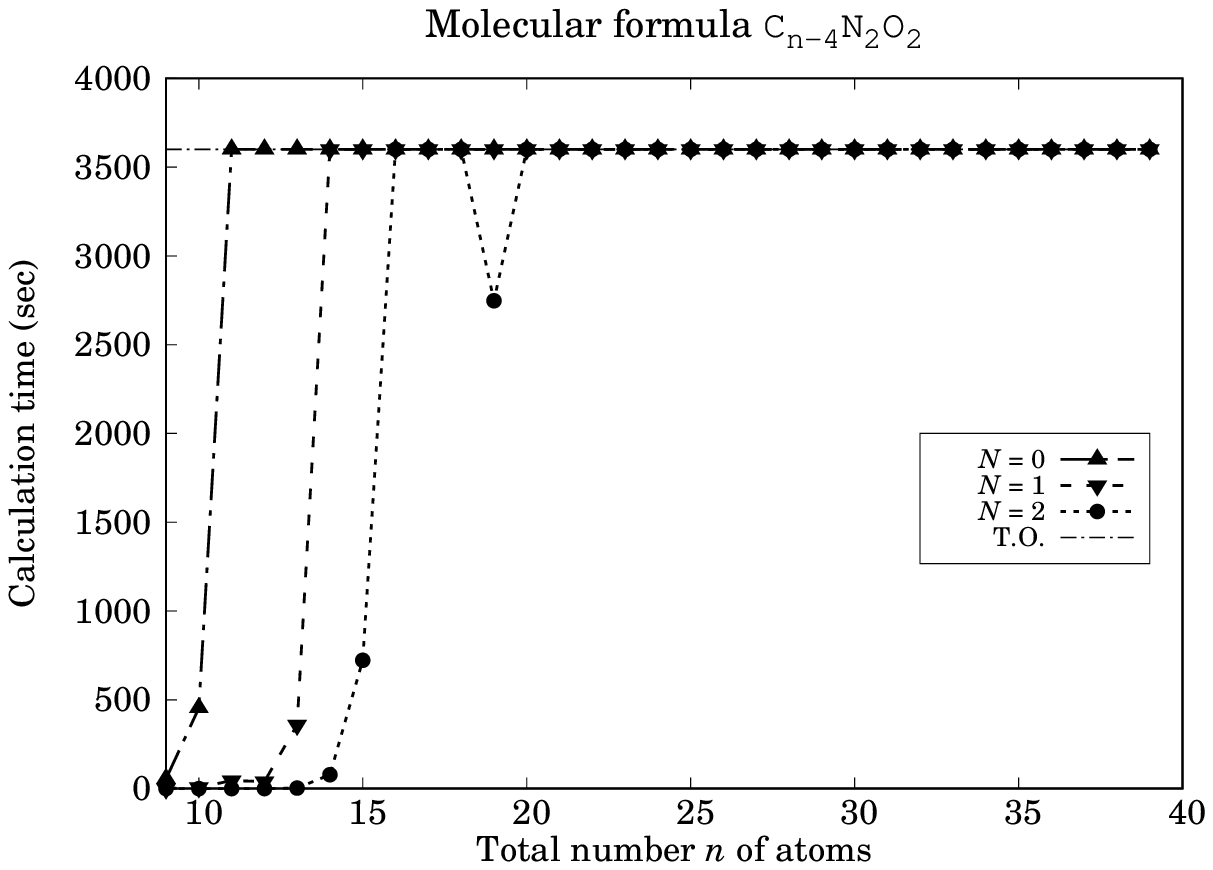}\\
      {\footnotesize (b)}\\
  \end{minipage} 
\hfill
  \begin{minipage}{0.45\textwidth}
   \centering
    \includegraphics[width=1.1\textwidth]{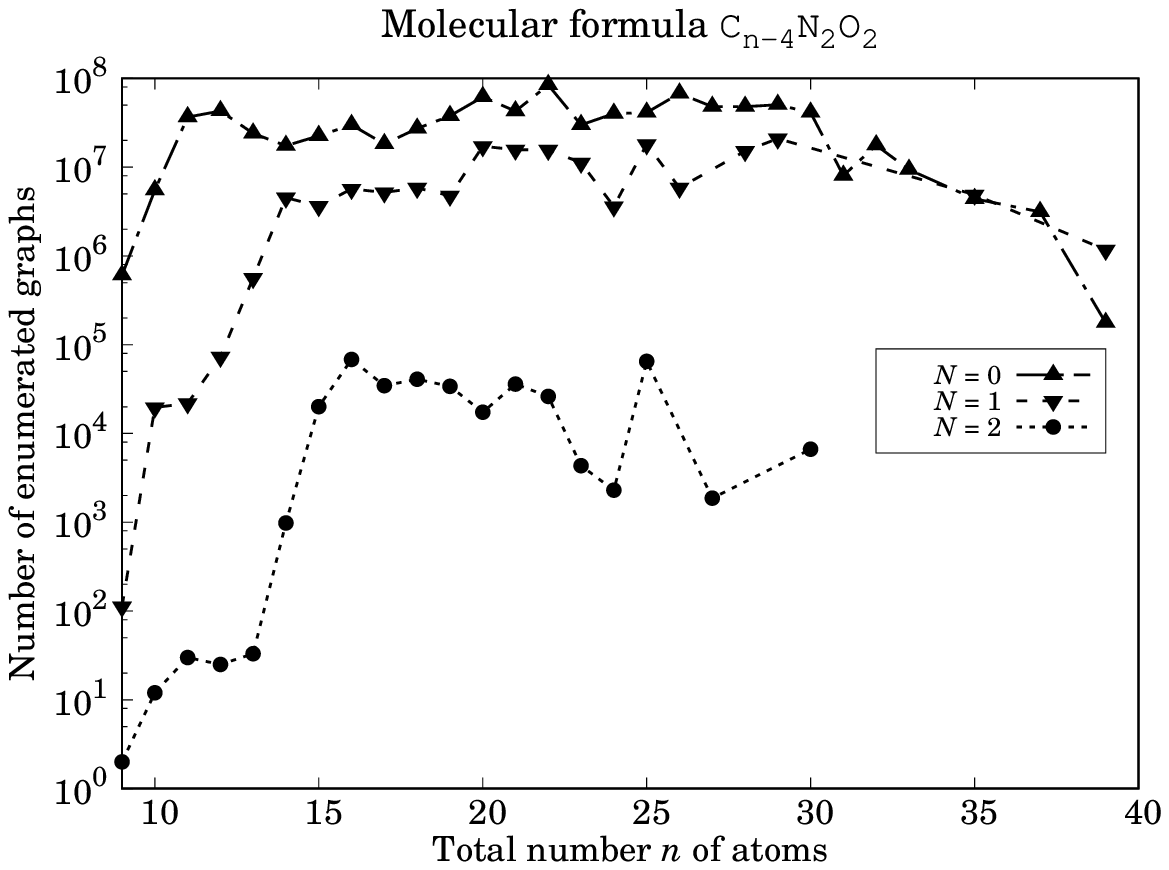}\\
    {\footnotesize (d)}\\
  \end{minipage} 
  \medskip
  
  \vspace{1cm}
  
  \caption{
    Plots showing the computation time 
    and number of chemical graphs enumerated by our algorithm
    for instance type EULF-$L$-A, over ranges for $N \in [0, 2]$, $L = N$,
    $d = 3$ and $s = 0$.
    (a), (b)~Running time, T.O. stands for ``Time Out'';
    (c), (d)~Number of enumerated chemical graphs.
  }
 \label{fig:result_graphs_num_limit}
 \end{figure}

 \subsection{Experimental Results on EULF-$L$-P}
 \label{sec:experiments_EULF2}

We conduct similar computational 
experiments to test the performance of
our algorithm for Problem EULF-$L$-P
as in Section~\ref{sec:experiments_EULF1}.
We took values for $N \in [8, 10]$,
and $L \in \{ 2, 3 \}$.

The results from our experiments 
for instance type EULF-$L$-P are summarized in 
Figs.~\ref{fig:result_graphs_1.2} to~\ref{fig:result_graphs_6.2}.
Our results for instance type EULF-$L$-P 
indicate that there exist very few chemical graphs that
satisfy the path frequency specification for
our choice of a set $\pathset$ of colored paths
obtained from the six compounds from the PubChem database, and parameter~$L$.
In fact, the only two instances where our algorithm enumerates
any chemical graphs are for two of our chosen compounds; 
with CID 301729, molecular formula {\tt C$_9$N$_1$O$_3$} and bond multiplicity at most~2
(Fig.~\ref{fig:result_graphs_1.2}\,(c) and~(d)),
and
the compound with CID 10103630, molecular formula {\tt C$_9$N$_3$O$_1$}
and bond multiplicity at most~3
(Fig.~\ref{fig:result_graphs_6.2}\,(c) and~(d)).
This could be due to the very nature of mono-block 2-augmented structures,
namely, due to the biconnectedness of a block, a single path frequency specification
exhibits a strong influence on the structure of a chemical graph.

In addition, we observe that the running time of our algorithm, even when there are no
enumerated chemical graphs, grows rapidly with the value of the parameter~$L$.
It is an interesting idea for future research to 
improve our algorithm in such a way that the non-existence of any
chemical graphs that satisfy a given path frequency specification is determined 
much quicker.

  \begin{figure}[!ht]
  \begin{minipage}{0.45\textwidth}
   \centering
   \includegraphics[width=1.1\textwidth]{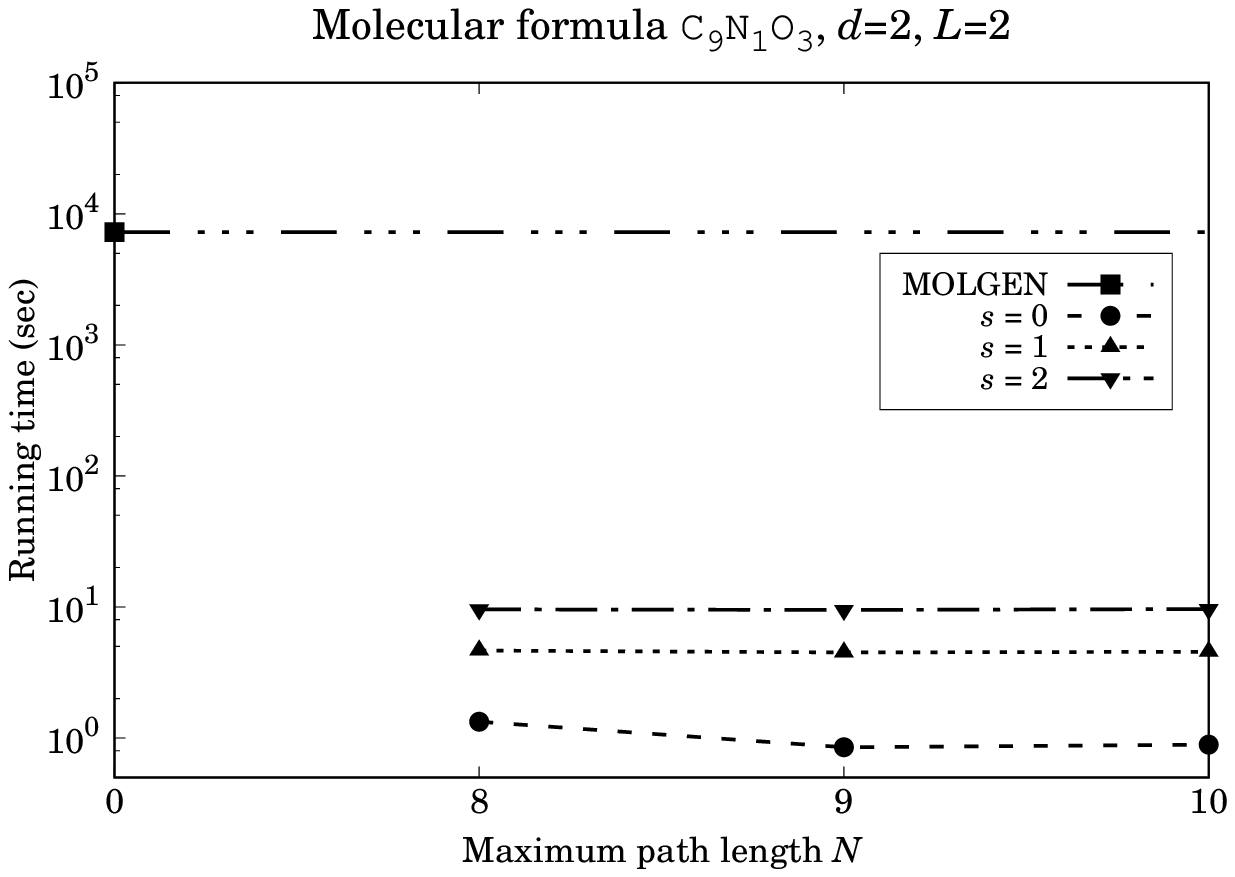}\\
   {\footnotesize (a)}\\
  \end{minipage}
\hfill
  \begin{minipage}{0.45\textwidth}
   \centering
   \includegraphics[width=1.1\textwidth]{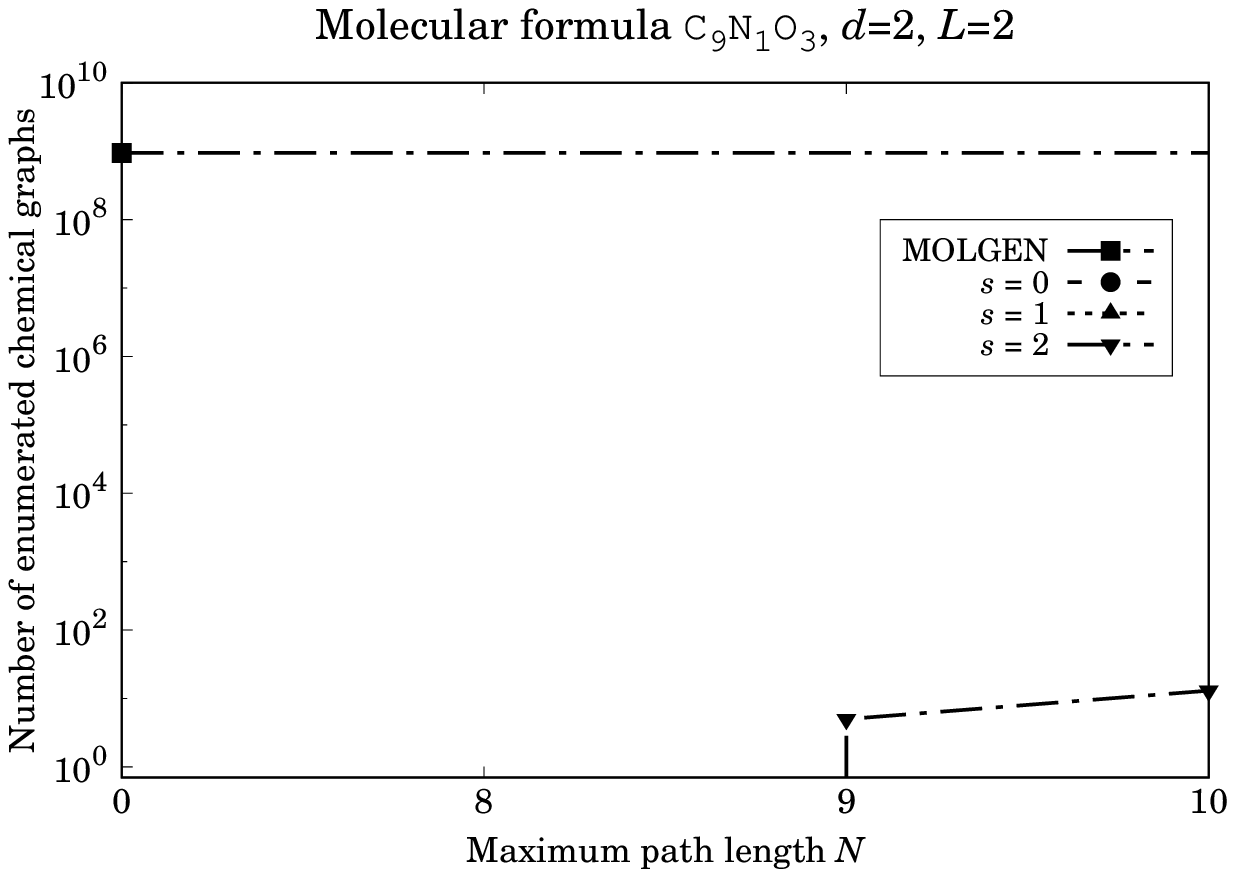}\\
   {\footnotesize (c)}\\
  \end{minipage} 
  \medskip

  \begin{minipage}{0.45\textwidth}
   \centering
      \includegraphics[width=1.1\textwidth]{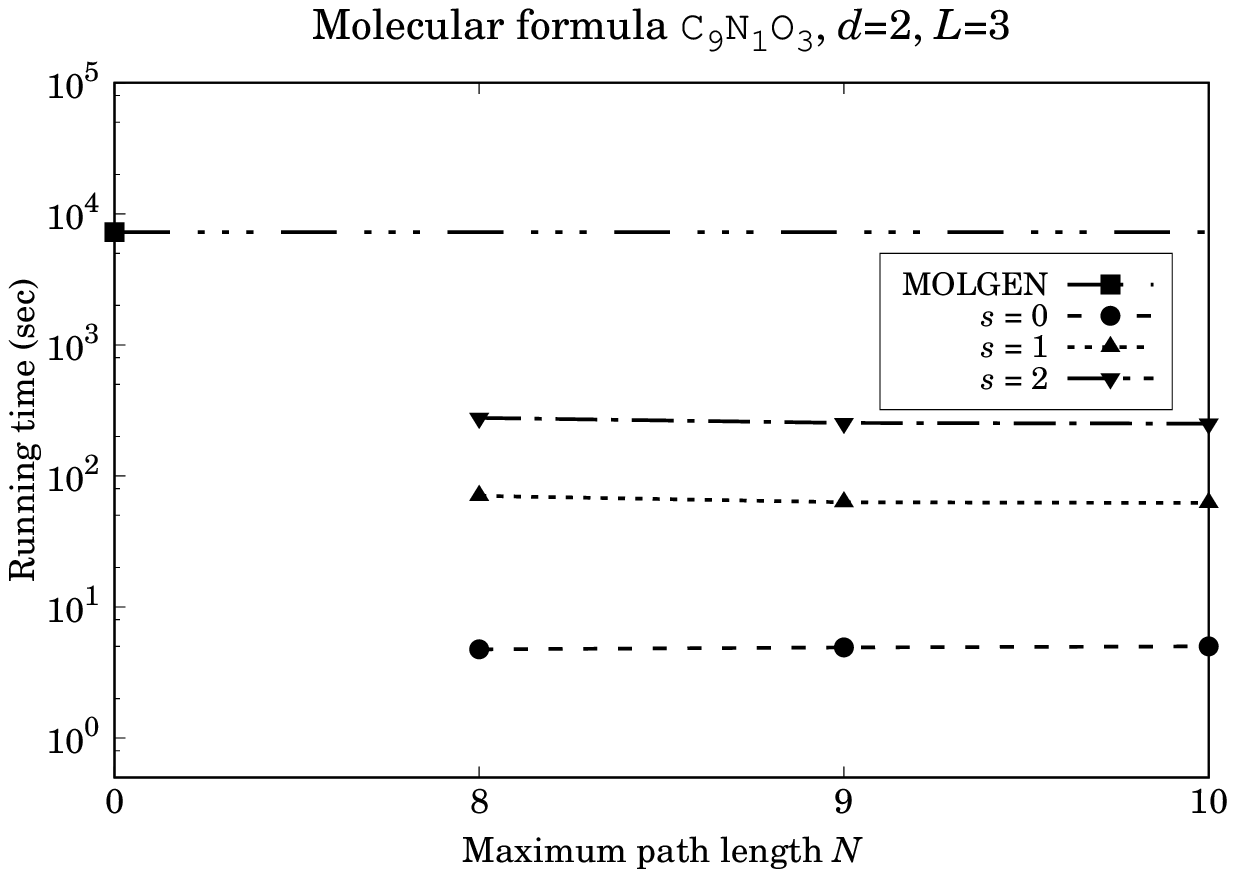}\\
      {\footnotesize (b)}\\
  \end{minipage} 
\hfill
  \begin{minipage}{0.45\textwidth}
   \centering
    \includegraphics[width=1.1\textwidth]{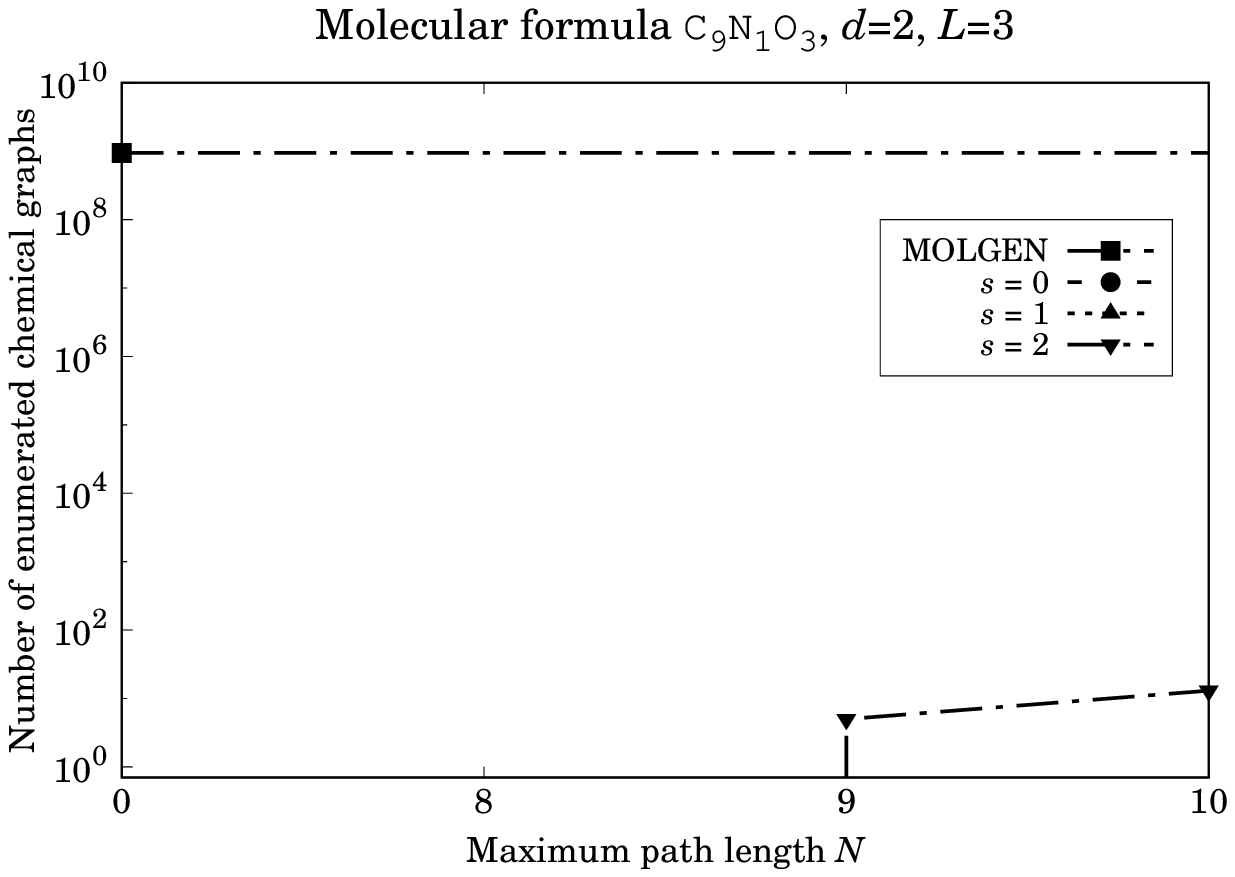}\\
    {\footnotesize (d)}\\
  \end{minipage} 
  \medskip
  
  \vspace{1cm}
  
  \caption{
    Plots showing the computation time 
    and number of chemical graphs enumerated by our algorithm
    for instance type EULF-$L$-P, as compared to MOLGEN.
    The sample structure from PubChem is with CID~301729,
    molecular formula {\tt C$_9$N$_1$O$_3$},
    and maximum bond multiplicity~$d=2$.
    (a), (b)~Running time;
    (c), (d)~Number of enumerated chemical graphs.
  }
 \label{fig:result_graphs_1.2}
 \end{figure}

  \begin{figure}[!ht]
  \begin{minipage}{0.45\textwidth}
   \centering
   \includegraphics[width=1.1\textwidth]{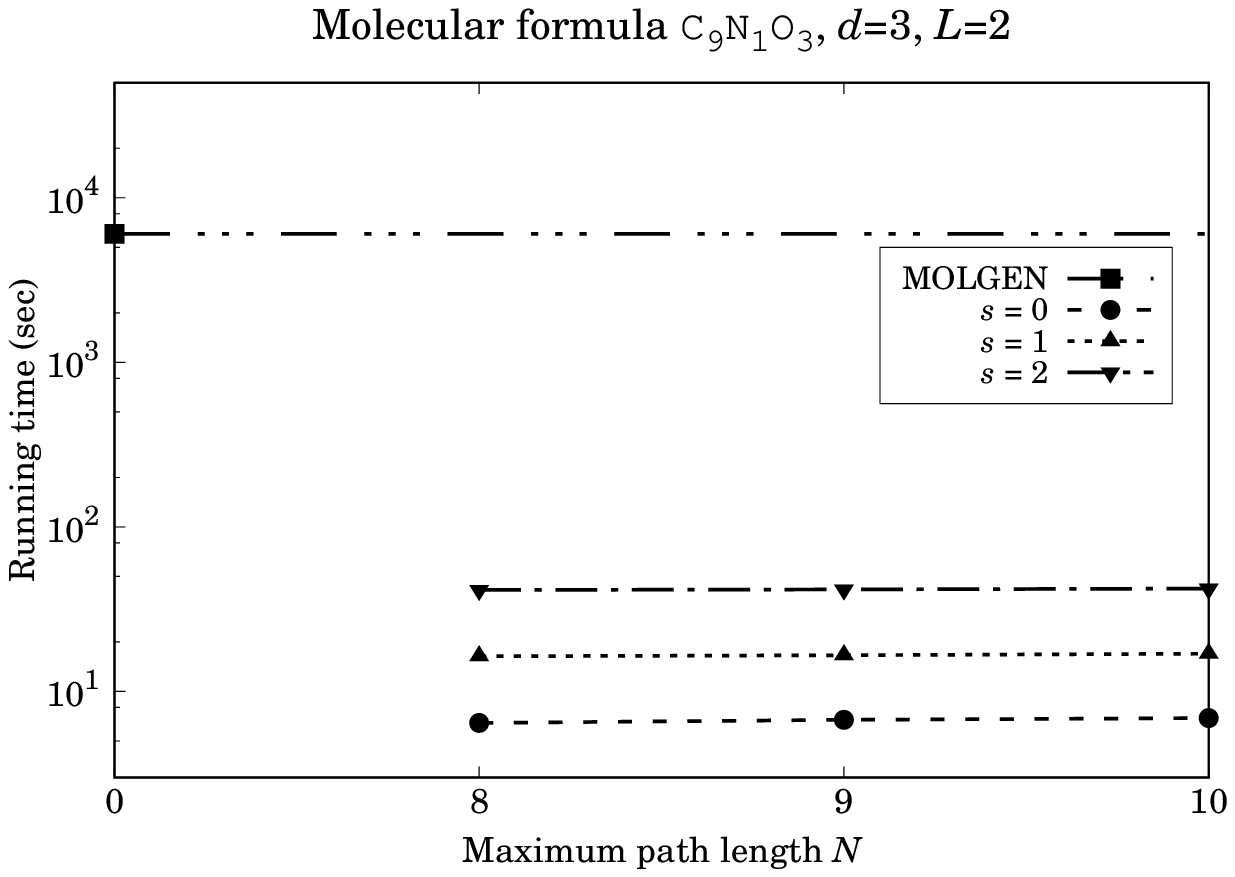}\\
   {\footnotesize (a)}\\
  \end{minipage}
\hfill
  \begin{minipage}{0.45\textwidth}
   \centering
   \includegraphics[width=1.1\textwidth]{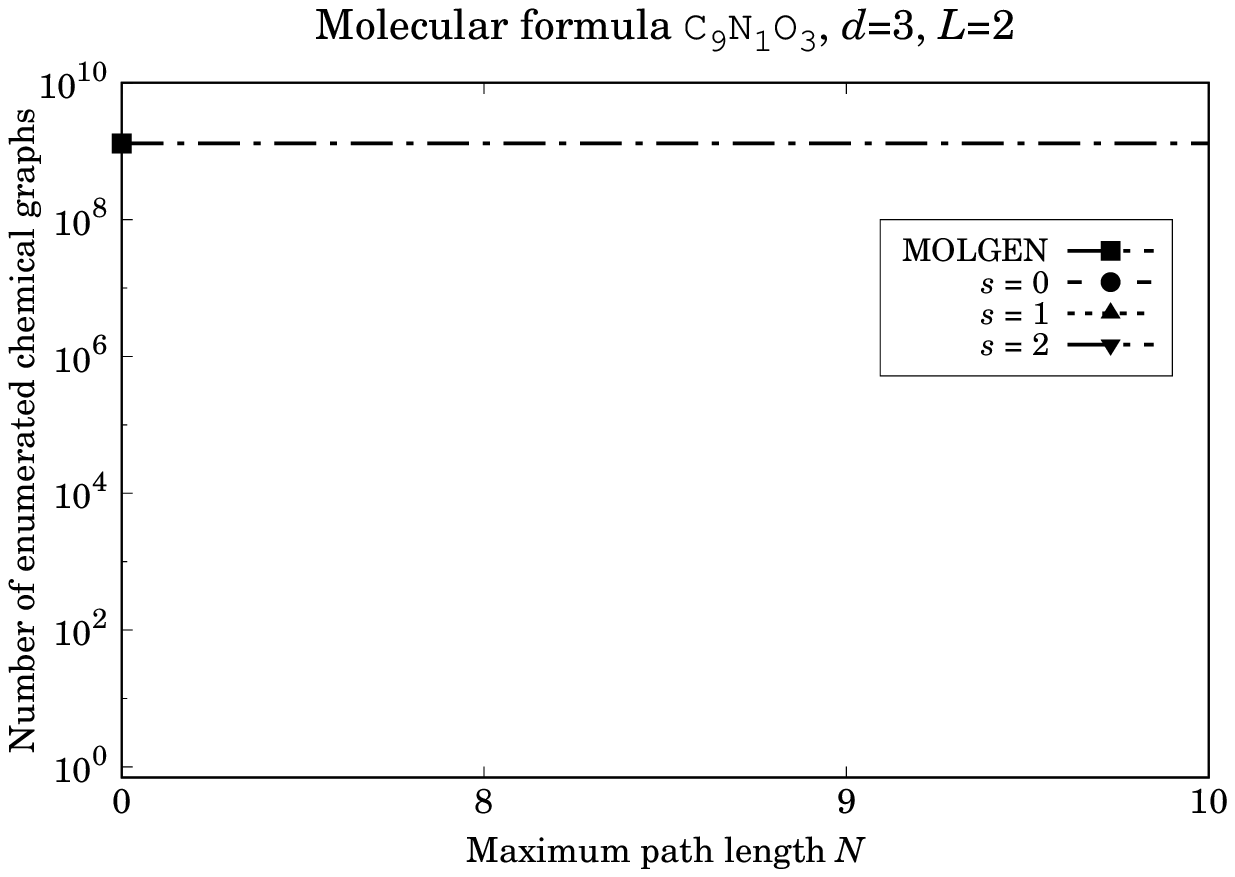}\\
   {\footnotesize (c)}\\
  \end{minipage} 
  \medskip

  \begin{minipage}{0.45\textwidth}
   \centering
      \includegraphics[width=1.1\textwidth]{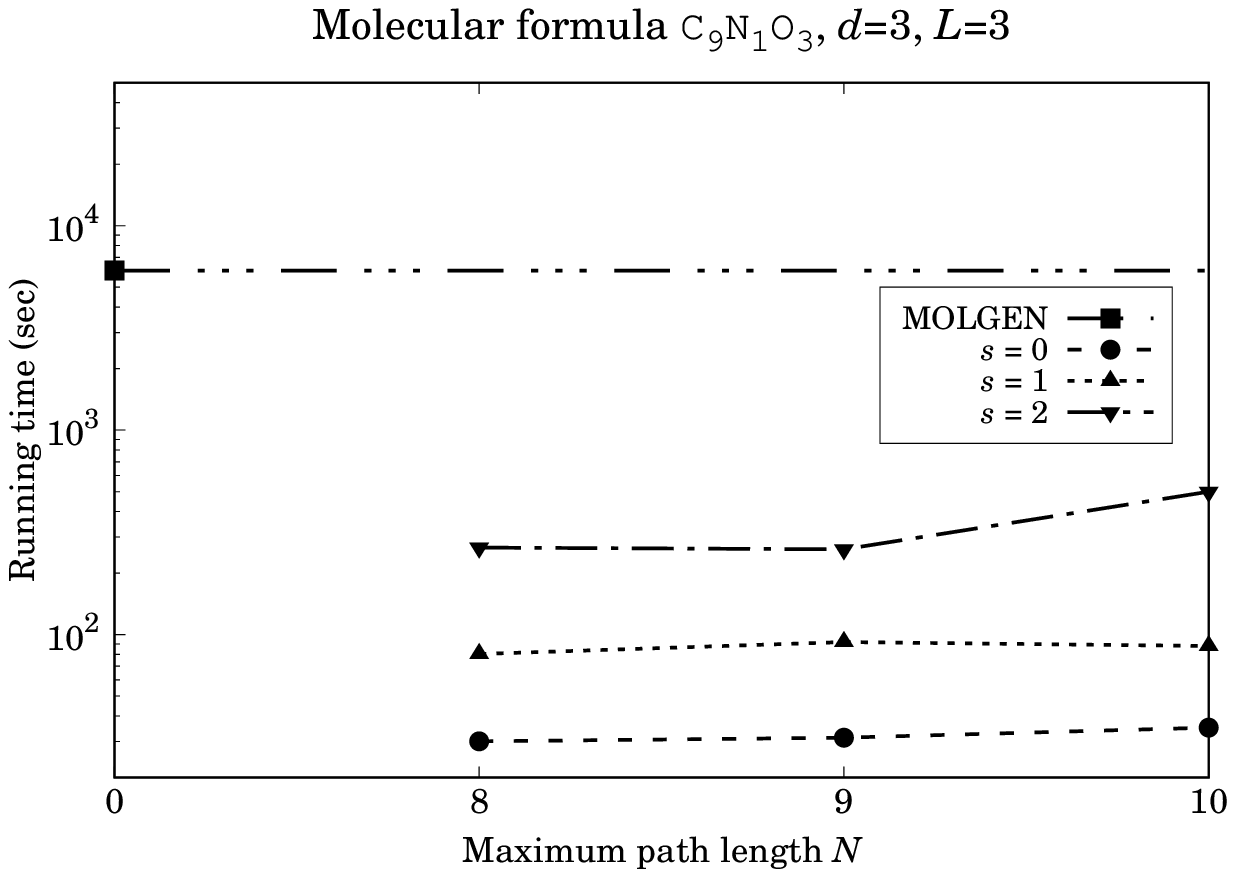}\\
      {\footnotesize (b)}\\
  \end{minipage} 
\hfill
  \begin{minipage}{0.45\textwidth}
   \centering
    \includegraphics[width=1.1\textwidth]{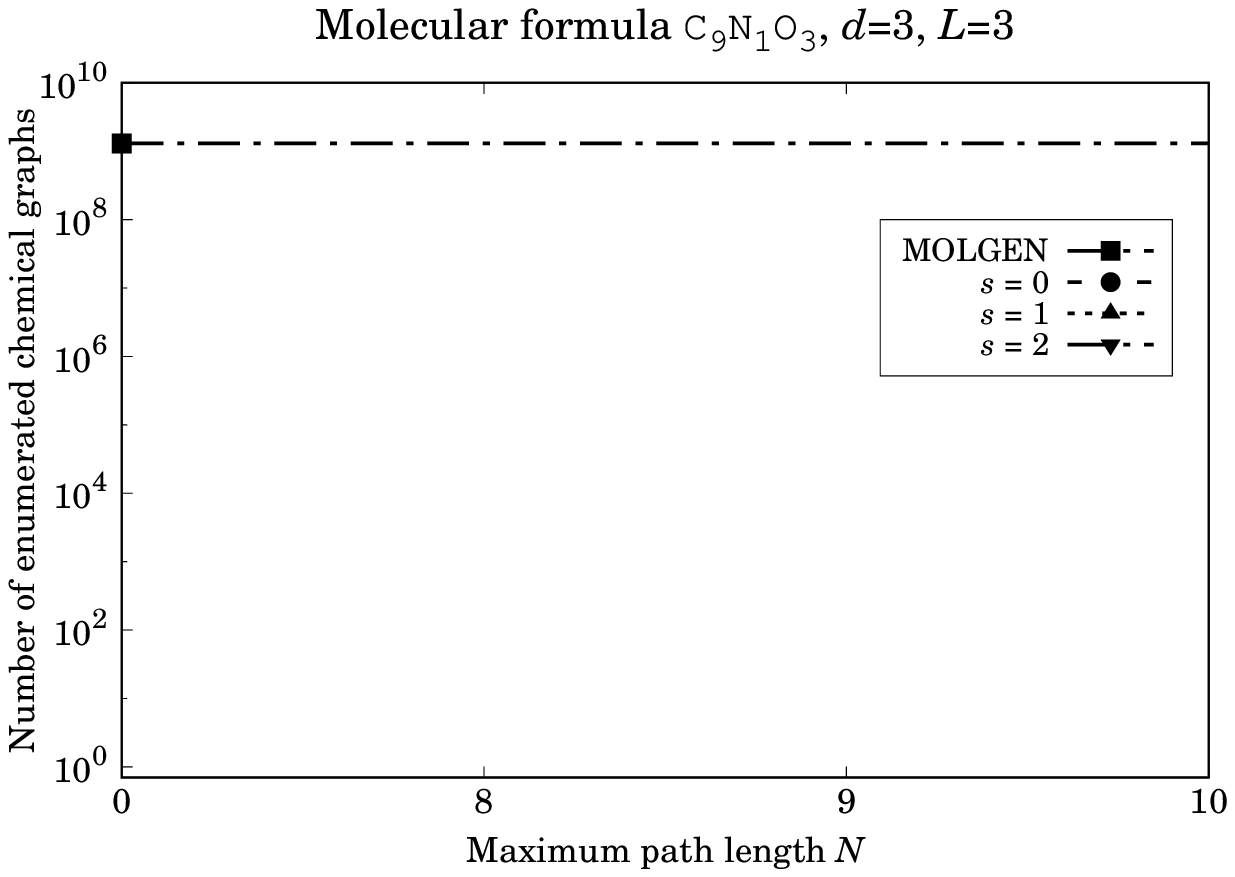}\\
    {\footnotesize (d)}\\
  \end{minipage} 
  \medskip
  
  \vspace{1cm}
  
  \caption{
    Plots showing the computation time 
    and number of chemical graphs enumerated by our algorithm
    for instance type EULF-$L$-P, as compared to MOLGEN.
    The sample structure from PubChem is with CID~57320502,
    molecular formula {\tt C$_9$N$_1$O$_3$},
    and maximum bond multiplicity~$d=3$.
    (a), (b)~Running time;
    (c), (d)~Number of enumerated chemical graphs
    (our algorithm detects that there are no chemical graphs that
    satisfy the given path frequency specification).
  }
 \label{fig:result_graphs_2.2}
 \end{figure}

  \begin{figure}[!ht]
  \begin{minipage}{0.45\textwidth}
   \centering
   \includegraphics[width=1.1\textwidth]{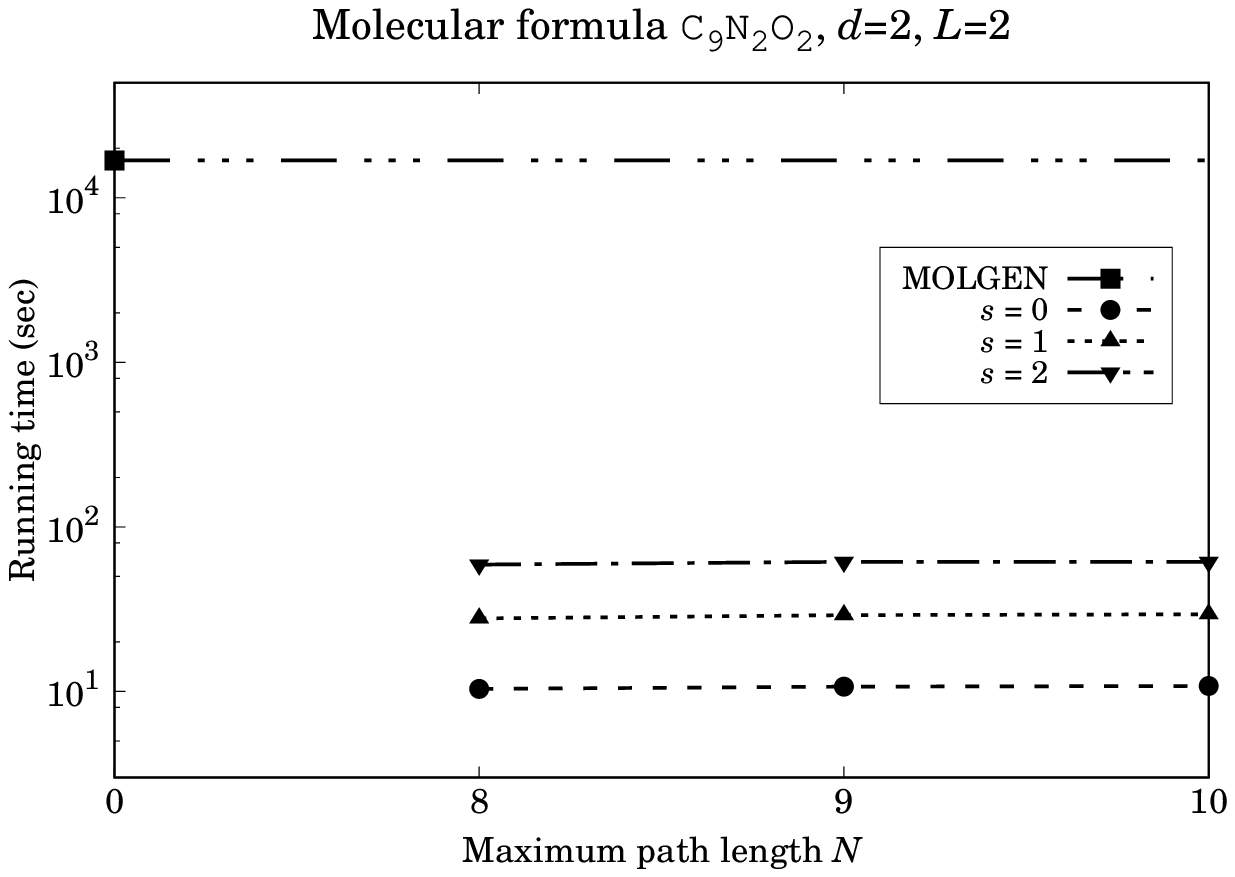}\\
   {\footnotesize (a)}\\
  \end{minipage}
\hfill
  \begin{minipage}{0.45\textwidth}
   \centering
   \includegraphics[width=1.1\textwidth]{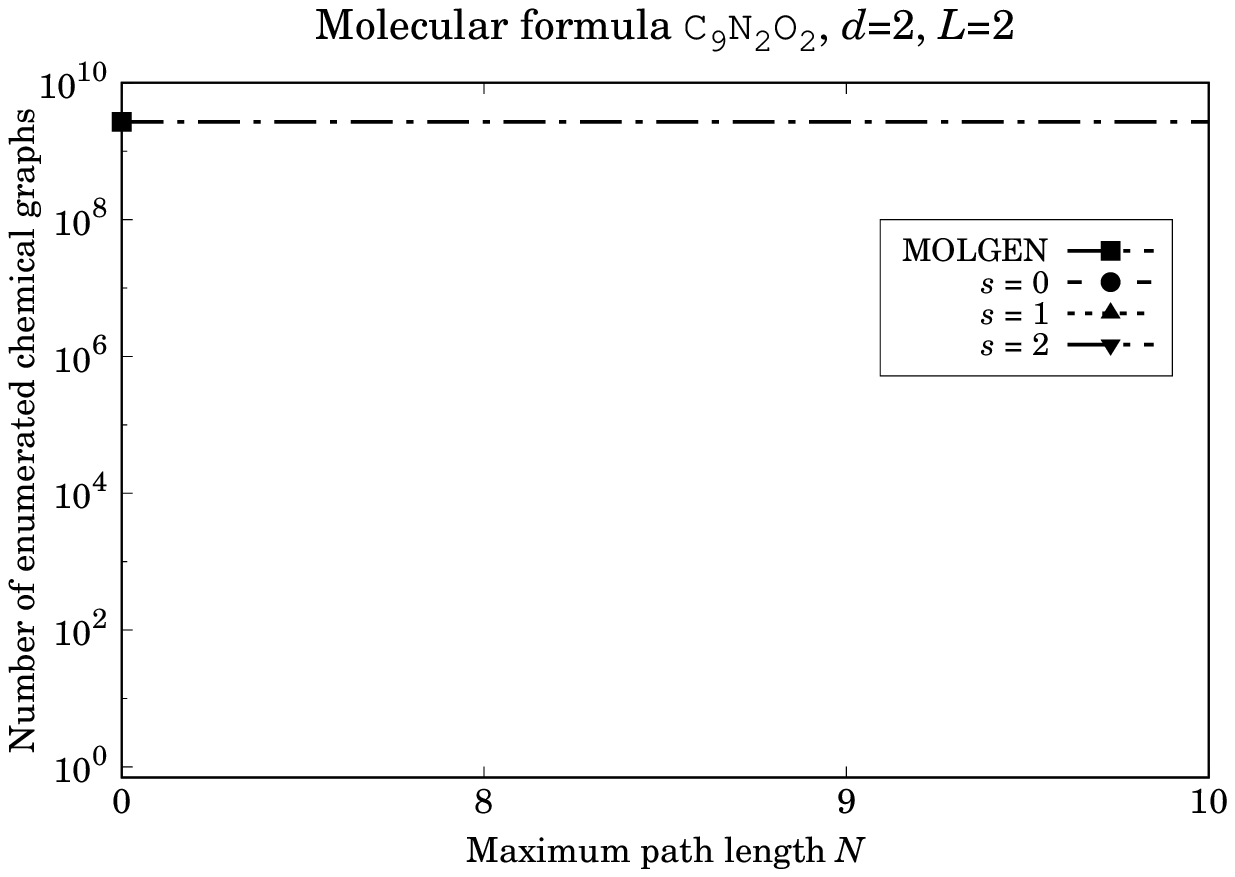}\\
   {\footnotesize (c)}\\
  \end{minipage} 
  \medskip

  \begin{minipage}{0.45\textwidth}
   \centering
      \includegraphics[width=1.1\textwidth]{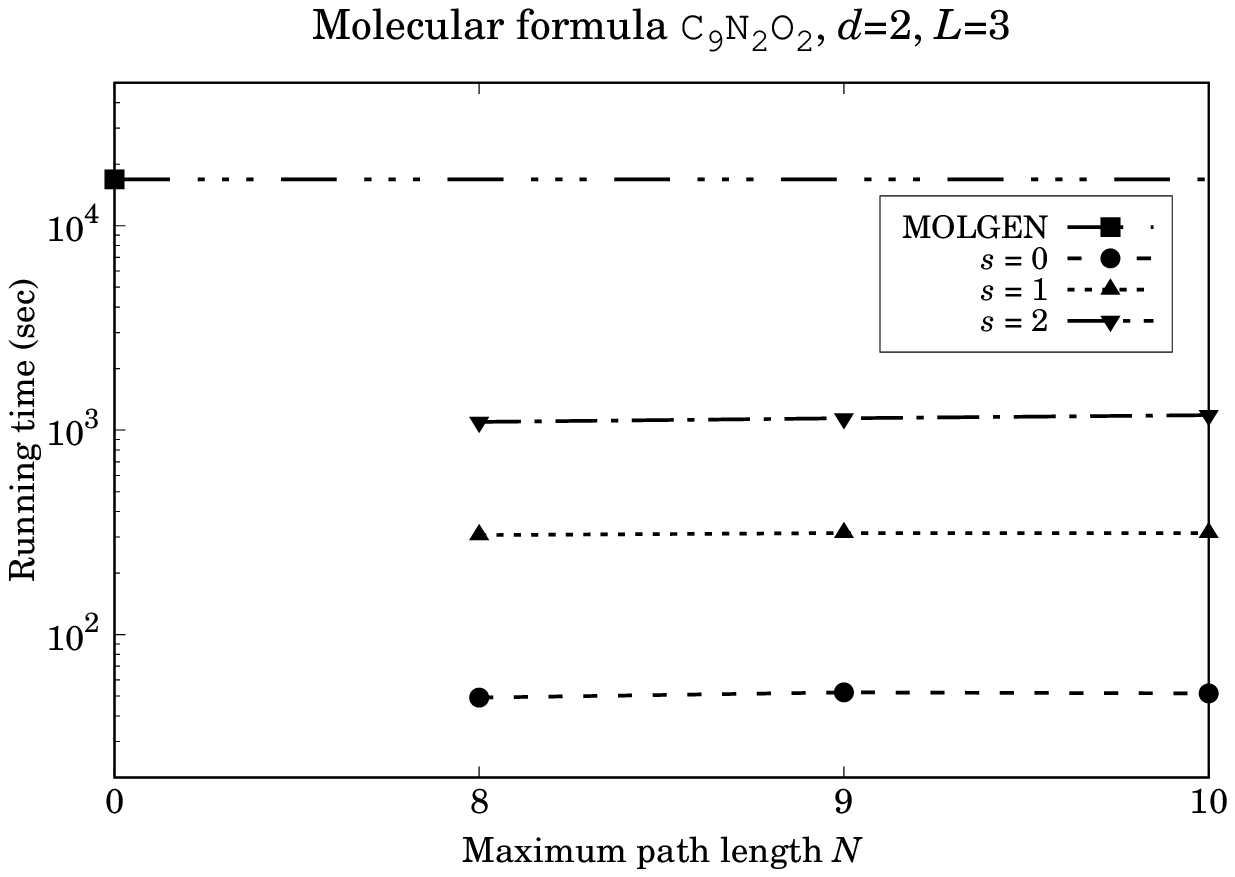}\\
      {\footnotesize (b)}\\
  \end{minipage} 
\hfill
  \begin{minipage}{0.45\textwidth}
   \centering
    \includegraphics[width=1.1\textwidth]{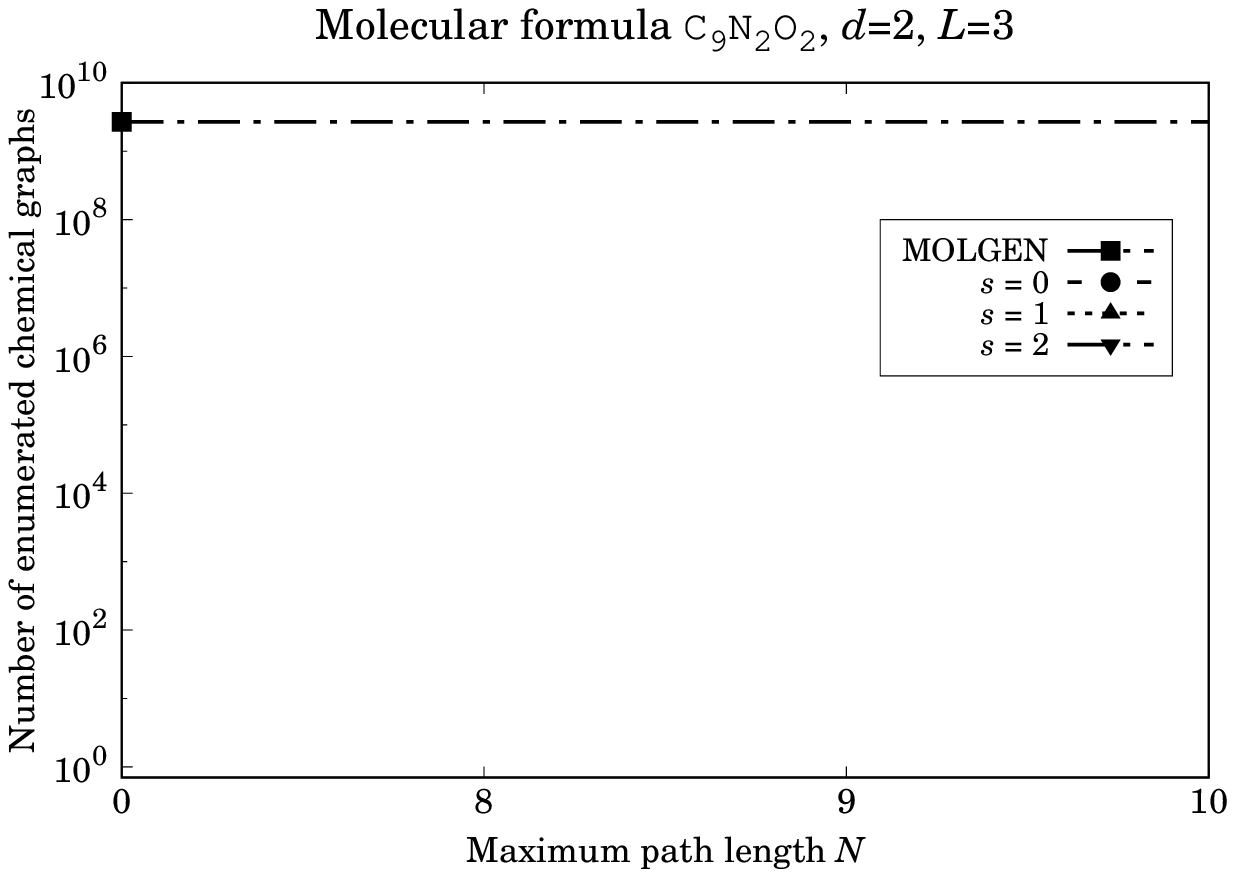}\\
    {\footnotesize (d)}\\
  \end{minipage} 
  \medskip
  
  \vspace{1cm}
  
  \caption{
    Plots showing the computation time 
    and number of chemical graphs enumerated by our algorithm
    for instance type EULF-$L$-P, as compared to MOLGEN.
    The sample structure from PubChem is with CID~6163405,
    molecular formula {\tt C$_9$N$_2$O$_2$},
    and maximum bond multiplicity~$d=2$.
    (a), (b)~Running time;
    (c), (d)~Number of enumerated chemical graphs
    (our algorithm detects that there are no chemical graphs that
    satisfy the given path frequency specification).
  }
 \label{fig:result_graphs_3.2}
 \end{figure}

  \begin{figure}[!ht]
  \begin{minipage}{0.45\textwidth}
   \centering
   \includegraphics[width=1.1\textwidth]{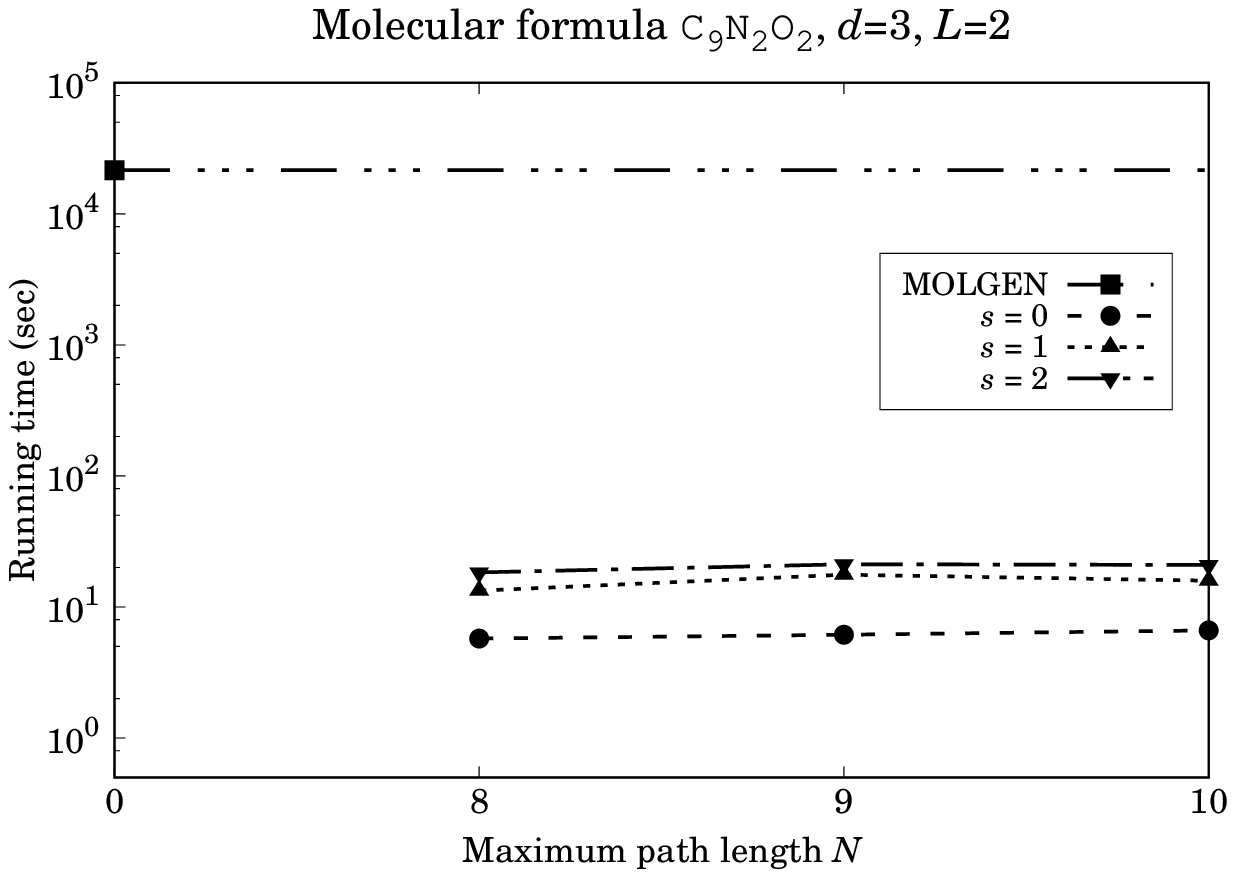}\\
   {\footnotesize (a)}\\
  \end{minipage}
\hfill
  \begin{minipage}{0.45\textwidth}
   \centering
   \includegraphics[width=1.1\textwidth]{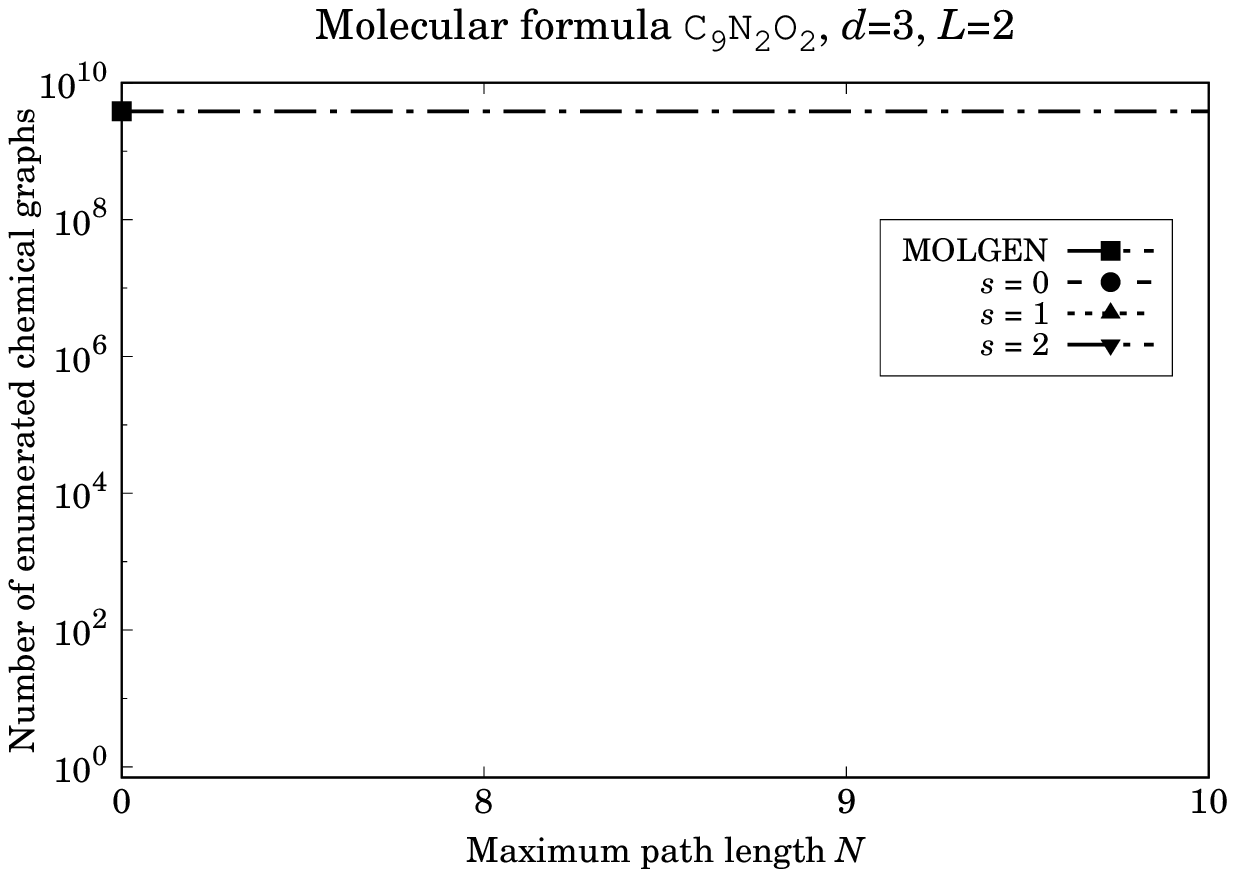}\\
   {\footnotesize (c)}\\
  \end{minipage} 
  \medskip

  \begin{minipage}{0.45\textwidth}
   \centering
      \includegraphics[width=1.1\textwidth]{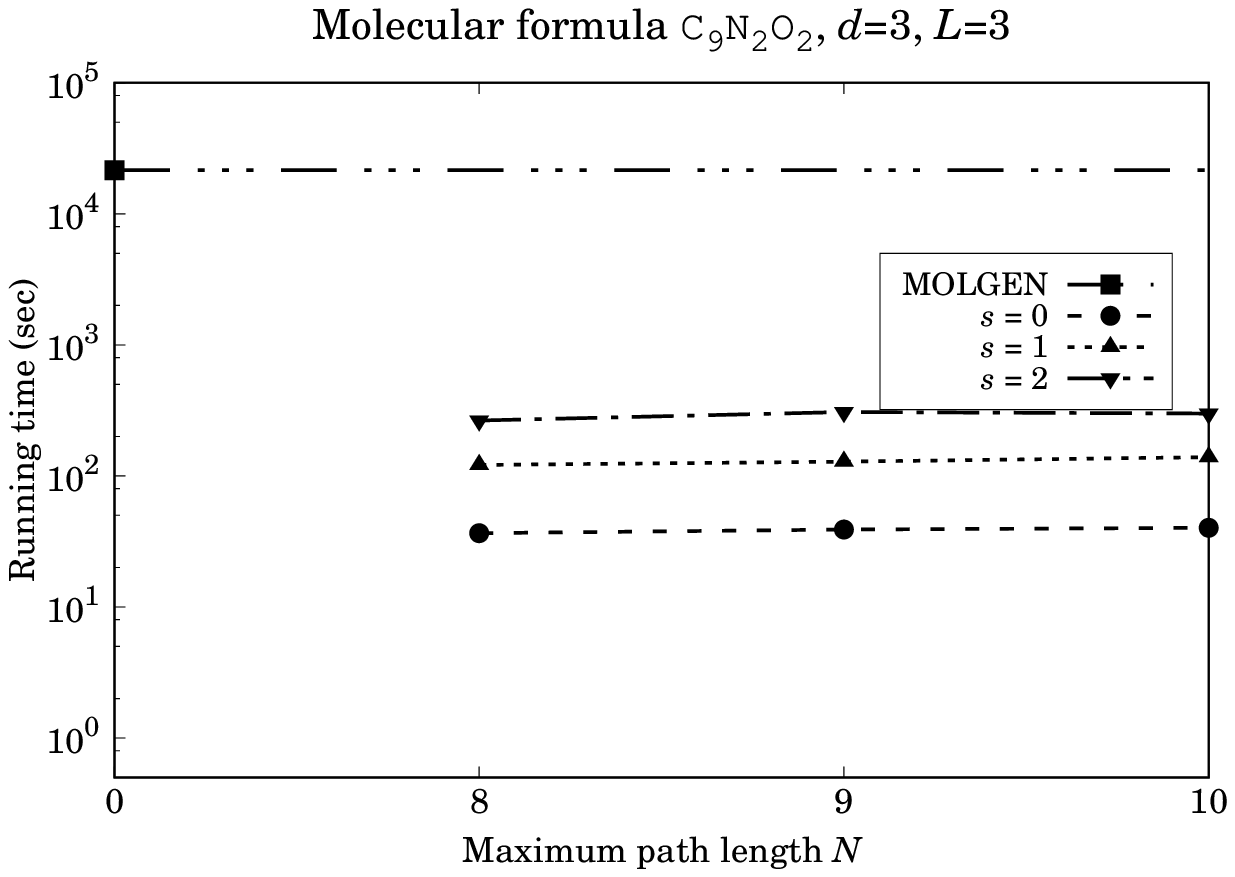}\\
      {\footnotesize (b)}\\
  \end{minipage} 
\hfill
  \begin{minipage}{0.45\textwidth}
   \centering
    \includegraphics[width=1.1\textwidth]{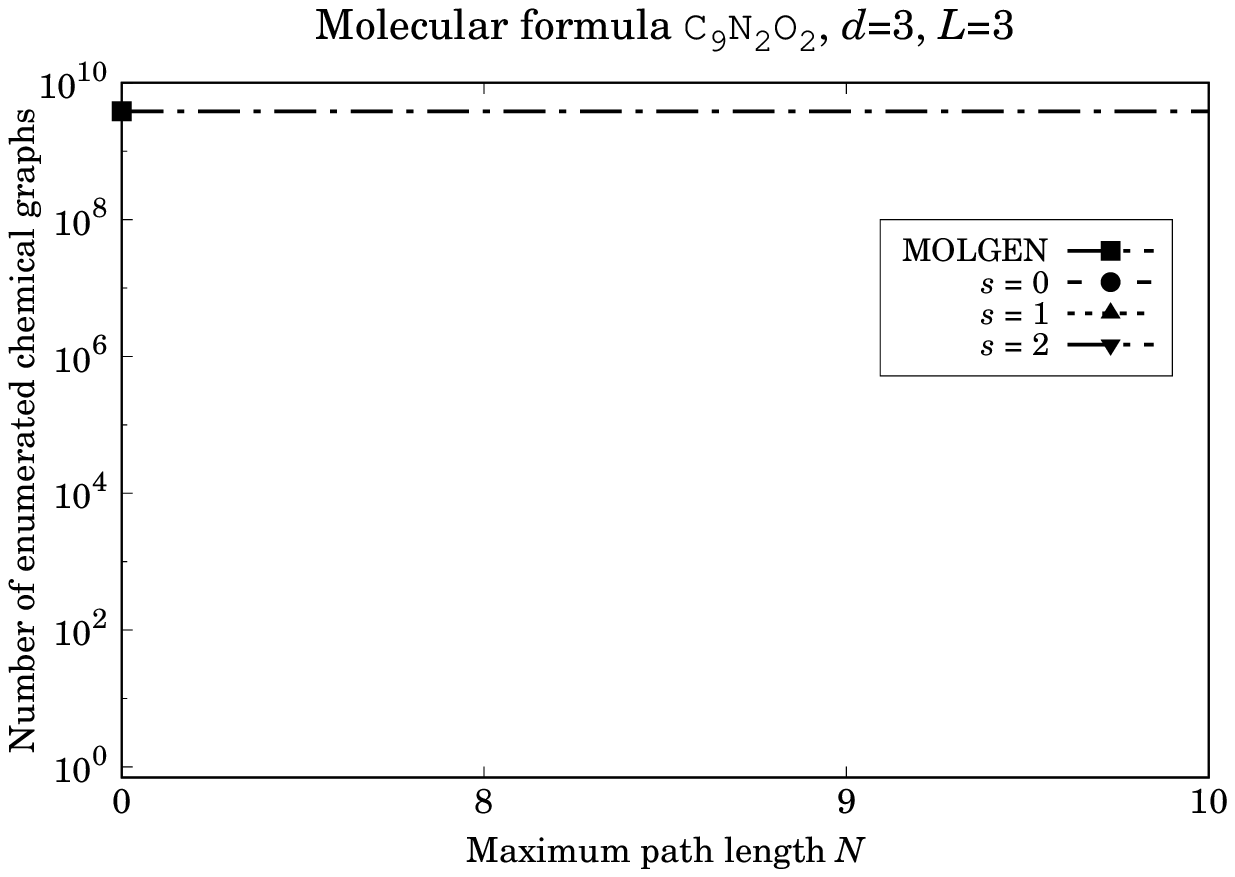}\\
    {\footnotesize (d)}\\
  \end{minipage} 
  \medskip
  
  \vspace{1cm}
  
  \caption{
    Plots showing the computation time 
    and number of chemical graphs enumerated by our algorithm
    for instance type EULF-$L$-P, as compared to MOLGEN.
    The sample structure from PubChem is with CID~131335510,
    molecular formula {\tt C$_9$N$_2$O$_2$},
    and maximum bond multiplicity~$d=3$.
    (a), (b)~Running time;
    (c), (d)~Number of enumerated chemical graphs
    (our algorithm detects that there are no chemical graphs that
    satisfy the given path frequency specification).
  }
 \label{fig:result_graphs_4.2}
 \end{figure}

  \begin{figure}[!ht]
  \begin{minipage}{0.45\textwidth}
   \centering
   \includegraphics[width=1.1\textwidth]{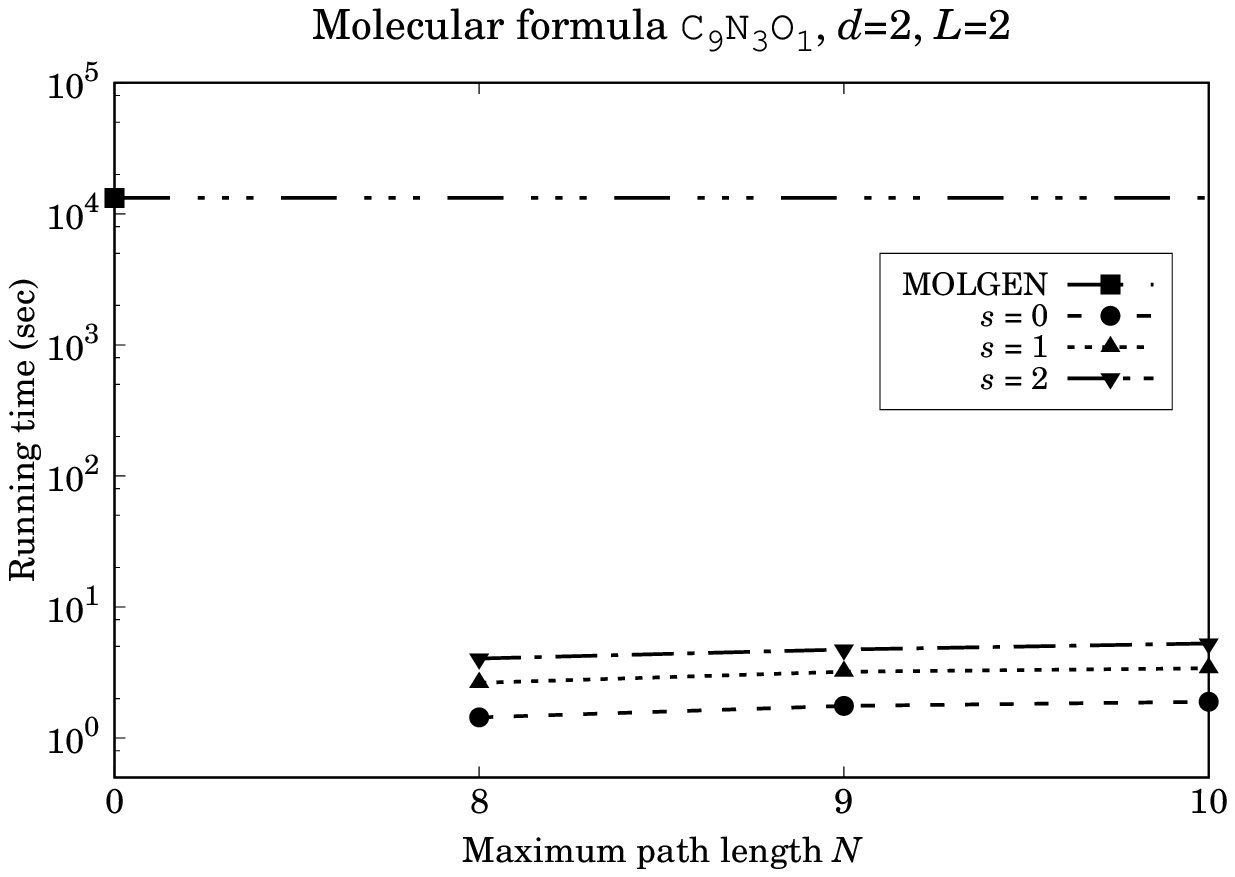}\\
   {\footnotesize (a)}\\
  \end{minipage}
\hfill
  \begin{minipage}{0.45\textwidth}
   \centering
   \includegraphics[width=1.1\textwidth]{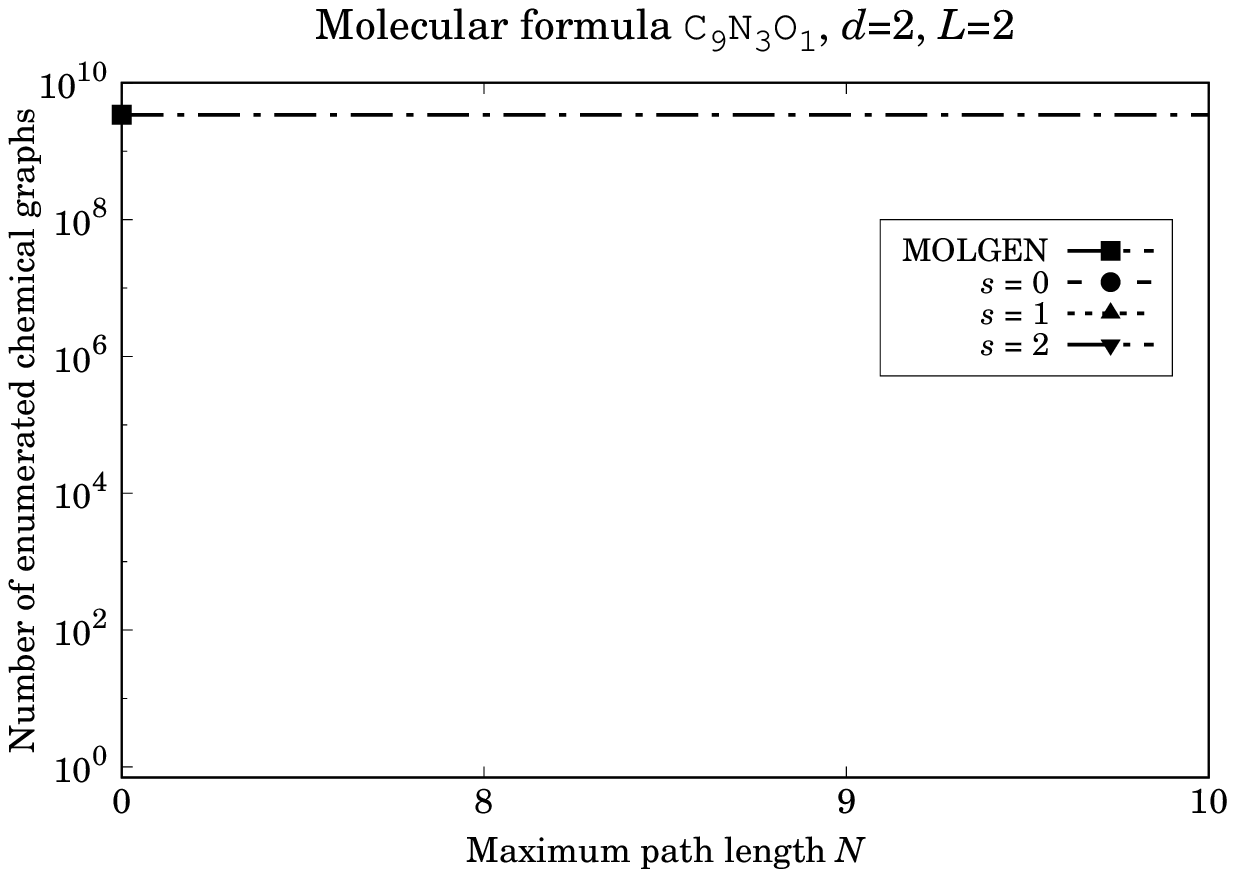}\\
   {\footnotesize (c)}\\
  \end{minipage} 
  \medskip

  \begin{minipage}{0.45\textwidth}
   \centering
      \includegraphics[width=1.1\textwidth]{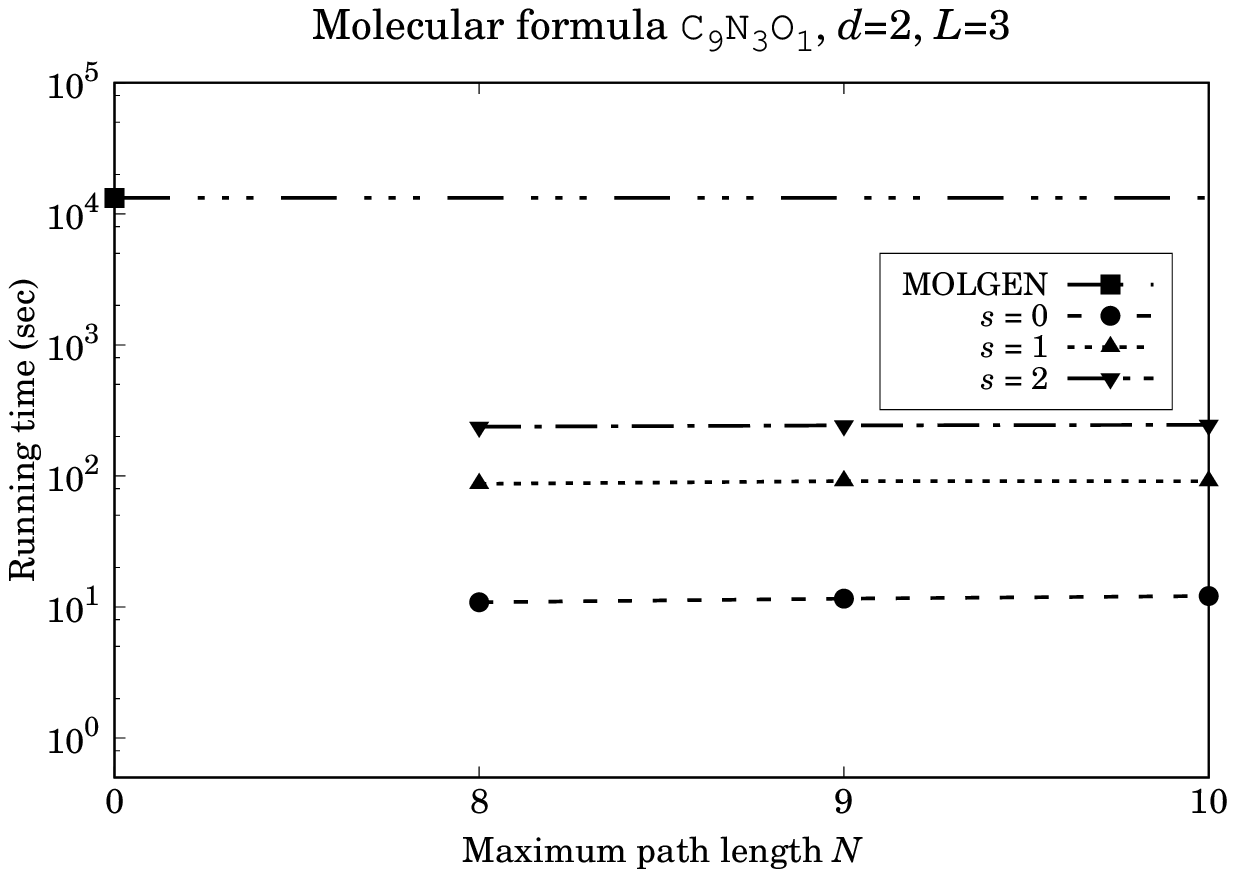}\\
      {\footnotesize (b)}\\
  \end{minipage} 
\hfill
  \begin{minipage}{0.45\textwidth}
   \centering
    \includegraphics[width=1.1\textwidth]{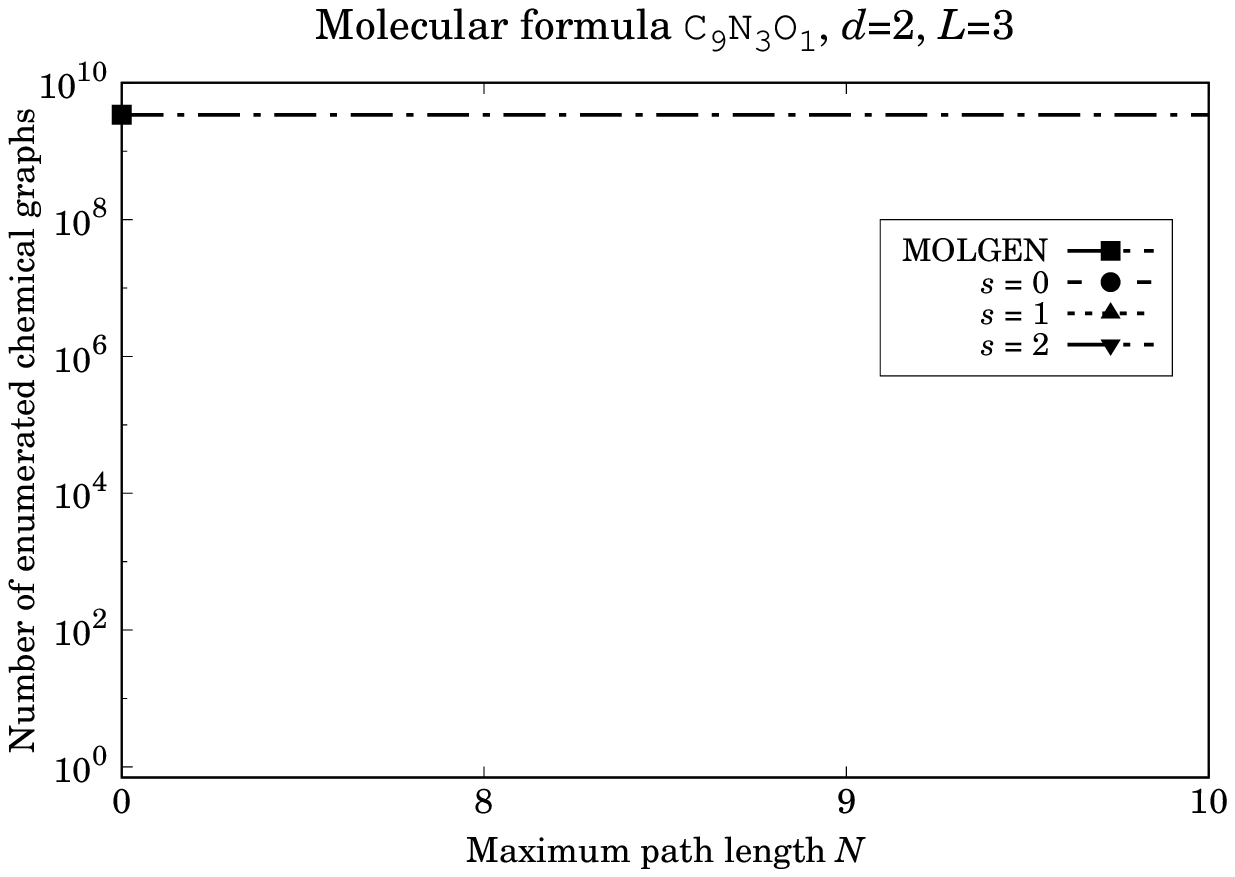}\\
    {\footnotesize (d)}\\
  \end{minipage} 
  \medskip
  
  \vspace{1cm}
  
  \caption{
    Plots showing the computation time 
    and number of chemical graphs enumerated by our algorithm
    for instance type EULF-$L$-P, as compared to MOLGEN.
    The sample structure from PubChem is with CID~9942278,
    molecular formula {\tt C$_9$N$_3$O$_1$},
    and maximum bond multiplicity~$d=2$.
    (a), (b)~Running time;
    (c), (d)~Number of enumerated chemical graphs
    (our algorithm detects that there are no chemical graphs that
    satisfy the given path frequency specification).
  }
 \label{fig:result_graphs_5.2}
 \end{figure}

  \begin{figure}[!ht]
  \begin{minipage}{0.45\textwidth}
   \centering
   \includegraphics[width=1.1\textwidth]{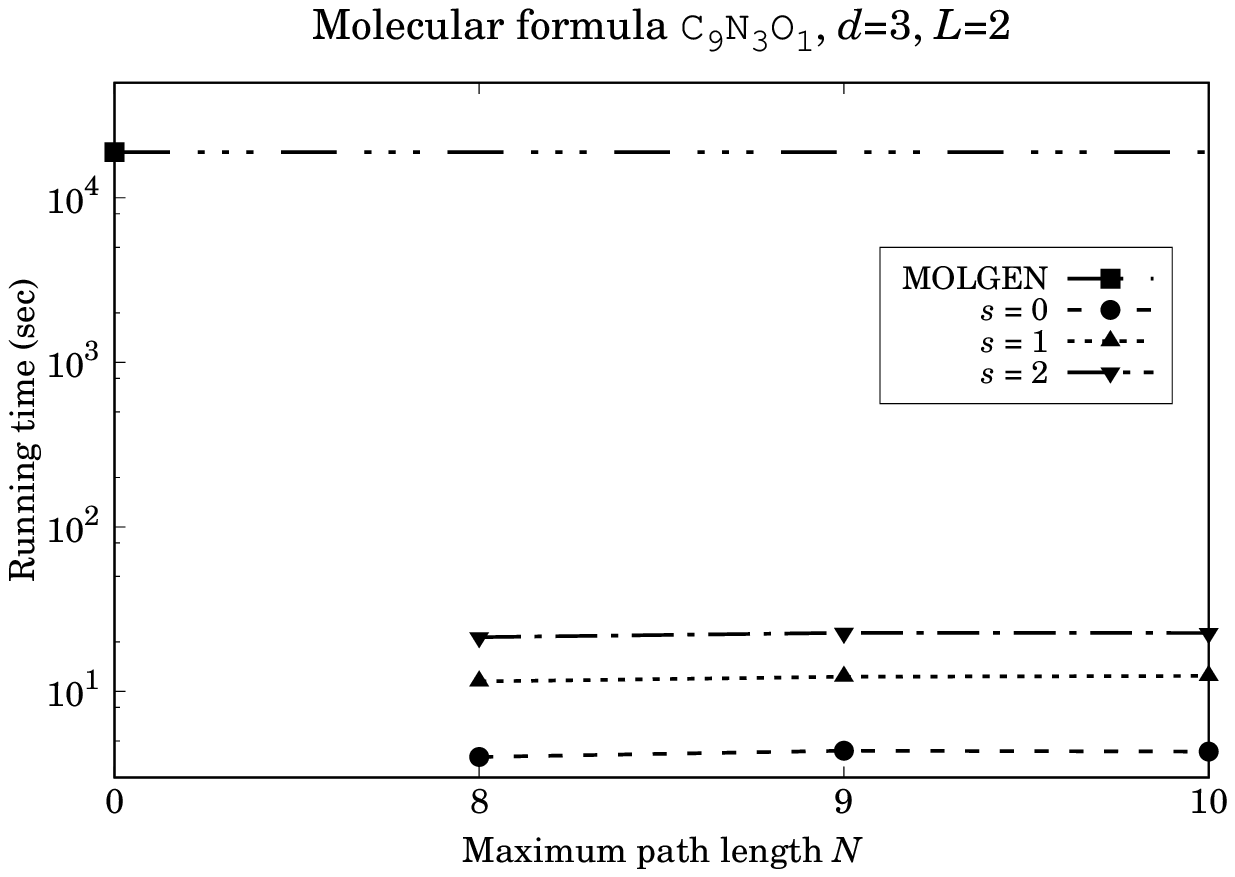}\\
   {\footnotesize (a)}\\
  \end{minipage}
\hfill
  \begin{minipage}{0.45\textwidth}
   \centering
   \includegraphics[width=1.1\textwidth]{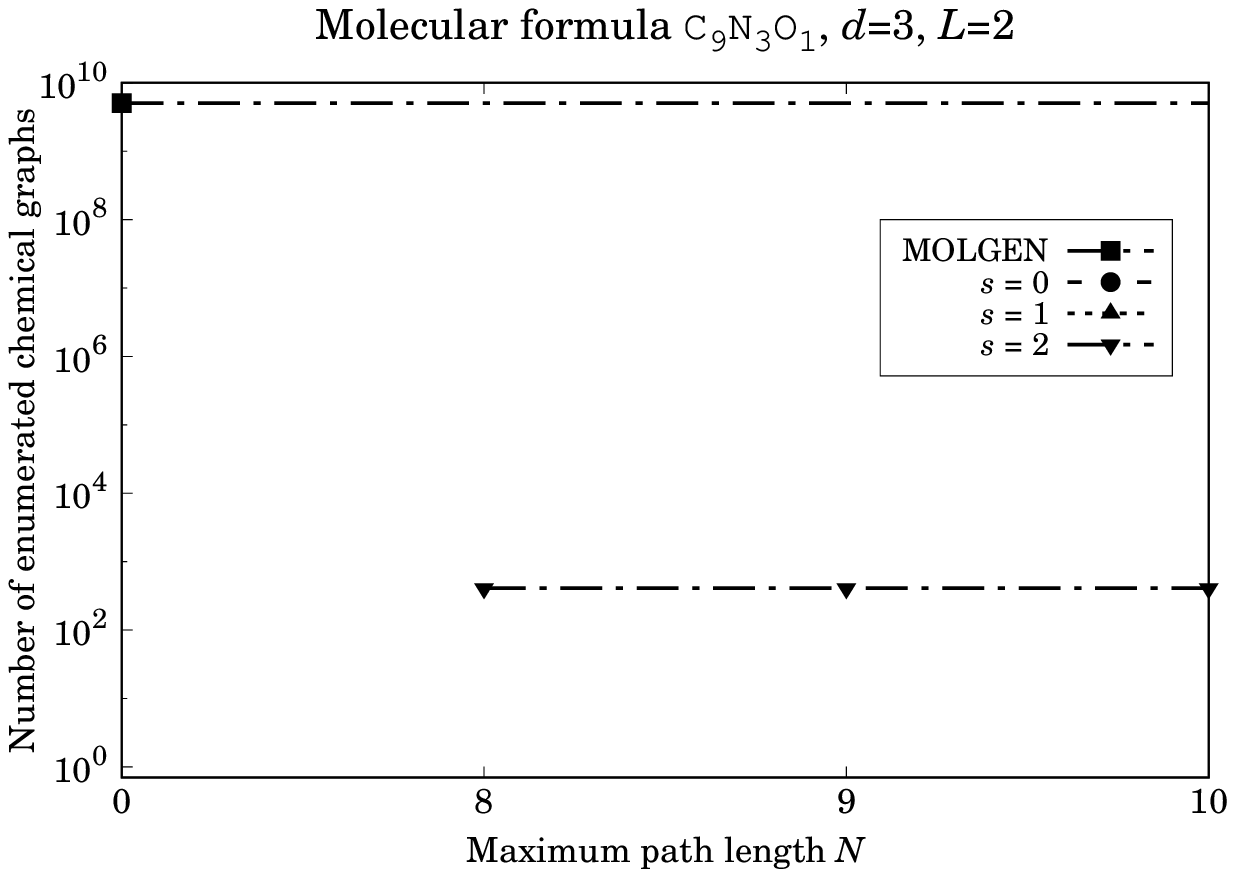}\\
   {\footnotesize (c)}\\
  \end{minipage} 
  \medskip

  \begin{minipage}{0.45\textwidth}
   \centering
      \includegraphics[width=1.1\textwidth]{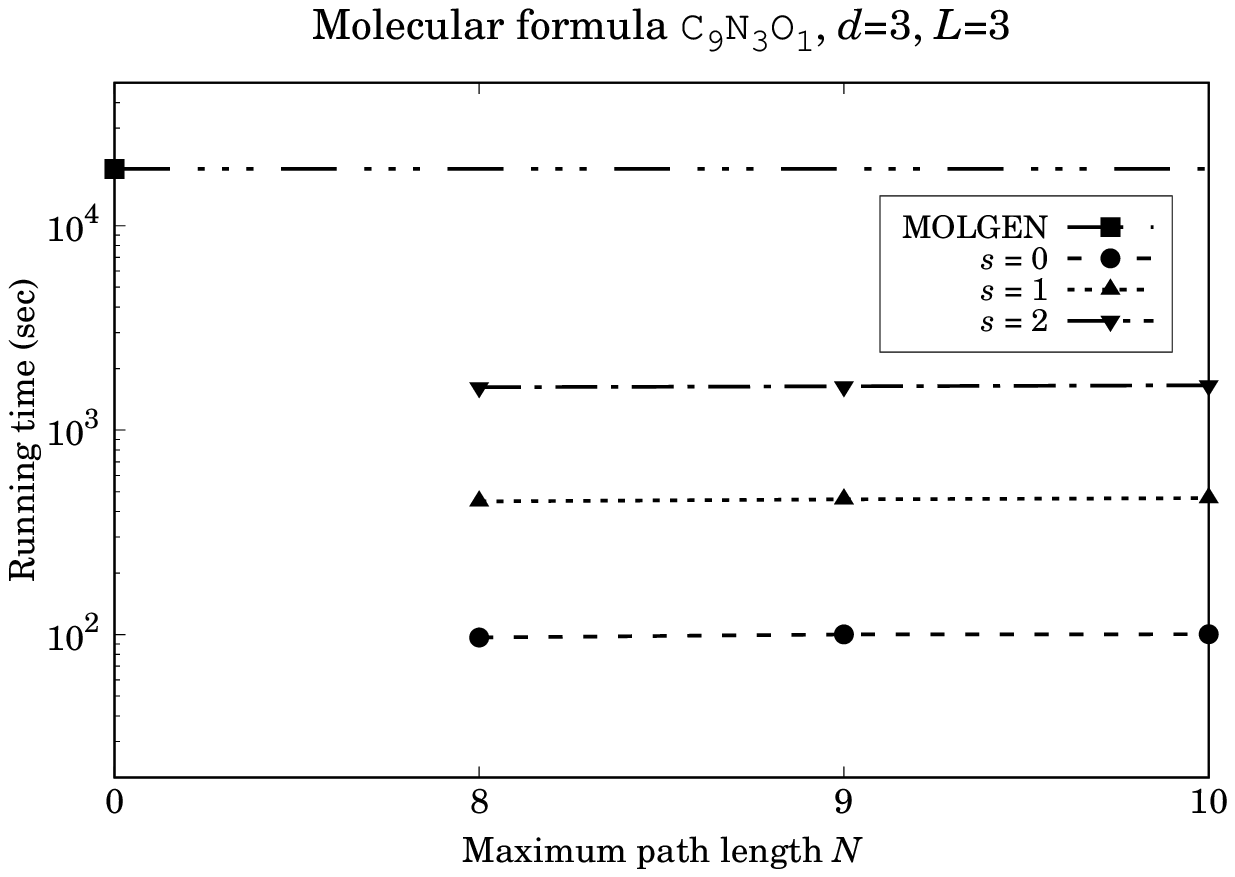}\\
      {\footnotesize (b)}\\
  \end{minipage} 
\hfill
  \begin{minipage}{0.45\textwidth}
   \centering
    \includegraphics[width=1.1\textwidth]{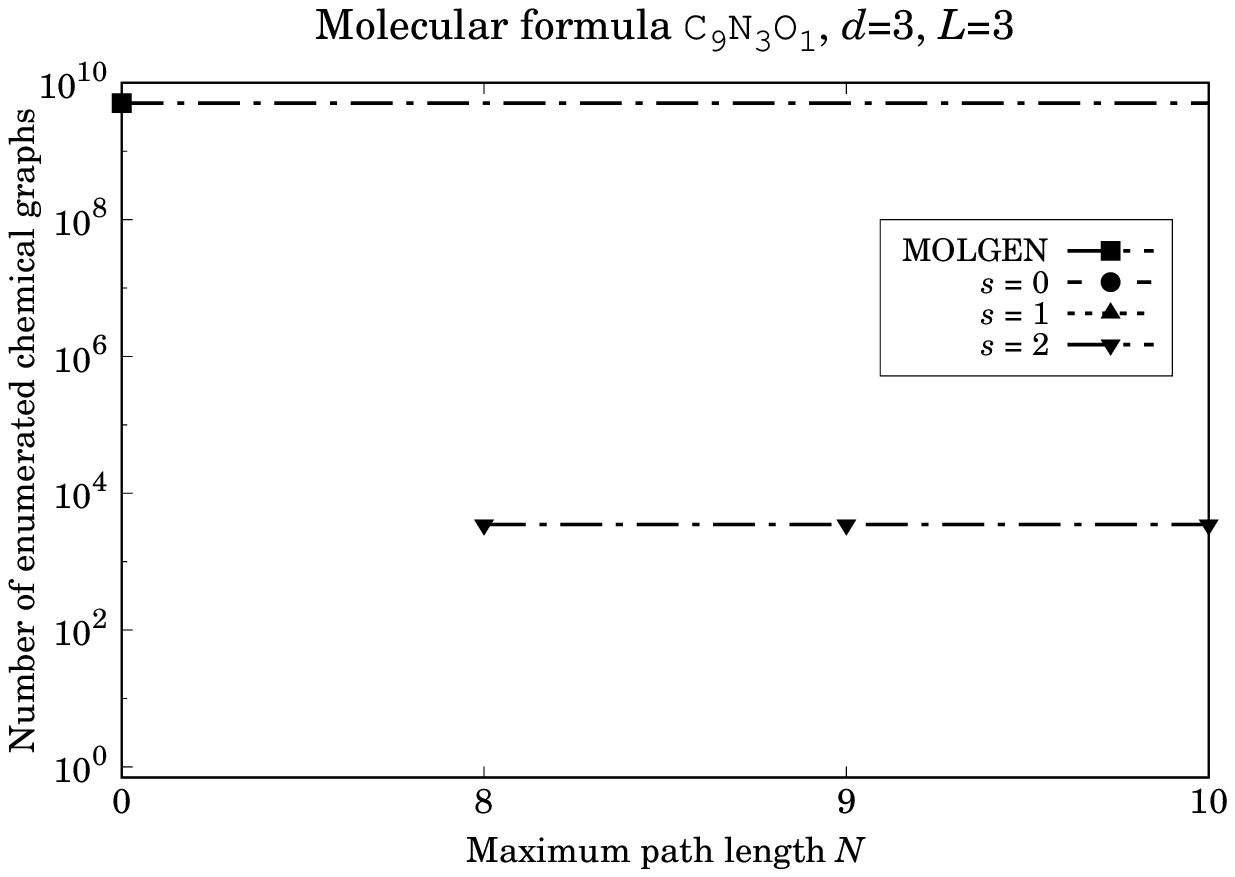}\\
    {\footnotesize (d)}\\
  \end{minipage} 
  \medskip
  
  \vspace{1cm}
  
  \caption{
    Plots showing the computation time 
    and number of chemical graphs enumerated by our algorithm
    for instance type EULF-$L$-P, as compared to MOLGEN.
    The sample structure from PubChem is with CID~10103630,
    molecular formula {\tt C$_9$N$_3$O$_1$},
    and maximum bond multiplicity~$d=3$.
    (a), (b)~Running time;
    (c), (d)~Number of enumerated chemical graphs.
  }
 \label{fig:result_graphs_6.2}
 \end{figure}

\section{Conclusion and Future Work}\label{sec:conclusion}

We formulated two problem
settings
of enumerating 
chemical graphs
that satisfy given lower and upper bounds 
on path frequencies in a given set of paths,
EULF-$L$-A, and EULF-$L$-P.
The problem of enumerating chemical graphs 
has an important practical application
in inverse QSAR/QSPR, and can be used as a part of a framework
for inferring novel chemical structures~\cite{ACZSNA20,CWZSNA20,IWSNA20}
together with a method for solving the inverse problem
on artificial neural networks based on linear programming due to Akutsu and Nagamochi~\cite{AN19}.

We focused on enumerating
chemical graphs with 
a mono-block $2$-augmented tree structure.
We designed a branch-and-bound algorithm for the problem by developing a new procedure 
to add edges between a pair of non-adjacent vertices of a monocyclic graph.
Our  procedure relies on a carefully chosen parent-child relationship between
mono-block $2$-augmented trees and monocyclic graphs to avoid inter-duplication,
and a way of choosing a proper set of non-adjacent vertex pairs in a monocyclic graph,
such that adding edges between each pair in the set will not cause intra-duplication,
nor any possible mono-block $2$-augmented trees to be omitted.

Experimental results reveal that our algorithm 
offers a big advantage in terms of running time 
and the number of generated structures
for instance type EULF-$L$-A
when we have a path frequency specification over using
MOLGEN~\cite{MOLGEN5} to generate chemical graphs 
with a particular molecular formula.
Namely, while MOLGEN may produce on the order of billions of chemical graphs
with 2-augmented tree structure with a particular chemical formula and maximum bond multiplicity,
for a given path specification our algorithm produces much fewer structures,
and we also have the advantage to generate
only mono-block structures.

However, for instance type EULF-$L$-P, the experimental results reveal that
our algorithm takes much time to finish 
even when there are no chemical graphs that satisfy a given path frequency specification.
It would be very interesting to equip our algorithm with a procedure
that detects this situation much earlier in the computation process,
or even design an algorithm based on a different idea - namely one that starts
building a chemical graph from one of the paths with a non-zero lower bound in a given set.


\end{document}